\newcommand{\cmark}{\textcolor{green}{\checkmark}}  %
\newcommand{\xmark}{\textcolor{red}{\tikz[scale=0.15, line width=1.2pt, baseline=-0.5ex]{\draw (0,0) -- (1,1); \draw (0,1) -- (1,0); } } }
\newcommand{\ud}{\,\mathrm{d}}
\newcommand{\Or}{\mathcal{O}}
\newcommand{\CC}{\mathbb{C}}
\newcommand{\poly}{\operatorname{poly}}
\newcommand{\mc}[1]{\mathcal{#1}}
\newcommand{\wt}[1]{\widetilde{#1}}
\newcommand{\abs}[1]{\left\lvert#1\right\rvert}
\newcommand{\norm}[1]{\left\lVert#1\right\rVert}
\newtheorem{thm}{\protect\theoremname}
\newtheorem{lem}[thm]{\protect\lemmaname}
\newtheorem{rem}[thm]{\protect\remarkname}
\newtheorem{prop}[thm]{\protect\propositionname}
\newtheorem{cor}[thm]{\protect\corollaryname}
\newtheorem{defn}[thm]{\protect\definitionname}
\providecommand{\definitionname}{Definition}
\providecommand{\assumptionname}{Assumption}
\providecommand{\corollaryname}{Corollary}
\providecommand{\lemmaname}{Lemma}
\providecommand{\propositionname}{Proposition}
\providecommand{\remarkname}{Remark}
\providecommand{\theoremname}{Theorem}
\newcommand{\rev}[1]{\textcolor{black}{#1}}
\NewDocumentCommand{\ketbra}{mG{#1}}{\mathinner{|{#1}\rangle\!\langle{#2}|}}
\tikzset{%
  highlight/.style={rectangle,rounded corners,fill=blue!15,draw,fill opacity=0.3,thick,inner sep=0pt}
}
\newcommand{\YZ}[1]{\textcolor{green}{[YZ: #1]}}
\let\oldsection\section
\renewcommand\section[1]{\par\addvspace{1.5ex}%
  \noindent\textbf{#1.}\quad}
\begin{document}

\title{End-to-End Efficient Quantum Thermal and Ground State Preparation Made Simple}

\author{Zhiyan Ding}
\thanks{These authors contributed equally to this work.}
\affiliation{Department of Mathematics, University of Michigan, Ann Arbor}
\author{Yongtao Zhan}
\thanks{These authors contributed equally to this work.}
\affiliation{Institute for Quantum Information and Matter, California Institute of Technology}
\affiliation{Division of Physics, Mathematics, and Astronomy, California Institute of Technology}
\author{John Preskill}
\affiliation{Institute for Quantum Information and Matter, California Institute of Technology}
\affiliation{Division of Physics, Mathematics, and Astronomy, California Institute of Technology}
\affiliation{AWS Center for Quantum Computing}
\author{Lin Lin}
\affiliation{Department of Mathematics, University of California, Berkeley}
\affiliation{Applied Mathematics and Computational Research Division, Lawrence Berkeley National Laboratory}

\begin{abstract}
We propose new quantum algorithms for thermal and ground state preparation based on system-bath interactions. These algorithms require only forward evolution under a system-bath Hamiltonian in which the bath is a single reusable ancilla qubit, making them especially well-suited for early fault-tolerant quantum devices. By carefully designing the bath and interaction Hamiltonians, we prove that the fixed point of the dynamics accurately approximates the desired quantum state. Furthermore, we establish theoretical guarantees on the mixing time, and thereby providing a rigorous justification for the end-to-end efficiency of system-bath interaction models in thermal and ground state preparation, for several physically relevant systems.
\end{abstract}
\maketitle

Preparing quantum thermal and ground states is a fundamental task in quantum computing, with wide-ranging applications in quantum many-body physics, quantum chemistry, and materials science. A variety of approaches have been developed in the literature, including variational algorithms~\cite{tilly_variational_2022}, adiabatic methods~\cite{farhi2000,RevModPhys.90.015002}, matrix function based methods using quantum phase estimation and quantum singular value transformation~\cite{PoulinWocjan2009,GilyenSuLowEtAl2019,LinTong2020a}, tomography-based quantum imaginary time evolution (QITE)~\cite{motta_determining_2020}, and dissipative state preparation schemes~\cite{VerstraeteWolfCirac2009,Temme_2011}, to name a few.

While each approach has its own strengths and limitations, there has been a persistent trade-off between simplicity, efficiency, generality, and rigor. Namely, we would like to design algorithms that are simple enough to be implemented on early fault-tolerant quantum devices. The runtime should scale at most polynomially with relevant parameters such as system size and precision.  
The polynomial runtime scaling is not a trivial requirement: preparing the ground state of a general local Hamiltonian is QMA-hard in the worst case~\cite{KitaevShenVyalyi2002}, and even preparing classical Gibbs states can be NP-hard~\cite{Barahona1982,Sly2010}. Thus, polynomial-time quantum algorithms cannot be expected for all Hamiltonians. Our goal should be to develop algorithms that are efficient for broad classes of physically relevant Hamiltonians, and all of these should come with rigorous end-to-end performance guarantees.  To date, we are not aware of any approach that simultaneously satisfies all of these criteria.

Among existing approaches, dissipative algorithms appear to be the most promising candidates. They can avoid variational parameter tuning, state tomography procedures, and normalization factors that can grow exponentially with system size. In recent years, a new wave of approaches based on dissipative dynamics, such as the Lindblad dynamics, have been proposed to prepare thermal and ground states of general Hamiltonians~\cite{RoyChalkerGornyiEtAl2020,Cubitt2023,MiMichailidisShabaniEtAl2024,RallWangWocjan2023,ChenKastoryanoBrandaoEtAl2023,ChenKastoryanoGilyen2023,Ding_2025,Ding2024,li2024dissipative,zhan2025rapidquantumgroundstate,hahn2025,gilyen2024,jiang2024quantum,lambert2024fixing,eder2025quantum,MatthiesRudnerRoschEtAl2024,motlagh2024ground,PerrinScoquartPavlovEtAl2025,hagan2025thermodynamic,Lloyd2025Quasiparticle}. 
In a broad sense, dissipative algorithms mimic how nature prepares thermal states. The system of interest is coupled to an external bath that induces dissipation. Importantly, dissipation here does not mean that the system state is post-selected based on specific measurement outcomes.  This issue arises in direct implementations of imaginary-time evolution on quantum computers, where the probability of obtaining the desired thermal state in the system register decreases exponentially with system size. In contrast, dissipative algorithms are naturally described by quantum channels, ensuring that at each step the system register remains in a well-defined physical quantum state without post-selection. In this way, dissipative algorithms can be interpreted as quantum generalizations of Markov chain Monte Carlo (MCMC) methods. By carefully designing the dissipation mechanism, or equivalently the quantum moves, and repeatedly applying the resulting quantum channel that simulates the dissipative dynamics, the system can be driven toward its thermal or ground state.

A key metric for evaluating the efficiency of dissipative algorithms is the total simulation time required to reach the target state. The total simulation time is often characterized by the \emph{mixing time}, which refers to how long it takes for the system to evolve close enough to its steady state, regardless of the initial state. In other words, it measures the time needed for the system to ``forget'' its starting point and settle into thermal or ground state equilibrium. 
Recently, there is substantial progress in rigorous understanding the mixing time of the \emph{continuous-time} Lindblad dynamics for several classes of physically relevant Hamiltonians~\cite{TemmeKastoryanoRuskaiEtAl2010,KastoryanoTemme2013,kastoryano2016quantum,BardetCapelGaoEtAl2023,rouz2024,DingLiLinZhang2024,kochanowski2024rapid,rouze2024optimal,tong2024fast,smid2025polynomial}. 

A main drawback of dissipative algorithms for thermal and ground state preparation lies in the complexity of their implementation, i.e., the implementation of quantum channels that accurately simulate the dissipative dynamics. Existing high-order 
simulation algorithms \cite{cleve_2017,li_et_2023,Ding2024,ChenKastoryanoBrandaoEtAl2023,PRXQuantum.5.020332} can achieve optimal or near-optimal scaling in terms of the simulation time, but they often involve complex logic gates, controlled or time-reversed Hamiltonian simulations, and a large number of ancilla qubits. These complex circuit structures can be challenging to implement on early fault-tolerant quantum devices, where quantum resources remain limited, error correction may be applied selectively to components most susceptible to noise, and hybrid architectures combining analog (such as forward Hamiltonian evolution) and digital elements may be used to reduce implementation costs and suppress errors. Consequently, algorithms must be significantly simplified before they can be considered for practical deployment. On the other hand, simplified protocols often deviate too far from the continuous-time Lindblad dynamics, making it difficult to rigorously analyze the mixing time.

\begin{figure*}[htbp!]
\begin{center}
\includegraphics[width=0.8\textwidth]{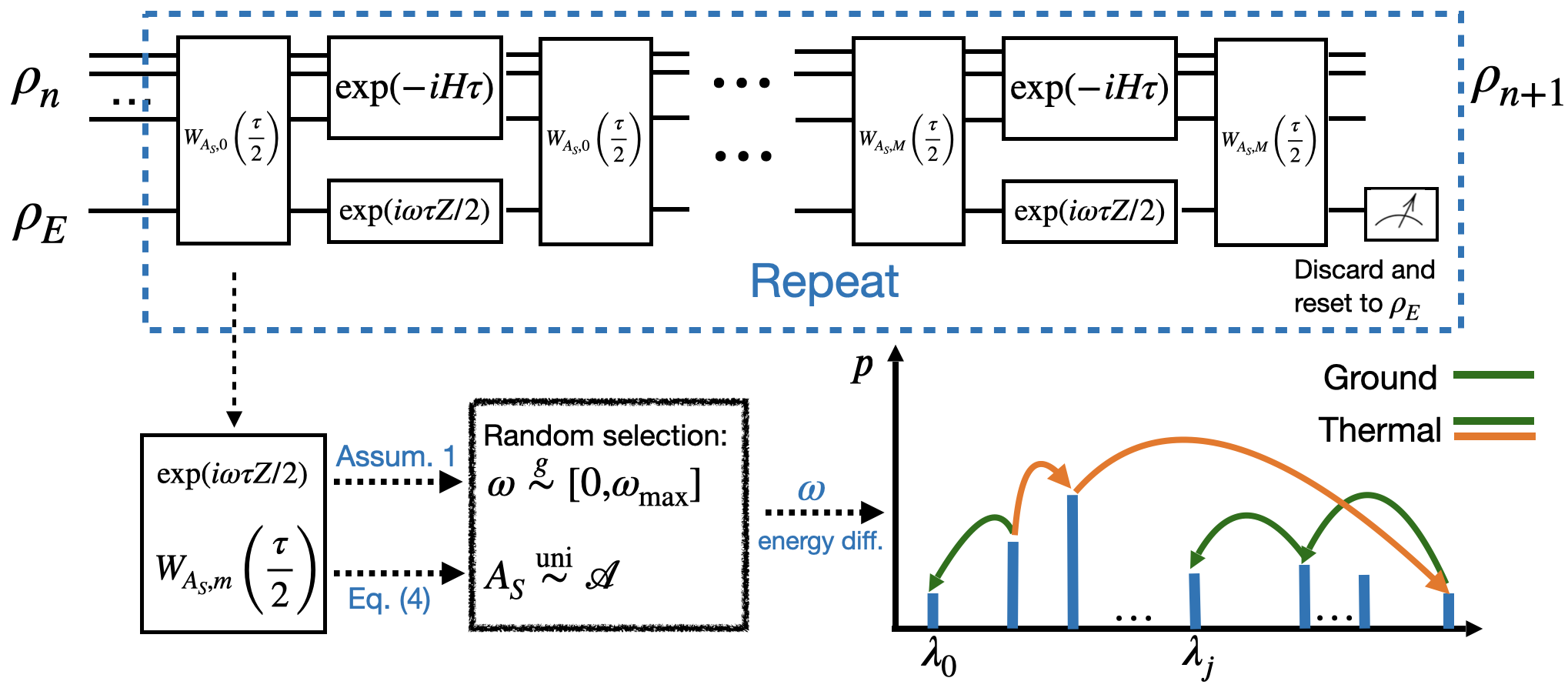}
\end{center}
\caption{Illustration of the quantum circuit for the proposed algorithm. The weak-coupling dynamics in~\eqref{eqn:Phi_alpha} are implemented via a second-order Trotter decomposition. The procedure requires only a single ancilla qubit. Here, $\rho_E\propto \exp(\beta \omega Z/2)$ for thermal state preparation, and $\rho_E=\ket{0}\bra{0}$ for ground state preparation. The system-bath interaction term $W_{A_S,m}(\tau/2)$ is defined in~\eqref{eqn:Wnformula} with $A_S$ randomly sampled from a set $\mc{A}$. For each $\omega > 0$ randomly sampled from a given distribution $g(\omega)$, the algorithm approximately generates a jump from an energy eigenstate $\ket{\psi_j}$ to another eigenstate $\ket{\psi_i}$ with Bohr frequency $\lambda_i - \lambda_j \approx \pm \omega$. That is, the change in energy of the system can be either $+\omega$ or $-\omega$. In the thermal state preparation algorithm, both energy-increasing and energy-decreasing transitions are allowed, while in the ground state preparation algorithm, only energy-decreasing transitions ($\lambda_i - \lambda_j \approx -\omega$) are permitted. The system-bath interaction block $W(\cdot)$ is explained in the Methods section.}
\label{fig:ourrepeatedint}
\end{figure*}

In this work we resolve the above tension by introducing a quantum algorithm that is simple to implement, and yet admits rigorous end-to-end performance guarantees. Our approach is a randomized system-bath interaction scheme that requires only forward evolution under the system Hamiltonian and a single reusable environment qubit. At a high level, the protocol proceeds as follows (\cref{fig:ourrepeatedint}). Given a system Hamiltonian $H$ and a small set of local system operators $\mathcal{A}$, prepare a one-qubit environment state. In each round, draw a coupling $A_S\in\mathcal{A}$ and a random bath frequency $\omega$ from a prescribed distribution, evolve forward under a weak system-bath interaction with a smooth temporal envelope while the system simultaneously evolves under $H$, then trace out and reset the environment qubit. The resulting quantum channel acting on the system is iterated, and the number of iterations is controlled by the mixing time.

To justify the efficiency of this protocol, we compare this quantum channel with a continuous-time Lindblad evolution whose fixed point  can be known to approximate the target state. By carefully choosing the environment state, coupling operators, and sampling distributions, we ensure that the fixed point of the channel can be made arbitrarily close to the target state. Moreover, we establish polynomial mixing-time bounds for several physically relevant Hamiltonians, thereby providing rigorous guarantees within a minimal architecture that is well-suited for early fault-tolerant devices. 

We note that the idea of employing a weak system-bath interaction framework to cool a system and prepare low-temperature thermal or ground states has been explored in several previous and concurrent works~\cite{molpeceres2025,Lloyd2025Quasiparticle,hahn2025provably,hagan2025thermodynamic,langbehn2025universal,lloyd2025quantumthermal,scandi2025thermal}, from both theoretical and numerical perspectives. In particular, Refs.~\cite{hahn2025provably,lloyd2025quantumthermal} adopted similar principles to design quantum algorithms for thermal state preparation based on system-bath interaction models,  although some specific details of their protocols differ from ours. In this work, we carefully design both the initialization of the bath state and the interaction function, thereby providing, to the best of our knowledge, the first end-to-end efficiency guarantee within the system–bath interaction framework. Specifically, our study is the first to establish rigorous mixing-time guarantees for both thermal and ground state preparation in such a framework, laying a theoretical foundation for understanding the promising empirical results observed in related studies. A more detailed comparison with the previous work can be found in~\ref{sec:related_works}.


\section{Algorithm description}\label{sec:detail_algorithm}
Given a system Hamiltonian $H$, we define the total Hamiltonian as
\begin{equation}\label{eqn:Ham_total}
H_{\alpha}(t) = H + H_E + \alpha f(t) \left( A_S \otimes B_E + A^\dagger_S \otimes B_E^\dagger \right).
\end{equation}
Here, $A_S$ is a coupling operator acting on the system, which can be randomly chosen from a user-provided set of operators set $\mc{A}=\{A^i,-A^i\}_i$ with the property that $\{(A^i)^\dagger\}_i=\{A^i\}_i$. We assume without loss of generality $\|A^i\|\leq 1$. We set $B_E = (X-iY)/2 = \ket{1}\bra{0}$. The bath Hamiltonian $H_E=-\omega Z/2$ depends on a random frequency variable $\omega$, drawn from a distribution $g(\omega)$ related to the system Hamiltonian. The interaction envelope $f(t)$ is a real, positive, even, and normalized function of time, satisfying $f(t)=f(-t)\in\mathbb{R}$ and $\int_{\mathbb{R}} f^2(t)\,\ud t = 1$. We choose $f$ to be a Gaussian function
$f(t) = \frac{1}{(2\pi)^{1/4} \sigma^{1/2}} \exp\left(-\frac{t^2}{4\sigma^2}\right)$.

Starting from an initial state $\rho_0$ in the system register, in each step of the algorithm we set the state of the ancilla qubit to $\rho_E\propto \exp(-\beta H_E)$, we simulate the time evolution governed by the total Hamiltonian $H_\alpha(t)$ and then trace out the ancilla qubit. When $\beta=\infty$, we set $\rho_E=\ket{0}\bra{0}$.
Then
\begin{equation}\label{eqn:Phi_alpha}
\rho_{n+1} = \Phi(\rho_n) := \mathbb{E}_{A_S,\omega}\left(\operatorname{Tr}_E \left[ U^\alpha(T) \left( \rho_n \otimes \rho_E \right) U^{\alpha}(T)^\dagger \right]\right),
\end{equation}
where $U^\alpha(t)$ is the time-evolution operator of the total Hamiltonian $U^\alpha(t) := \mathcal{T} \exp\left(-i \int_{-T}^t H_{\alpha}(s) \, \mathrm{d}s \right)$. In the implementation, we will simulate the time-dependent Hamiltonian in $\Phi$ using a second-order Trotter formula, which also avoids simulating the Hamiltonian backward in time. The detailed discussion can be found in Methods section and~\cref{sec:complete_efficiency}.


\rev{To the best of our knowledge, this work provides the first rigorous analysis showing that, for physically relevant but analytically tractable classes of models such as free fermions and commuting Hamiltonians, the map $\Phi_{\alpha}$ possesses a unique fixed point that accurately approximates the desired thermal or ground state, and that its mixing time can be explicitly bounded. These results offer a concrete demonstration that the mechanism underlying $\Phi_{\alpha}$ can be made fully rigorous in nontrivial quantum many-body settings. More generally, we prove that whenever the mixing time $\tau_{\mathrm{mix}}$ of $\Phi_{\alpha}$ grows at most polynomially in the system size, one can choose parameters so that the state $\rho_{\tau_{\mathrm{mix}}}$ stays arbitrarily close to the fixed point $\rho_{\mathrm{fix}}$, and hence serves as a reliable approximation to the target thermal or ground state.}


\section{Rigorous guarantees}
Our first result is to show that the fixed point of the quantum channel $\Phi$ can be made arbitrarily close to the target  state. The mixing time $\tau_{\rm mix}(\epsilon)$ is defined as the minimum number of iterations required to ensure that $\Phi^k(\rho_0)$ is $\epsilon$-close to the fixed point for any initial state $\rho_0$ (see~\cref{def:mixing_time} for the formal definition).

\begin{thm}[Informal main result]\label{thm:main_informal}
Let $\Phi$ be the quantum channel in \cref{eqn:Phi_alpha}. For any target precision $\epsilon>0$, there exist parameter choices such that the trace distance between the fixed point $\rho_{\mathrm{fix}}(\Phi)$ and the target state is at most $\epsilon$.  Starting from an arbitrary initial state, the end-to-end Hamiltonian evolution time is upper bounded by
\begin{equation}
T_{\mathrm{total}} =
\begin{cases}
\mathrm{poly}\big(\tau_{\mathrm{mix}}(\epsilon),\beta,1/\epsilon\big), & \text{thermal},\\
\mathrm{poly}\big(\tau_{\mathrm{mix}}(\epsilon),\Delta^{-1},1/\epsilon\big), & \text{ground state},
\end{cases}
\end{equation}
 where $\beta$ is the inverse temperature and $\Delta$ is the spectral gap of the system Hamiltonian $H$.
\end{thm}


We note that ground states can also be approximated by thermal states at low enough temperatures, $\beta = \mathrm{poly}(N,1/\Delta)$ \cite{PhysRevX.11.011047}. We provide a detailed discussion of this complexity discrepancy in~\cref{sec:thermal_ground_prep}. The total Hamiltonian evolution time $T_{\mathrm{total}}$ in~\cref{thm:main_informal} is the product of the number of iterations $\tau_{\rm mix}(\epsilon)$ and the Hamiltonian evolution time $2T$ in each iteration. The polynomial dependence on the mixing time $\tau_{\rm mix}(\epsilon)$ occurs because the Hamiltonian evolution time $T$ in each iteration needs to scale polynomially with $\tau_{\rm mix}(\epsilon)$, so that the fixed point of $\Phi$ can remain close 
to the target state after $\tau_{\rm mix}(\epsilon)$ iterations.



To establish end-to-end efficiency in preparing thermal and ground states, we derive rigorous mixing-time bounds for a range of physically relevant Hamiltonians. These results are presented in the following two theorems:
\begin{thm}\label{thm:informal_thermal}
For thermal state preparation of $N$-qubit non-interacting spin systems or quadratic fermionic systems \rev{with $\beta=\Theta(1)$}, and of local commuting spin systems at high temperature, the mixing time $\tau_{\rm mix}(\epsilon)$ of $\Phi$ scales as \rev{$\mathrm{poly}(N,1/\epsilon)$}.
\end{thm}

\begin{thm}\label{thm:informal_ground}
For ground state preparation of $N$-qubit non-interacting spin systems or quadratic fermionic systems, the mixing time $\tau_{\rm mix}(\epsilon)$ of $\Phi$ scales as $\mathrm{poly}(N,1/\epsilon)$.    
\end{thm}

By combining the mixing-time bounds with~\cref{thm:main_informal}, the algorithm achieves runtime scaling polynomially with system size. 

\section{Analysis overview}
We need to: (1) demonstrate that the fixed point of $\Phi$ is $\epsilon$-close to the target thermal/ground state, and (2) derive an upper bound on the mixing time of $\Phi$. 

First, we show that Eq.~\eqref{eqn:Phi_alpha} can be approximated by a continuous-time Lindbladian dynamics whose fixed point can be tuned to lie arbitrarily close to the target state by setting $T=\widetilde{\Omega}(\sigma)$ and \rev{$\sigma\gg 1, \alpha T\ll 1$}.  Specifically, we first show that $\Phi$ can be approximated by a Lindbladian dynamics $\exp(\mathcal{L}\alpha^2)$—up to a unitary evolution—with error $\mathcal{O}(\alpha^4\sigma^2)$, where $\mc{L}$ is the Lindbladian operator and $\alpha^2$ corresponds to the Lindbladian evolution time. This result is formally given in~\cref{sec:phi_alpha_char}. Next, we show that the fixed point of the Lindbladian dynamics closely approximates the target thermal or ground state. Consequently, the fixed point of $\Phi$ is also close to the desired state. For the thermal state, the fixed-point error can be bounded by $\widetilde{\mathcal{O}}\left((\beta/\sigma)\alpha^2\tau_{\rm mix}(\epsilon)\right)$, while for the ground state, the error scales as $\widetilde{\mathcal{O}}\left(\exp(-\sigma^2\Delta^2/32)\alpha^2\tau_{\rm mix}(\epsilon)\right)$, provided the Hamiltonian has a spectral gap $\Delta$. The detailed versions of the approximation error are stated in~\cref{sec:thermal_ground_prep} as~\cref{thm:fix_thermal} and~\cref{thm:fix_ground}, respectively.

According to the approximation error bound, when $\alpha^2\tau_{\rm mix}(\epsilon)$ remains essentially constant, an appropriate choice of the parameters $\sigma$ and $\alpha$ ensures that the fixed point $\rho_{\text{fix}}(\Phi)$ can be made arbitrarily close to the target state. Moreover, since $\Phi$ can be approximated by the Lindbladian dynamics, we show that the rescaled mixing time $\alpha^2\tau_{\rm mix}(\epsilon)$ closely matches the mixing time of the Lindbladian operator $\mc{L}$. This correspondence is a key step in establishing an upper bound on the mixing time of $\Phi$ in the second stage of our analysis.

The main technical challenge, however, is that in practice the rescaled mixing time $\alpha^2\tau_{\rm mix}(\epsilon)$ may diverge as $\sigma$ or $\alpha^{-1}$ grows. For instance, in thermal state preparation the fixed-point error bound often scales as $\alpha^2\tau_{{\rm mix}}(\epsilon)/\sigma$. Without a carefully designed filter function $f$ and system-bath interaction, $\alpha^2\tau_{{\rm mix}}(\epsilon)$ can \emph{grow exponentially} with $\sigma$,~\emph{even for} the simple single-qubit Hamiltonian $H=Z$ (see~\cref{sec:toy_uniform}~\cref{rem:filter_contrast}). Thus a key step in our work is to design the dissipative protocol to ensure that once $\sigma$ is sufficiently large, the mixing time $\tau_{\mathrm{mix}}(\epsilon)$ is \emph{independent} of $\sigma$. 

To achieve this, we need to carefully design the initial bath state and the interaction function.~\rev{This construction provides significant flexibility in tuning $f$, allowing the resulting Lamb-shift term to approximately commute with the target thermal or ground state. Specifically, for a fixed $\omega$ and sufficiently large $\sigma$, the system-bath interaction is designed so that the induced energy transitions in the system concentrate around $\omega$ rather than near zero. Then, by sampling $\omega$ from a suitable distribution, our method enables large energy transitions while keeping the perturbation to the approximate fixed point small, thereby ensuring provably fast mixing without substantially disturbing the target state.}




\section{Applications of the algorithm}\label{sec:application}

\paragraph{Quantum materials and chemistry on early fault-tolerant hardware.}
Efficient state preparation with end-to-end performance guarantees is essential for reliable property prediction in materials and molecules, especially when classical methods disagree. Our protocol prepares thermal and ground states using only forward evolution under a system-bath Hamiltonian and a single ancilla qubit. This minimal architecture reduces control overhead and avoids deep control logic, which is attractive for early fault-tolerant and hybrid analog-digital devices.

\medskip 

\paragraph{Benchmarks for dissipative engineering and co-design with error correction.}
Because the channel's fixed point can be tuned to approximate a target thermal or ground state with minimal circuitry, our protocol provides a benchmark that directly reflects the end-to-end performance of a quantum device. This perspective is especially relevant for architectures with mid-circuit measurement capabilities, which is a core ingredient of error correction. Rather than focusing on per-gate fidelity, the dissipative dynamics test how well the hardware and control stack jointly support long-time evolution toward a stable fixed point. In this sense, the protocol is a natural setting for co-design: it allows hardware and algorithmic choices to be evaluated together in terms of their impact on the overall ability to prepare and stabilize complex quantum states.

\section{Conclusion and outlook}\label{sec:discussion}
In this work, we have presented a quantum algorithm for preparing both thermal and ground states with provable correctness and mixing-time guarantees, while requiring only modest quantum resources. The algorithm operates by repeatedly applying a simple quantum channel, obtained from forward time evolution under a system-bath Hamiltonian, followed by resetting the bath. The bath can be reduced to a single ancilla qubit by employing a randomized choice of bath frequency and coupling operator between the system and bath. 
This design makes it well suited for early fault-tolerant quantum devices, including hybrid devices that can implement certain type of Hamiltonian simulations via analog methods.

Our analysis shows that, with a suitable choice of parameters, the fixed point of the quantum channel can accurately approximate the target thermal or ground state. Once the fixed-point error bound is established, it suffices to prove polynomial mixing time in order to achieve end-to-end efficiency. A key technical point is that the mixing time and simulation parameters are generally interdependent; it is therefore important to demonstrate that, for our channel construction, the mixing time remains bounded as parameters are tuned to reduce the fixed-point error. While we have established polynomial mixing for several physically relevant models, we expect similar behavior more generally, and extending these results to interacting systems remains an important theoretical challenge.

The weak system-bath interaction framework can be regarded as an approximate, low-order approach to Lindblad simulation. In this work, we propose a second-order Trotterization method for the simulation of $\Phi$, which circumvents the need for backward Hamiltonian evolution. Compared with high-order Lindblad simulation methods~\cite{cleve_2017,li_et_2023,PRXQuantum.5.020332}, this can lead to an asymptotic slowdown with respect to parameters such as system size and precision. This tradeoff is analogous to the use of low-order Trotter methods in Hamiltonian simulation, and is to some extent unavoidable. Nevertheless, there is considerable potential for improvement to reduce the cost. For instance, if backward Hamiltonian evolution can be efficiently implemented on the device, one may employ a higher-order Trotterization scheme to simulate $\Phi$, thereby substantially reducing the asymptotic gate count. This refinement further enhances the algorithms practicality for early fault-tolerant implementations across diverse quantum hardware platforms.

In our algorithm, each channel application maps pure states to pure states before tracing out the ancilla. This process naturally induces a classical Markov chain on the space of state vectors, with each application of $\Phi$ corresponding to a single transition step. Consequently, this protocol offers a quantum-inspired approach to sampling from Gibbs ensembles using classical state-vector or tensor-network simulations, in regimes where both the states and transitions admit efficient representations. This can also lead to a hybrid workflow: using classical resources to pre-screen couplings and frequency windows, followed by quantum preparation at scales where classical methods saturate. 

In our implementation, the filter function has a narrow width $\sigma^{-1}$ in the frequency domain. This choice simplifies the fixed-point error analysis but may increase the total simulation time. Allowing broader energy transitions per channel application, as in the protocols of \cite{Ding2024,Ding_2025} for ground- and thermal-state preparation, could lead to a more efficient implementation. However, this also introduces new challenges, such as demonstrating that the approximating Lindbladian dynamics still fixes the target state.
We hope this work will motivate further research on optimized filter design, generalized mixing-time analysis, and experimental demonstrations of dissipative algorithms for quantum state preparation.

\vspace{1em}
\emph{Acknowledgments--}
This material is based upon work supported by the U.S. Department of Energy, Office of Science, Accelerated Research in Quantum Computing Centers, Quantum Utility through Advanced Computational Quantum Algorithms, grant no. DE-SC0025572 (J.P., L.L.) and Fundamental Algorithmic Research toward Quantum Utility, grant no. DE‐SC0025535 (J.P.). Additional support is acknowledged from the U.S. Department of Energy, Office of Science, National Quantum Information Science Research Centers, Quantum Systems Accelerator (Z.D., Y.Z., J.P., L.L.) and the National Science Foundation, grant no. PHY-2317110 (Y.Z., J.P.). The Institute for Quantum Information and Matter is an NSF Physics Frontiers Center. L.L. is a Simons Investigator in Mathematics.  The authors thank Anthony Chen, Dominik Hahn, Jiaqing Jiang, Jerome Lloyd, Siddharth Parameswaran, and Oles Shtanko for helpful discussions.

\bibliographystyle{unsrtnat}
\bibliography{ref}

@InProceedings{li_et_2023,
  author =	{Li, Xiantao and Wang, Chunhao},
  title =	{Simulating {Markovian} Open Quantum Systems Using Higher-Order Series Expansion},
  booktitle =	{ICALP 2023},
  pages =	{87:1--87:20},
  year =	{2023},
  volume =	{261},
}

@article{hahn2025,
      title={Efficient Quantum {Gibbs} Sampling with Local Circuits},
      author={Dominik Hahn and Ryan Sweke and Abhinav Deshpande and Oles Shtanko},
      year={2025},
      journal={arXiv:2506.04321},
}

@article{Szehr_2013,
   title={Perturbation bounds for quantum Markov processes and their fixed points},
   volume={54},
   ISSN={1089-7658},
   url={http://dx.doi.org/10.1063/1.4795112},
   DOI={10.1063/1.4795112},
   number={3},
   journal={Journal of Mathematical Physics},
   publisher={AIP Publishing},
   author={Szehr, Oleg and Wolf, Michael M.},
   year={2013},
   month=mar }

@article{chen2023fastthermalizationeigenstatethermalization,
      title={Fast Thermalization from the Eigenstate Thermalization Hypothesis}, 
      author={Chi-Fang Chen and Fernando G. S. L. Brandão},
      year={2023},
      eprint={2112.07646},
      archivePrefix={arXiv},
      primaryClass={quant-ph},
      url={https://arxiv.org/abs/2112.07646}, 
}

@article{ramonescandell2025thermalstatepreparationrepeated,
      title={Thermal state preparation by repeated interactions at and beyond the Lindblad limit}, 
      author={Carlos Ramon-Escandell and Alessandro Prositto and Dvira Segal},
      year={2025},
      eprint={2506.12166},
      archivePrefix={arXiv},
      primaryClass={quant-ph},
      url={https://arxiv.org/abs/2506.12166}, 
}

@article{smid2025polynomial,
      title={Polynomial Time Quantum {Gibbs} Sampling for {Fermi-Hubbard} Model at any Temperature},
      author={Štěpán Šmíd and Richard Meister and Mario Berta and Roberto Bondesan},
      year={2025},
      journal={arXiv:2501.01412},
      archivePrefix={arXiv},
      primaryClass={quant-ph},
      url={https://arxiv.org/abs/2501.01412},
}

@article{PhysRevA.88.012103,
  title = {Coarse graining can beat the rotating-wave approximation in quantum Markovian master equations},
  author = {Majenz, Christian and Albash, Tameem and Breuer, Heinz-Peter and Lidar, Daniel A.},
  journal = {Phys. Rev. A},
  volume = {88},
  issue = {1},
  pages = {012103},
  numpages = {16},
  year = {2013},
  month = {Jul},
}

@article{Mozgunov2020,
  title = {Completely positive master equation for arbitrary driving and small level spacing},
  author = {Mozgunov, Evgeny and Lidar, Daniel},
  journal = {{Quantum}},
  issn = {2521-327X},
  publisher = {{Verein zur F{\"{o}}rderung des Open Access Publizierens in den Quantenwissenschaften}},
  volume = {4},
  pages = {227},
  month = feb,
  year = {2020}
}

@article{gilyen2024,
      title={Quantum generalizations of {Glauber and Metropolis} dynamics},
      author={Andr\'as Gilyén and Chi-Fang Chen and Joao F. Doriguello and Michael J. Kastoryano},
      year={2024},
      journal={arXiv:2405.20322},
}

@article{Ding_2025,
	author = {Ding, Zhiyan and Li, Bowen and Lin, Lin},
	journal = {Commun. Math. Phys.},
	number = {3},
	pages = {67},
	title = {Efficient Quantum {Gibbs} Samplers with {Kubo--Martin--Schwinger} Detailed Balance Condition},
	volume = {406},
	year = {2025},}

@Article{RallWangWocjan2023,
  author  = {Rall, Patrick and Wang, Chunhao and Wocjan, Pawel},
  title   = {Thermal state preparation via rounding promises},
  journal = {Quantum},
  year    = {2023},
  volume  = {7},
  pages   = {1132},
}

@article{Temme_2011,
   title={Quantum {Metropolis} sampling},
   volume={471},
   ISSN={1476-4687},
   url={http://dx.doi.org/10.1038/nature09770},
   DOI={10.1038/nature09770},
   number={7336},
   journal={Nature},
   publisher={Springer Science and Business Media LLC},
   author={Temme, K. and Osborne, T. J. and Vollbrecht, K. G. and Poulin, D. and Verstraete, F.},
   year={2011},
   month=mar, pages={87–90} }

@article{scandi2025thermal,
      title={Thermalization in open many-body systems and {KMS} detailed balance},
      author={Matteo Scandi and Álvaro M. Alhambra},
      year={2025},
      journal={arXiv:2505.20064},
      archivePrefix={arXiv},
      primaryClass={quant-ph},
      url={https://arxiv.org/abs/2505.20064},
}

@article{hahn2025provably,
      title={Provably Efficient Quantum Thermal State Preparation via Local Driving},
      author={Dominik Hahn and S. A. Parameswaran and Benedikt Placke},
      year={2025},
      journal={arXiv:2505.22816},
      eprint={2505.22816},
      archivePrefix={arXiv},
      primaryClass={quant-ph},
      url={https://arxiv.org/abs/2505.22816},
}

@article{Lloyd2025Quasiparticle,
  title = {Quasiparticle Cooling Algorithms for Quantum Many-Body State Preparation},
  author = {Lloyd, Jerome and Michailidis, Alexios A. and Mi, Xiao and Smelyanskiy, Vadim and Abanin, Dmitry A.},
  journal = {PRX Quantum},
  volume = {6},
  issue = {1},
  pages = {010361},
  year = {2025},
  month = {March},
  publisher = {American Physical Society},
  doi = {10.1103/PRXQuantum.6.010361},
  url = {https://doi.org/10.1103/PRXQuantum.6.010361}
}

@InProceedings{cleve_2017,
  author =	{Cleve, Richard and Wang, Chunhao},
  title =	{Efficient Quantum Algorithms for Simulating {Lindblad} Evolution},
  booktitle =	{ICALP 2017},
  pages =	{17:1--17:14},

  year =	{2017},
  volume =	{80},
}

@Article{ChenKastoryanoGilyen2023,
  author  = {Chen, Chi-Fang and Kastoryano, Michael J and Gily{\'e}n, Andr{\'a}s},
  title   = {An efficient and exact noncommutative quantum {Gibbs} sampler},
  journal = {arXiv:2311.09207},
  year    = {2023},
}

@Article{ChenKastoryanoBrandaoEtAl2023,
  author  = {Chen, Chi-Fang and Kastoryano, Michael J and Brand{\~a}o, Fernando GSL and Gily{\'e}n, Andr{\'a}s},
  title   = {Quantum Thermal State Preparation},
  journal = {arXiv:2303.18224},
  year    = {2023},
}

@article{Ding2024,
  title = {Single-Ancilla Ground State Preparation via {Lindbladians}},
  author = {Ding, Zhiyan and Chen, Chi-Fang and Lin, Lin},
  journal = {Phys. Rev. Research},
  volume = {6},
  issue = {3},
  pages = {033147},
  numpages = {14},
  year = {2024},
  publisher = {American Physical Society},
}

@article{Fag_2007,
author = {Fagnola, Franco and Umanit\`a, Veronica},
title = {GENERATORS OF DETAILED BALANCE QUANTUM {Markov} SEMIGROUPS},
journal = {Infin. Dimens. Anal. Quantum Probab. Relat. Top.},
volume = {10},
number = {03},
pages = {335-363},
year = {2007},
doi = {10.1142/S0219025707002762},
}

@article{zhan2025rapidquantumgroundstate,
      title={Rapid quantum ground state preparation via dissipative dynamics},
      author={Yongtao Zhan and Zhiyan Ding and Jakob Huhn and Johnnie Gray and John Preskill and Garnet Kin-Lic Chan and Lin Lin},
      year={2025},
      journal={arXiv:2503.15827},
      eprint={2503.15827},
      archivePrefix={arXiv},
      primaryClass={quant-ph},
      url={https://arxiv.org/abs/2503.15827},
}

@article{tong2024fast,
  title = {Fast Mixing of Weakly Interacting Fermionic Systems at Any Temperature},
  author = {Tong, Yu and Zhan, Yongtao},
  journal = {PRX Quantum},
  volume = {6},
  issue = {3},
  pages = {030301},
  numpages = {33},
  year = {2025},
  month = {Jul},
  publisher = {American Physical Society},
  doi = {10.1103/h1dx-ps5p},
  url = {https://link.aps.org/doi/10.1103/h1dx-ps5p}
}

@Article{RoyChalkerGornyiEtAl2020,
  author    = {Roy, Sthitadhi and Chalker, JT and Gornyi, IV and Gefen, Yuval},
  title     = {Measurement induced steering of quantum systems},
  journal   = {Phys. Rev. Research},
  year      = {2020},
  volume    = {2},
  number    = {3},
  pages     = {033347},
  publisher = {APS},
}

@article{Cubitt2023,
      title={Dissipative ground state preparation and the Dissipative Quantum Eigensolver},
      author={Toby S. Cubitt},
      journal={arXiv:2303.11962},
  year={2023},
}

@article{li2024dissipative,
  title={Dissipative ground state preparation in ab initio electronic structure theory},
  author={Li, Hao-En and Zhan, Yongtao and Lin, Lin},
  journal={arXiv:2411.01470},
  year={2024}
}

@Article{VerstraeteWolfCirac2009,
  author  = {Verstraete, Frank and Wolf, Michael M. and Cirac, I.},
  title   = {{Quantum computation and quantum-state engineering driven by dissipation}},
  journal = {Nat. Phys.},
  year    = {2009},
  volume  = {5},
  number  = {9},
  pages   = {633--636},
}

@article{PhysRevX.11.011047,
  title = {Improved Thermal Area Law and Quasilinear Time Algorithm for Quantum Gibbs States},
  author = {Kuwahara, Tomotaka and Alhambra, \'Alvaro M. and Anshu, Anurag},
  journal = {Phys. Rev. X},
  volume = {11},
  issue = {1},
  pages = {011047},
  numpages = {42},
  year = {2021},
  month = {Mar},
  publisher = {American Physical Society},
  doi = {10.1103/PhysRevX.11.011047},
  url = {https://link.aps.org/doi/10.1103/PhysRevX.11.011047}
}

@article{kastoryano2016quantum,
  title={Quantum {Gibbs} samplers: The commuting case},
  author={Kastoryano, Michael J and Brandao, Fernando GSL},
  journal={Commun. Math. Phys.},
  volume={344},
  number={3},
  pages={915--957},
  year={2016},
  publisher={Springer}
}

@article{eder2025quantum,
  title={Quantum dissipative search via {Lindbladians}},
  author={Eder, Peter J and Fin{\v{z}}gar, Jernej Rudi and Braun, Sarah and Mendl, Christian B},
  journal={Phys. Rev. A},
  volume={111},
  number={4},
  pages={042430},
  year={2025},
  publisher={APS}
}

@article{motlagh2024ground,
  title={Ground state preparation via dynamical cooling},
  author={Motlagh, Danial and Zini, Modjtaba Shokrian and Arrazola, Juan Miguel and Wiebe, Nathan},
  journal={arXiv preprint arXiv:2404.05810},
  year={2024}
}

@article{lambert2024fixing,
  title={Fixing detailed balance in ancilla-based dissipative state engineering},
  author={Lambert, Neill and Cirio, Mauro and Lin, Jhen-Dong and Menczel, Paul and Liang, Pengfei and Nori, Franco},
  journal={Phys. Rev. Research},
  volume={6},
  number={4},
  pages={043229},
  year={2024},
  publisher={APS}
}

@article{rouze2024optimal,
  title={Optimal quantum algorithm for {Gibbs} state preparation},
  author={Rouz{\'e}, Cambyse and Fran{\c{c}}a, Daniel Stilck and Alhambra, {\'A}lvaro M},
  journal={arXiv:2411.04885},
  year={2024}
}

@article{molpeceres2025,
      title={Quantum algorithms for cooling: a simple case study}, 
      author={Daniel Molpeceres and Sirui Lu and J. Ignacio Cirac and Barbara Kraus},
      year={2025},
      journal={arXiv:2503.24330},
}

@article{kochanowski2024rapid,
  title={Rapid thermalization of dissipative many-body dynamics of commuting {H}amiltonians},
  author={Kochanowski, Jan and Alhambra, Alvaro M and Capel, Angela and Rouz{\'e}, Cambyse},
  journal={Commun. Math. Phys.},
  year={2024}
}

@Article{DingLiLinZhang2024,
      author={Zhiyan Ding and Bowen Li and Lin Lin and Ruizhe Zhang},
      title={Polynomial-Time Preparation of Low-Temperature {G}ibbs States for {2D Toric Code}},
      journal = {arXiv:2410.01206},
      year={2024},
}

@inproceedings{rouz2024,
author = {Rouz\'{e}, Cambyse and Fran\c{c}a, Daniel Stilck and Alhambra, \'{A}lvaro M.},
title = {Efficient Thermalization and Universal Quantum Computing with Quantum Gibbs Samplers},
year = {2025},
isbn = {9798400715105},
url = {https://doi.org/10.1145/3717823.3718268},
doi = {10.1145/3717823.3718268},
booktitle = {STOC 25},
pages = {1488–1495},
numpages = {8},
keywords = {Dissipative evolution, Mixing Times, Monte Carlo methods, Quantum Gibbs states, Quantum complexity, Quantum computing},
location = {Prague, Czechia},
}

@Article{BardetCapelGaoEtAl2023,
  author    = {Bardet, Ivan and Capel, {\'A}ngela and Gao, Li and Lucia, Angelo and P{\'e}rez-Garc{\'\i}a, David and Rouz{\'e}, Cambyse},
  title     = {Rapid thermalization of spin chain commuting {Hamiltonians}},
  journal   = {Phys. Rev. Lett.},
  year      = {2023},
  volume    = {130},
  number    = {6},
  pages     = {060401},
  publisher = {APS},
}

@Article{KastoryanoTemme2013,
  author  = {Kastoryano, Michael J and Temme, Kristan},
  title   = {{Quantum logarithmic Sobolev inequalities and rapid mixing}},
  journal = {J. Math. Phys.},
  year    = {2013},
  volume  = {54},
  number  = {5},
  pages   = {1--34},
}

@Article{TemmeKastoryanoRuskaiEtAl2010,
  author    = {Temme, Kristan and Kastoryano, Michael James and Ruskai, Mary Beth and Wolf, Michael Marc and Verstraete, Frank},
  title     = {The $\chi^2$-divergence and mixing times of quantum {Markov} processes},
  journal   = {J. Math. Phys.},
  year      = {2010},
  volume    = {51},
  number    = {12},
  publisher = {AIP Publishing},
}

@article{PRXQuantum.5.020332,
  title = {Simulating Open Quantum Systems Using {Hamiltonian} Simulations},
  author = {Ding, Zhiyan and Li, Xiantao and Lin, Lin},
  journal = {PRX Quantum},
  volume = {5},
  issue = {2},
  pages = {020332},
  numpages = {29},
  year = {2024},
  publisher = {American Physical Society},
}

@article{farhi2000,
      title={Quantum Computation by Adiabatic Evolution},
      author={Edward Farhi and Jeffrey Goldstone and Sam Gutmann and Michael Sipser},
      year={2000},
      journal={arXiv:quant-ph/0001106},
}

@article{RevModPhys.90.015002,
  title = {Adiabatic quantum computation},
  author = {Albash, Tameem and Lidar, Daniel A.},
  journal = {Rev. Mod. Phys.},
  volume = {90},
  issue = {1},
  pages = {015002},
  numpages = {64},
  year = {2018},
  month = {Jan},
  publisher = {American Physical Society},
  doi = {10.1103/RevModPhys.90.015002},
  url = {https://link.aps.org/doi/10.1103/RevModPhys.90.015002}
}

@article{tilly_variational_2022,
	title = {The Variational quantum Eigensolver: A review of methods and best practices},
	volume = {986},
	issn = {0370-1573},
	url = {https://www.sciencedirect.com/science/article/pii/S0370157322003118},
	doi = {https://doi.org/10.1016/j.physrep.2022.08.003},
	journal = {Phys. Rep.},
	author = {Tilly, Jules and Chen, Hongxiang and Cao, Shuxiang and Picozzi, Dario and Setia, Kanav and Li, Ying and Grant, Edward and Wossnig, Leonard and Rungger, Ivan and Booth, George H. and Tennyson, Jonathan},
	year = {2022},
	keywords = {Electronic structure, Many-body physics, Noisy intermediate scale quantum devices, Quantum chemistry, Quantum computing, Variational methods},
	pages = {1--128},
}

@article{motta_determining_2020,
	title = {Determining eigenstates and thermal states on a quantum computer using quantum imaginary time evolution},
	volume = {16},
	issn = {1745-2481},
	url = {https://doi.org/10.1038/s41567-019-0704-4},
	doi = {10.1038/s41567-019-0704-4},
	abstract = {The accurate computation of Hamiltonian ground, excited and thermal states on quantum computers stands to impact many problems in the physical and computer sciences, from quantum simulation to machine learning. Given the challenges posed in constructing large-scale quantum computers, these tasks should be carried out in a resource-efficient way. In this regard, existing techniques based on phase estimation or variational algorithms display potential disadvantages; phase estimation requires deep circuits with ancillae, that are hard to execute reliably without error correction, while variational algorithms, while flexible with respect to circuit depth, entail additional high-dimensional classical optimization. Here, we introduce the quantum imaginary time evolution and quantum Lanczos algorithms, which are analogues of classical algorithms for finding ground and excited states. Compared with their classical counterparts, they require exponentially less space and time per iteration, and can be implemented without deep circuits and ancillae, or high-dimensional optimization. We furthermore discuss quantum imaginary time evolution as a subroutine to generate Gibbs averages through an analogue of minimally entangled typical thermal states. Finally, we demonstrate the potential of these algorithms via an implementation using exact classical emulation as well as through prototype circuits on the Rigetti quantum virtual machine and Aspen-1 quantum processing unit.},
	number = {2},
	journal = {Nat. Phys.},
	author = {Motta, Mario and Sun, Chong and Tan, Adrian T. K. and O’Rourke, Matthew J. and Ye, Erika and Minnich, Austin J. and Brandão, Fernando G. S. L. and Chan, Garnet Kin-Lic},
	month = feb,
	year = {2020},
	pages = {205--210},
}

@article{lloyd2025quantumthermal,
      title={Quantum thermal state preparation for near-term quantum processors},
      author={Jerome Lloyd and Dmitry A. Abanin},
      year={2025},
      journal={arXiv:2506.21318},
}

@article{langbehn2025universal,
      title={Universal cooling of quantum systems via randomized measurements},
      author={Josias Langbehn and George Mouloudakis and Emma King and Raphaël Menu and Igor Gornyi and Giovanna Morigi and Yuval Gefen and Christiane P. Koch},
      year={2025},
      journal={arXiv:2506.11964},
}

@Article{jiang2024quantum,
  title={Quantum {Metropolis} sampling via weak measurement},
  author={Jiang, Jiaqing and Irani, Sandy},
  journal={arXiv:2406.16023},
  year={2024}
}

@article{shtanko2021preparing,
  title={Preparing thermal states on noiseless and noisy programmable quantum processors},
  author={Shtanko, Oles and Movassagh, Ramis},
  journal={arXiv:2112.14688},
  year={2021}
}

@article{hagan2025thermodynamic,
  title={The Thermodynamic Cost of Ignorance: Thermal State Preparation with One Ancilla Qubit},
  author={Hagan, Matthew and Wiebe, Nathan},
  journal={arXiv:2502.03410},
  year={2025}
}

@Article{MiMichailidisShabaniEtAl2024,
  author  = {Mi, X. and Michailidis, A. A. and Shabani, S. and Miao, K. C. and Klimov, P. V. and Lloyd, J. and Rosenberg, E. and Acharya, R. and Aleiner, I. and Andersen, T. I. and Ansmann, M. and Arute, F. and Arya, K. and Asfaw, A. and Atalaya, J. and Bardin, J. C. and Bengtsson, A. and Bortoli, G. and Bourassa, A. and Bovaird, J. and Brill, L. and Broughton, M. and Buckley, B. B. and Buell, D. A. and Burger, T. and Burkett, B. and Bushnell, N. and Chen, Z. and Chiaro, B. and Chik, D. and Chou, C. and Cogan, J. and Collins, R. and Conner, P. and Courtney, W. and Crook, A. L. and Curtin, B. and Dau, A. G. and Debroy, D. M. and {Del Toro Barba}, A. and Demura, S. and {Di Paolo}, A. and Drozdov, I. K. and Dunsworth, A. and Erickson, C. and Faoro, L. and Farhi, E. and Fatemi, R. and Ferreira, V. S. and Burgos, L. F. and Forati, E. and Fowler, A. G. and Foxen, B. and Genois, {\'{E}}. and Giang, W. and Gidney, C. and Gilboa, D. and Giustina, M. and Gosula, R. and Gross, J. A. and Habegger, S. and Hamilton, M. C. and Hansen, M. and Harrigan, M. P. and Harrington, S. D. and Heu, P. and Hoffmann, M. R. and Hong, S. and Huang, T. and Huff, A. and Huggins, W. J. and Ioffe, L. B. and Isakov, S. V. and Iveland, J. and Jeffrey, E. and Jiang, Z. and Jones, C. and Juhas, P. and Kafri, D. and Kechedzhi, K. and Khattar, T. and Khezri, M. and Kieferov{\'{a}}, M. and Kim, S. and Kitaev, A. and Klots, A. R. and Korotkov, A. N. and Kostritsa, F. and Kreikebaum, J. M. and Landhuis, D. and Laptev, P. and Lau, K.-M. and Laws, L. and Lee, J. and Lee, K. W. and Lensky, Y. D. and Lester, B. J. and Lill, A. T. and Liu, W. and Locharla, A. and Malone, F. D. and Martin, O. and McClean, J. R. and McEwen, M. and Mieszala, A. and Montazeri, S. and Morvan, A. and Movassagh, R. and Mruczkiewicz, W. and Neeley, M. and Neill, C. and Nersisyan, A. and Newman, M. and Ng, J. H. and Nguyen, A. and Nguyen, M. and Niu, M. Y. and O'Brien, T. E. and Opremcak, A. and Petukhov, A. and Potter, R. and Pryadko, L. P. and Quintana, C. and Rocque, C. and Rubin, N. C. and Saei, N. and Sank, D. and Sankaragomathi, K. and Satzinger, K. J. and Schurkus, H. F. and Schuster, C. and Shearn, M. J. and Shorter, A. and Shutty, N. and Shvarts, V. and Skruzny, J. and Smith, W. C. and Somma, R. and Sterling, G. and Strain, D. and Szalay, M. and Torres, A. and Vidal, G. and Villalonga, B. and Heidweiller, C. V. and White, T. and Woo, B. W. K. and Xing, C. and Yao, Z. J. and Yeh, P. and Yoo, J. and Young, G. and Zalcman, A. and Zhang, Y. and Zhu, N. and Zobrist, N. and Neven, H. and Babbush, R. and Bacon, D. and Boixo, S. and Hilton, J. and Lucero, E. and Megrant, A. and Kelly, J. and Chen, Y. and Roushan, P. and Smelyanskiy, V. and Abanin, D. A.},
  title   = {{Stable quantum-correlated many-body states through engineered dissipation}},
  journal = {Science},
  year    = {2024},
  volume  = {383},
  number  = {6689},
  pages   = {1332--1337},
}

@article{PhysRevX.11.011020,
  title = {Theory of Trotter Error with Commutator Scaling},
  author = {Childs, Andrew M. and Su, Yuan and Tran, Minh C. and Wiebe, Nathan and Zhu, Shuchen},
  journal = {Phys. Rev. X},
  volume = {11},
  issue = {1},
  pages = {011020},
  numpages = {49},
  year = {2021},
  month = {Feb},
  publisher = {American Physical Society},
  doi = {10.1103/PhysRevX.11.011020},
  url = {https://link.aps.org/doi/10.1103/PhysRevX.11.011020}
}

@article{PhysRevResearch.6.043321,
  title = {Efficient quantum algorithm to simulate open systems through a single environmental qubit},
  author = {Di Bartolomeo, Giovanni and Vischi, Michele and Feri, Tommaso and Bassi, Angelo and Donadi, Sandro},
  journal = {Phys. Rev. Res.},
  volume = {6},
  issue = {4},
  pages = {043321},
  numpages = {13},
  year = {2024},
  month = {Dec},
  publisher = {American Physical Society},
}

@article{yu2024exponentiallyreducedcircuitdepths,
      title={Exponentially reduced circuit depths in Lindbladian simulation}, 
      author={Wenjun Yu and Xiaogang Li and Qi Zhao and Xiao Yuan},
      year={2024},
      journal={arXiv:2412.21062},
      archivePrefix={arXiv},
      primaryClass={quant-ph},
      url={https://arxiv.org/abs/2412.21062}, 
}

@article{chen2025,
      title={A Randomized Method for Simulating Lindblad Equations and Thermal State Preparation}, 
      author={Hongrui Chen and Bowen Li and Jianfeng Lu and Lexing Ying},
      year={2025},
      eprint={2407.06594},
      journal={arXiv:2407.06594},
      primaryClass={quant-ph},
      url={https://arxiv.org/abs/2407.06594}, 
}

@Article{PerrinScoquartPavlovEtAl2025,
  author  = {Perrin, Hugo and Scoquart, Thibault and Pavlov, Andrei I and Gnezdilov, Nikolay V},
  title   = {Dynamic thermalization on noisy quantum hardware},
  journal = {Communications Physics},
  year    = {2025},
  volume  = {8},
  number  = {1},
  pages   = {95},
}

@Article{MatthiesRudnerRoschEtAl2024,
  author  = {Matthies, Anne and Rudner, Mark and Rosch, Achim and Berg, Erez},
  title   = {Programmable adiabatic demagnetization for systems with trivial and topological excitations},
  journal = {Quantum},
  year    = {2024},
  volume  = {8},
  pages   = {1505},
}

@InProceedings{GilyenSuLowEtAl2019,
  title = {Quantum singular value transformation and beyond: exponential improvements for quantum matrix arithmetics},
  author = {Gily{\'e}n, Andr{\'a}s and Su, Yuan and Low, Guang Hao and Wiebe, Nathan},
  booktitle = {Proceedings of the 51st Annual ACM SIGACT Symposium on Theory of Computing},
  year = {2019},
  pages = {193--204},
}

@article{PoulinWocjan2009,
  title = {Preparing ground states of quantum many-body systems on a quantum computer},
  author = {Poulin, David and Wocjan, Pawel},
  journal = {Phys. Rev. Lett.},
  year = {2009},
  volume = {102},
  pages = {130503},
}

@Article{LinTong2020a,
  title = {Near-optimal ground state preparation},
  author = {Lin, Lin and Tong, Yu},
  journal = {Quantum},
  year = {2020},
  volume = {4},
  pages = {372},
}

@Book{KitaevShenVyalyi2002,
  title = {Classical and quantum computation},
  author = {Kitaev, Alexei Yu and Shen, Alexander and Vyalyi, Mikhail N},
  year = {2002},
  publisher = {American Mathematical Soc.},
}

@Article{Barahona1982,
  title = {On the computational complexity of Ising spin glass models},
  author = {Barahona, Francisco},
  journal = {J. Phys. A: Math. Gen.},
  year = {1982},
  volume = {15},
  pages = {3241},
}

@InProceedings{Sly2010,
  title = {Computational transition at the uniqueness threshold},
  author = {Sly, Allan},
  booktitle = {2010 IEEE 51st Annual Symposium on Foundations of Computer Science},
  year = {2010},
  pages = {287--296},
}

\clearpage
\section{Methods}

Here we first present the implementation details of the proposed quantum algorithm and then summarize the proof roadmap and the core analytical techniques supporting the main theoretical result.

\section{Algorithm Implementation}

To facilitate early fault-tolerant implementation, we approximate the time-dependent Hamiltonian in $\Phi$ using a second-order Trotter formula with a small time step $\tau$. We choose not to employ higher-order Trotter formulas so as to avoid implementing backward Hamiltonian simulations of $H$. At each time step $n$, the frequency parameter $\omega$ and the operator $A_S$ are independently sampled. Let $M=\left\lceil 2T/\tau\right\rceil$, and $f_m=f((m+1/2)\tau-T)$. According to the preceding description, the evolution from $\rho_n$ to $\rho_{n+1}$ is implemented as follows:
\begin{equation}\label{eqn:Phi_alpha_approx}
\begin{aligned}
&\rho_{n+1} = \Phi^{\rm approx}(\rho_n)\\
&:=\mathbb{E}_{A_S,\omega}\left(\operatorname{Tr}_E \left[ \Pi^M_{m=0}U^\alpha_{m\tau}\left( \rho_n \otimes \rho_E \right) \left(\Pi^M_{m=0}U^\alpha_{m\tau}\right)^\dagger \right]\right)\,,    
\end{aligned}
\end{equation}
where
\[
U^\alpha_{m\tau}=W_{A_S,m}(\tau/2)\exp(-iH\tau)\exp(i\omega \tau Z/2)W_{A_S,m}(\tau/2)
\]
implements the Trotterized approximation to $U^{\alpha}$ with
\begin{equation}\label{eqn:Wnformula}
  W_{A_S,m}(\tau/2)=\exp(-i\alpha f_m( A_S \otimes B_E+A^\dagger_S\otimes B^\dagger_E)\tau/2)\,.
\end{equation}
In practice, if the system Hamiltonian $H$ consists of noncommuting terms, one would further Trotterize $\exp(-iH\tau)$; however, we do not explicitly do so here in order to keep the notation streamlined. The detailed complexity analysis of the second order Trotter can be found in~\cref{sec:complete_efficiency}.

\section{Key techniques}

The analysis of our method can be divided into three main steps: (1) Derive the effective Lindblad dynamics that approximates $\Phi$; (2) Show that the fixed point of the Lindblad dynamics is close to the target thermal (or ground) state, and hence also serves as the fixed point of $\Phi$; and (3) Establish an upper bound on the mixing time of $\Phi$.

\medskip
\noindent\emph{Derivation of effective Lindblad dynamics:}

We show that the quantum map $\Phi$ in the weak coupling regime $\alpha \ll 1$ can be approximated by an effective Lindblad dynamics up to a unitary transformation. 

Define the time evolution operator by $U_S(t) := \exp(-iHt)$, and the associated superoperator by $\mc{U}_S(t)[\rho]=U_S(t)\rho U^\dagger_S(t)$. Given a jump operator $V$,  we define the associated dissipative operator as
\begin{equation}
\mathcal{D}_{V}(\rho)=V\rho V^\dagger- \frac12\{V^\dagger V,\rho\}\,.
\end{equation}
The following theorem establishes that the quantum channel $\Phi$ can be approximated by an effective Lindblad dynamics:

\begin{thm}[Informal]\label{thm:char_Phi_alpha} Under the choice of $H_E, A_S, B_E, f(t), g(\omega)$ in the main text, $\rho_{n+1}$ can be expressed as
\begin{equation}\label{eqn:Phi_map_approx}
\rho_{n+1}=\mc{U}_S(T)\,\circ\,\exp(\mathcal{L} \alpha^2)\,\circ\,\mc{U}_S(T)[\rho_n]+\mc{O}(\alpha^4T^4\|f\|^4_{L^\infty})\,.
\end{equation}
Here, $\mathcal{L}$ is a Lindbladian operator.
\begin{equation}\label{eqn:lindbladian_operator}
\begin{aligned}
\mc{L}(\rho)=&\mathbb{E}_{A_S}\left(\int^\infty_{-\infty} -i\left[g(\omega)H_{\mathrm{LS},A_S}(\omega),\rho\right]\right.\\
&\left.+\gamma(\omega)\mathcal{D}_{V_{A_S,f,T}(\omega)}(\rho)\mathrm{d}\omega\right)\,.
\end{aligned}
\end{equation}
Here, the Lamb shift term $H_{\mathrm{LS},A_S}(\omega)$ is a Hermitian matrix. $\gamma(\omega)=(g(\omega)+g(-\omega))/(1+\exp(\beta\omega))$ when $\beta<\infty$, and $\gamma(\omega)=(g(\omega)+g(-\omega))\textbf{1}_{\omega<0}+\rev{g(0)\textbf{1}_{\omega=0}}$ when $\beta=\infty$. The jump operator $V_{A_S,f,T}(\omega)$ is defined as
\begin{equation}\label{eqn:V_sf}
V_{A_S,f,T}(\omega)=\int^T_{-T}f(t)A_S(t)\exp(-i\omega t)\mathrm{d}t.
\end{equation}
\end{thm}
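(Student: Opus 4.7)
The plan is to derive the effective Lindbladian by a truncated Dyson expansion of $U^\alpha(T)$ in the interaction picture, partial tracing over the single-qubit bath, and averaging over $\omega$ and $A_S$, and then to recognize the resulting second-order expansion as the first-order Taylor expansion of $\exp(\alpha^2\mathcal{L})$ sandwiched by the free system evolutions $\mc{U}_S(T)$.

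First, I would move to the interaction picture by writing $U^\alpha(T) = U_0\,\tilde{U}$, where $U_0$ implements free evolution under $H+H_E$ on $[-T,T]$ and $\tilde{U}$ is the interaction-picture propagator. Since $[H,H_E]=0$ and $[H_E,\rho_E]=0$, conjugation by $U_0$ acts as $\mc{U}_S(2T)\otimes\operatorname{Id}_E$ on $\rho_n\otimes\rho_E$; after partial trace, this produces exactly the outer $\mc{U}_S(T)\circ\cdots\circ\mc{U}_S(T)$ sandwich in~\eqref{eqn:Phi_map_approx}. The remaining task is to show that $\operatorname{Tr}_E[\tilde{U}(\sigma\otimes\rho_E)\tilde{U}^\dagger] = \exp(\alpha^2\mathcal{L})[\sigma] + \mc{O}(\alpha^4 T^4 \|f\|^4_{L^\infty})$ after averaging, with $\sigma$ the system state at the midpoint. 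Next, I would expand $\tilde{U}$ as a Dyson series through second order in $\alpha$. The interaction-picture Hamiltonian reads $\tilde{H}_I(t) = f(t)(A_S(t)\otimes B_E(t) + A_S^\dagger(t)\otimes B_E^\dagger(t))$ with $A_S(t) = e^{iHt}A_S e^{-iHt}$ and $B_E(t) = e^{i\omega t}|1\rangle\langle 0|$. All odd-order terms vanish under $\operatorname{Tr}_E$ because $\rho_E$ is diagonal in the $Z$-basis while $B_E$ and $B_E^\dagger$ are off-diagonal; among second-order terms, $B_EB_E$ and $B_E^\dagger B_E^\dagger$ vanish because $B_E^2=0$. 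The surviving contributions involve the environment correlators $\operatorname{Tr}_E[B_E(t)B_E^\dagger(s)\rho_E] = e^{i\omega(t-s)}/(1+e^{\beta\omega})$ and its conjugate, paired with double time integrals of $f(t)f(s)\,A_S(t)\sigma A_S^\dagger(s)$ (from the $V\sigma V^\dagger$ branch) and companion anticommutator-type structures from the no-jump branch.

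The crucial step is to reorganize these ordered double integrals, after averaging $\omega\sim g(\omega)$ and $A_S$ uniformly over $\mc{A}$, into the Lindblad form~\eqref{eqn:lindbladian_operator}. Splitting $\int_{-T}^T\!\!\int_{-T}^t\cdots\,ds\,dt$ into its symmetric and antisymmetric parts under $s\leftrightarrow t$ and combining with the phase $e^{i\omega(t-s)}f(t)f(s)$ converts the symmetric half into $V_{A_S,f,T}(\omega)\,\sigma\, V_{A_S,f,T}(\omega)^\dagger$-type products (after a change of variables $\omega\to-\omega$ in the Bohr-frequency integral), while the antisymmetric half assembles into a commutator with a Hermitian operator that becomes the Lamb shift $H_{\mathrm{LS},A_S}(\omega)$. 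Combining the $B_EB_E^\dagger$ and $B_E^\dagger B_E$ channels, using both the $\omega\leftrightarrow -\omega$ symmetry inside the $g(\omega)$-average and the fact that $\mc{A}$ is closed under Hermitian conjugation (which allows relabeling $A_S\leftrightarrow A_S^\dagger$ inside the $A_S$-average), produces the advertised prefactor $\gamma(\omega) = (g(\omega)+g(-\omega))/(1+e^{\beta\omega})$; in the $\beta\to\infty$ limit the bath populations collapse onto $|0\rangle$, suppressing the upward channel and leaving only $\gamma(\omega) = (g(\omega)+g(-\omega))\mathbf{1}_{\omega<0}$.

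Finally, I would promote $\operatorname{Id} + \alpha^2\mathcal{L}$ to $\exp(\alpha^2\mathcal{L})$, which costs at most $\mc{O}(\alpha^4)$ in superoperator norm, and bound the Dyson remainder using $|f(t)|\le\|f\|_{L^\infty}$, $\|A_S\|\le 1$, and $\|B_E\|=1$: each factor of $\alpha$ contributes at most $\alpha T\|f\|_{L^\infty}$ after the $[-T,T]$ integration, yielding the claimed $\mc{O}(\alpha^4 T^4\|f\|^4_{L^\infty})$ error (the third-order term is again killed by the parity argument). The hard part will be the reorganization just described: carefully tracking the ordered double integral through the $\omega$ and $A_S$ averages so that it splits cleanly into a dissipator with exactly the Fermi-like weight $\gamma(\omega)$ and a commutator with a genuinely Hermitian Lamb shift, while retaining the finite-time truncation $V_{A_S,f,T}$ rather than taking the $T\to\infty$ Fourier limit that would appear in a naive weak-coupling derivation.
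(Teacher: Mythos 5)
Your proposal follows essentially the same route as the paper's proof of \cref{thm:char_Phi_alpha_rigor}: a Dyson expansion to second order in the interaction picture, splitting the time-ordered double integral into a symmetric part (the dissipator, with bath populations $1/(1+e^{\pm\beta\omega})$ supplying the Fermi-like weights) and an antisymmetric commutator part (the Lamb shift), followed by exponentiating $\operatorname{Id}+\alpha^2\mathcal{L}$ at $\mc{O}(\alpha^4)$ cost. The only cosmetic difference is that you kill the odd-order terms via the off-diagonal structure of $B_E$ against the diagonal $\rho_E$, whereas the paper uses $\mathbb{E}(G(t))=0$ from the sign randomization of $A_S$; both arguments are valid.
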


In above theorem, we omit the $f,T$ dependence of $H_{\mathrm{LS},A_S}(\omega)$ in the subindex for simplicity. The rigorous version of above theorem with the specific form of $\mc{L}$  can be found in \cref{sec:phi_alpha_char} as~\cref{thm:char_Phi_alpha_rigor}.
The $\mathcal{O}(\alpha^2)$ term is obtained by analyzing the second-order term in the Dyson expansion of the time evolution operators, and the $\mathcal{O}(\alpha^4)$ remainder bounds the sum of higher-order terms.
The parameter $\alpha^2$ can be interpreted as the effective evolution time under the Lindbladian operator $\mathcal{L}$. Each application of $\Phi$ approximately evolves the system, up to the unitary transformation $\mc{U}_S(T)$, under the Lindbladian operator $\mathcal{L}$ for time $\alpha^2$ with error scaling as $\mc{O}(\alpha^4)$. Therefore, up to the unitary transformation $\mathcal{U}_S(T)$, the map $\Phi$ serves as a first-order approximation to the Lindblad dynamics $\partial_t \rho = \mathcal{L}(\rho)$.

\medskip
\noindent\emph{Approximate fixed point:}

For a quantum channel $\Phi$ with a unique fixed point $\rho_{\text{fix}}$, a state $\rho$ is close to $\rho_{\text{fix}}$ if $\Phi(\rho)\approx\rho$. To ensure that $\rho_{\text{fix}}$ approximates the target thermal state $\rho_\beta$, it suffices to bound $\|\Phi(\rho_\beta)-\rho_\beta\|_1$. According to~\cref{thm:char_Phi_alpha} and the fact that $\mc{U}_S(T)$ preserves the targe state, we can show 
\[
\|\Phi(\rho_\beta)-\rho_\beta\|_1 \approx\left\|\mc{L}(\rho_\beta)\right\|_1,\quad \alpha\ll 1\,,
\]
where $\mc{L}$ is the approximation Lindblad operator that appears in~\cref{thm:char_Phi_alpha}. Thus, it suffices to show that the Lindblad dynamics approximately preserve the thermal or ground state. This forms the most technical component of the proof. For the thermal state, we demonstrate that the Lindbladian operator approximately satisfies the detailed balance condition, up to an additional coherent term that nearly commutes with the thermal state. For the ground state, assuming the Hamiltonian has a spectral gap $\Delta$, we directly show that the operator approximately preserves the ground state.



The following proposition provides fixed point error bounds in preparing both thermal and ground states. 
\rev{Guidance on parameter choices and a more detailed formulation of this proposition can be found in~\cref{thm:fix_thermal} (\cref{thm:fix_thermal_rigor}) and~\cref{thm:fix_ground} (\cref{thm:fix_ground_rigor}).}

\begin{prop}\label{prop:first_step} For the quantum channel defined in \cref{eqn:Phi_alpha}, For any inverse temperature $\beta>0$, we have
\begin{equation}\label{eqn:fixed_point_error_1_informal}
\|\rho_{\rm fix}(\Phi)-\rho_\beta\|_1
  = \widetilde{\mathcal{O}}\!\left(\Bigl(\tfrac{\beta}{\sigma}+\alpha^2\sigma^2\Bigr)\alpha^2\tau_{\rm mix}(\epsilon)\right).
\end{equation}
For ground state that $\beta=\infty$, if the Hamiltonian $H$ has spectral gap $\Delta$, we also have
\begin{equation}\label{eqn:fixed_point_error_2_informal}
\|\rho_{\rm fix}(\Phi)-\rho_\infty\|_1
  = \widetilde{\mathcal{O}}\!\left(\left(e^{-\sigma^2\Delta^2/32}+\alpha^2\sigma^2\right)\alpha^2\tau_{\rm mix}(\epsilon)\right).
\end{equation}
\end{prop}

\medskip
\noindent\emph{Mixing time bound:}

The ground-state and thermal-state cases require different proof strategies.

We start with thermal state case. When two quantum channels $\Phi_1$ and $\Phi$ are close, i.e., $\|\Phi_1-\Phi\|_{1\rightarrow1}\approx0$, their mixing times $\tau_{1,\text{mix}}(\epsilon),\tau_{\text{mix}}(\epsilon)$ are  also close. We therefore define the quantum channel $\Phi_1$
\[
\Phi_1=\mc{U}_S(T)\,\circ\,\exp\left(\mathcal{M}\alpha^2\right)\,\circ\,\mc{U}_S(T)\,
\]
where $\mathcal{M}$ contains a dissipative operator that approximately satisfies the quantum detailed balance condition, together with a coherent term that commutes with the thermal state. In the free-fermionic example, we show that the dissipative part approximately satisfies the KMS detailed balance condition and possesses a nonvanishing spectral gap of order $\mathrm{poly}(1/N)$ as $\sigma \to 0$. For commuting local Hamiltonians, existing results have established that the Davies generator exhibits a bounded mixing time in the high-temperature regime~\cite{kastoryano2016quantum}. We note that even though the dissipative part possesses a spectral gap, existing techniques for analyzing the mixing time cannot be directly applied due to the presence of the coherent term and the accompanying unitary evolution. To address this challenge, we establish the contractivity of the channel under a novel weighted Hilbert–Schmidt norm, $\|\rho_{\beta}^{-1/4}[\cdot]\rho_{\beta}^{-1/4}\|_2$, inspired by~\cite{ChenKastoryanoBrandaoEtAl2023}. We prove contraction of the quantum channel $\exp\left(\mathcal{M}\alpha^2\right)$ under the weighted Hilbert-Schmidt norm and this contraction still holds in the presence of the unitary evolution $\mc{U}_S(T)$, and therefore also holds for the map $\Phi_1$. The Lindbladian $\mathcal{M}$ is chosen to both preserve the thermal state $\rho_\beta$ and approximate the effective Lindbladian $\mathcal{L}$. Hence, the mixing time of $\Phi_1$ is close to that of $\Phi$. Using this framework, we rigorously demonstrate that, at high temperature, the mixing time for the single-qubit case, free-fermion systems, and commuting local Hamiltonians scales polynomially with $N$ and remains independent of $\sigma$. The detailed results are summarized in~\cref{sec:mixing_time}.

For the ground state case, we only focus on the free fermion Hamiltonain and adopt the strategy from~\cite[Section IV]{zhan2025rapidquantumgroundstate}, which analyzes the Heisenberg evolution of the number operator. Following the argument in~\cite[Section IV]{zhan2025rapidquantumgroundstate}, the convergence of the Lindblad dynamics to the ground state can be established by showing that the expectation of the number operator converges to zero. Moreover, since the unitary evolution commutes with the number operator, it does not affect this convergence. Finally, the convergence of the number operator can be directly related to the trace distance between the current state and the ground state using the Fuchs--van de Graaf inequality; see~\cref{sec:gs_mixing} for details. 



\newpage
\appendix
\let\section\oldsection

\newpage
\clearpage
\thispagestyle{empty}
\onecolumngrid

The Appendix is organized as follows. We begin by introducing the notations and reviewing related work in~\cref{sec:notation} and~\cref{sec:related_works}. Then, we present our theoretical analysis in three steps:
\begin{itemize}
    \item \textbf{First step: Derivation of the effective Lindbladian dynamics.} The main result is stated in~\cref{thm:char_Phi_alpha_rigor}, which is a rigorous version of~\cref{thm:char_Phi_alpha}, with the proof given in~\cref{sec:proof_rhoevolution}. 
    
    After deriving the Lindblad dynamics, we demonstrate that two close CPTP maps have close fixed points and mixing times in~\cref{sec:almost_fixed_point}~\cref{thm:almost_fixed_point}, which provides a useful tool for analyzing the fixed point and mixing time of $\Phi$.
    
    \item \textbf{Second step: Fixed point error bounds for thermal and ground state preparation.} The main results are presented in~\cref{sec:thermal_ground_prep}~\cref{thm:fix_thermal} and~\cref{thm:fix_ground}, corresponding to the thermal and ground states, respectively, with proofs provided in~\cref{sec:app_fix_thermal} and~\cref{sec:app_fix_gs}. Combining the results from the first two steps, we show that the fixed point of $\Phi$ is close to the target thermal or ground state when properly adjusting the parameters.

    \item \textbf{Third step: Mixing time and End-to-end efficiency analysis.} We present mixing-time results for several physically relevant models in~\cref{thm:toy_model},~\cref{thm:gs_mixing},~\cref{thm:thermal_mixing}, and ~\cref{thm:thermal_commuting_local} and derive end-to-end runtime estimates for our state preparation algorithm in~\cref{cor:complete}. The proofs of these results are collected in~\cref{sec:toy_uniform}–\cref{sec:thermal_mixing_general}.

\end{itemize}


\section{Notations and detailed balance condition}\label{sec:notation}
For a matrix $A\in\CC^{N\times N}$, let $A^*, A^T, A^{\dag}$ be the complex conjugation, transpose, and Hermitian transpose (or adjoint) of $A$, respectively. $\|A\|_p=\mathrm{Tr}\left(\left(\sqrt{A^\dagger A}\right)^{p}\right)^{1/p}$ denotes the Schatten $p$-norm. The Schatten $1$-norm $\norm{A}_1$ is also called the trace norm, the Schatten $2$-norm $\norm{A}_2$ is also called the Hilbert--Schmidt norm (or Frobenius norm for matrices), and the Schatten $\infty$-norm $\norm{A}_\infty$ is the same as the operator norm $\norm{A}$.
 The trace distance between two states $\rho,\sigma$ is $D(\rho,\sigma):=\frac12 \norm{\rho-\sigma}_1$. Given a superoperator $\Phi:\CC^{N\times N}\rightarrow \CC^{N\times N}$, we define the induced trace norm as
\[
\|\Phi\|_{1\leftrightarrow 1}=\sup_{\|A\|_1=1}\|\Phi(A)\|_{1}\,.
\]

We denote eigenstates of the Hamiltonian $H$ by $\{\ket{\psi_i}\}$ and the corresponding eigenvalues by $\{\lambda_i\}$. Each difference of eigenvalues $\lambda_i-\lambda_j$ is called a Bohr frequency, and $B(H)$ denotes the set of all Bohr frequencies. 
Also, given $\nu\in B(H)$ and a matrix $A$, we define 
\begin{equation}
A(\nu)=\sum_{\lambda_j-\lambda_i=\nu}\ket{\psi_j}\bra{\psi_j}A\ket{\psi_i}\bra{\psi_i},  
\end{equation}
where $\ket{\psi_i}$ is an eigenvector of $H$ with eigenvalue $\lambda_i$.

Given the thermal state $\sigma_\beta\propto \exp(-\beta H)$, we define the $s$-inner product on operator space as
\[
\left\langle A,B\right\rangle_{s,\sigma_\beta}=\mathrm{Tr}\left(A^\dagger\sigma_{\beta}^{1-s}B\sigma_{\beta}^{s}\right)
\]
for $0<s<1$. Given a Lindbladian operator $\mc{L}$, we say $\mc{L}$ satisfies the KMS detailed balance condition (KMS DBC) if $\mc{L}^\dagger$ is self-adjoint under $\left\langle A,B\right\rangle_{1/2,\sigma_\beta}$ and $\mc{L}$ satisfies the GNS detailed balance condition (GNS DBC) if $\mc{L}^\dagger$ is self-adjoint under $\left\langle A,B\right\rangle_{s,\sigma_\beta}$ for any $s\neq 1/2$. We note that, if $\mc{L}$ satisfies GNS DBC, it must also satisfy KMS DBC and take a generic form of the Davies generator. Given $\mc{L}$ satisfies GNS DBC or KMS DBC, we define the spectral gap as 
\[
\mathrm{Gap}(\mc{L})=\inf_{\mathrm{Tr}(A\sigma_\beta)=0,A\neq 0}\frac{-\left\langle A,\mathcal{L}^\dagger(A)\right\rangle_{1/2,\sigma_\beta}}{\left\langle A,A\right\rangle_{1/2,\sigma_\beta}}\,.
\]


We adopt the following asymptotic notations beside the usual big $\Or$ one. We write $f=\Omega(g)$ if $g=\Or(f)$; $f=\Theta(g)$ if $f=\Or(g)$ and $g=\Or(f)$. The notations $\wt{\Or}$, $\wt{\Omega}$, $\wt{\Theta}$ are used to suppress subdominant polylogarithmic factors. If not specified, $f = \wt{\Or}(g)$ if $f = \Or(g\operatorname{polylog}(g))$; $f = \wt{\Omega}(g)$ if $f = \Omega(g\operatorname{polylog}(g))$; $f = \wt{\Theta}(g)$ if $f = \Theta(g\operatorname{polylog}(g))$. Note that these tilde notations do not remove or suppress dominant polylogarithmic factors. For instance, if $f=\Or(\log g \log\log g)$, then we write $f=\wt{\Or}(\log g)$ instead of $f=\wt{\Or}(1)$.

In addition, we note that when analyzing the approximate fixed point of $\Phi$ in~\cref{sec:app_fix_thermal} and~\cref{sec:app_fix_gs}, we define the limiting generator of $\mc{L}$ as  $\widetilde{\mc{L}}$ after letting $T \rightarrow \infty$, and set $\widetilde{\Phi}= \mc{U}_S \circ \exp\left(\widetilde{\mc{L}} \alpha^2\right) \circ \mc{U}_S$. 
Furthermore, in the proofs of the mixing times in~\cref{sec:thermal_mixing} and~\cref{sec:gs_mixing}, we further approximate $\widetilde{\mc{L}}$ by $\widehat{\mc{L}}$, which exactly fixes the thermal state or ground state.

\rev{Given a quantum channel $\Phi$, the integer mixing time of $\Phi$ describes the minimum number of iterations required so that, starting from any initial state, the evolved state is guaranteed to be $\epsilon$-close to the target state. In this sense, it characterizes the worst-case convergence time over all initial states.
\begin{defn}\label{def:mixing_time} Given a CPTP map $\Phi$ with a unique fixed point $\rho_{\rm fix}(\Phi)$ and $\epsilon>0$, the integer mixing time $\tau_{{\rm mix},\Phi}(\epsilon)$ is defined as
\begin{equation}
\tau_{{\rm mix},\Phi}(\epsilon)=\min \left\{t\in \mathbb{N}\middle|\sup_{\rho}\|\Phi^{t}(\rho)-\rho_{\rm fix}(\Phi)\|_1\leq \epsilon\right\}\,.
\end{equation}
For $\Phi$ that takes the form of~\eqref{eqn:Phi_alpha}, the parameter $\alpha^2$ can be interpreted as the effective Lindbladian evolution time per application, and we define the (rescaled) mixing time as
\begin{equation}
t_{{\rm mix},\Phi}(\epsilon)=\alpha^2 \tau_{{\rm mix},\Phi}(\epsilon).
\end{equation}
\end{defn}}

Besides \cref{def:mixing_time}, other definitions of the mixing time are also used in the literature such as
\[
t_{\rm mix;c} = \min \left\{ t \in \mathbb{N} \,\middle|\, \sup_{\rho_1 \neq \rho_2} \frac{\| \Phi^{t}(\rho_1) - \Phi^{t}(\rho_2) \|_1}{\| \rho_1 - \rho_2 \|_1} \leq \frac{1}{2} \right\}.
\]
It is well known that $t_{\rm mix}(\epsilon) \leq t_{\rm mix;c} \left(\log_2(1/\epsilon) + 1\right)$, indicating that $t_{\rm mix}(\epsilon)$ scales logarithmically in $1/\epsilon$ whenever $t_{\rm mix;c} < \infty$~\cite{KastoryanoTemme2013}.

\section{Related works}\label{sec:related_works}

In this section, we review the related works on thermal and ground state preparation, focusing on the recent developments in Lindblad dynamics and weak-interaction dissipative systems.


Lindblad dynamics, originally developed to model the evolution of weakly coupled open quantum systems, has garnered significant attention in the past two years as a protocol for preparing thermal~\cite{RallWangWocjan2023,ChenKastoryanoBrandaoEtAl2023,ChenKastoryanoGilyen2023,Ding_2025} and ground states~\cite{Ding2024,zhan2025rapidquantumgroundstate}, due to its mathematical simplicity and analytical tractability. Given a Hamiltonian $H$, one can construct appropriate Lindblad operators (typically of the form $K = \int_{-\infty}^{\infty} f(s) e^{iHs} A e^{-iHs} \, \mathrm{d}s$) along with a suitable coherent term, such that the resulting dynamics drive any initial state toward the thermal or ground state. The convergence rate is governed by the mixing time of the dynamics. Recently, the mixing time analysis of Lindblad dynamics has been successfully carried out for various physically relevant Hamiltonians in both the thermal~\cite{TemmeKastoryanoRuskaiEtAl2010,KastoryanoTemme2013,BardetCapelGaoEtAl2023,rouz2024,DingLiLinZhang2024,kochanowski2024rapid,rouze2024optimal,tong2024fast} and ground state~\cite{zhan2025rapidquantumgroundstate} regimes. Leveraging well-developed Lindbladian simulation algorithms~\cite{cleve_2017,li_et_2023,Ding2024,ChenKastoryanoBrandaoEtAl2023,PRXQuantum.5.020332}, such dynamics can be efficiently simulated on a fault-tolerant quantum computer. However, due to the complexity of the jump operator, most simulation algorithms require a large number of ancilla qubits, controlled or time-reversed Hamiltonian evolutions, and intricate quantum control logic for clock registers, making them unsuitable for near-term quantum devices. To mitigate the cost of simulating the detailed balanced Lindblad dynamics, very recently~\cite{hahn2025} proposes a variational compilation strategy to construct an approximation to the jump operator and to simulate the Lindblad dynamics using local gates.

In contrast to the Lindblad dynamics, the implementation of weak-interaction dissipative systems is more straightforward. Once the bath and system-bath interaction are specified, the dynamics can be simulated using forward Hamiltonian evolution followed by partial trace (or repeated interactions). Similar to our work, several concurrent works~\cite{molpeceres2025,hahn2025provably,hagan2025thermodynamic,langbehn2025universal,lloyd2025quantumthermal,scandi2025thermal} have also proposed quantum algorithms for thermal state preparation based on system-bath interaction models. While these works offer valuable insights, they do not provide rigorous end-to-end performance guarantees, and/or may face challenges in early fault-tolerant implementation.
In the following, we provide a brief overview of these works that are more relevant to ours and highlight the differences with our approach and summarize them in~\cref{tab:comparison}:

\begin{table*}[htbp!]
\begin{adjustbox}{width=\textwidth}
\centering
\begin{tabular}{c|c|c|c|c}
\hline
\hline
& \multicolumn{3}{c|}{\textbf{Properties}} &  
\\
\cline{2-4}
\textbf{Algorithms} & Fixed-point    & Mixing time & Early-fault tolerant & \textbf{Remarks}\\
& error bound & guarantee& Implementation & \\ \hline
\begin{tabular}{ll}Lindblad dynamics based\\
thermal state preparation \cite{RallWangWocjan2023,ChenKastoryanoBrandaoEtAl2023,ChenKastoryanoGilyen2023}\end{tabular}& \cmark & \cmark  & \xmark & \begin{tabular}{ll}Controlled Hamiltonian\ simulation;\\ Complex logic gates\end{tabular}\\ \hline
\begin{tabular}{l}Discrete dynamics simulating\\
Metropolis-type sampling  \cite{Temme_2011,jiang2024quantum}\end{tabular}& \cmark & ? &\xmark & \begin{tabular}{ll}Controlled Hamiltonian simulation;\\ Complex logic gates\end{tabular}
\\
\hline
\begin{tabular}{ll}Lindblad dynamics based\\
ground state preparation \cite{Ding2024,Ding_2025}\end{tabular}& \cmark & \cmark  & ? & \begin{tabular}{ll}Time-reversed Hamiltonian \\simulation\end{tabular}\\ 
\hline
\begin{tabular}{ll}Hahn \textit{et al} \cite{hahn2025}\end{tabular}& ? & ?  & \cmark  & \begin{tabular}{ll}Variational compilation\end{tabular}\\ 
\hline
\multicolumn{1}{c}{\begin{tabular}{l}Weakly-coupled system\\
bath interaction\end{tabular}}
\\
\hline
Hagan \textit{et al} \cite{hagan2025thermodynamic}&\cmark &?&?& \begin{tabular}{ll}Haar-random system-bath coupling;\\
Exponential simulation time
\end{tabular}
\\
\hline
Hahn \textit{et al} \cite{hahn2025provably}& \cmark & ?  &\cmark & \begin{tabular}{ll} Only allow small 
energy transitions
\end{tabular}
\\
\hline
Langbehn \textit{et al} \cite{langbehn2025universal}& ? & ?  &\cmark & Rotating wave approximation
\\
\hline
Lloyd \textit{et al} \cite{lloyd2025quantumthermal}& ? &  ? &\cmark & 
\begin{tabular}{ll}Similar structure as~\cite{hahn2025provably} and \\perturbative fixed-point analysis
\end{tabular}
 \\
\hline
Scandi \textit{et al} \cite{scandi2025thermal} & \cmark & ?& ? & \begin{tabular}{ll} Gaussian bath coupling\end{tabular}
\\
\hline
Shtanko \textit{et al} \cite{shtanko2021preparing}, Chen~\textit{et al} \cite{chen2023fastthermalizationeigenstatethermalization}& \cmark  & \cmark & ? & \begin{tabular}{ll}ETH hypothesis\end{tabular}
\\ \hline
\textbf{This work} & \textcolor{green}{\cmark}  & \cmark 
 & \cmark & \begin{tabular}{ll}Large energy  transitions;\\Can prepare ground state \end{tabular}
\\ \hline\hline
\end{tabular}
\end{adjustbox}
\vspace{0.5em}
\caption{Comparison of recent quantum thermal and ground state preparation algorithms based on Lindblad dynamics or weakly coupled system-bath interaction.
``Fixed-point error bound'' refers to whether there is a rigorous fixed-point error bound for a general Hamiltonian $H$. ``Mixing time guarantee'' indicates whether the mixing time of the algorithm can be theoretically established at least for certain interacting Hamiltonians (see \cref{sec:mixing_time}). }
\label{tab:comparison}
\end{table*}


\begin{itemize}

\item In~\cite{molpeceres2025}, the authors study the weak-interaction algorithm in the regime of small $\alpha$ and constant $f(t)$, and rigorously establish its correctness and efficiency for a specific free fermion model. To the best of our knowledge, it remains unclear whether their approach extends to a general Hamiltonian $H$.

\item In~\cite{hagan2025thermodynamic}, the authors assume Haar-random system-bath coupling and establish a rigorous fixed-point error bound for the thermal state. According to their theoretical results, for general systems, the algorithm may require impractical parameter choices to resolve exponentially close eigenvalues.For instance, as discussed in~\cite[Section I.A]{hagan2025thermodynamic}, the required coupling strength $\alpha$ might be exponentially small, which in turn requires the simulation time $T$ in each step to scale exponentially with the number of qubits. Consequently, the total simulation time becomes exponentially long to guarantee the correctness of the fixed point.

  \item In~\cite{hahn2025provably}, the authors prove a result similar to~\cref{thm:fix_thermal} for the thermal state preparation. Although their work presents a result similar to ours in the thermal state setting, the authors do not provide theoretical guarantees on the mixing time—an essential component for establishing the end-to-end complexity of the algorithm (see the detailed discussion in~\cref{sec:toy_uniform} and~\cref{rem:filter_contrast}). In contrast, in~\cref{sec:mixing_time}, we prove that for commuting local Hamiltonians and free fermion systems, the mixing time admits a well-defined limit as $\sigma \to \infty$, thereby yielding a complete fixed-point error bound for these models, as stated in Corollary~\ref{cor:complete}.

\item The algorithmic structure in~\cite{lloyd2025quantumthermal} is similar to that in~\cite{hahn2025provably}. In both works, the bath state is initialized as $\ket{0}\bra{0}$, and the interaction function $f$ is carefully tuned so that the resulting jump operator in the approximate Lindblad dynamics satisfies the detailed balance condition. Ref.~\cite{lloyd2025quantumthermal} justifies the fixed-point error bound in the perturbative regime. Although the paper does not provide a fully rigorous error bound, its numerical results support both the efficiency of the algorithm and the validity of the perturbative analysis. We note that, unlike the two works~\cite{lloyd2025quantumthermal,hahn2025provably}, our approach employs a nontrivial initial bath state—specifically, the thermal bath state. This choice ensures that the dissipative part of our approximate Lindbladian dynamics automatically satisfies the detailed balance condition. Consequently, the interaction function $f$ in our framework can be designed with a flexibly tunable variance $\sigma$ (independent of $\beta$), without the need to impose additional constraints or formulation to maintain detailed balance. This differs from the interaction functions used in~\cite{lloyd2025quantumthermal} and~\cite{hahn2025provably}. Thanks to this flexibility, our algorithm can accommodate large energy transitions and achieve rigorous mixing times, all while maintaining a provable bound on the fixed-point error.


  \item \rev{In~\cite{scandi2025thermal}, the authors prove a result similar to~\cref{thm:char_Phi_alpha}, showing that the corresponding Lindbladian dynamics approximately satisfy the KMS detailed balance condition. This, in turn, implies~\cref{thm:fix_thermal} as a corollary. In contrast to our result, their analysis only considers the thermal state preparation and relies on the assumption of a Gaussian bath. Their algorithm also requires a detailed characterization of the two-point correlation functions.}

  \item \rev{In~\cite{langbehn2025universal,ramonescandell2025thermalstatepreparationrepeated,shtanko2021preparing,chen2023fastthermalizationeigenstatethermalization}, the authors investigate bath and system–bath interaction models similar to ours. However, the theoretical analyses in~\cite{langbehn2025universal,ramonescandell2025thermalstatepreparationrepeated} are primarily limited to small-scale systems, while~\cite{shtanko2021preparing,chen2023fastthermalizationeigenstatethermalization} rely on the Eigenstate Thermalization Hypothesis (ETH). In particular, under the ETH assumption,~\cite{chen2023fastthermalizationeigenstatethermalization} demonstrates that the repeated interaction dynamics can be effectively approximated by a Davies generator for thermal state preparation.}

  \item Our choice of $f$ is inspired by~\cite{ChenKastoryanoBrandaoEtAl2023}, where the authors construct a Lindbladian dynamics using the same filter function in the jump operators. Under this framework, they also establish a fixed-point error bound for the thermal state similar to~\cref{thm:fix_thermal}.

   \item $\Phi$ to Lindbladian dynamics: There is extensive literature supporting the convergence of $\Phi$ to Lindbladian dynamics under the weak-interaction assumption. Notably,~\cite{PhysRevA.88.012103,Mozgunov2020} derive the Coarse-Grained Master Equation (CGME) in the presence of a general bath. More recently,~\cite[Appendix D]{ChenKastoryanoBrandaoEtAl2023} rigorously shows that the resulting Lindbladian dynamics with $f(t) = \frac{1}{T} \mathbf{1}_{[-T/2, T/2]}(t)$ approximately fixes the thermal state, yielding a result similar to our~\cref{thm:fix_thermal}. In contrast to the general setting of~\cite{PhysRevA.88.012103,Mozgunov2020}, we provide a simple and explicit choice of bath and coupling operators that allows the Lindbladian dynamics to be derived more easily. Moreover, our use of a Gaussian filter $f(t)$ leads to a better fixed-point error bound compared to the flat choice of $f$ in~\cite[Appendix D]{ChenKastoryanoBrandaoEtAl2023}.

    \item In~\cite{Ding2024}, the authors proposed a Lindbladian-dynamics-based algorithm for ground state preparation. As demonstrated in~\cite{zhan2025rapidquantumgroundstate}, both theoretically and numerically, the dynamics exhibits rapid mixing for several physical Hamiltonians. We note that the algorithm in~\cite{Ding2024} simulates the Lindbladian dynamics using a single ancilla qubit but requires time-reversed Hamiltonian evolution. In contrast, our algorithm involves only forward Hamiltonian evolution, which leads to a nontrivial Lamb shift term in the dynamics that must be carefully handled in the convergence analysis.

\end{itemize}

\section{Derivation of Effective Lindblad dynamics}\label{sec:phi_alpha_char}

Recall the time evolution operator by $U_S(t) := \exp(-iHt)$, and the associated superoperator by $\mc{U}_S(t)[\rho]=U_S(t)\rho U^\dagger_S(t)$. We then show that the quantum map $\Phi$ can be approximated by an effective Lindblad dynamics in the following theorem:

\begin{thm}[Rigorous version of~\cref{thm:char_Phi_alpha}]\label{thm:char_Phi_alpha_rigor} Under the choice of $H_E, A_S, B_E, f(t), g(\omega)$ in the main text, $\rho_{n+1}$ can be expressed as
\begin{equation}\label{eqn:general_formula}
\begin{aligned}
\rho_{n+1/3}=&U_S(T)\rho_{n}U^\dagger_S(T)=\mathcal{U}_S(T)[\rho_n]\\
\rho_{n+2/3}=&\rho_{n+1/3}+\alpha^2\underbrace{\mathbb{E}_{A_S,\omega}\left\{-i[H_{\mathrm{LS},A_S}(\omega),\rho_{n+1/3}]+\frac{1}{1+\exp(\beta\omega)}\mathcal{D}_{V_{A^\dagger_S,f,T}(\omega)}(\rho_{n+1/3})+\frac{1}{1+\exp(-\beta\omega)}\mathcal{D}_{V_{A_S,f,T}(-\omega)}(\rho_{n+1/3})\right\}}_{:=\mathcal{L}[\rho]}\\
&+\mc{O}(\alpha^4\|A_S\|^4T^4\|f\|^4_{L^\infty})\\
=&\exp(\mc{L}\alpha^2)\rho_{n+1/3}+\mc{O}(\alpha^4\|A_S\|^4T^4\|f\|^4_{L^\infty})\\
\rho_{n+1}=&U_S(T)\rho_{n+2/3}U^\dagger_S(T)=\mathcal{U}_S(T)[\rho_{n+2/3}]
\end{aligned}\,,
\end{equation}
where $\gamma(\omega)=(g(\omega)+g(-\omega))/(1+\exp(\beta\omega))$ when $\beta<\infty$, and $\gamma(\omega)=(g(\omega)+g(-\omega))\textbf{1}_{\omega<0}+\rev{g(0)\textbf{1}_{\omega=0}}$ when $\beta=\infty$. Here, 
\[
H_{\mathrm{LS},A_S}(\omega)=\rev{-}\mathrm{Im}\left(\frac{\exp(-\beta \omega)}{1+\exp(-\beta\omega)}\mc{G}_{A^\dagger_S,f}(\omega)+\frac{1}{1+\exp(-\beta\omega)}\mc{G}_{A_S,f}(-\omega)\right)\,,
\]
with
\begin{equation}\label{eqn:G_S}
\mc{G}_{A_S,f}(\omega)=\int^T_{-T}\int^{s_1}_{-T}f(s_2)f(s_1) A^\dagger_S(s_2)A_S(s_1)\exp(-i\omega(s_1-s_2))\mathrm{d}s_2\mathrm{d}s_1\,.
\end{equation}
\end{thm}

We put the proof of the above theorem in \cref{sec:proof_rhoevolution}.~\rev{In our work, because we assume $A_S$ is uniformly sampled from $\mc{A}=\{A^i,-A^i\}_i$ with the property that $\{(A^i)^\dagger\}_i=\{A^i\}_i$ and $\omega$ is sampled from $g$, we obtain
\[
\begin{aligned}
&\mathcal{L}(\rho)=\mathbb{E}_{A_S}\left\{-i\int^\infty_{-\infty}[g(\omega)H_{\mathrm{LS},A_S}(\omega),\rho_{n+1/3}]\mathrm{d}\omega+\int^\infty_{-\infty}\frac{g(\omega)}{1+\exp(\beta\omega)}\mathcal{D}_{V_{A^\dagger_S,f,T}(\omega)}(\rho_{n+1/3})\mathrm{d}\omega\right.\\
&\left.+\int^\infty_{-\infty}\frac{g(\omega)}{1+\exp(-\beta\omega)}\mathcal{D}_{V_{A_S,f,T}(-\omega)}(\rho_{n+1/3})\mathrm{d}\omega\right\}\\
=&\mathbb{E}_{A_S}\left\{-i\int^\infty_{-\infty}[g(\omega)H_{\mathrm{LS},A_S}(\omega),\rho_{n+1/3}]\mathrm{d}\omega+\int^\infty_{-\infty}\frac{g(\omega)}{1+\exp(\beta\omega)}\mathcal{D}_{V_{A_S,f,T}(\omega)}(\rho_{n+1/3})\mathrm{d}\omega\right.\\
&\left.+\int^\infty_{-\infty}\frac{g(\omega)}{1+\exp(-\beta\omega)}\mathcal{D}_{V_{A_S,f,T}(-\omega)}(\rho_{n+1/3})\mathrm{d}\omega\right\}\\    
=&\mathbb{E}_{A_S}\left\{-i\int^\infty_{-\infty}[g(\omega)H_{\mathrm{LS},A_S}(\omega),\rho_{n+1/3}]\mathrm{d}\omega+\int^\infty_{-\infty}\frac{g(\omega)+g(-\omega)}{1+\exp(\beta\omega)}\mathcal{D}_{V_{A_S,f,T}(\omega)}(\rho_{n+1/3})\mathrm{d}\omega\right\}\\    
\end{aligned}\,,
\]
This gives~\eqref{eqn:lindbladian_operator} in~\cref{thm:char_Phi_alpha}.} According to the above theorem, another perspective on our algorithm is that it can be viewed as a simulation method that reproduces~\eqref{eqn:lindbladian_operator} using at most two forward evolutions with a single ancilla qubit and randomness. It is worth noting that related results on a given Lindbladian simulation (without forward evolution) have also been obtained in~\cite{chen2025,yu2024exponentiallyreducedcircuitdepths,PhysRevResearch.6.043321}. However, we emphasize that our main contribution lies in presenting a particularly simple choice of environment and bath, such that the resulting Lindbladian dynamics naturally generate a jump operator in integral form. This construction eliminates the need for block encoding or explicit decomposition of the jump operator.

\subsection{Proof of \texorpdfstring{\cref{thm:char_Phi_alpha_rigor}}{Lg}}\label{sec:proof_rhoevolution}

In this section, we prove \cref{thm:char_Phi_alpha_rigor}.

\begin{proof}[Proof of~\cref{thm:char_Phi_alpha_rigor}]

Define $\rho_{\rm ini}=\rho_n \otimes \rho_E$, $\rho(T)=U^\alpha(T) \rho_{\rm ini} U^{\alpha}(T)^\dagger$ and $G(t)=f(t) \left( A_S \otimes B_E + A^\dagger_S \otimes B_E^\dagger \right)$. We first expand $U^{\alpha}(t):= \mathcal{T} \exp\left(-i \int_{-T}^t H_{\alpha}(s) \, \mathrm{d}s \right)$ into Dyson series:
\[
U^{\alpha}(t)=U_{0}(t;-T)-i\alpha U_{1}(t;-T)+(-i\alpha)^2U_{2}(t;-T)+(-i\alpha)^3U_{3}(t;-T)+\mathcal{O}\left(\alpha^4T^4\|f\|^4_{L^\infty}\left(\|A_S\|\|B_E\|\right)^4\right)\,.
\]
Here $U_{0}(t;-T)=\exp(-i(H+H_E)(t-(-T))$. Let $\mc{G}(t)=U^\dagger_{0}(t;-T)G(t)U_0(t;-T)$, which is the evolution of $G(t)$ under the Heisenberg picture. Then,
\[
U_n(t;-T)=U_{0}(t;-T)\int^t_{-T}\int^{s_1}_{-T}\dots \int^{s_{n-1}}_{-T}\mc{G}(s_1)\mc{G}(s_2)\dots \mc{G}(s_n)\mathrm{d}s_n\mathrm{d}s_{n-1}\dots\mathrm{d}s_1\,.
\]

According to the above expansion, it is straightforward to see that 
\[
\begin{aligned}
\rho(T)=&U_0(T;-T)\rho_{\rm ini}U^\dagger_0(T;-T)-i\alpha \underbrace{\left(U_1(T;-T)\rho_{\rm ini}U^\dagger_0(T;-T)-U_0(T;-T)\rho_{\rm ini} U^\dagger_1(T;-T)\right)}_{\mathbb{E}(\cdot)=0}\\
&+\alpha^2\left(-U_0(T;-T)\rho_{\rm ini} U^\dagger_2(T;-T)-U_2(T;-T)\rho_{\rm ini}U^\dagger_0(T;-T)+U_1(T;-T)\rho_{\rm ini}U^\dagger_1(T;-T)\right)\\
&+\alpha^3\underbrace{\left(\cdots\right)}_{\mathbb{E}(\cdot)=0}+\mathcal{O}\left(\alpha^4T^4\|f\|^4_{L^\infty}\left(\|A_S\|\|B_E\|\right)^4\right)
\end{aligned}\,,
\]
Here, for the first order and third order term, we have expectation equals to zero because $\mathbb{E}(G(t))=0$.

Now, we only care about the second order term. Let $\widehat{\rho}(T)=U_0(T;-T)\rho_{\rm ini}U^\dagger_0(T;-T)$. Then,
\[
\begin{aligned}
&U_0(T;-T)\rho_{\rm ini} U^\dagger_2(T;-T)=\widehat{\rho}(T)U_0(T;-T)\int^T_{-T}\int^{s_1}_{-T}\mc{G}(s_2)\mc{G}(s_1)\mathrm{d}s_2\mathrm{d}s_1U^\dagger_0(T;-T)\\
=&\widehat{\rho}(T)U_0(T;-T)\frac{1}{2}\int^T_{-T}\int^{T}_{-T}\mc{G}(s_2)\mc{G}(s_1)\mathrm{d}s_2\mathrm{d}s_1U^\dagger_0(T;-T)+\widehat{\rho}(T)U_0(T;-T)\frac{1}{2}\int^T_{-T}\int^{s_1}_{-T}[\mc{G}(s_2),\mc{G}(s_1)]\mathrm{d}s_2\mathrm{d}s_1U^\dagger_0(T;-T)\,,
\end{aligned}
\]
where we use $\int^T_{-T}\int^{s_1}_{-T}\mc{G}(s_1)\mc{G}(s_2)\mathrm{d}s_2\mathrm{d}s_1=\int^T_{-T}\int^{T}_{s_1}\mc{G}(s_2)\mc{G}(s_1)\mathrm{d}s_2\mathrm{d}s_1$ in the last equality. Similarly, 
\[
\begin{aligned}
&U_2(T;-T)\rho_{\rm ini} U^\dagger_0(T;-T)=U_0(T;-T)\int^T_{-T}\int^{s_1}_{-T}\mc{G}(s_1)\mc{G}(s_2)\mathrm{d}s_2\mathrm{d}s_1U^\dagger_0(T;-T)\widehat{\rho}(T)\\
=&U_0(T;-T)\frac{1}{2}\int^T_{-T}\int^{T}_{-T}\mc{G}(s_1)\mc{G}(s_2)\mathrm{d}s_2\mathrm{d}s_1U^\dagger_0(T;-T)\widehat{\rho}(T)+U_0(T;-T)\frac{1}{2}\int^T_{-T}\int^{s_1}_{-T}[\mc{G}(s_1),\mc{G}(s_2)]\mathrm{d}s_2\mathrm{d}s_1U^\dagger_0(T;-T)\widehat{\rho}(T)\,,
\end{aligned}
\]
and
\[
U_1(T;-T)\rho_{\rm ini}U^\dagger_1(T;-T)=\left(U_0(T;-T)\underbrace{\int^T_{-T}\mc{G}(s_1)\mathrm{d}s_1}_{:=V}U^\dagger_0(T;-T)\right)\widehat{\rho}(T)\left(U_0(T;-T)\int^T_{-T}\mc{G}(s_2)\mathrm{d}s_2U^\dagger_0(T;-T)\right)^\dagger\,.
\]
Combining the above three equalities and noticing $U^\dagger_0(T;-T)\widehat{\rho}(T)U_0(T;-T)=\rho_{\rm ini}$, this implies 
\begin{equation}\label{eqn:evolution}
\begin{aligned}
&\rho(T)=U_0(T;-T)\rho_{\rm ini}U^\dagger_0(T;-T)\\
&+\alpha^2U_0(T;-T)\left(\underbrace{V\rho_{\rm ini} V^\dagger-\frac{1}{2}\left\{V^\dagger V,\rho_{\rm ini}\right\}}_{:=\text{Term I}}\underbrace{-i\left[\frac{i}{2}\int^T_{-T}\int^{s_1}_{-T}[\mc{G}(s_2),\mc{G}(s_1)]\mathrm{d}s_2\mathrm{d}s_1,\rho_{\rm ini}\right]}_{:=\text{Term II}}\right)U^\dagger_0(T;-T)+\mathcal{O}\left(\alpha^4T^4\|f\|^4_{L^\infty}\|A_S\|^4\right)\,.
\end{aligned}
\end{equation}
Here the expectation is taken over $A_S$ and $\omega$. We notice that $\rho_{n+1}=\mathbb{E}\left(\mathrm{Tr}_E\left(\rho(T)\right)\right)$. Let $\rho_{n+2/3}=U^\dagger_S(T)\rho_{n+1}U_S(T)$ and $\rho_{n+1/3}=U_S(T)\rho_{n}U^\dagger_S(T)$ as defined in~\cref{eqn:general_formula}. Applying $U^\dagger_0(0;-T)[\cdot]U_0(0;-T)$ on both sides of the above equality, tracing out the ancilla qubits, and taking the expectation over $A_S ,\omega$, we have
\begin{equation}\label{eqn:evolution_2}
\begin{aligned}
&\rho_{n+2/3}=\rho_{n+1/3}\\
&+\alpha^2\mathbb{E}\left(\mathrm{Tr}_E\left(U_0(0;-T)\left(\underbrace{V\rho_{\rm ini} V^\dagger-\frac{1}{2}\left\{V^\dagger V,\rho_{\rm ini}\right\}}_{:=\text{Term I}}\underbrace{-i\left[\frac{i}{2}\int^T_{-T}\int^{s_1}_{-T}[\mc{G}(s_2),\mc{G}(s_1)]\mathrm{d}s_2\mathrm{d}s_1,\rho_{\rm ini}\right]}_{:=\text{Term II}}\right)U^\dagger_0(0;-T)\right)\right)\\
+&\mathcal{O}\left(\alpha^4T^4\|f\|^4_{L^\infty}\|A_S\|^4\right)\,.
\end{aligned}
\end{equation}
Here, we note $U_0(0;-T)=\exp(-i(H+H_E)T)$.

Now, we deal with two terms separately:
\begin{itemize}
\item For the first term, we have
\[
\begin{aligned}
V=&\int^T_{-T}f(t)\exp(i\omega(t-(-T))\left(A_S(t;-T)\otimes \ket{1}\bra{0}\right)\mathrm{d}t\\
&+\int^T_{-T}f(t)\exp(-i\omega(t-(-T))\left(A^\dagger_S(t;-T)\otimes \ket{0}\bra{1}\right)\mathrm{d}t
\end{aligned}
\]
where
\[
A_S(t;-T)=\exp(iH(t+T))A_S\exp(-iH(t+T))\,,
\]

Let $A_{S,f}(\omega)=\int^T_{-T}f(t)A_S(t;-T)\exp(i\omega (t+T))\mathrm{d}t$. We have
\[
V=A_{S,f}(\omega)\otimes \ket{1}\bra{0}+A^\dagger_{S,f}(\omega)\otimes \ket{0}\bra{1}\,.
\]
This implies that
\[
\begin{aligned}
&\mathrm{Tr}_E\left(V\rho_{\rm ini} V^\dagger-\frac{1}{2}\left\{V^\dagger V,\rho_{\rm ini}\right\}\right)\\
=& \frac{\exp(-\beta\omega)}{1+\exp(-\beta\omega)}\left(A^\dagger_{S,f}(\omega)\rho_n A_{S,f}(\omega)-\frac{1}{2}\left\{A_{S,f}(\omega) A^\dagger_{S,f}(\omega),\rho_n\right\}\right)\\
&+\frac{1}{1+\exp(-\beta\omega)}\left(A_{S,f}(\omega)\rho_n A^\dagger_{S,f}(\omega)-\frac{1}{2}\left\{A^\dagger_{S,f}(\omega) A_{S,f}(\omega),\rho_n\right\}\right)
\end{aligned}\,.
\]
Recall $\rho_{n+1/3}=U_S(T)\rho_{n}U^\dagger_S(T)$ and $\rho_{\rm ini}=\rho_n\otimes \rho_E$. We can rewrite the above equality as
\[
\begin{aligned}
&\mathrm{Tr}_E\left(U_0(0;-T)\left(V\rho_{\rm ini} V^\dagger-\frac{1}{2}\left\{V^\dagger V,\rho_{\rm ini}\right\}\right)U^\dagger_0(0;-T)\right)\\
=&\mathrm{Tr}_E\left(\left(U_0(0;-T)VU_0(0;-T)^\dagger\right)U_0(0;-T)\rho_{\rm ini}U^\dagger_0(0;-T)\left(U_0(0;-T)V^\dagger U_0(0;-T)^\dagger\right)\right.\\
&\left.-\frac{1}{2}U_0(0;-T)\left\{V^\dagger V,\rho_{\rm ini}\right\}U_0(0;-T)^\dagger U^\dagger_0(0;-T)\right)\\
=& \frac{\exp(-\beta\omega)}{1+\exp(-\beta\omega)}\left(V^\dagger_{A_S,f}(\omega)\rho_{n+1/3} V_{A_S,f}(\omega)-\frac{1}{2}\left\{V_{A_S,f}(\omega) V^\dagger_{A_S,f}(\omega),\rho_{n+1/3}\right\}\right)\\
&+\frac{1}{1+\exp(-\beta\omega)}\left(V_{A_S,f}(\omega)\rho_{n+1/3} V^\dagger_{A_S,f}(\omega)-\frac{1}{2}\left\{V^\dagger_{A_S,f}(\omega) V_{A_S,f}(\omega),\rho_{n+1/3}\right\}\right)
\end{aligned}\,,
\]
Here, $V_{A_S,f}(\omega)=\int^T_{-T}f(t)A_S(t;0)\exp(i\omega t)\mathrm{d}t$.   This gives the Lindbladian operators in~\eqref{eqn:general_formula}.

\item For the second term: We first notice
\[
\begin{aligned}
&\int^T_{-T}\int^{s_1}_{-T}\mc{G}(s_2)\mc{G}(s_1)\mathrm{d}s_2\mathrm{d}s_1\\
=&\int^T_{-T}\int^{s_1}_{-T} (f(s_2)f(s_1)\exp(i\omega(s_2-s_1))\left(A_S(s_2;-T)A^\dagger_S(s_1;-T)\otimes\ket{1}\bra{1}\right)\mathrm{d}s_2\mathrm{d}s_1\\
&+\int^T_{-T}\int^{s_1}_{-T}(f(s_2)f(s_1)\exp(-i\omega(s_2-s_1))\left(A^\dagger_S(s_2;-T)A_S(s_1;-T)\otimes\ket{0}\bra{0}\right)\mathrm{d}s_2\mathrm{d}s_1
\end{aligned}
\]
We notice that
\[
A_S(s_2;-T)=\exp(iH(s_2+T))A_S\exp(-iH(s_2+T)),\quad A^\dagger_S(s_1;-T)=\exp(iH(s_1+T))A^\dagger_S\exp(-iH(s_1+T))\,.
\]
This implies
\[
\begin{aligned}
&A_S(s_2;-T)A^\dagger_S(s_1;-T)=\exp(iH(s_2+T))A_S\exp(-iH(s_2))\exp(iH(s_1))A^\dagger_S\exp(-iH(s_1+T))\\
=&\exp(iHT)A_S(s_2;0)A^\dagger_S(s_1;0)\exp(-iHT)\,. 
\end{aligned}
\]
Recall $U_{0}(0;-T)=\exp(-i(H+H_E)T)$. Therefore, we have
\[
U_0(0;-T)\left(A_S(s_2;-T)A^\dagger_S(s_1;-T)\otimes\ket{1}\bra{1}\right)U^\dagger_0(0;-T)=A_S(s_2;0)A^\dagger_S(s_1;0)\otimes\ket{1}\bra{1}
\]
and
\[
U_0(0;-T)\left(A^\dagger_S(s_2;-T)A_S(s_1;-T)\otimes\ket{0}\bra{0}\right)U^\dagger_0(0;-T)=A^\dagger_S(s_2;0)A_S(s_1;0)\otimes\ket{0}\bra{0}
\]

Define 
\[
\mc{G}_{A_S,f}(\omega)=\int^T_{-T}\int^{s_1}_{-T}f(s_2)f(s_1) A^\dagger_S(s_2;0)A_S(s_1;0)\exp(i\omega(s_2-s_1))\mathrm{d}s_2\mathrm{d}s_1\,.
\]
We then have
\[
\begin{aligned}
&U_0(0;-T)\int^T_{-T}\int^{s_1}_{-T}\mc{G}(s_2)\mc{G}(s_1)\mathrm{d}s_2\mathrm{d}s_1U^\dagger_0(0;-T)\\
=&\int^T_{-T}\int^{s_1}_{-T} (f(s_2)f(s_1)\exp(i\omega(s_2-s_1))\left(A_S(s_2;0)A^\dagger_S(s_1;0)\otimes\ket{1}\bra{1}\right)\mathrm{d}s_2\mathrm{d}s_1\\
&+\int^T_{-T}\int^{s_1}_{-T}(f(s_2)f(s_1)\exp(-i\omega(s_2-s_1))\left(A^\dagger_S(s_2;0)A_S(s_1;0)\otimes\ket{0}\bra{0}\right)\mathrm{d}s_2\mathrm{d}s_1\\
=&\mathcal{G}_{A^\dagger_S,f}(\omega)\otimes \ket{1}\bra{1}+\mathcal{G}_{A_S,f}(-\omega)\otimes \ket{0}\bra{0}\,.
\end{aligned}
\]
Because $\mc{G}(s)$ is a Hermitian matrix, we have
We then have
\[
\begin{aligned}
&U_0(0;-T)\int^T_{-T}\int^{s_1}_{-T}\mc{G}(s_1)\mc{G}(s_2)\mathrm{d}s_2\mathrm{d}s_1U^\dagger_0(0;-T)=\left(U_0(0;-T)\int^T_{-T}\int^{s_1}_{-T}\mc{G}(s_2)\mc{G}(s_1)\mathrm{d}s_2\mathrm{d}s_1U^\dagger_0(0;-T)\right)^\dagger\\
=&\mathcal{G}^\dagger_{A^\dagger_S,f}(\omega)\otimes \ket{1}\bra{1}+\mathcal{G}^\dagger_{A_S,f}(-\omega)\otimes \ket{0}\bra{0}\,.
\end{aligned}
\]

The above calculation gives
\[
\begin{aligned}
&\mathrm{Tr}_E\left(U_0(0;-T)\left(\text{Term II}\right)U^\dagger_0(0;-T)\right)\\
=&-i\left[\frac{i}{2}\frac{\exp(-\beta \omega)}{1+\exp(-\beta\omega)}\left(\mc{G}_{A^\dagger_S,f}(\omega)-\mc{G}^\dagger_{A^\dagger_S,f}(\omega)\right)+\frac{1}{1+\exp(-\beta\omega)}\left(\mc{G}_{A_S,f}(-\omega)-\mc{G}^\dagger_{A_S,f}(-\omega)\right),\rho_{n+1/3}\right]\\
=&-i\left[\frac{i}{2}\left(\frac{\exp(-\beta \omega)}{1+\exp(-\beta\omega)}\mc{G}_{A^\dagger_S,f}(\omega)+\frac{1}{1+\exp(-\beta\omega)}\mc{G}_{A_S,f}(-\omega)-\left(\dots\right)^\dagger\right),\rho_{n+1/3}\right]\\
=&-i\left[-\mathrm{Im}\left(\frac{\exp(-\beta \omega)}{1+\exp(-\beta\omega)}\mc{G}_{A^\dagger_S,f}(\omega)+\frac{1}{1+\exp(-\beta\omega)}\mc{G}_{A_S,f}(-\omega)
\right),\rho_{n+1/3}\right]
\end{aligned}\,.
\]
This gives the formula of $H_{\mathrm{LS},A_S}$ in the theorem.
\end{itemize}
\end{proof}

\section{Approximate CPTP map has close fixed point and mixing time}\label{sec:almost_fixed_point}

In this section, we show that the closeness of two CPTP maps $\Phi_1$ and $\Phi_2$ implies the closeness of their fixed points and mixing times. This provides a crucial link between the fixed point and mixing time of the Lindbladian dynamics in~\cref{thm:char_Phi_alpha} and those of $\Phi$. The result is summarized in the following:

\begin{thm}\label{thm:almost_fixed_point}
Given two CPTP maps $\Phi_1,\Phi_2$ with unique fixed points $\rho_1,\rho_2$. Let $\tau_{1,\rm mix}(\epsilon),\tau_{2,\rm mix}(\epsilon)$ be the mixing time of $\Phi_1,\Phi_2$ respectively, defined as~\cref{def:mixing_time}. Then
\begin{itemize}
\item $\rho_1,\rho_2$ are close if the maps themselves are close: For any $\epsilon>0$,
\begin{equation}\label{eqn:fixed_point_error_1}
\left\|\rho_1-\rho_2\right\|_1\leq \epsilon+\tau_{1,\rm mix}(\epsilon)\|\Phi_1-\Phi_2\|_{1\leftrightarrow 1}\,.
\end{equation}
\item $\rho_1,\rho_2$ are close if $\Phi_1(\rho_2)$ is close to $\rho_2$: For any $\epsilon>0$,
\begin{equation}\label{eqn:fixed_point_error_2}
\left\|\rho_1-\rho_2\right\|_1\leq \epsilon+\tau_{1,\rm mix}(\epsilon)\|\Phi_1(\rho_2)-\rho_2\|_{1}\,.
\end{equation}

\item $\Phi_2$ has comparable mixing time with $\Phi_1$ if $\Phi_2$ is close to $\Phi_1$: Given any $\epsilon>0$, if $\tau_{1,\rm mix}(\epsilon/2)\|\Phi_1-\Phi_2\|_{1\leftrightarrow 1}\leq \epsilon/2$, then 
\begin{equation}\label{eqn:mixing_time_bound}
\tau_{2,\rm mix}(2\epsilon)\leq \tau_{1,\rm mix}(\epsilon/2).
\end{equation}
\end{itemize}
\end{thm}
\noindent Importantly, \cref{eqn:fixed_point_error_2} makes no reference to the map $\Phi_2$, and applies for an arbitrary state $\rho_2$.~\rev{Perturbation bounds for quantum channels and their fixed points have been studied previously in the literature, e.g., in~\cite{Szehr_2013}. However, in~\cref{thm:almost_fixed_point}, we rely only on the mixing time of the quantum channel, which is a weaker assumption than the standard contraction conditions typically used in the literature, such as~\cite[Theorem~4]{Szehr_2013}. For completeness, we provide a full proof of the theorem below.}

\begin{proof}[Proof of~\cref{thm:almost_fixed_point}] To prove~\cref{eqn:fixed_point_error_1}, we notice that
\[
\begin{aligned}
&\left\|\rho_1-\rho_2\right\|_1=\left\|\rho_1-\Phi^{\tau_{1,\rm mix}(\epsilon)}_2(\rho_2)\right\|_1\\
\leq &\left\|\rho_1-\Phi^{\tau_{1,\rm mix}(\epsilon)}_1(\rho_2)\right\|_1+\left\|\Phi^{\tau_{1,\rm mix}(\epsilon)}_1(\rho_2)-\Phi^{\tau_{1,\rm mix}(\epsilon)}_2(\rho_2)\right\|_1\leq \epsilon+\tau_{1,\rm mix}(\epsilon)\|\Phi_1-\Phi_2\|_{1\leftrightarrow 1}
\end{aligned}
\]

\[
\begin{aligned}
&\left\|\rho_1-\rho_2\right\|_1\leq \left\|\rho_1-\Phi^{\tau_{1,\rm mix}(\epsilon)}_1(\rho_2)\right\|_1+\left\|\Phi^{\tau_{1,\rm mix}(\epsilon)}_1(\rho_2)-\rho_2\right\|_1\\
\leq &\left\|\rho_1-\Phi^{\tau_{1,\rm mix}(\epsilon)}_1(\rho_2)\right\|_1+\sum^{\tau_{1,\rm mix}(\epsilon)-1}_{n=0}\|\Phi^{n+1}_1(\rho_2)-\Phi^{n}_1(\rho_2)\|_1\leq \epsilon+\tau_{1,\rm mix}(\epsilon)\|\Phi_1(\rho_2)-\rho_2\|_{1}\,,
\end{aligned}
\]
where we use $\|\Phi_1\|_{1\leftrightarrow1}\leq 1$ in the last inequality.

Finally, to show the comparable mixing time, we note that for any $\rho$, 
\[
\begin{aligned}
&\|\Phi^{\tau_{1,\rm mix}(\epsilon/2)}_2(\rho)-\rho_2\|_1\leq \|\Phi^{\tau_{1,\rm mix}(\epsilon/2)}_2(\rho)-\Phi^{\tau_{1,\rm mix}(\epsilon/2)}_1(\rho)\|_1+\|\Phi^{\tau_{1,\rm mix}(\epsilon/2)}_1(\rho)-\rho_1\|_1+\|\rho_1-\rho_2\|_1\\
\leq &\|\Phi^{\tau_{1,\rm mix}(\epsilon/2)}_2(\rho)-\Phi^{\tau_{1,\rm mix}(\epsilon/2)}_1(\rho)\|_1+\|\Phi^{\tau_{1,\rm mix}(\epsilon/2)}_1(\rho)-\rho_1\|_1+\tau_{1,\rm mix}(\epsilon/2)\|\Phi_1-\Phi_2\|_{1\leftrightarrow 1}+\epsilon/2\\
\leq &2\tau_{1,\rm mix}(\epsilon/2)\|\Phi_1-\Phi_2\|_{1\leftrightarrow 1}+\epsilon\leq 2\epsilon
\end{aligned}
\]
where we use~\eqref{eqn:fixed_point_error_1} in the second equality. This concludes the proof.

\end{proof}

\section{Fixed point error bounds for thermal and ground state preparation}\label{sec:thermal_ground_prep}

Under of $H_E, A_S, B_E, f(t), g(\omega)$ in the main text, the quantum channel $\Phi$ defined in~\cref{eqn:Phi_alpha} can be engineered to approximately preserve the thermal or ground state of the system Hamiltonian.~\rev{The integer mixing time of $\Phi$ is defined in~\cref{def:mixing_time}. According to~\cref{thm:char_Phi_alpha}, the mixing time of $\Phi$ should be governed by the underlying Lindbladian operator $\mc{L}$.} The quantity $t_{{\rm mix},\Phi}(\epsilon)$ approximately captures the total Lindbladian evolution time required for mixing. When $\alpha$ is sufficiently small, this mixing time does not diverge as $\alpha \to 0$, but instead remains bounded above by a finite constant that depends only on properties of the Lindbladian.

The following theorem shows that
by properly choosing the parameters $\sigma,T,\alpha$ related to the mixing time, the fixed point of $\Phi$ is approximately the thermal state. We also omit some dependence on $\|H\|$ and $\|A_S\|$ for simplicity. The general version of~\cref{thm:fix_thermal} is stated in~\cref{sec:app_fix_thermal} as~\cref{thm:fix_thermal_rigor}, followed by the proof of both theorems.

\begin{thm}[Thermal state, informal]\label{thm:fix_thermal} Assume $0\leq \beta<\infty$ and $g(\omega)=\frac{1}{\omega_{\max}}\mathbf{1}_{[0,\omega_{\max}]}$. Then, for any $\epsilon>0$, if
\[
\sigma=\widetilde{\Omega}\left(\beta\omega^{-1}_{\max} \epsilon^{-1}t_{{\rm mix},\Phi}(\epsilon)\right),\quad T=\Omega(\sigma\log(\sigma/\epsilon))\,,
\]
and $\alpha=\mc{O}\left(\sigma T^{-2}\epsilon^{1/2}t^{-1/2}_{{\rm mix},\Phi}(\epsilon)\right)$,
then 
\[\|\rho_{\rm fix}(\Phi)-\rho_\beta\|_1<\epsilon\,.
\]
\end{thm}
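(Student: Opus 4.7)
The plan is to decompose the trace-norm error into three layers: a per-iteration discretization error supplied by Theorem~\ref{thm:char_Phi_alpha}, an approximate fixed-point estimate $\|\mathcal{L}(\rho_\beta)\|_1 \ll 1$ for the effective Lindbladian, and a standard mixing-time lifting that turns the first two into a bound on the true fixed point $\rho_{\mathrm{fix}}(\Phi_\alpha)$. Since $\rho_\beta$ commutes with $H$, it is left invariant by $\mathcal{U}_S(T)$, so Theorem~\ref{thm:char_Phi_alpha} gives directly
\[
\|\Phi_\alpha(\rho_\beta)-\rho_\beta\|_1 \le \|e^{\alpha^2\mathcal{L}}(\rho_\beta)-\rho_\beta\|_1 + O(\alpha^4 T^4/\sigma^2),
\]
having used $\|f\|_{L^\infty}^4 = \Theta(\sigma^{-2})$ for the Gaussian filter.

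\textbf{Bounding $\|\mathcal{L}(\rho_\beta)\|_1$.} This is the heart of the argument. In the ideal limit of an infinitely wide $f$ and $T=\infty$, the jumps $V_{A_S,f,T}(\omega)$ become Davies operators supported on Bohr frequencies of $H$, the weight $\gamma(\omega)=(g(\omega)+g(-\omega))/(1+e^{\beta\omega})$ enforces the KMS identity $V(\omega)\rho_\beta = e^{-\beta\omega}\rho_\beta V(\omega)$ exactly, and the dissipator annihilates $\rho_\beta$. Two errors must therefore be quantified. First, truncating the $V_{A_S,f,T}$ integral at $\pm T$ contributes a Gaussian tail $\int_{|t|>T} f^2\, \mathrm{d}t$ that is $O(\epsilon/t_{\mathrm{mix}})$ once $T = \Omega(\sigma\log(\sigma/\epsilon))$. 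Second, the Lamb shift $H_{\mathrm{LS},A_S}(\omega)$ no longer commutes with $\rho_\beta$ exactly; expanding $V_{A_S,f,T}(\omega)$ in the eigenbasis of $H$ and noting that matrix elements between $\ket{\psi_i}$ and $\ket{\psi_j}$ are suppressed outside $|\lambda_i-\lambda_j \mp \omega| \lesssim 1/\sigma$, one gets a commutator bound scaling like $\beta/(\sigma\omega_{\max})$. The choice $\sigma=\widetilde{\Omega}(\beta\omega_{\max}^{-1}\epsilon^{-1}t_{\mathrm{mix}})$ then yields $\|\mathcal{L}(\rho_\beta)\|_1=O(\epsilon/t_{\mathrm{mix}})$, and by trace-norm contractivity of the semigroup $e^{s\mathcal{L}}$, $\|e^{\alpha^2\mathcal{L}}(\rho_\beta)-\rho_\beta\|_1 \le \alpha^2 \|\mathcal{L}(\rho_\beta)\|_1 = O(\alpha^2\epsilon/t_{\mathrm{mix}})$.

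\textbf{Mixing-time lifting and parameter balance.} Using the standard inequality $\|\rho_{\mathrm{fix}}(\Phi_\alpha)-\rho_\beta\|_1 \le \tau_{\mathrm{mix},\Phi_\alpha}\|\Phi_\alpha(\rho_\beta)-\rho_\beta\|_1$ for CPTP maps with a unique stationary state (proved by iterating $\Phi_\alpha$ from $\rho_\beta$ for $\tau_{\mathrm{mix}}$ steps and applying the triangle inequality), together with $\tau_{\mathrm{mix},\Phi_\alpha}=t_{\mathrm{mix}}/\alpha^2$ and the bounds just obtained,
\[
\|\rho_{\mathrm{fix}}(\Phi_\alpha)-\rho_\beta\|_1 \le \frac{t_{\mathrm{mix}}}{\alpha^2}\left(\frac{\alpha^2\epsilon}{t_{\mathrm{mix}}} + \frac{\alpha^4 T^4}{\sigma^2}\right) = O(\epsilon) + \frac{t_{\mathrm{mix}}\alpha^2 T^4}{\sigma^2}.
\]
The remaining term is $O(\epsilon)$ precisely when $\alpha = O(\sigma T^{-2}\epsilon^{1/2}t_{\mathrm{mix}}^{-1/2})$, matching the stated scaling.

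\textbf{Main obstacle.} The technical core is the quantitative Lamb-shift commutator bound inside Step~2: one must show rigorously that the failure of the KMS identity at finite filter width propagates into $[H_{\mathrm{LS},A_S}(\omega),\rho_\beta]$ in a way that is linear in $\beta$ and inversely proportional to $\sigma\omega_{\max}$. This demands a careful Fourier analysis of $V_{A_S,f,T}(\omega)$, identification of the principal-value integrals defining $H_{\mathrm{LS},A_S}(\omega)$, and a perturbative argument that recovers the Davies fixed-point property up to terms controlled by the frequency-space resolution $\sim 1/\sigma$ of $\widehat{f}$. The remaining ingredients---the Dyson remainder from Theorem~\ref{thm:char_Phi_alpha}, the Gaussian tail estimate, and the fixed-point perturbation inequality---are either already provided or entirely standard.
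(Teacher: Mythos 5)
Your three-layer decomposition (Dyson remainder from \cref{thm:char_Phi_alpha}, approximate invariance $\|\mathcal{L}(\rho_\beta)\|_1\ll 1$, mixing-time lifting) is exactly the paper's strategy, and your parameter balance reproduces the stated scalings. Two points, one minor and one substantive.

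Minor: the lifting inequality you invoke is missing an additive term. Iterating $\Phi_\alpha$ from $\rho_\beta$ for $\tau_{\rm mix}(\epsilon)$ steps lands you within $\epsilon$ of $\rho_{\rm fix}$ by the \emph{definition} of mixing time, so the correct statement (the paper's \cref{eqn:fixed_point_error_2}) is $\|\rho_{\rm fix}-\rho_\beta\|_1\le \epsilon+\tau_{\rm mix}(\epsilon)\,\|\Phi_\alpha(\rho_\beta)-\rho_\beta\|_1$; without the $+\epsilon$ the bound does not follow from the triangle-inequality argument you sketch. This does not change the conclusion since the target is $O(\epsilon)$ anyway.

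Substantive: the mechanism you propose for the Lamb-shift commutator is the wrong one, and this is precisely the step the paper identifies as the crux. The frequency-resolution picture --- matrix elements suppressed outside $|\lambda_i-\lambda_j\mp\omega|\lesssim 1/\sigma$ --- is what controls the \emph{dissipative} part (it is the secular approximation of \cref{lem:dissipative_thermal}, and also the right picture for the ground-state case). It does not control the Lamb shift: for a fixed $\omega$, $[H_{\mathrm{LS},A_S}(\omega),\rho_\beta]$ is \emph{not} small, because $\mc{G}_{A_S,f}(\omega)$ contains off-diagonal contributions $A_S^\dagger(\gamma_2)A_S(\gamma_1)$ with $\gamma_1+\gamma_2\ne 0$ whose coefficients have no pointwise suppression. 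The paper's \cref{lem:Lamb_shift_commute} instead bounds the commutator of the \emph{$\omega$-integrated} Lamb shift: integrating $\gamma(\omega)e^{i\omega\sigma q}$ by parts in $\omega$ produces the oscillatory-integral quantity $R=O\bigl(\tfrac{1}{\sigma\omega_{\max}}\log(\sigma\omega_{\max})\bigr)$, and the commutator with $\rho_\beta$ (rather than with $H$) is then handled by a BCH expansion of $\rho_\beta^{-1}(\cdot)\rho_\beta$ into nested commutators with $H$, each gaining a factor $\beta/\sigma$ from differentiating the Gaussian envelope in the center-of-mass variable $p=(s_1+s_2)/\sigma$. Your proposal flags this as the main obstacle, which is fair, but the route you sketch for closing it would not work; the per-$\omega$ bound you would need is false, and the summability over $\omega$ must come from the integral itself.
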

\noindent \cref{thm:fix_thermal} shows that if we set $\sigma = \widetilde{\Theta}\left(\omega_{\max}^{-1} \beta \epsilon^{-1} t_{{\rm mix},\Phi}(\epsilon)\right)$, we ensure that the fixed point is $\epsilon$-close to the thermal state by choosing
\[
T = \widetilde{\Theta}\left(\omega_{\max}^{-1} \beta \epsilon^{-1} t_{{\rm mix},\Phi}(\epsilon)\right),\
\alpha = \widetilde{\Theta}\left(\omega_{\max} \beta^{-1} t_{{\rm mix},\Phi}^{-3/2} \epsilon^{3/2}\right).
\]

Analogously, we can establish a corresponding result for the ground state as follows.

\begin{thm}[Ground state, informal]\label{thm:fix_ground}  Assume $H$ has a spectral gap $\Delta$ and let $\ket{\psi_0}$ be the ground state of $H$. Then, for any $\epsilon>0$, if
\[
\sigma=\widetilde{\Omega}\left(\Delta^{-1}\log(\|H\|/\epsilon)\right),\quad T=\Omega(\sigma\log(\sigma/\epsilon))\,,
\]
and $\alpha=\mc{O}\left(\sigma T^{-2}\epsilon^{1/2}t^{-1/2}_{{\rm mix},\Phi}(\epsilon)\right)$,
then 
\[\|\rho_{\rm fix}(\Phi)-\ket{\psi_0}\bra{\psi_0}\|_1<\epsilon.
\]
\end{thm}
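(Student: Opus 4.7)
The plan is to mirror the thermal-state argument of~\cref{thm:fix_thermal} but exploit the fact that at $\beta=\infty$ the damping function in~\cref{eqn:lindbladian_operator} becomes $\gamma(\omega)=(g(\omega)+g(-\omega))\mathbf{1}_{\omega<0}$, so only energy-lowering channels contribute to the dissipator. By~\cref{thm:char_Phi_alpha}, one application of $\Phi_\alpha$ equals $\mc{U}_S(T)\circ\exp(\mc{L}\alpha^2)\circ\mc{U}_S(T)$ up to a Dyson remainder of order $\mc{O}(\alpha^4 T^4\|f\|_{L^\infty}^4)=\mc{O}(\alpha^4 T^4/\sigma^2)$. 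Since $\ket{\psi_0}$ is an eigenstate of $H$, the outer unitaries fix $\ket{\psi_0}\bra{\psi_0}$ exactly, so the per-step defect reduces to $\delta:=\|\Phi_\alpha(\ket{\psi_0}\bra{\psi_0})-\ket{\psi_0}\bra{\psi_0}\|_1\leq \alpha^2\|\mc{L}(\ket{\psi_0}\bra{\psi_0})\|_1+\mc{O}(\alpha^4T^4/\sigma^2)$. Using trace-norm contractivity of $\Phi_\alpha$, I would telescope $\|\Phi_\alpha^n(\ket{\psi_0}\bra{\psi_0})-\ket{\psi_0}\bra{\psi_0}\|_1\leq n\delta$, and combine with the definition of $\tau_{\mathrm{mix},\Phi_\alpha}(\epsilon/2)$ to obtain $\|\rho_{\mathrm{fix}}(\Phi_\alpha)-\ket{\psi_0}\bra{\psi_0}\|_1\leq \epsilon/2+\tau_{\mathrm{mix},\Phi_\alpha}(\epsilon/2)\cdot\delta$.

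The technical core is showing that $\|\mc{L}(\ket{\psi_0}\bra{\psi_0})\|_1$ is exponentially small in $\sigma\Delta$. Expanding $A_S(t)$ in the energy eigenbasis yields
\begin{equation*}
V_{A_S,f,T}(\omega)\ket{\psi_0}=\sum_i\left[\int_{-T}^{T}f(t)e^{i(\lambda_i-\lambda_0-\omega)t}\ud t\right]\bra{\psi_i}A_S\ket{\psi_0}\ket{\psi_i},
\end{equation*}
where the bracketed integral equals $\hat f(\omega-(\lambda_i-\lambda_0))$ up to a truncation error of order $\exp(-T^2/(8\sigma^2))$, absorbed by $T=\Omega(\sigma\log(\sigma/\epsilon))$. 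For $\omega\in[-\omega_{\max},0]=\mathrm{supp}(\gamma)$ and $i\neq 0$, the spectral gap forces $\omega-(\lambda_i-\lambda_0)\leq-\Delta$, so the Gaussian form of $\hat f$ bounds each such coefficient by $\mc{O}(\sigma^{1/2}\exp(-\sigma^2\Delta^2))$; the choice $\sigma=\widetilde{\Omega}(\Delta^{-1}\log(\|H\|/\epsilon))$ drives this to $\mc{O}(\epsilon/(\|H\|\cdot t_{\mathrm{mix},\Phi_\alpha}))$. Thus $V_{A_S,f,T}(\omega)\ket{\psi_0}$ is either exponentially small (when $|\omega|\gg 1/\sigma$) or proportional to $\ket{\psi_0}$ (when $|\omega|\lesssim 1/\sigma$), and a direct matrix-element computation then shows that the $(\ket{\psi_0},\ket{\psi_0})$ entries of $VPV^\dagger$ and $\frac12\{V^\dagger V,P\}$ cancel, while all off-diagonal entries of $\mc{D}_{V_{A_S,f,T}(\omega)}(\ket{\psi_0}\bra{\psi_0})$ are Gaussian-suppressed by products of two factors whose phase arguments differ by at least $\Delta$. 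The identical Gaussian suppression applied to the principal-value integral defining $H_{\mathrm{LS},A_S}(\omega)$ bounds its $\ket{\psi_0}$--$\ket{\psi_j}$ off-diagonal elements, so the commutator term is negligible as well.

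Assembling the pieces, the Lindbladian contribution to the accumulated error is $\tau_{\mathrm{mix}}\cdot\alpha^2\|\mc{L}(\ket{\psi_0}\bra{\psi_0})\|_1=t_{\mathrm{mix},\Phi_\alpha}\cdot\mc{O}(\epsilon/(\|H\|\,t_{\mathrm{mix},\Phi_\alpha}))\leq\epsilon/4$ under the stated $\sigma$-bound (with the hidden $\polylog$ absorbing the $\log t_{\mathrm{mix},\Phi_\alpha}$ factor), and the Trotter-remainder contribution is $\tau_{\mathrm{mix}}\cdot\mc{O}(\alpha^4 T^4/\sigma^2)=\mc{O}(\alpha^2 T^4 t_{\mathrm{mix},\Phi_\alpha}/\sigma^2)\leq\epsilon/4$ precisely under $\alpha=\mc{O}(\sigma T^{-2}\epsilon^{1/2}t_{\mathrm{mix},\Phi_\alpha}^{-1/2})$. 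The main obstacle is the dissipator estimate: in the ideal zero-temperature Davies generator the jump operators annihilate $\ket{\psi_0}$ exactly, but here $V_{A_S,f,T}(\omega)\ket{\psi_0}$ carries small Gaussian tails into excited states, and tracking these carefully through both $VPV^\dagger$ and $\frac12\{V^\dagger V,P\}$---in particular showing that the spurious off-diagonal contributions from $V^\dagger$ raising the ground state are each suppressed by the gap $\Delta$---is where the bulk of the estimation work will lie. Once this and the Lamb-shift control are in hand, the truncation bound and the perturbation-to-mixing-time argument transfer essentially verbatim from the thermal proof, with the role played by $\beta\omega_{\max}^{-1}\epsilon^{-1}$ there now being assumed by $\Delta^{-1}\log(\|H\|/\epsilon)$.
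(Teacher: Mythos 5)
Your overall architecture matches the paper's: reduce to the per-step defect $\|\Phi_\alpha(\ket{\psi_0}\bra{\psi_0})-\ket{\psi_0}\bra{\psi_0}\|_1$ via the almost-fixed-point bound (\cref{thm:almost_fixed_point}, which is exactly your telescoping step), use \cref{thm:char_Phi_alpha} plus the $T$-truncation to pass to the $T\to\infty$ Lindbladian, and show that the gap makes $\widetilde{\mc{L}}(\ket{\psi_0}\bra{\psi_0})$ Gaussian-small. You also correctly flag the main subtlety in the dissipator, namely that $V(\omega)\ket{\psi_0}$ has a non-suppressed $\gamma=0$ component and that the cross terms in $\{V^\dagger V,P\}$ involving $V^\dagger$ acting on $\ket{\psi_0}$ must be killed by a product of two Gaussians whose arguments cannot both be small (this is essentially what the paper does via the splitting $\widetilde V=V^{+}+(\widetilde V-V^{+})$ with $\mathcal{D}_{V^+}(\ket{\psi_0}\bra{\psi_0})=0$). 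However, there are two genuine gaps. First, your per-coefficient bounds are entry-wise (``each such coefficient,'' ``all off-diagonal entries''), but the trace-norm bound requires summing over the set $B(H)$ of Bohr frequencies, which is generically of size $4^N$; an entry-wise bound of $\mc{O}(\sigma^{1/2}e^{-\sigma^2\Delta^2})$ does not survive that sum at your claimed $\sigma=\widetilde\Omega(\Delta^{-1}\log(\|H\|/\epsilon))$. The paper resolves this by replacing $H$ with a rounded Hamiltonian $H_\eta$ whose Bohr-frequency count is $\mc{O}(\|H\|/\eta)$, paying an $\mc{O}(\eta\sigma^2)$ rounding error, and optimizing over $\eta$ to get $\mc{O}(\|H\|^{1/2}\sigma^{3/2}e^{-\sigma^2\Delta^2/32})$ — this is precisely where the $\log(\|H\|/\epsilon)$ in the hypothesis on $\sigma$ originates, a factor your write-up invokes (``drives this to $\mc{O}(\epsilon/(\|H\|\,t_{\rm mix}))$'') without explaining what produces the $\|H\|$ to be beaten.

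Second, the Lamb-shift term cannot be dispatched by ``the identical Gaussian suppression applied to the principal-value integral.'' In this paper $H_{\mathrm{LS},A_S}(\omega)$ is built from the time-ordered double integral $\mc{G}_{A_S,f}(\omega)$ of \cref{eqn:G_S}, whose inner integral runs only over $s_2<s_1$; after the change of variables $p=(s_1+s_2)/\sigma$, $q=(s_1-s_2)/\sigma$ only the $p$-integral localizes, giving suppression $e^{-\Theta(\sigma^2(\gamma_1+\gamma_2)^2)}$ in the \emph{sum} of the two Bohr frequencies, while the $q$-integral is merely $\mc{O}(1)$. One must then prove (the paper does this by a case analysis / contradiction argument) that $\left[\ket{\psi_0}\bra{\psi_0},A_S^\dagger(\gamma_2)A_S(\gamma_1)\right]=0$ whenever $|\gamma_1+\gamma_2|<\Delta$, so that only the suppressed pairs contribute to the commutator, and then again control the number of surviving pairs via the rounding trick. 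None of this is a routine transfer of your dissipator estimate, and your one-sentence treatment of the coherent term is where the proof would actually fail as written. (A smaller point: the cancellation of the $(\ket{\psi_0},\ket{\psi_0})$ entries of $VPV^\dagger$ and $\tfrac12\{V^\dagger V,P\}$ is itself only approximate and needs the same two-Gaussian argument; and the rigorous statement, \cref{thm:fix_ground_rigor}, deliberately makes no assumption on $g$, whereas you hard-code $\mathrm{supp}(\gamma)=[-\omega_{\max},0]$.)
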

\noindent
\cref{thm:fix_ground} shows that if we set $\sigma=\widetilde{\Theta}\left(\Delta^{-1}\right)$, it suffices to choose
\[
T=\widetilde{\Theta}(\Delta^{-1}),\ \alpha=\widetilde{\Theta}\left(\Delta \epsilon^{1/2} t^{-1/2}_{{\rm mix},\Phi}(\epsilon)\right).
\]
The rigorous version of~\cref{thm:fix_ground} is given in Appendix~\ref{sec:app_fix_gs} as~\cref{thm:fix_ground_rigor}. \cref{thm:fix_ground} shows that if we set $\sigma=\widetilde{\Theta}\left(\Delta^{-1}\right)$, it suffices to choose
\[
T=\widetilde{\Theta}(\Delta^{-1}),\ \alpha=\widetilde{\Theta}\left(\Delta \epsilon^{1/2} t^{-1/2}_{{\rm mix},\Phi}(\epsilon)\right).
\]

The result of~\cref{thm:fix_thermal} applies to all values of $\beta$ and does not require $\Delta = \mathrm{poly}(N^{-1})$ for efficient state preparation, whereas~\cref{thm:fix_ground} does rely on this assumption to ensure efficient preparation. On the other hand, the dependence on $\beta$ in~\cref{thm:fix_thermal} may not be sharp, particularly in the large-$\beta$ regime. For instance, at very low temperatures, where $\beta =\Omega(\poly(N,1/\Delta, 1/\epsilon))$, preparing the $\epsilon$-approximate thermal state effectively reduces to preparing the ground state. In such cases, one may directly adopt the parameter choices in~\cref{thm:fix_ground} rather than those in~\cref{thm:fix_thermal}. 

\rev{According to the approximation-error bounds in~\cref{thm:fix_thermal} and~\cref{thm:fix_ground}, once the effective mixing time $t_{\rm mix,\Phi_{\alpha}}$ is upper bounded, an appropriate choice of parameters guarantees that the fixed point $\rho_{\text{fix}}(\Phi)$ can be made arbitrarily close to the target state. However, as discussed in the main text, the main difficulty is that $t_{\rm mix,\Phi_{\alpha}}$ itself depends on the parameters $\sigma$ and $\alpha$ used in the construction of $\Phi$. Consequently, it may happen that as $\sigma$ or $\alpha^{-1}$ tends to $+\infty$, the mixing time $t_{\rm mix,\Phi_{\alpha}}$ also diverges, causing the conditions in~\cref{thm:fix_ground} and~\cref{thm:fix_thermal} to become unsatisfiable (see~\cref{sec:toy_uniform} and~\cref{rem:filter_contrast}). To circumvent this issue, we carefully design the dissipative protocol that allows large energy transition between eigenvectors, which further ensures that, once $\sigma$ is sufficiently large, the mixing time $t_{\rm mix,\Phi_{\alpha}}$ becomes \emph{independent} of~$\sigma$. In~\cref{sec:toy_uniform}–\cref{sec:thermal_mixing_general}, we rigorously prove that for certain classes of physical models such as  free-fermion systems, and local commuting Hamiltonians, the mixing time does not blow up with~$\sigma$ and can be upper bounded by a quantity that scales polynomially with the number of qubits.
}

To prove \cref{thm:fix_thermal}, according to \cref{thm:almost_fixed_point} in \cref{sec:almost_fixed_point}, it suffices to bound $\|\Phi(\rho_\beta)-\rho_\beta\|_1$ . This consists of two main steps:
\begin{enumerate}
    \item Approximate the map $\Phi$ by choosing $\alpha \ll 1$.
    \item Show that the limiting map approximately fixes the thermal or ground state when $\sigma,T\gg 1$.
\end{enumerate}
In the first step, using the result of~\cref{thm:char_Phi_alpha}, we have
\[
\|\Phi(\rho_\beta)-\rho_\beta\|_1 \approx\left\|\alpha^2\mc{L}(\rho_\beta)\right\|_1,\quad \alpha\ll 1
\]
with the approximation error quantified in~\eqref{eqn:Phi_map_approx}. Thus, it suffices to show the Lindblad dynamics approximately fix the thermal/ground state.  This constitutes the most technical part of the proof. For thermal states, it has been shown that the dissipative part of the Lindbladian $\mc{L}$ in \eqref{eqn:lindbladian_operator} is approximately detailed-balanced \cite{ChenKastoryanoBrandaoEtAl2023} when $\{(A^i)^\dagger\}_i = \{A^i\}_i$, and therefore approximately fixes the thermal state (see~\cref{sec:app_fix_thermal} Lemma~\ref{lem:dissipative_thermal}). When $\sigma\gg 1$, we show that the Lamb shift Hamiltonian $H_{\mathrm{LS},A_S}(\omega)$ approximately commutes with the thermal state (see~\cref{sec:app_fix_thermal} Lemma~\ref{lem:Lamb_shift_commute}). These two properties together imply that $\left\|\mc{L}(\rho_\beta)\right\|_1\approx 0$. 

Note that, in order to ensure a small error $\epsilon$, \cref{thm:fix_thermal} requires that the parameters defining our algorithm satisfy conditions that depend on the mixing time $t_{{\rm mix},\Phi}$. The mixing time enters the proof because the relationship between $\|\rho_{\rm fix}(\Phi)-\rho_\beta\|_1$ and $\|\Phi(\rho_\beta)-\rho_\beta\|_1$ involves the mixing time, as shown in~\cref{sec:almost_fixed_point}.

The proof of~\cref{thm:fix_ground} is similar; however, under the spectral gap assumption, the ground state case allows a direct upper bound on $\|\mathcal{L}(\rho_\beta)\|_1$, and the fixed-point error bound is independent of the choice of $g$. Specifically, the $\gamma$-dependent term in $\mathcal{L}(\rho_\beta)$ takes the form $\int \gamma(\omega) \mathcal{E}(\omega) \, \ud \omega$
for some error operator $\mathcal{E}(\omega)$, which by normalization of $\gamma$ satisfies $\int \gamma(\omega) \norm{\mathcal{E}(\omega)}_1 \, \ud \omega \le \sup_{\omega\in\mathrm{supp}(\gamma)} \norm{\mathcal{E}(\omega)}_1$.
This last term can be bounded directly (see~\cref{thm:fix_ground_rigor}), allowing $\gamma$ (and $g$) to be optimized to reduce $t_{\rm mix}$. In contrast, for thermal state preparation, the Lamb shift term cannot be uniformly bounded for all $\omega$; instead, one must estimate the integral itself to show that it approximately commutes with the thermal state
 (see Lemma~\ref{lem:Lamb_shift_commute}).

\subsection{Approximate fixed point -- Thermal state}\label{sec:app_fix_thermal}
 In this section, we provide a rigorous version of~\cref{thm:fix_thermal}
 in~\cref{thm:fix_thermal_rigor} and provide the proof. We consider~\eqref{eqn:Phi_alpha} with $f(t)= \frac{1}{(2\pi)^{1/4} \sigma^{1/2}} \exp\left(-\frac{t^2}{4\sigma^2}\right)$. First, we can rewrite $\mc{L}$ in~\cref{eqn:lindbladian_operator} as
\begin{equation}\label{eqn:lindbladian_operator_refine}
\mc{L}(\rho)=\mathbb{E}_{A_S}\left(\int^\infty_{-\infty} -i\left[g(\omega)H_{\mathrm{LS},A_S}(\omega),\rho\right]+\gamma(\omega)\mathcal{D}_{V_{A_S,f}(\omega)}(\rho)\mathrm{d}\omega\right)\,,
\end{equation}
where $\gamma(\omega)=(g(\omega)+g(-\omega))/(1+\exp(\beta\omega))$. In the case when $\beta=\infty$, $\gamma(\omega)=(g(\omega)+g(-\omega))\textbf{1}_{\omega<0}+g(0)\textbf{1}_{\omega=0}$.

Before presenting the rigorous version of~\cref{thm:fix_thermal}, we first consider a simplified CPTP map $\widetilde{\Phi}$ defined as follows:
\begin{equation}\label{eqn:Phi_tilde_thermal}
\widetilde{\Phi}=\mc{U}_S(T)\,\circ\,\exp\left(\widetilde{\mathcal{L}}\alpha^2\right)\,\circ\,\mc{U}_S(T)\,.
\end{equation}
Compared to $\Phi$ in~\cref{eqn:Phi_alpha}, we omit the error terms in~\cref{thm:char_Phi_alpha} and take the limit $T \to \infty$ in $\mc{L}$. Specifically, as mentioned in~\cref{sec:notation},
\begin{equation}\label{eqn:L_tilde_thermal}
\begin{aligned}
\widetilde{\mc{L}}(\rho)=-i\left[\widetilde{H}_{\mathrm{LS}},\rho\right]+\mathbb{E}_{A_S}\left(\int^\infty_{-\infty}\gamma(\omega)\mathcal{D}_{\widetilde{V}_{A_S,f}(\omega)}(\rho)\mathrm{d}\omega\right)\,,
\end{aligned}
\end{equation}
where
\[
\widetilde{H}_{\mathrm{LS}}=\rev{-\mathbb{E}_{A_S}\left(\mathrm{Im}\left(\int^\infty_{-\infty}\gamma(\omega)\widetilde{\mc{G}}_{A_S,f}(-\omega)\mathrm{d}\omega\right)\right)},\quad \widetilde{V}_{A_S,f}(\omega)=\int^\infty_{-\infty}f(t)A_S(t)\exp(-i\omega t)\mathrm{d}t\,,
\]
with
\begin{equation}\label{eqn:G_S_thermal}
\widetilde{\mc{G}}_{A,f}(\omega)=\int^\infty_{-\infty}\int^{s_1}_{-\infty}f(s_2)f(s_1) A^\dagger(s_2)A(s_1)\exp(-i\omega(s_1-s_2))\mathrm{d}s_2\mathrm{d}s_1\,.
\end{equation}
In the formula of $\widetilde{H}_{\mathrm{LS}}$, we use the fact that $\{(A^i)^\dagger\}_i=\{A^i\}_i$ and $\widetilde{\mc{G}}_{A^i,f}=\widetilde{\mc{G}}_{-A^i,f}$.

 The distance between $\Phi$ and $\widetilde{\Phi}$ can be controlled in the following lemma:
\begin{lem}\label{lem:thermal_approx_first} When $T>\sigma$, we have
\[
\left\|\Phi-\widetilde{\Phi}\right\|_{1\leftrightarrow1}=\mathcal{O}\left(\alpha^2\sigma\exp\left(-T^2/(4\sigma^2)\right)\mathbb{E}(\|A_S\|^2)+\alpha^4T^4\sigma^{-2}\mathbb{E}\left(\|A_S\|^4\right)\right)
\]
\end{lem}
\begin{proof}[Proof of Lemma~\ref{lem:thermal_approx_first}] According to~\cref{thm:char_Phi_alpha_rigor} and $\|\gamma(\omega)\|_{L^1}=1$, we have
\[
\begin{aligned}
&\left\|\Phi-\widetilde{\Phi}\right\|_{1\leftrightarrow1}\leq \alpha^2\|\mc{L}-\widetilde{\mc{L}}\|_{1\leftrightarrow1}+\mc{O}\left(\alpha^4T^4\sigma^{-2}\mathbb{E}\left(\|A_S\|^4\right)\right)\\
=&\mc{O}\left(\alpha^2\sup_{\omega}\left(\|\mc{G}_{A_S,f}(\omega)-\widetilde{\mc{G}}_{A_S,f}(\omega)\|+\|V_{A_S,f}(\omega)-\widetilde{V}_{A_S,f}(\omega)\|\underbrace{\|V_{A_S,f}(\omega)\|}_{=\mc{O}(\sigma^{1/2}\|A_S\|)}\right)\right)\\
&+\mc{O}\left(\alpha^4T^4\sigma^{-2}\mathbb{E}\left(\|A_S\|^4\right)\right)
\end{aligned}\,.
\]
Thus, it suffices to consider $\|V_{A_S,f}(\omega)-\widetilde{V}_{A_S,f}(\omega)\|$ and $\|\mc{G}_{A_S,f}(\omega)-\widetilde{\mc{G}}_{A_S,f}(\omega)\|$. For the first term, we have
\[
\begin{aligned}
&\|V_{A_S,f}(\omega)-\widetilde{V}_{A_S,f}(\omega)\|\leq \|A_S\|\int_{|t|>T}f(t)\mathrm{d}t=\mc{O}\left((\sigma^{3/2}/T)\exp(-T^2/(4\sigma^2))\|A_S\|\right)\\
=&\mc{O}\left(\sigma^{1/2}\exp(-T^2/(4\sigma^2))\|A_S\|\right)\,,   
\end{aligned}
\]
where we use $T>\sigma$ in the second equality.
For the second term, we have
\[
\|\mc{G}_{A_S,f}(\omega)-\widetilde{\mc{G}}_{A_S,f}(\omega)\|\leq \|A_S\|^2\left(\int_{\abs{s_1}\ge T}\int^{s_1}_{-\infty}+\int^{T}_{-T}\int^{-T}_{-\infty}f(s_2)f(s_1)\mathrm{d}s_2\mathrm{d}s_1\right)=\mc{O}\left(\sigma\exp(-T^2/(4\sigma^2))\|A_S\|^2\right)\,.
\]
Combining these two bounds, we conclude the proof.
\end{proof}

Using $\widetilde{\Phi}$, we are ready to state the rigorous version of~\cref{thm:fix_thermal} and provide the proof:
\begin{thm}\label{thm:fix_thermal_rigor} Define
\begin{equation}\label{eqn:condition_R_thermal}
  R:=\int^\infty_{0}\left|\int^\infty_{-\infty}\gamma(\omega)\exp(i\omega \sigma q)\mathrm{d}\omega\right|\exp(-q^2/8)\mathrm{d}q\,.
\end{equation}
When $T>\sigma>\beta$, we have
\[
\begin{aligned}
&\|\rho_{\rm fix}(\Phi)-\rho_\beta\|_1
\\
\leq &\left(\mathbb{E}_{A_S}\left(\left\| \left[\rho_\beta, \int \gamma(\omega)\widetilde{\mc{G}}_{A_S,f}(-\omega)\,\mathrm{d}\omega \right] \right\|_1+\left\|\left[\rho_\beta, \int \gamma(\omega)\left(\widetilde{\mc{G}}_{A_S,f}(-\omega)\right)^\dagger\,\mathrm{d}\omega \right] \right\|_1+\left\|\int^\infty_{-\infty}\gamma(\omega)\mathcal{D}_{\widetilde{V}_{A_S,f}(\omega)}(\rho_\beta)\right\|_1\mathrm{d}\omega\right)\right)\\
=&\widetilde{\mathcal{O}}\left(\left(\left(R+\|\gamma(\omega)\|_\infty\frac{1}{\sigma}\sqrt{\log(\sigma/\beta)}\right)\beta\mathbb{E}\left(\|A_S\|^2\right)+\sigma\exp\left(-T^2/(4\sigma^2)\right)\mathbb{E}(\|A_S\|^2)+\alpha^2T^4\sigma^{-2}\mathbb{E}\left(\|A_S\|^4\right)\right)\alpha^2\tau_{{\rm mix},\Phi}(\epsilon)+\epsilon\right)
\end{aligned}
\]
\end{thm}
According to~\cref{thm:fix_thermal_rigor}, to ensure a small fixed-point error, we require $R$ to vanish as $\sigma \to \infty$. This, in turn, imposes a constraint on the choice of $\gamma(\omega)$ (and hence $g(\omega)$). We prove below that it suffices to choose $g$ to be a uniform distribution. We emphasize that this constraint arises from the need to control the fixed-point error associated with the Lamb shift term in Lemma~\ref{lem:Lamb_shift_commute}. Specifically, we cannot directly prove that each term in the $\omega$-expansion of $\widetilde{H}_{\rm LS}$ commutes with the thermal state. Instead, we prove that the entire term approximately commutes with the thermal state after integrating over $\omega$.

Before proving~\cref{thm:fix_thermal_rigor}, we first use it to prove~\cref{thm:fix_thermal}.
\begin{proof}[Proof of~\cref{thm:fix_thermal}]
When $g(\omega)=\frac{1}{\omega_{\max}}\textbf{1}_{\omega\in[0,\omega_{\max}]}$ with $\omega_{\max}=\Omega(1)$. In this case, we have $\gamma(\omega)=\frac{1}{\omega_{\max}(1+\exp(\beta \omega))}\textbf{1}_{\omega\in[-\omega_{\max},\omega_{\max}]}$. Thus, $\|\gamma\|_{\infty}=\frac{1}{\omega_{\max}}$ and
\[
R=\underbrace{\int^{(\sigma \omega_{\max})^{-1}}_{0}\left|\int^\infty_{-\infty}\gamma(\omega)\exp(i\omega \sigma q)\mathrm{d}\omega\right|\exp(-q^2/8)\mathrm{d}q}_{=\Or((\sigma \omega_{\max})^{-1})}+\int^{\infty}_{(\sigma \omega_{\max})^{-1}}\left|\int^\infty_{-\infty}\gamma(\omega)\exp(i\omega \sigma q)\mathrm{d}\omega\right|\exp(-q^2/8)\mathrm{d}q
\]
For the second term, we have
\[
\begin{aligned}
&\left|\int^\infty_{-\infty}\gamma(\omega)\exp(i\omega \sigma q)\mathrm{d}\omega\right|=\left|\frac{1}{i\sigma \omega_{\max} q}\int^{\omega_{\max}}_{-\omega_{\max}}\frac{1}{1+\exp(\beta\omega)}\mathrm{d}\left(\exp(i\omega \sigma q)\right)\right|\\
\leq &\frac{2}{\omega_{\max}\sigma q}+\frac{1}{\omega_{\max}\sigma q}\left|\int^{\omega_{\max}}_{-\omega_{\max}}\frac{\beta\exp(\beta \omega)}{(1+\exp(\beta\omega))^2}\exp(i\omega\sigma q)\mathrm{d}\omega\right|=\mc{O}\left(\frac{1}{\omega_{\max}\sigma q}\right)
\end{aligned}
\]
Here, we note $\left|\int^{\omega_{\max}}_{-\omega_{\max}}\frac{\beta\exp(\beta \omega)}{(1+\exp(\beta\omega))^2}\exp(i\omega\sigma q)\mathrm{d}\omega\right|\leq \left|\int^{\infty}_{-\infty}\frac{\exp(u)}{(1+\exp(u))^2}\mathrm{d}u\right|=\mc{O}(1)$. Plugging this back into the expression for $R$, we obtain
\[
R=\mc{O}\left(\frac{1}{\sigma\omega_{\max}}\log(\sigma \omega_{\max})\right)\,.
\]
Combining this,~\cref{thm:fix_thermal_rigor}, and $\|A_S\|\leq 1$, we have 
\[
\begin{aligned}
&\|\rho_{\rm fix}(\Phi)-\rho_\beta\|_1\\
=&\widetilde{\mathcal{O}}\left(\left(\frac{\beta}{\omega_{\max}\sigma}\left(\sqrt{\log(\sigma/\beta)}+\log(\sigma \omega_{\max})\right)\mathbb{E}(\|A_S\|^2)+\sigma\exp\left(-T^2/(4\sigma^2)\right)\mathbb{E}(\|A_S\|^2)+\alpha^2T^4\sigma^{-2}\mathbb{E}\left(\|A_S\|^4\right)\right)\underbrace{\alpha^2\tau_{{\rm mix},\Phi}(\epsilon)}_{=t_{{\rm mix},\Phi}}+\epsilon\right)\,.
\end{aligned}
\]
Now, to achieve $\epsilon$-precision, we first need
\[
\left(\frac{\beta}{\omega_{\max}\sigma}\left(\sqrt{\log(\sigma/\beta)}+\log(\sigma \omega_{\max})\right)+\sigma\exp\left(-T^2/(4\sigma^2)\right)\right)\mathbb{E}(\|A_S\|^2)t_{{\rm mix},\Phi}=\mc{O}(\epsilon)\,,
\]
which implies
\[
\sigma=\widetilde{\mc{O}}\left(\beta \mathbb{E}(\|A_S\|^2)\omega^{-1}_{\max}t_{{\rm mix},\Phi}\epsilon^{-1}\right),\quad T=\widetilde{\Omega}\left(\sigma\right)\,.
\]
In addition, we also require
\[
\alpha^2T^4\sigma^{-2}\mathbb{E}\left(\|A_S\|^4\right)t_{{\rm mix},\Phi}=\mc{O}(\epsilon)\,,
\]
which implies
\[
\alpha=\mc{O}\left(\sigma T^{-2} t^{-1/2}_{{\rm mix},\Phi} \mathbb{E}^{-1/2}\left(\|A_S\|^4\right)\epsilon^{1/2}\right)\,.
\]
Plugging in $\sigma=\widetilde{\Theta}\left(\beta \mathbb{E}(\|A_S\|^2)\omega^{-1}_{\max}t_{{\rm mix},\Phi}\epsilon^{-1}\right)$, we conclude that
\[
\sigma=\widetilde{\Theta}\left(\beta \mathbb{E}(\|A_S\|^2)\omega^{-1}_{\max}t_{{\rm mix},\Phi}\epsilon^{-1}\right),\quad T=\widetilde{\Theta}\left(\beta \mathbb{E}(\|A_S\|^2)\omega^{-1}_{\max}t_{{\rm mix},\Phi}\epsilon^{-1}\right)\,,
\]
and
\[
\alpha=\widetilde{\Theta}\left(\sigma^{-1}t^{-1/2}_{{\rm mix},\Phi} \mathbb{E}^{-1/2}\left(\|A_S\|^4\right)\epsilon^{1/2}\right)=\widetilde{\Theta}\left(\beta^{-1}\omega_{\max}t^{-3/2}_{{\rm mix},\Phi}\epsilon^{3/2}\mathbb{E}^{-1}\left(\|A_S\|^2\right)\mathbb{E}^{-1/2}\left(\|A_S\|^4\right)\right)\,.
\]
This concludes~\cref{thm:fix_thermal}.
\end{proof}

Next, we prove~\cref{thm:fix_thermal_rigor}.
According to~\cref{thm:almost_fixed_point}~\cref{eqn:fixed_point_error_2}, we need to show the upper bound of $\|\Phi(\rho_\beta)-\rho_\beta\|_1$. According to Lemma~\ref{lem:thermal_approx_first} and
\begin{equation}\label{eqn:first_approx}
\|\Phi(\rho_\beta)-\rho_\beta\|_1\leq \left\|\Phi-\widetilde{\Phi}\right\|_{1\leftrightarrow1}+\|\widetilde{\Phi}(\rho_\beta)-\rho_\beta\|_1\,,
\end{equation}
it suffices to show $\left\|\widetilde{\Phi}(\rho_\beta)-\rho_\beta\right\|_1$ is small. Let $d$ be the dimension of $H$ and $H$ have an eigendecomposition $\{(\lambda_i,\ket{\psi_i})\}^{d-1}_{i=0}$ with $\lambda_0\leq \lambda_1\leq \dots,\lambda_{d-1}$. Because the unitary evolution $\mc{U}_S(T)$ preserves the thermal state, we have
\begin{equation}\label{eqn:tilde_phi_bound_thermal}
\left\|\widetilde{\Phi}\left(\rho_{\beta}\right)-\rho_{\beta}\right\|_1\leq \alpha^2\left\|\widetilde{\mc{L}}\left(\rho_{\beta}\right)\right\|_1\,,
\end{equation}
where \rev{$\widetilde{\mc{L}}$ is defined in~\eqref{eqn:L_tilde_thermal}}. In $\widetilde{\mc{L}}$, we consider the Lamb shift term and dissipative term separately. For the Lamb shift term, we have the following lemma:
\begin{lem}\label{lem:Lamb_shift_commute} When $T>\sigma$, we have
\[
\left\|\left[\widetilde{H}_{\rm LS},\rho_{\beta}\right]\right\|_1\leq \mathbb{E}_{A_S}\left(\left\| \left[\rho_\beta, \int \gamma(\omega)\widetilde{\mc{G}}_{A_S,f}(-\omega)\,\mathrm{d}\omega \right] \right\|_1+\left\|\left[\rho_\beta, \int \gamma(\omega)\left(\widetilde{\mc{G}}_{A_S,f}(-\omega)\right)^\dagger\,\mathrm{d}\omega \right] \right\|_1\right)=\mathcal{O}\left(R\beta\mathbb{E}(\|A_S\|^2)\right)
\]
\end{lem}
For the dissipative term, we have the following lemma:
\begin{lem}\label{lem:dissipative_thermal} When $T>\sigma>
\beta$, we have
\[
\left\|\mathbb{E}_{A_S}\left(\int^\infty_{-\infty}\gamma(\omega)\mathcal{D}_{\widetilde{V}_{A_S,f}(\omega)}(\rho_\beta)\mathrm{d}\omega\right)\right\|_1\leq \mathbb{E}_{A_S}\left(\left\|\int^\infty_{-\infty}\gamma(\omega)\mathcal{D}_{\widetilde{V}_{A_S,f}(\omega)}(\rho_\beta)\mathrm{d}\omega\right\|_1\right)=\mathcal{O}\left(\|\gamma(\omega)\|_\infty\mathbb{E}\left(\|A_S\|^2\right)\frac{\beta}{\sigma}\sqrt{\log(\sigma/\beta)}\right)
\]
\end{lem}
We put the proof of the above lemmas in the end of this section. Now, we are ready to prove~\cref{thm:fix_thermal_rigor}.
\begin{proof}[Proof of~\cref{thm:fix_thermal_rigor}]
Combining Lemma~\ref{lem:Lamb_shift_commute} and Lemma~\ref{lem:dissipative_thermal}, we have
\[
\left\|\widetilde{\mc{L}}\left(\rho_{\beta}\right)\right\|_1=\mathcal{O}\left(\left(R+\|\gamma(\omega)\|_\infty\frac{1}{\sigma}\sqrt{\log(\sigma/\beta)}\right)\beta\mathbb{E}\left(\|A_S\|^2\right)\right)
\]
Plugging this into~\cref{eqn:tilde_phi_bound_thermal} and using Lemma~\ref{lem:thermal_approx_first} and~\cref{thm:almost_fixed_point} with~\eqref{eqn:first_approx}, we conclude the proof.
\end{proof}

Finally, we complete the proof of Lemma~\ref{lem:Lamb_shift_commute} and Lemma~\ref{lem:dissipative_thermal}.
\begin{proof}[Proof of Lemma~\ref{lem:Lamb_shift_commute}]
Recall that
\[
\begin{aligned}
&\widetilde{H}_{\mathrm{LS}}=-\mathbb{E}_{A_S}\left(\mathrm{Im}\left(\int^\infty_{-\infty}\gamma(\omega)\widetilde{\mc{G}}_{A_S,f}(-\omega)\mathrm{d}\omega\right)\right)\\
=&\frac{-1}{\rev{2i}}\mathbb{E}_{A_S}\left(\int^\infty_{-\infty}\gamma(\omega)\widetilde{\mc{G}}_{A_S,f}(-\omega)\mathrm{d}\omega-\int^\infty_{-\infty}\gamma(\omega)\left(\widetilde{\mc{G}}_{A_S,f}(-\omega)\right)^\dagger\mathrm{d}\omega\right)
\end{aligned}\,.
\]

This implies
\[
\|[\rho_\beta,\widetilde{H}_{\rm LS}]\|_1\leq \rev{\frac{1}{2}}\mathbb{E}_{A_S}\left(\left\| \left[\rho_\beta, \int \gamma(\omega)\widetilde{\mc{G}}_{A_S,f}(-\omega)\,\mathrm{d}\omega \right] \right\|_1+\left\|\left[\rho_\beta, \int \gamma(\omega)\left(\widetilde{\mc{G}}_{A_S,f}(-\omega)\right)^\dagger\,\mathrm{d}\omega \right] \right\|_1\right)
\]
Thus, to show that $\|[\rho_\beta,\widetilde{H}_{\rm LS}]\|_1$ is small, it suffices to bound
\begin{equation}\label{eqn:commutator_small}
\left\| \left[\rho_\beta, \int \gamma(\omega)\widetilde{\mc{G}}_{A_S,f}(-\omega)\,\mathrm{d}\omega \right] \right\|_1
\quad \text{and} \quad
\left\| \left[\rho_\beta, \int \gamma(\omega)\left(\widetilde{\mc{G}}_{A_S,f}(-\omega)\right)^\dagger\,\mathrm{d}\omega \right] \right\|_1
\end{equation}
for all $\|A_S\| \leq 1$. The argument proceeds in two steps. First, we show that both
\[
\left[H, \int \gamma(\omega)\widetilde{\mc{G}}_{A_S,f}(-\omega)\,\mathrm{d}\omega \right]
\quad \text{and} \quad
\left[H, \int \gamma(\omega)\left(\widetilde{\mc{G}}_{A_S,f}(-\omega)\right)^\dagger\,\mathrm{d}\omega \right]
\]
are small (we omit the proof of the latter as it is analogous), which implies that $[H, \widetilde{H}_{\rm LS}]$ is small. Then, we expand $\rho_\beta$ as a polynomial in $H$ and express the commutators $\|[\rho_\beta, \cdot]\|_1$ as sums of nested commutators, from which we establish the smallness of~\eqref{eqn:commutator_small}.

We first calculate $\|[H,\widetilde{H}_{\rm LS}]\|$. For simplicity, we only consider $\widetilde{\mc{G}}_{A_S,f}(\omega)$. The calculation with $\left(\widetilde{\mc{G}}_{A_S,f}(\omega)\right)^\dagger$ should be quite similar.  Using change of variable $p=(s_1+s_2)/\sigma$ and $q=(s_1-s_2)/\sigma$,we have
\[
\begin{aligned}
&\widetilde{\mc{G}}_{A_S,f}(\omega)\\
=&\frac{\sigma^2}{2}\int^{\infty}_{-\infty}\mathrm{d}p\int^{+\infty}_{0}\mathrm{d}q f\left(\frac{\sigma(p+q)}{2}\right)f\left(\frac{\sigma(p-q)}{2}\right)A^\dagger_S\left(\frac{\sigma(p-q)}{2}\right)A_S\left(\frac{\sigma(p+q)}{2}\right)\exp(-i\omega\sigma q)
\end{aligned}
\]
Notice that
\[
\begin{aligned}
\left[H,A^\dagger_S\left(\frac{\sigma(p-q)}{2}\right)A_S\left(\frac{\sigma(p+q)}{2}\right)\right]=\frac{-2i}{\sigma}\frac{\mathrm{d}}{\mathrm{d}p}\left(A^\dagger_S\left(\frac{\sigma(p-q)}{2}\right)A_S\left(\frac{\sigma(p+q)}{2}\right)\right)
\end{aligned}\,.
\]
Thus,
\[
\begin{aligned}
  &[H,\widetilde{\mc{G}}_{A_S,f}(\omega)]
  =\frac{-i\sigma}{2\sqrt{2\pi}}\int^{\infty}_{-\infty}\int^{\infty}_{0}\exp(-p^2/8)\exp(-q^2/8)\\
  &\cdot\frac{2}{\sigma}\frac{\mathrm{d}}{\mathrm{d}p}\left(A^\dagger_S\left(\frac{\sigma(p-q)}{2}\right)A_S\left(\frac{\sigma(p+q)}{2}\right)\right)\exp(-i\omega \sigma q)\mathrm{d}q\mathrm{d}p\\
  =&\frac{\rev{-i}}{\sqrt{2\pi}}\int^{\infty}_{-\infty}\int^{\infty}_{0}\frac{p}{4}\exp(-p^2/8)\exp(-q^2/8)\\
  &\cdot A^\dagger_S\left(\frac{\sigma(p-q)}{2}\right)A_S\left(\frac{\sigma(p+q)}{2}\right)\exp(-i\omega \sigma q)\mathrm{d}q\mathrm{d}p\\
\end{aligned}
\]
We notice that
\[
\left\|A^\dagger_S(\sigma(p-q)/2)A_S(\sigma(p+q)/2)\right\|\leq \|A_S\|^2\,.
\]
thus,
\rev{
\[
\left\|\left[H, \int \gamma(\omega)\widetilde{\mc{G}}_{A_S,f}(-\omega)\,\mathrm{d}\omega \right]\right\|=\mathcal{O}\left(R\mathbb{E}(\|A_S\|^2)\right)\,.
\]
This implies that
}
\[
\left\|\left[H, \widetilde{H}_{\rm LS}\right]\right\|=\mathcal{O}\left(R\mathbb{E}(\|A_S\|^2)\right)\,.
\]

Next, we notice that
\[
\begin{aligned}
&\|[\rho_\beta,\widetilde{H}_{\rm LS}]\|_1\leq \left\|\rho_
\beta\right\|_1\|\rho^{-1}_{\beta}\widetilde{H}_{\rm LS}\rho_{\beta}-\widetilde{H}_{\rm LS}\|=\|\rho^{-1}_{\beta}\widetilde{H}_{\rm LS}\rho_{\beta}-\widetilde{H}_{\rm LS}\|\\
=&\rev{\mathbb{E}_{A_S}\left\|\mathrm{Im}\left(\int^\infty_{-\infty}\gamma(\omega)\left(\rho^{-1}_\beta\widetilde{\mc{G}}_{A_S,f}(\omega)\rho_\beta-\widetilde{\mc{G}}_{A_S,f}(\omega)\right)\mathrm{d}\omega\right)\right\|}\,.
\end{aligned}
\]
We can use the BCH formula to expand the term $\rho_\beta^{-1}\widetilde{\mc{G}}_{A_S,f}(\omega)\rho_\beta$ as a series.
\[
\begin{aligned}
\rho_\beta^{-1}\widetilde{\mc{G}}_{A_S,f}(\omega)\rho_\beta-\widetilde{\mc{G}}_{A_S,f}(\omega)&=\mathrm{e}^{\beta H}\widetilde{\mc{G}}_{A_S,f}(\omega) \mathrm{e}^{-\beta H}-\widetilde{\mc{G}}_{A_S,f}(\omega)\\
&=\beta[H, \widetilde{\mc{G}}_{A_S,f}(\omega)]+\frac{\beta^2}{2}[H,[H, \widetilde{\mc{G}}_{A_S,f}(\omega)]]+\ldots \frac{\beta^n}{n!}[\overbrace{H,[H, \ldots[H}^{n H^{\prime} s}, \widetilde{\mc{G}}_{A_S,f}(\omega)] . .]+\ldots .
\end{aligned}
\]
Using change of variable $p=(s_1+s_2)/\sigma$ and $q=(s_1-s_2)/\sigma$,
similar to the previous calculation
\[
\begin{aligned}
\left[H,\widetilde{\mc{G}}_{A_S,f}(\omega)\right]=&\frac{\sigma}{2\sqrt{2\pi}}\frac{-2i}{\sigma}\int^{\infty}_{-\infty}\mathrm{d}p\int^{+\infty}_{0}\mathrm{d}q \exp(-p^2/8)\exp(-q^2/8)\\
&\cdot\exp(-i\omega\sigma q) \frac{\mathrm{d}}{\mathrm{d}p}\left(A^\dagger_S\left(\frac{\sigma(p-q)}{2}\right)A_S\left(\frac{\sigma(p+q)}{2}\right)\right)\\
=&\frac{\sigma}{2\sqrt{2\pi}}\frac{2i}{\sigma}\int^{\infty}_{-\infty}\mathrm{d}p\int^{+\infty}_{0}\mathrm{d}q \frac{\mathrm{d}}{\mathrm{d}p}\exp(-p^2/8)\exp(-q^2/8)\\
&\cdot \exp(-i\omega\sigma q) A^\dagger_S\left(\frac{\sigma(p-q)}{2}\right)A_S\left(\frac{\sigma(p+q)}{2}\right)\,.
\end{aligned}
\]
Applying this iteratively, we have the commutator form:
\[
\begin{aligned}
[\overbrace{H,[H, \ldots[H}^{n H' s}, \widetilde{\mc{G}}_{A_S,f}(\omega)] . .]=&\frac{\sigma}{2\sqrt{2\pi}}\left(\frac{2i}{\sigma}\right)^n\int^{\infty}_{-\infty}\mathrm{d}p\int^{+\infty}_{0}\mathrm{d}q \frac{\mathrm{d}^n}{\mathrm{d}p^n}\exp(-p^2/8)\exp(-q^2/8)\\
&\cdot \exp(-i\omega\sigma q) A^\dagger_S\left(\frac{\sigma(p-q)}{2}\right)A_S\left(\frac{\sigma(p+q)}{2}\right)
\end{aligned}\,.
\]
Notice that:
\[
\left|\frac{\mathrm{d}^n}{\mathrm{d}p^n}\exp(-p^2/8)\right|<2\sqrt{n!}2^{-n}\exp(-p^2/16)\,.
\]
As a result, following the proof of the previous lemma, the $n$-th term of the series can be bounded by
\[
\begin{aligned}
\left\|\int^{\infty}_{-\infty}\gamma(\omega)\mathrm{d}\omega \frac{\beta^n}{n!}[\overbrace{H,[H, \ldots[H}^{n H^{\prime} s}, \widetilde{\mc{G}}_{A_S,f}(\omega)] . .]\right\|&=\mathcal{O}\left(\frac{\sigma}{2\sqrt{2\pi}}\left(\frac{2\beta}{\sigma}\right)^n \frac{\rev{1}}{\sqrt{n!}}2^{-n}R\|A_S\|^2\right)\\
&=\mathcal{O}\left( \frac{\beta^{n-1}}{\sigma^{n-1}\sqrt{n!}}R\beta\|A_S\|^2\right)
\end{aligned}
\]
Summing all terms still gives
\[
\left\|\int^\infty_{-\infty}\gamma(\omega)\left(\rho^{-1}_\beta\widetilde{\mc{G}}_{A^i,f}(\omega)\rho_\beta-\widetilde{\mc{G}}_{A^i,f}(\omega)\right)\mathrm{d}\omega\right\|=\mathcal{O}\left(R\beta\|A_S\|^2\right)
\]
Similar result can be proved for $\left(\widetilde{\mc{G}}_{A_S,f}(\omega)\right)^\dagger$. We conclude the proof.
\end{proof}

\begin{proof}[Proof of Lemma~\ref{lem:dissipative_thermal}] Let $\mc{B}$ be a Lindbladian, define
\[
\mc{K}(\rho_\beta,\mc{B})=\rho_\beta^{-1/4}\mc{B}[\rho^{1/4}_\beta\cdot \rho^{1/4}_\beta]\rho_\beta^{-1/4}
\]
with
\[
\left(\mc{K}(\rho_\beta,\mc{B})\right)^\dagger=\rho_\beta^{1/4}\mc{B}^
\dagger[\rho^{-1/4}_\beta\cdot \rho^{-1/4}_\beta]\rho_\beta^{1/4}\,.
\]
We note that if $\mc{K}(\rho_\beta,\mc{B})=\left(\mc{K}(\rho_\beta,\mc{B})\right)^\dagger$, we have
\[
\rho_\beta^{-1/4}\mc{B}[\rho_\beta]\rho_\beta^{-1/4}=\mc{K}(\rho_\beta,\mc{B})[\sqrt{\rho_\beta}]=(\mc{K}(\rho_\beta,\mc{B}))^{\dag}[\sqrt{\rho_\beta}]=\rho_\beta^{1/4}\mc{B}^{\dagger}[I]\rho_\beta^{1/4}=0\,.
\]
This implies $\mc{B}$ fixes the thermal state $\rho_\beta$. Furthermore, \begin{equation}\label{eqn:B_rho_bound}
\begin{aligned}
&\left\|\mc{K}(\rho_\beta,\mc{B})-\left(\mc{K}(\rho_\beta,\mc{B})\right)^\dagger\right\|_{2\leftrightarrow 2}=\left\|\mc{K}(\rho_\beta,\mc{B})-\left(\mc{K}(\rho_\beta,\mc{B})\right)^\dagger\right\|_{2\leftrightarrow 2}\|\sqrt{\rho_\beta}\|_2\\
\geq &\left\|\mc{K}(\rho_\beta,\mc{B})[\sqrt{\rho_\beta}]-\left(\mc{K}(\rho_\beta,\mc{B})\right)^\dagger[\sqrt{\rho_\beta}]\right\|_{2}=\left\|\rho_\beta^{-1/4}\mc{B}[\rho_\beta]\rho_\beta^{-1/4}\right\|_2\\
= &\left\|\rho_\beta^{-1/4}\mc{B}[\rho_\beta]\rho_\beta^{-1/4}\right\|_2\|\rho_{\beta}^{1/4}\|^2_4\geq \|\mc{B}[\rho_\beta]\|_1
\end{aligned}\,.
\end{equation}
Here $\|\cdot\|_p$ is the Schattern-$p$ norm defined in~\cref{sec:notation}. In the last inequality, we use H\"older's inequality $\|BAB\|_1\leq \|B\|^2_4\|A\|_2$. This inequality implies that, if $\mc{K}(\rho_\beta,\mc{B})$ is approximately self-adjoint, $\mc{B}$ can also \rev{approximately} preserve the thermal state.

The rest of the proof follows a similar procedure as the proof of~
\cite[Theorem I.3]{ChenKastoryanoBrandaoEtAl2023} to show that $\|\mc{D}[\rho_\beta]\|_1$ is small. Let $\mc{D} =\mathbb{E}_{A_S}\left(\int^\infty_{-\infty}\gamma(\omega)\mathcal{D}_{\widetilde{V}_{A_S,f}(-\omega)}(\rho_\beta)\mathrm{d}\omega\right)$. In the proof of~\cite[Theorem I.3]{ChenKastoryanoBrandaoEtAl2023}, the authors first approximate $\mc{D}$ with secular version $\mc{D}_{sec}$ (See~\cite[Lemma A.2]{ChenKastoryanoBrandaoEtAl2023}). The secular approximation is an artificial cutoff in frequency space on the transition energies induced by Lindblad jump operators, which causes only a small error when $\sigma$ is sufficiently large. Following the proof of~\cite[Theorem I.3]{ChenKastoryanoBrandaoEtAl2023},we have
\[
\left\|\mc{D}_{sec}-\mc{D}\right\|_{1\leftrightarrow 1}+\left\|\mc{K}(\rho_\beta,\mc{D}_{sec})(\rho_\beta)-\left(\mc{K}(\rho_\beta,\mc{D}_{sec})(\rho_\beta)\right)^\dagger\right\|_{2\leftrightarrow 2}=\mathcal{O}\left(\|\gamma(\omega)\|_\infty\mathbb{E}\left(\|A_S\|^2\right)\frac{\beta}{\sigma}\sqrt{\log(\sigma/\beta)}\right)\,.
\]
This implies that
\[
\begin{aligned}
&\|\mc{D}(\rho_\beta)\|_1\leq \left\|\mc{D}_{sec}-\mc{D}\right\|_{1\leftrightarrow 1}+ \|\mc{D}_{sec}(\rho_\beta)\|_1\\
\leq &\left\|\mc{D}_{sec}-\mc{D}\right\|_{1\leftrightarrow 1}+\left\|\mc{K}(\rho_\beta,\mc{D}_{sec})(\rho_\beta)-\left(\mc{K}(\rho_\beta,\mc{D}_{sec})(\rho_\beta)\right)^\dagger\right\|_{2\leftrightarrow 2}=\mathcal{O}\left(\|\gamma(\omega)\|_\infty\mathbb{E}\left(\|A_S\|^2\right)\frac{\beta}{\sigma}\sqrt{\log(\sigma/\beta)}\right)\,.
\end{aligned}
\]
In the \rev{second inequality}, we use~\eqref{eqn:B_rho_bound}.
\end{proof}

\subsection{Approximate fixed point -- Ground state}\label{sec:app_fix_gs}
In this section, we provide a rigorous version of~\cref{thm:fix_ground}
in~\cref{thm:fix_ground_rigor} and provide the proof. We consider~\eqref{eqn:Phi_alpha} with  $\beta=\infty$ and $f(t)= \frac{1}{(2\pi)^{1/4} \sigma^{1/2}} \exp\left(-\frac{t^2}{4\sigma^2}\right)$. Similar to the thermal state case, we first consider a simplified CPTP map by removing the error terms in~\cref{thm:char_Phi_alpha} and take the limit $T \to \infty$, as mentioned in~\cref{sec:notation},
\begin{equation}\label{eqn:Phi_tilde}
\widetilde{\Phi}=\mc{U}_S(T)\,\circ\,\exp\left(\widetilde{\mathcal{L}}\alpha^2\right)\,\circ\,\mc{U}_S(T)\,.
\end{equation}
Here
\begin{equation}\label{eqn:L_tilde}
\begin{aligned}
\widetilde{\mc{L}}(\rho)=\mathbb{E}_{A_S}\left(-i\left[\widetilde{H}_{\mathrm{LS},A_S},\rho\right]+\int^0_{-\infty}(g(\omega)+g(-\omega))\mathcal{D}_{\widetilde{V}_{A_S,f}(\omega)}(\rho)\mathrm{d}\omega\right)\,,
\end{aligned}
\end{equation}
where
\[
\widetilde{H}_{\mathrm{LS},A_S}=\rev{-}\mathrm{Im}\left(\int^0_{-\infty}g(\omega)\widetilde{\mc{G}}_{A^\dagger_S,f}(\omega)\mathrm{d}\omega+\int^\infty_{0}g(\omega)\widetilde{\mc{G}}_{A_S,f}(-\omega)\mathrm{d}\omega\right),\quad \widetilde{V}_{A_S,f}(\omega)=\int^\infty_{-\infty}f(t)A_S(t)\exp(-i\omega t)\mathrm{d}t\,,
\]
with
\[
\widetilde{\mc{G}}_{A_S,f}(\omega)=\int^\infty_{-\infty}\int^{s_1}_{-\infty}f(s_2)f(s_1) A^\dagger_S(s_2)A_S(s_1)\exp(i\omega(s_2-s_1))\mathrm{d}s_2\mathrm{d}s_1\,.
\]
Same as Lemma~\ref{lem:thermal_approx_first}, the error between $\Phi$ and $\widetilde{\Phi}$ can be controlled in the following lemma:
\begin{lem}\label{lem:gd_approx_first} When $T>\sigma$, We have
\[
\left\|\Phi-\widetilde{\Phi}\right\|_{1\leftrightarrow1}=\mathcal{O}\left(\alpha^2\sigma\exp\left(-T^2/(4\sigma^2)\right)\mathbb{E}(\|A_S\|^2)+\alpha^4T^4\sigma^{-2}\mathbb{E}\left(\|A_S\|^4\right)\right)
\]
\end{lem}
The proof of Lemma~\ref{lem:gd_approx_first} is almost the same as the proof of Lemma~\ref{lem:thermal_approx_first}. Thus, we omit it.  Using $\widetilde{\Phi}$, we are ready to state the rigorous version of~\cref{thm:fix_ground} and provide the proof:
\begin{thm}\label{thm:fix_ground_rigor}
Assume $H$ has a spectral gap $\Delta$ and $T>\sigma$. Then, for any $\epsilon>0$,

\[
\begin{aligned}
&\|\rho_{\rm fix}(\Phi)-\rev{\ket{\psi_0}\bra{\psi_0}}\|_1\\
=&\mathcal{O}\left(\left(\|H\|^{1/2}\sigma^{3/2}\exp\left(-\sigma^2\Delta^2/\rev{8}\right)\mathbb{E}(\|A_S\|^2)+\sigma\exp\left(-T^2/(4\sigma^2)\right)\mathbb{E}(\|A_S\|^2)+\alpha^2T^4\sigma^{-2}\mathbb{E}\left(\|A_S\|^4\right)\right)\alpha^2\tau_{{\rm mix},\Phi}(\epsilon)+\epsilon\right)\\
\end{aligned}
\]
\end{thm}
The proof of this theorem follows a similar approach to that of~\cref{thm:fix_thermal_rigor}, where we demonstrate that both the Lamb shift term and the dissipative term approximately preserve the ground state. Although the overall proof strategy is similar, we adopt a different technique in the proof below. Specifically, using the spectral gap $\Delta$, we directly establish a small fixed-point error when the number of Bohr frequencies is constant. In the general case, where the number of Bohr frequencies cannot be bounded, we approximate the Hamiltonian by a rounded version with a controllable number of eigenvalues, inspired by the secular approximation idea in~\cite{ChenKastoryanoGilyen2023}. The errors introduced in the Lamb shift and dissipative terms due to this rounding can also be controlled by exploiting the Gaussian structure of $f$. Furthermore, when handling the Lamb shift term, the rounding technique and the spectral gap assumption allow us to establish a uniform fixed-point error bound for $\left\| \left[\rev{\ket{\psi_0}\bra{\psi_0}}, \widetilde{\mc{G}}_{A_S,f}(-\omega)\right] \right\|_1+\left\|\left[\rev{\ket{\psi_0}\bra{\psi_0}}, \left(\widetilde{\mc{G}}_{A_S,f}(-\omega)\right)^\dagger\right] \right\|_1+\left\|\mathcal{D}_{\widetilde{V}_{A_S,f}(\omega)}(\rev{\ket{\psi_0}\bra{\psi_0}})\right\|_1$ in $\omega$ prior to taking the expectation over $\omega$. This enables the use of an arbitrary distribution $g$ in the theorem above.

\begin{proof}[Proof of~\cref{thm:fix_ground_rigor}]


Similar to the proof of~\cref{thm:fix_thermal_rigor}, it suffices to show $\left\|\widetilde{\Phi}(\rev{\ket{\psi_0}\bra{\psi_0}})-\rev{\ket{\psi_0}\bra{\psi_0}}\right\|_1$ is small. Because the unitary evolution $\mc{U}_S(T)$ preserves the ground state, we have
\begin{equation}\label{eqn:tilde_phi_bound}
\left\|\widetilde{\Phi}\left(\ket{\psi_0}\bra{\psi_0}\right)-\ket{\psi_0}\bra{\psi_0}\right\|_1\leq \alpha^2\left\|\widetilde{\mc{L}}\left(\ket{\psi_0}\bra{\psi_0}\right)\right\|_1\,,
\end{equation}
where $\widetilde{\mc{L}}$ is defined in~\cref{eqn:L_tilde}.

Now, we consider the Lamb shift term and dissipative term separately. For simplicity, we consider a fixed $A_S$ in the following calculation. Recall that
\[
A_{S}(\nu)=\sum_{\lambda_j-\lambda_i=\nu}\ket{\psi_j}\bra{\psi_j}A_S\ket{\psi_i}\bra{\psi_i},\quad
A^\dagger_{S}(\nu)=\sum_{\lambda_j-\lambda_i=\nu}\ket{\psi_j}\bra{\psi_j}A^\dagger_S\ket{\psi_i}\bra{\psi_i}.
\]
\begin{itemize}
\item Lamb shift term: Recall the definition of $\widetilde{\mc{G}}_{A_S,f}$: \[
\widetilde{\mc{G}}_{A_S,f}(-\omega)=\int^\infty_{-\infty}\int^{s_1}_{-\infty}f(s_2)f(s_1) A^
\dagger_S(s_2)A_S(s_1)\exp(-i\omega(s_2-s_1))\mathrm{d}s_2\mathrm{d}s_1\,.
\]

Using change of variable $p=(s_1+s_2)/\sigma$ and $q=(s_1-s_2)/\sigma$,
\[
\begin{aligned}
&\int^\infty_{-\infty}\int^{s_1}_{-\infty}f(s_2)f(s_1) A^
\dagger_S(s_2)A_S(s_1)\exp(-i\omega(s_2-s_1))\mathrm{d}s_2\mathrm{d}s_1\\
=&\sum_{\nu_1,\nu_2\in B(H)}A^\dagger_S(\nu_2)A_S(\nu_1)\int^{\infty}_{-\infty}\int^{s_1}_{-\infty}f(s_2)f(s_1)\exp(i\nu_2s_2)\exp(i\nu_1s_1)\exp(-i\omega(s_2-s_1))\mathrm{d}s_2\mathrm{d}s_1\\
=&\frac{\sigma}{2\sqrt{2\pi}}\sum_{\nu_1,\nu_2\in B(H)}A^\dagger_S(\nu_2)A_S(\nu_1)\\
&\cdot\underbrace{\int^{\infty}_{-\infty}\exp\left(i\frac{\sigma p}{2}(\nu_1+\nu_2)\right)\exp\left(-\frac{p^2}{8}\right)\mathrm{d}p}_{=\mc{O}(\exp(-\sigma^2(\nu_1+\nu_2)^2/2))}\underbrace{\int^{\infty}_{0}\exp\left(-\frac{q^2}{8}\right)\exp\left(i\frac{\sigma q}{2}(\nu_1-\nu_2)\right)\exp(i\sigma\omega q)\mathrm{d}q}_{=\mc{O}(1)}
\end{aligned}
\]
where $B(H)$ is the set of Bohr frequencies.

We note that
\[
\left[\ket{\psi_0}\bra{\psi_0},A^\dagger_{S}(\nu_2)A_S(\nu_1)\right]=0
\]
when $\left|\nu_2+\nu_1\right|<\Delta$. We show this using the proof by contradiction: When $\left[\ket{\psi_0}\bra{\psi_0},A^\dagger_{S}(\nu_2)A_S(\nu_1)\right]\neq 0$, we must have $\ket{\psi_0}\bra{\psi_0}A^\dagger_{S}(\nu_2)A_S(\nu_1)\neq 0$ or $A^\dagger_S(\nu_2)A_{S}(\nu_1)\ket{\psi_0}\bra{\psi_0}\neq 0$. We consider these two cases separately:
\begin{itemize}
    \item In the first case, we have $\left(A^\dagger_{S}(\nu_2)\right)^\dagger\ket{\psi_0}=A_{S}(-\nu_2)\ket{\psi_0}\neq 0$, which implies $\nu_2\leq 0$. Now, since $\left|\nu_2+\nu_1\right|<\Delta$ and $\ket{\psi_0}\bra{\psi_0}A^\dagger_{S}(\nu_2)A_S(\nu_1)\neq 0$, we have $\nu_1=-\nu_2$. This implies $\left[\ket{\psi_0}\bra{\psi_0},A^\dagger_{S}(\nu_2)A_S(\nu_1)\right]=0$.
    \item In the second case, we have $A_{S}(\nu_1)\ket{\psi_0}\neq 0$, which implies $\nu_1\geq 0$. Now, since $\left|\nu_2+\nu_1\right|<\Delta$ and $A^\dagger_{S}(\nu_2)A_S(\nu_1)\ket{\psi_0}\neq 0$, we have $\nu_1=-\nu_2$. This implies $\left[\ket{\psi_0}\bra{\psi_0},A^\dagger_{S}(\nu_2)A_S(\nu_1)\right]=0$.
\end{itemize}
These two cases give a contradiction.  This implies
\begin{equation}\label{eqn:coherent_term_small}
  \left[\ket{\psi_0}\bra{\psi_0},\widetilde{\mc{G}}_{A_S,f}(-\omega)\right]=\sum_{\left|\nu_2+\nu_1\right|\geq \Delta}\underbrace{F(\nu_1,\nu_2)}_{|F(\nu_1,\nu_2)|=\mc{O}(\sigma\exp(-\sigma^2\Delta^2/2))}\left[\ket{\psi_0}\bra{\psi_0},A^\dagger_S(\nu_2)A_S(\nu_1)\right]\,.
\end{equation}

Now, we are ready to show~\eqref{eqn:coherent_term_small} is small. First, let us assume $H$ has discrete eigenvalues in $[-\|H\|,\|H\|]$ with uniform gap $\eta$, meaning $|\lambda_i-\lambda_j|\geq \eta$ if $\lambda_i\neq \lambda_j$. This implies $|B(H)|=\mc{O}(\|H\|/\eta)$, where $|B(H)|$ means the number of elements in $B(H)$.  Then,
\[
\begin{aligned}
  &\left\|\sum_{\nu_2\leq 0,\nu_1\geq 0}\left(\dots\right)\right\|=\mc{O}\left(\|A_S\|^2|B(H)|\sigma\exp\left(-\sigma^2\Delta^2/2\right)\right)\\
  =&\mc{O}\left(\|A_S\|^2\|H\|\sigma\exp\left(-\sigma^2\Delta^2/2\right)/\eta\right)\,.
\end{aligned}
\]
Because every Hamiltonian can be approximated by a rounding Hamiltonian $H_\eta$ such that: 1. $\|H-H_\eta\|\leq \eta$; 2. $H_\eta$ has the same ground state; 3. $H_{\eta}$ has discrete eigenvalues in $[-\|H\|,\|H\|]$ with uniform gap $\eta$. We conclude that
\begin{equation}\label{eqn:coherent_term_small_final}
\begin{aligned}
&\left\|\sum_{\nu_2\leq 0,\nu_1\geq 0}\left(\dots\right)\right\|\\
=&\mc{O}\left(\|A_S\|^2\|H\|\sigma\exp\left(-\sigma^2\Delta^2/2\right)/\eta\right)+\mc{O}\left(\|A_S\|^2\int^\infty_{-\infty}\int^{s_1}_{-\infty}f(s_1)f(s_2)(|s_1|+|s_2|)\eta\mathrm{d}s_1\mathrm{d}s_2\right)\\
=&\mc{O}\left(\|A_S\|^2\min_{\eta}\left(\sigma\exp\left(-\sigma^2\Delta^2/2\right)\|H\|/\eta+\eta\sigma^2\right)\right)\\
=&\mc{O}\left(\|A_S\|^2\|H\|^{1/2}\sigma^{3/2}\exp\left(-\left(\sigma^2\Delta^2/4\right)\right)\right)
\end{aligned}
\end{equation}
Here, the second term arises from approximating $\left[\ket{\psi_0}\bra{\psi_0},\widetilde{\mathcal{G}}_{A_S,f}(-\omega)\right]$ by replacing $H$ with $H_{\eta}$. This concludes the calculation for the Lamb shift term.

\item Dissipative term: When $f(t)=\frac{1}{(2\pi)^{1/4}\sigma^{1/2}}\exp(-t^2/(4\sigma^2))$,
\[
\widetilde{V}_{A_S,f}(\omega)=\int^\infty_{-\infty}f(t)A_S(t)\exp(-i\omega t)\mathrm{d}t=\rev{2^{3/4}\pi^{1/4}}\sqrt{\sigma}\sum_{\nu\in B(H)}\exp(-\rev{(\nu-\omega)^2\sigma^2})A_S(\nu)\,.
\]
\rev{Define the component of $V$ that preserves the ground state as $V^+$:
\[
V^+_{A_S,f}(\omega)=\left\{
\begin{aligned}
&2^{3/4}\pi^{1/4}\sqrt{\sigma}\sum_{\nu\in B(H),\nu<0}\exp(-(\nu-\omega)^2\sigma^2)A(\nu),\quad \omega<-\frac{\Delta}{2}\\
&2^{3/4}\pi^{1/4}\sqrt{\sigma}\sum_{\text{If $i=0$ or $j=0$, then $i+j=0$}}\exp(-(\lambda_i-\lambda_j-\omega)^2\sigma^2)\bra{\psi_i}A\ket{\psi_j} \ket{\psi_i}\bra{\psi_j},\quad -\frac{\Delta}{2}\leq \omega<0
\end{aligned}
\right.
\]
Here, $\ket{\psi_i}$ is the eigenvector of $H$ with eigenvalue $\lambda_i$ with $\lambda_0,\ket{\psi_0}$ being the ground state energy and ground state. Recall~\eqref{eqn:L_tilde}:
\[
\widetilde{\mc{L}}(\rho)=\mathbb{E}_{A_S}\left(-i\left[\widetilde{H}_{\mathrm{LS},A_S},\rho\right]+\int^0_{-\infty}(g(\omega)+g(-\omega))\mathcal{D}_{\widetilde{V}_{A_S,f}(\omega)}(\rho)\mathrm{d}\omega\right)\,.
\]
We will show that the choice of $V^+_{A_S,f}(\omega)$ ensures that: 1. $\ket{\psi_0
}\bra{\psi_0}\in\mathrm{Ker}\left(\mathcal{D}_{V^+_{A_S,f}(\omega)}\right)$ for any $\omega<0$; 2. $\widetilde{V}_{A_S,f}(\omega)$ is close to $V^+_{A_S,f}(\omega)$. Consider two cases:
\begin{itemize}
    \item When $\omega<-\Delta/2$, $\ket{\psi_0
}\bra{\psi_0}\in\mathrm{Ker}\left(\mathcal{D}_{V^+_{A_S,f}(\omega)}\right)$ is straightforward because $V^+_{A_S,f}(\omega)\ket{\psi_0}=0$. To show that $\widetilde{V}_{A_S,f}(\omega)$ is close to $V^+_{A_S,f}(\omega)$, we use the rounding Hamiltonian technique similar to the calculation for the Lamb shift term. First, let us assume $H$ has discrete eigenvalues in $[-\|H\|,\|H\|]$ with uniform gap $\eta$, meaning $|\lambda_i-\lambda_j|=\eta$ if $\lambda_i\neq \lambda_j$. This implies $|B(H)|=\mc{O}(\|H\|/\eta)$. Then, for $\omega\geq 0$,
\begin{equation}\label{eqn:round_first}
\left\|\widetilde{V}_{A_S,f}(\omega)-V^+_{A_S,f}(\omega)\right\|=\mc{O}\left(\|A_S\||B(H)|\sqrt{\sigma}\exp\left(-\sigma^2\Delta^2/4\right)\right)=\mc{O}\left(\|A_S\|\|H\|\sqrt{\sigma}\exp\left(-\sigma^2\Delta^2/4\right)/\eta\right)\,.
\end{equation}
Similar to the Lamb shift term, we approximated the Hamiltonian by the rounding Hamiltonian $H_\eta$ such that $\|H-H_\eta\|\leq \eta$ and $H_\eta$ has discrete eigenvalues in $[-\|H\|,\|H\|]$ with uniform gap $\eta$. We conclude that, for general $H$,
\begin{equation}\label{eqn:V_close}
\begin{aligned}
&\left\|\widetilde{V}_{A_S,f}(\omega)-V^+_{A_S,f}(\omega)\right\|=\mc{O}\left(\min_\eta\left(\|A_S\|\|H\|\sqrt{\sigma}\exp\left(-\sigma^2\Delta^2/4\right)/\eta+\|A_S\|\eta \underbrace{\|tf(t)\|_{L^1}}_{=\mc{O}(\sigma^{3/2})}\right)\right)\\
=&\mc{O}\left(\|A_S\|\|H\|^{1/2}\sigma \exp(-\sigma^2\Delta^2/8)\right)
\end{aligned}\,.
\end{equation}
\item When $-\Delta/2<\omega\leq 0$, we can rewrite
\[
V^+_{A_S,f}(\omega)=(\dots)\ket{\psi_0}\bra{\psi_0}+\sum_{i,j\neq 0}(\dots)\ket{\psi_i}\bra{\psi_j}\,.
\]
This ensures that $[V^+_{A_S,f}(\omega),\ket{\psi_0}\bra{\psi_0}]$ and thus $\ket{\psi_0
}\bra{\psi_0}\in\mathrm{Ker}\left(\mathcal{D}_{V^+_{A_S,f}(\omega)}\right)$. Next, to show $\widetilde{V}_{A_S,f}(\omega)$ is close to $V^+_{A_S,f}(\omega)$, we note that
\[
\begin{aligned}
V_{A_S,f}(\omega)=&V^+_{A_S,f}(\omega)+2^{3/4}\pi^{1/4}\sigma^{1/2}\sum_{i\neq 0}\exp(-(\lambda_i-\lambda_0-\omega)^2\sigma^2)\bra{\psi_i}A\ket{\psi_0} \ket{\psi_i}\bra{\psi_0}\\
&+2^{3/4}\pi^{1/4}\sigma^{1/2}\sum_{i\neq 0}\exp(-(\lambda_0-\lambda_i-\omega)^2\sigma^2)\bra{\psi_0}A\ket{\psi_i} \ket{\psi_0}\bra{\psi_i}\,.
\end{aligned}
\]
In the above summation, since $i\neq 0$ and $H$ has spectral gap $\Delta$, we have $|\lambda_i-\lambda_0|\geq \Delta$ and $|\lambda_i-\lambda_0-\omega|\geq \Delta/2$ when $-\Delta/2<\omega\leq 0$. This guarantees that each term in the summation can be upper bounded, meaning
\[
\left\|2^{3/4}\pi^{1/4}\sigma^{1/2}\sum_{\lambda_i=\lambda}\exp(-(\lambda_0-\lambda_i-\omega)^2\sigma^2)\bra{\psi_0}A\ket{\psi_i} \ket{\psi_0}\bra{\psi_i}\right\|=\mc{O}\left(\|A_S\|\sqrt{\sigma}\exp\left(-\sigma^2\Delta^2/4\right)\right)
\]
for each eigenvalue $\lambda$. Thus, similar to the first case, we also have~\eqref{eqn:round_first} and~\eqref{eqn:V_close}.
\end{itemize}}
\rev{Because both cases satisfy~\eqref{eqn:V_close}}, we have
\begin{equation}\label{eqn:dissipative_small_final}
\begin{aligned}
  &\left\|\mc{L}_{\widetilde{V}_{A_S,f}(\omega)}\left(\ket{\psi_0}\bra{\psi_0}\right)\right\|_{1}=\rev{\mc{O}\left(\left\|\mc{L}_{\widetilde{V}_{A_S,f}(\omega)}-\mc{L}_{V^+_{A_S,f}(\omega)}\right\|_{1\leftrightarrow1}\right)}=\mc{O}\left(\rev{\left\|\widetilde{V}_{A_S,f}(\omega)-V^+_{A_S,f}(\omega)\right\|\left\|\widetilde{V}_{A_S,f}(\omega)\right\|}\right)\\
  =&\mc{O}\left(\|A_S\|^2\|H\|^{1/2}\sigma^{3/2} \exp(-\sigma^2\Delta^2/\rev{8})\right)\,.
\end{aligned}
\end{equation}
\end{itemize}
Combining~\eqref{eqn:coherent_term_small_final} and~\eqref{eqn:dissipative_small_final}, we have
\[
\left\|\widetilde{\mc{L}}\left(\ket{\psi_0}\bra{\psi_0}\right)\right\|_1=\mathcal{O}\left(\|A_S\|^2\|H\|^{1/2}\sigma^{3/2} \exp(-\sigma^2\Delta^2/\rev{8})\right)
\]
Plugging this into~\eqref{eqn:tilde_phi_bound},
\[
\left\|\widetilde{\Phi}\left(\ket{\psi_0}\bra{\psi_0}\right)-\ket{\psi_0}\bra{\psi_0}\right\|_1=\mathcal{O}\left(\alpha^2\sigma^{3/2} \exp(-\sigma^2\Delta^2/\rev{8})\|A_S\|^2\|H\|^{1/2}\right)
\]
This concludes the proof.
\end{proof}

\section{Mixing time and End-to-end efficiency analysis}\label{sec:mixing_time}

As concrete examples to guarantee fast mixing, in this section, we choose $\mathcal{A}$ to be the set of all single-qubit Pauli operators (and their negatives) for qubit systems, and the set of creation and annihilation operators (and their negatives) for fermionic systems. For the functions $g$ and $f$, we set
\[
\rev{g(\omega)=\frac{1}{\omega_{\max}}\mathbf{1}_{[0,\omega_{\max}]}},\  f(t) = \frac{1}{(2\pi)^{1/4} \sigma^{1/2}} \exp\left(-\frac{t^2}{4\sigma^2}\right)\,.
\]
The parameters $\omega_{\max}$ are selected so that the system-bath interaction can induce energy transitions effectively.
 The choice of $\omega_{\max}$ can be system-dependent and should generally be at least as large as the largest eigenvalue gap, and typically does not grow with system size. The parameter $\sigma$ in the filter function $f(t)$ is typically chosen to be sufficiently large to ensure that the Lamb shift term in~\cref{thm:char_Phi_alpha} approximately commutes with the thermal or ground state, as discussed in~\cref{sec:thermal_ground_prep}.

According to \cref{thm:fix_thermal} and \cref{thm:fix_ground}, to establish end-to-end efficiency, it suffices to provide an upper bound on $t_{{\rm mix},\Phi}$. However, we emphasize that in \cref{thm:fix_thermal} and \cref{thm:fix_ground}, the mixing time $t_{{\rm mix},\Phi}$ and the parameter $\sigma$ are  \emph{not} independent of each other. The bath Hamiltonian $H_B$, the coupling operator $B_E$, and the filter function $f(t)$ must be carefully designed to ensure that the conditions required for the theorems are meaningfully satisfied.

In this section, we provide the result of upper bounding the mixing time of the map $\Phi$ defined in~\cref{eqn:Phi_alpha} and a complete end-to-end efficiency analysis for preparing both the thermal state and the ground state. Specifically, we consider three examples of physical systems: a single qubit example (as a toy model), free fermionic systems, and commuting local Hamiltonians. In all three cases, we show that the mixing time of $\Phi$ can be upper bounded by a constant independent of $\sigma$, provided that $\sigma$ is sufficiently large. This enables us to achieve an arbitrarily small fixed-point error by appropriately choosing a large $\sigma$ and a small $\alpha$. For clarity, we first state the results, and defer all proofs to later sections.

\subsection{Single qubit example}

 We first consider a toy model to illustrate the key ideas. Assume the system Hamiltonian $H=-Z$. In~\cref{eqn:Phi_alpha}, we set \rev{$\mc{A}=\{X,-X\}$} and \rev{$g(\omega)=\frac{1}{3}\textbf{1}_{[0,3]}(\omega)$ ($\omega_{\max}=3$)}. Then, we have the following result:

\begin{thm}\label{thm:toy_model}
For thermal state preparation, given any $\beta,\epsilon>0$, there exists a constant $C=\mathrm{poly}(\beta,1/\epsilon)$ such that if $\sigma>C$, $T=\widetilde{\Omega}(\sigma)$, and $\alpha<\sigma^{-1}C^{-1}$, we have
\[
t_{{\rm mix},\Phi}(\epsilon)=\mc{O}(\log(1/\epsilon)).
\]

For ground state preparation ($\beta=\infty$), given $\epsilon>0$, there exists a constant $C=\mathrm{polylog}(1/\epsilon)$ such that if $\sigma>C$, $T=\widetilde{\Omega}(\sigma)$, and $\alpha<\sigma^{-1}\epsilon^{1/2}C^{-1}$, we have
\[
t_{{\rm mix},\Phi}(\epsilon)=\mc{O}(\log(1/\epsilon)).
\]
\end{thm}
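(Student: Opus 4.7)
The plan is to invoke~\cref{thm:char_Phi_alpha} to pass from $\Phi_{\alpha}$ to its effective Lindbladian $\mathcal{L}$, compute $\mathcal{L}$ for this single-qubit model explicitly, read off a spectral gap of $\mathcal{L}$ that is independent of $\sigma$, and then control the per-step approximation error across the $\Theta(\log(1/\epsilon)/\alpha^{2})$ iterations required for mixing. For $H=-Z$ and $A_S=X$, the Heisenberg-evolved coupling is $X(t)=e^{-2\I t}\ketbra{0}{1}+e^{2\I t}\ketbra{1}{0}$, so~\eqref{eqn:V_sf} factors cleanly as
\[
V_{X,f,T}(\omega)=\hat f_{T}(\omega+2)\ketbra{0}{1}+\hat f_{T}(\omega-2)\ketbra{1}{0},
\]
where $\hat f_{T}(\nu)=\int_{-T}^{T}f(t)e^{-\I\nu t}\,\ud t$ is, up to tail corrections of size $e^{-\Omega(T^{2}/\sigma^{2})}$, a Gaussian of width $1/\sigma$ peaked at $\nu=0$. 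For $\sigma$ above an absolute constant the two resonances at $\omega=\pm 2$ both lie inside $\mathrm{supp}(g)$ and are well separated, so after integration against $\gamma(\omega)$ the cross contribution between the two jump components is exponentially small in $\sigma$.

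Integrating the dissipator therefore reduces $\mathcal{L}$, up to an $e^{-\Omega(\sigma^{2})}$ error, to a two-rate Davies-type generator $\Gamma_{\downarrow}\mc{D}_{\ketbra{0}{1}}+\Gamma_{\uparrow}\mc{D}_{\ketbra{1}{0}}-\I[H_{\rm LS},\cdot]$ with $\Gamma_{\downarrow}=\int\gamma(\omega)|\hat f_{T}(\omega+2)|^{2}\,\ud\omega$ and $\Gamma_{\uparrow}$ defined analogously. In the ground-state case $\gamma$ is supported on $[-3,0]$, so $\Gamma_{\uparrow}$ is exponentially small while $\Gamma_{\downarrow}=\Theta(1)$; in the thermal case detailed balance gives $\Gamma_{\uparrow}/\Gamma_{\downarrow}=e^{-2\beta}$ and $\Gamma_{\downarrow}+\Gamma_{\uparrow}=\Theta(1)$. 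A short computation of the principal-value integrand defining $H_{\mathrm{LS},X}(\omega)$ shows that near each resonance it picks out $\ketbra{0}{0}$ or $\ketbra{1}{1}$, hence the integrated Lamb shift is diagonal in the $Z$-eigenbasis and commutes with $H$. Consequently $\mathcal{L}$ is block-diagonal in the Pauli basis, with nontrivial eigenvalues $-(\Gamma_{\downarrow}+\Gamma_{\uparrow})$ (population relaxation) and $-(\Gamma_{\downarrow}+\Gamma_{\uparrow})/2\pm\I\delta_{\rm LS}$ (coherence), giving a spectral gap $\lambda_{\rm gap}=\Theta(\Gamma_{\downarrow}+\Gamma_{\uparrow})=\Theta(1)$, uniform in $\sigma$.

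To convert this into a mixing bound for $\Phi_{\alpha}$, set $\tilde{\Phi}:=\mc{U}_{S}(T)\circ\exp(\alpha^{2}\mathcal{L})\circ\mc{U}_{S}(T)$. Each resonant jump component transforms under $\mc{U}_{S}(t)$ by a global phase and the Lamb shift commutes with $H$, so $\mathcal{L}$ is covariant with $\mc{U}_{S}$ up to an $e^{-\Omega(\sigma^{2})}$ error; simultaneous diagonalization of $\mc{U}_{S}(2T)$ and $\exp(\alpha^{2}\mathcal{L})$ in the Pauli basis then shows that the nontrivial eigenvalues of $\tilde{\Phi}$ have modulus at most $e^{-\alpha^{2}\lambda_{\rm gap}/2}$. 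In the four-dimensional superoperator space this yields $\|\tilde{\Phi}^{n}(\rho)-\rho_{\rm fix}(\tilde{\Phi})\|_{1}=\mathcal{O}(e^{-n\alpha^{2}\lambda_{\rm gap}/2})$, so $n=\Theta(\log(1/\epsilon)/(\alpha^{2}\lambda_{\rm gap}))$ suffices for an $\epsilon/3$-approximation. The per-step error $\|\Phi_{\alpha}-\tilde{\Phi}\|_{\diamond}=\mathcal{O}(\alpha^{4}T^{4}\|f\|_{L^{\infty}}^{4})=\mathcal{O}(\alpha^{4}T^{4}\sigma^{-2})$ furnished by~\cref{thm:char_Phi_alpha}, telescoped over $n$ steps, contributes $\mathcal{O}(\alpha^{2}\sigma^{2}\log(1/\epsilon)/\lambda_{\rm gap})$ under $T=\widetilde{\Theta}(\sigma)$, and a resolvent-type bound gives a fixed-point displacement $\|\rho_{\rm fix}(\Phi_{\alpha})-\rho_{\rm fix}(\tilde{\Phi})\|_{1}=\mathcal{O}(\|\Phi_{\alpha}-\tilde{\Phi}\|_{\diamond}/(\alpha^{2}\lambda_{\rm gap}))$. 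All three terms are $\le\epsilon/3$ precisely under the stated upper bounds on $\alpha$---the extra factor $\epsilon^{1/2}$ in the ground-state case being exactly what the fixed-point displacement requires---which rescales to $t_{{\rm mix},\Phi_{\alpha}}(\epsilon)=\alpha^{2}n=\mathcal{O}(\log(1/\epsilon))$.

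The main obstacle I expect is the clean propagation of the $\mathcal{O}(1/\sigma)$ corrections from the finite width of $\hat f_{T}$: these enter both the idealization of $V_{X,f,T}(\omega)$ as a single resonant jump and the approximate covariance $[\mc{U}_{S},\mathcal{L}]\approx 0$, and must survive $\Theta(\log(1/\epsilon)/\alpha^{2})$ iterations without eroding the $\Theta(1)$ Davies gap. A secondary subtlety is that the telescoped per-step error has to be coupled with the contraction of $\tilde{\Phi}$ rather than handled by a naive triangle inequality, as the fixed-point displacement would otherwise consume a disproportionate share of the $\epsilon$ budget; the residual $\mathrm{poly}(\beta)$ factor in $C$ for the thermal case is inherited from the Lamb-shift commutation estimate used in~\cref{thm:fix_thermal} and not from $\lambda_{\rm gap}$ itself.
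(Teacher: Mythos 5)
Your proposal is correct and follows essentially the same route as the paper: approximate $\Phi_\alpha$ by an idealized channel generated by the two resonant jumps $\gamma(\pm 2)\,\mathcal{D}_{\ket{1}\bra{0}},\mathcal{D}_{\ket{0}\bra{1}}$ (only the decay channel for $\beta=\infty$) plus a $Z$-diagonal Lamb shift, verify by explicit single-qubit computation that this channel has a $\sigma$-independent $\Theta(1)$ gap unaffected by the unitary conjugation, and transfer the mixing time back to $\Phi_\alpha$ via the perturbation bounds (the paper packages your telescoping/resolvent step as \cref{thm:almost_fixed_point}, with the $O(\beta/\sigma)$ resonance-width error playing the role of your finite-$\hat f_T$-width correction). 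The only cosmetic discrepancy is your attribution of the ground-state $\epsilon^{1/2}$ factor to the fixed-point displacement alone; it arises equally from telescoping the $O(\alpha^4 T^4\sigma^{-2})$ Dyson remainder over the $\Theta(\log(1/\epsilon)/\alpha^2)$ iterations.
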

Although this is a toy model, it highlights a key mechanism underlying the efficiency of our protocol when $\sigma \gg 1$: the design of the jump operator $V_{A_S,f,T}(\omega)$ should support a wide range of nondegenerate energy transitions. In the present setting, it suffices to have nondegenerate jumps between $\ket{0}$ and $\ket{1}$; see \cref{eqn:L_thermal_toy,eqn:L_ground_toy}. Even for this simplified model, achieving this property requires a careful choice of both the function $f(t)$ and the bath. A contrasting example that fails to meet this condition is discussed in \cref{sec:toy_uniform}, \cref{rem:filter_contrast}.

The proof of~\cref{thm:toy_model} is given in \cref{sec:toy_uniform}. We emphasize that, in this theorem, when $\sigma$ is sufficiently large, the mixing time $t_{{\rm mix},\Phi}(\epsilon)$ becomes independent of $\sigma$. Plugging this bound into \cref{thm:fix_thermal} and \cref{thm:fix_ground} yields a result demonstrating the end-to-end efficiency of our protocol; see \cref{cor:complete}.

\subsection{Free fermionic systems}

Consider a local fermionic Hamiltonian $H$ defined on a $D$-dimensional lattice of fermionic systems, $\Lambda = [0,L]^D$, given by
\begin{equation}\label{eqn:H_fermion}
H =\sum^N_{i,j=1} h_{i,j} c^\dagger_i c_j\,.
\end{equation}
where $N=(L+1)^D$ is the number of fermionic modes, $(h_{i,j})$ is a Hermitian matrix, and $c^\dagger_j$ and $c_j$ are the creation and annihilation operators at site $j$. We also assume that the coefficient matrix $h$ satisfies $\norm{h}=\Or(1)$. Note that the operator norm of the Hamiltonian $\norm{H}$ can still increase with respect to the system size $N$. We choose $A_S$ to be uniformly sampled from the set of all single fermionic operators $\{\pm c^\dagger_i,\pm  c_i\}^n_{i=1}$.

The mixing time analysis for ground state preparation is simpler, so we present it first. The rigorous version of~\cref{thm:gs_mixing} appears in~\cref{sec:gs_mixing} as~\cref{thm:gs_mixing_rigor}.

\begin{thm}[Ground state of quadratic fermionic Hamiltonian, informal]\label{thm:gs_mixing} Assume $H$ has a spectral gap $\Delta$. Let \rev{$g(\omega)=\frac{1}{\omega_{\max}}\mathbf{1}_{[0,\omega_{\max}]}$} with $\omega_{\max}=2\|h\|$. Given any $\epsilon>0$, if  $\sigma=\widetilde{\Theta}(\Delta^{-1})$, $T=\widetilde{\Theta}(\Delta^{-1})$, and $\alpha=\widetilde{\mathcal{O}}(\epsilon^{1/2} \Delta N^{-1/2})$, we have
\[
t_{{\rm mix},\Phi}(\epsilon)=\mathcal{O}\left(N\log(N/\epsilon)\right)\,.
\]
Here, $\widetilde{\Theta}$ suppresses logarithmic dependencies on $\Delta^{-1}$, $1/\epsilon$, and $N$.
\end{thm}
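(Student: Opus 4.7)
The plan is to exploit the quasi-free structure of free-fermion dynamics to decouple the effective Lindbladian $\mc{L}$ from \cref{thm:char_Phi_alpha} into $N$ independent single-mode cooling Lindbladians in the diagonal basis of $h$, and then to invoke the single-qubit analysis of \cref{thm:toy_model} mode by mode. Since $H$ is quadratic and every $A_S\in\{\pm c_i,\pm c_i^\dagger\}$ is linear in the fermionic operators, the Heisenberg-evolved operators remain linear, each jump operator $V_{A_S,f,T}(\omega)$ from \eqref{eqn:V_sf} is also linear in $c_i,c_i^\dagger$, and the resulting Lindbladian is therefore quasi-free.

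First I would diagonalize $h=U\operatorname{diag}(\lambda_1,\ldots,\lambda_N)U^\dagger$ and introduce eigenmodes $d_k=\sum_j U_{jk}^*c_j$ satisfying $H=\sum_k \lambda_k d_k^\dagger d_k$, so the gap assumption reads $\min_k|\lambda_k|\ge\Delta$ and the ground state factorizes as $\ket{\psi_0}=\bigotimes_k\ket{\tilde\psi_k}$ with occupations $\mathbf{1}_{\lambda_k<0}$. Using $c_i(t)=\sum_k U_{ik}\,e^{-i\lambda_k t}\,d_k$, the jump operators become $V_{c_i,f,T}(\omega)=\sum_k U_{ik}\,\hat f_T(\omega+\lambda_k)\,d_k$ and analogously $V_{c_i^\dagger,f,T}(\omega)=\sum_k U_{ik}^*\,\hat f_T(\omega-\lambda_k)\,d_k^\dagger$, where $\hat f_T(\nu):=\int_{-T}^{T}f(t)e^{-i\nu t}\,\mathrm{d}t$ is a Gaussian-like kernel of width $\sim 1/\sigma$. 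The decisive algebraic step is that uniform averaging over $A_S\in\mc{A}$ annihilates every cross-mode term via $\sum_i U_{ik}U_{ik'}^*=\delta_{kk'}$, reducing $\mc{L}$ to
\begin{equation*}
\mc{L}(\rho)=\frac{1}{2N}\sum_{k=1}^N\bigl[\Gamma_k^{-}\,\mc{D}_{d_k}(\rho)+\Gamma_k^{+}\,\mc{D}_{d_k^\dagger}(\rho)\bigr]-i\bigl[H_{\mathrm{LS}},\rho\bigr],
\end{equation*}
with rates $\Gamma_k^{\pm}=\int \gamma(\omega)\,|\hat f_T(\omega\mp\lambda_k)|^2\,\mathrm{d}\omega$ and a quadratic Lamb shift $H_{\mathrm{LS}}$ that, after the same unitary averaging, is also diagonal in the $d_k$-basis and hence commutes with $\ket{\psi_0}\bra{\psi_0}$.

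Because $\gamma(\omega)\propto\mathbf{1}_{\omega<0}$ for ground state preparation, only the cooling jump survives at each mode---$d_k$ for $\lambda_k>0$ and $d_k^\dagger$ for $\lambda_k<0$---with rate $\Theta(1/\omega_{\max})=\Theta(1)$, while the counter-directional rate is exponentially suppressed by the Gaussian tail of $\hat f_T$ at argument $\ge\Delta$ once $\sigma=\widetilde\Theta(\Delta^{-1})$. Each $\mc{L}_k$ then reduces to the two-level cooling of \cref{thm:toy_model} with an overall $1/(2N)$ prefactor. Since $\mc{L}_k$'s act on disjoint modes and commute, the reduced state $\rho_k(t):=\Tr_{\neq k}(\rho_t)$ satisfies $\rho_k(t)=e^{t\mc{L}_k}(\rho_k(0))$ for arbitrary (possibly non-Gaussian) $\rho_0$. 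Combining this with the union-type bound $1-\langle\psi_0|\rho_t|\psi_0\rangle\le\sum_k\bigl(1-\langle\tilde\psi_k|\rho_k(t)|\tilde\psi_k\rangle\bigr)$, which follows from $I-\bigotimes_k P_k\le\sum_k(I-P_k)$ for the commuting rank-one projectors $P_k=\ket{\tilde\psi_k}\bra{\tilde\psi_k}$, and $\|\rho_t-\ket{\psi_0}\bra{\psi_0}\|_1\le 2\sqrt{1-\langle\psi_0|\rho_t|\psi_0\rangle}$, driving each mode-infidelity to $\mc{O}(\epsilon^2/N)$ suffices; at rate $\Theta(1/N)$ this takes continuous time $\mc{O}(N\log(N/\epsilon))$. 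Absorbing the $\mc{O}(\alpha^4 T^4\|f\|^4_{L^\infty})$ Trotter/Dyson error of \eqref{eqn:Phi_map_approx} into the stated choice $\alpha=\widetilde{\mc{O}}(\epsilon^{1/2}\Delta N^{-1/2})$ then yields $t_{\mathrm{mix},\Phi_\alpha}(\epsilon)=\mc{O}(N\log(N/\epsilon))$. I expect the main obstacle to lie in the Lamb-shift and filter-tail estimates when two eigenvalues $\lambda_k,\lambda_{k'}$ have the same sign and are nearly degenerate, where the exact cross-mode cancellation at the dissipator level does not apply termwise to the Lamb shift; I would bound those contributions via Parseval-type estimates on $\hat f_T$ restricted to $\mathrm{spec}(h)\subset[-\|h\|,\|h\|]$, and it is precisely these estimates that force the polylogarithmic factors in $\sigma$ and $T$ hidden inside $\widetilde\Theta$.
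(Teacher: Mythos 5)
Your proposal is essentially the paper's own argument: the paper likewise diagonalizes $h$, uses the unitarity of $U$ under the average over $\{\pm c_i,\pm c_i^\dagger\}$ to reduce the dissipator to decoupled mode-cooling terms and to make the Lamb shift diagonal in the eigenmode basis (hence harmless for the occupations), tracks the number operator $\hat{\mathsf{N}}=\sum_k(I-P_k)$ --- which is precisely your union bound --- together with the Fuchs--van de Graaf inequality, and controls the residual heating terms via the Gaussian tail $e^{-\Delta^2\sigma^2/4}$ by comparing to an auxiliary generator $\widehat{\mc{L}}$ that exactly annihilates the ground state. The one caveat is your intermediate claim that each reduced state obeys $\rho_k(t)=e^{t\mc{L}_k}(\rho_k(0))$ for arbitrary $\rho_0$: this fails for odd-parity components (e.g.\ $\mc{D}_{d_k}^\dagger(d_{k'})\neq 0$ for $k'\neq k$), but since your argument only ever uses the even occupation observables $P_k$, which \emph{are} annihilated by the other modes' adjoint dissipators, the conclusion is unaffected; likewise, your anticipated difficulty with near-degenerate same-sign eigenvalues in the Lamb shift is a non-issue, because the cross-mode cancellation $\sum_i U_{i\gamma_1}\overline{U_{i\gamma_2}}=\delta_{\gamma_1\gamma_2}$ applies to the Lamb-shift quadratic form exactly as it does to the dissipator.
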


To prove this result, we adopt the strategy from~\cite[Section IV]{zhan2025rapidquantumgroundstate}, which analyzes the Heisenberg evolution of the number operator. Following the argument in~\cite[Section IV]{zhan2025rapidquantumgroundstate}, the convergence of the Lindblad dynamics to the ground state can be established by showing that the expectation of the number operator converges to zero. Moreover, since the unitary evolution commutes with the number operator, it does not affect this convergence. Finally, the convergence of the number operator can be directly related to the trace distance between the current state and the ground state using the Fuchs--van de Graaf inequality; see~\cref{sec:gs_mixing} for details.

For thermal state preparation, we have an analogous result.

\begin{thm}[Thermal state of quadratic fermionic Hamiltonian at constant temperature, informal]\label{thm:thermal_mixing}
For any constant temperature $\beta^{-1}$, with a proper choice of $g(\omega)$, let $\sigma=\widetilde{\Theta}(\epsilon^{-1} N^2), T=\widetilde{\Theta}(\epsilon^{-1} N^2), \alpha=\widetilde{\Theta}(\epsilon^{3/2}N^{-3})$, we have
\[
t_{{\rm mix},\Phi}(\epsilon)=\mc{O}\left(N^2 \log(N/\epsilon)\right)\,.
\]
Here, the notation $\widetilde{\Theta}$ suppresses logarithmic dependencies on $1/\epsilon$, and $N$.
\end{thm}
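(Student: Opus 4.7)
The plan is to exploit the quadratic structure of $H$ together with the single-mode choice $A_S\in\{\pm c_i,\pm c_i^\dagger\}$ so that the effective Lindbladian $\mathcal{L}$ of \cref{thm:char_Phi_alpha} preserves the manifold of quasi-free (Gaussian) states, reducing the mixing analysis to a linear dynamics on the $N\times N$ single-particle correlation matrix $G_{ij}=\Tr[\rho\,c_i^\dagger c_j]$. Since $\rho_\beta$ is itself Gaussian and is an approximate fixed point of $\Phi_\alpha$ by \cref{thm:fix_thermal}, it suffices to prove fast mixing within this Gaussian submanifold and invoke \cref{thm:almost_fixed_point}.

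First I would diagonalise $h=U\,\mathrm{diag}(\epsilon_k)\,U^\dagger$ and pass to the eigenmodes $b_k=\sum_i U^\dagger_{ki}c_i$, so that $H=\sum_k\epsilon_k b_k^\dagger b_k$ with $|\epsilon_k|\le\|h\|=\mathcal{O}(1)$. In this basis $A_S(t)$ is a linear combination of $e^{\mp i\epsilon_k t}b_k^{\sharp}$, so every jump operator $V_{A_S,f,T}(\omega)$ in~\cref{eqn:V_sf} is again linear in the mode operators, with coefficients given by the windowed Fourier transform $\hat f_T(\omega\pm\epsilon_k)$. Substituting into~\cref{eqn:lindbladian_operator} and averaging over $A_S$ and $\omega$ yields an affine-linear ODE
\begin{equation*}
\partial_\tau G = D_\beta-\{R,G\}-i[h_{\mathrm{LS}},G]
\end{equation*}
on the $N\times N$ Hermitian matrix $G$, where $R,D_\beta\succeq 0$ have entries built from the second moments $\int\gamma(\omega)\,|\hat f_T(\omega\pm\epsilon_k)|^2\,\rd\omega$ of the filtered bath spectral density and $h_{\mathrm{LS}}$ is the averaged Lamb shift. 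The unique fixed point of this ODE coincides with the Fermi--Dirac matrix $G_\beta=(e^{\beta h}+1)^{-1}$ up to the error already controlled by \cref{thm:fix_thermal}.

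The main step is a spectral-gap estimate for this linear map. A suitable symmetrised version of the $g$ used in~\cref{thm:gs_mixing}, supported on $[-2\|h\|,2\|h\|]$ so that $\gamma(\omega)=\Theta(1)$ on its support, together with $\sigma=\widetilde{\Theta}(\epsilon^{-1}N^2)$ large enough for $|\hat f_T(\omega-\epsilon_k)|^2$ to resolve neighbouring single-particle levels (which on a local lattice can be separated only by $\Omega(1/N)$), reduces the analysis to a Davies-type classical Markov chain on the diagonal occupations $G_{kk}$ together with a damped rotation on the off-diagonal entries. Detailed balance (\cref{lem:dissipative_thermal}) and ergodicity of this chain, combined with the factor $1/N$ from averaging over the $\mathcal{O}(N)$ single-mode choices of $A_S$ and a further $1/N$ from the spacing of single-particle energies, should yield the contraction
\begin{equation*}
\|G(\tau)-G_\beta\|_F\le e^{-c\tau/N^2}\,\|G(0)-G_\beta\|_F
\end{equation*}
for some constant $c=c(\beta,\|h\|)>0$. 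Tightening this $\Omega(1/N^2)$ gap and verifying that it is uniform in $N$ and in the constant temperature $\beta^{-1}$ is the principal technical obstacle of the proof.

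Finally, I convert the covariance-matrix bound back to trace distance. For quasi-free fermion states a Fannes-type inequality of the form $\|\rho_1-\rho_2\|_1\le C\sqrt{N}\,\|G_1-G_2\|_F$ turns the previous contraction into $\|\rho(\tau)-\rho_{\mathrm{fix}}(\Phi_\alpha)\|_1\le\epsilon$ after Lindbladian time $\tau=\mathcal{O}(N^2\log(N/\epsilon))$. Using \cref{thm:char_Phi_alpha} to identify one step of $\Phi_\alpha$ with $\mathcal{U}_S(T)\circ\exp(\alpha^2\mathcal{L})\circ\mathcal{U}_S(T)$ (the outer $\mathcal{U}_S(T)$ factors leaving $\rho_\beta$ invariant since $[H,\rho_\beta]=0$) up to a per-step error $\mathcal{O}(\alpha^4T^4\|f\|_{L^\infty}^4)$, the choice $\alpha=\widetilde{\Theta}(\epsilon^{3/2}N^{-3})$ keeps the accumulated Dyson-series error below $\epsilon/2$ over the required $\tau/\alpha^2$ iterations. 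This yields $t_{\mathrm{mix},\Phi_\alpha}(\epsilon)=\alpha^2\tau_{\mathrm{mix},\Phi_\alpha}(\epsilon)=\mathcal{O}(N^2\log(N/\epsilon))$, and substituting this bound back into \cref{thm:fix_thermal} reproduces the advertised $\sigma,T=\widetilde{\Theta}(\epsilon^{-1}N^2)$, closing the circular dependence between the fixed-point and mixing-time estimates.
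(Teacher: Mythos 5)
Your overall instinct to exploit the quasi-free structure is aligned with what the paper does (the paper also passes to Majorana/eigenmode operators and reduces the gap computation to a quadratic ``parent Hamiltonian,'' following the cited free-fermion analysis), but your reduction has a genuine gap. The mixing time in \cref{def:mixing_time} is a supremum over \emph{all} initial states, and tracking only the two-point function $G_{ij}=\Tr[\rho\,c_i^\dagger c_j]$ cannot control it: two states can share a correlation matrix while being far apart in trace distance, and your Fannes-type inequality $\|\rho_1-\rho_2\|_1\le C\sqrt{N}\|G_1-G_2\|_F$ is at best valid when both states are quasi-free. Restricting to the Gaussian submanifold and ``invoking \cref{thm:almost_fixed_point}'' does not close this; that theorem compares two CPTP maps, it does not let you restrict the supremum over initial states. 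To make a quasi-free argument work you would need the full many-body gap of the quadratic Lindbladian (e.g.\ via third quantization), which is essentially what the paper obtains by a different route: it proves contraction of the weighted Hilbert--Schmidt norm $\|\rho_\beta^{-1/4}[\cdot]\rho_\beta^{-1/4}\|_2$ for arbitrary traceless inputs (\cref{thm:mixing_general}, \cref{cor:mixing_phi_alpha}), after first replacing $\mathcal{L}$ by an exactly KMS-detailed-balanced generator via the specific Gaussian choice of $g$ in \cref{thm:KMS_general} --- a step you gloss over by citing \cref{lem:dissipative_thermal}, which only gives \emph{approximate} detailed balance and by itself does not support a spectral-gap/contraction argument in the presence of the Lamb shift.

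Your accounting of the $N$-factors is also off, even though the final exponent matches. The paper's generator has gap $\Theta(1/N)$, coming solely from the $1/|\mathcal{A}|=\Theta(1/N)$ normalization of the expectation over coupling operators; with $\omega_{\max},\beta=\Theta(1)$ the per-mode rates $\int\gamma(\omega)|\hat f(\omega\pm\epsilon_k)|^2\,\rd\omega$ are $\Theta(1)$ uniformly in $k$ because all single-particle energies lie in an $\mathcal{O}(1)$ window where $\gamma=\Theta(1)$, so there is no additional $1/N$ ``from the spacing of single-particle energies'' as you claim. The second factor of $N$ in $t_{\rm mix}=\mathcal{O}(N^2\log(N/\epsilon))$ instead comes from converting the weighted norm back to trace distance, which costs $\log\|\rho_\beta^{-1/2}\|=\mathcal{O}(\beta\|H\|+N)=\mathcal{O}(N)$ inside the logarithm of the contraction bound --- a point the paper flags explicitly at the end of \cref{sec:therm_mixing_free_fermion}. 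As written, your $\Omega(1/N^2)$ gap claim appears reverse-engineered to match the advertised bound rather than derived.
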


The rigorous version of ~\cref{thm:thermal_mixing} is presented in~\cref{sec:therm_mixing_free_fermion} as~\cref{thm:thermal_mixing_free_fermion_rigo}.~\rev{Compared to the ground state result, the additional $N$ factor in $t_{\rm mix}$ mainly arises from the initial dependence of the norm $\|\rho_{\beta}^{-1/4}[\cdot]\rho_{\beta}^{-1/4}\|_2$; see the detailed discussion at the end of~\cref{sec:therm_mixing_free_fermion}.} It is worth noting that the choice of $g(\omega)$ in~\cref{thm:thermal_mixing_free_fermion_rigo} is chosen to simplify the analysis, and can be suboptimal at large $\beta$~\cite[Section VII]{tong2024fast}. 

\rev{In Theorem~\ref{thm:thermal_mixing}, it may be possible to further reduce the dependence of $t_{\mathrm{mix},\Phi}$ to linear in $N$ by employing advanced mixing time analysis techniques, such as the modified logarithmic Sobolev inequality or the oscillator norm method~\cite{KastoryanoTemme2013,BardetCapelGaoEtAl2023,rouz2024,zhan2025rapidquantumgroundstate,kochanowski2024rapid,rouze2024optimal}. However, due to the additional analytical challenges introduced by the Lamb-shift term, pursuing this improvement lies beyond the current scope of this work.
On the other hand, we believe that the linear $N$ dependence of $t_{\mathrm{mix},\Phi}$ in Theorems~\ref{thm:gs_mixing} and~\ref{thm:thermal_mixing} is intrinsic, since the algorithm samples only one jump operator per iteration. This situation closely parallels that of Lindbladian-dynamics-based algorithms: while rapid mixing can, in principle, be achieved when employing $O(N)$ jump operators, the total end-to-end simulation cost still scales linearly with $N$~\cite{cleve_2017,li_et_2023,PRXQuantum.5.020332}.
}

\subsection{Commuting local Hamiltonians}

Let $H = \sum_i h_i$ be a commuting local Hamiltonian defined on a $D$-dimensional lattice, where each local term $h_i$ commutes with all others and is supported on a ball of constant radius. Furthermore, each qubit $j$ is acted upon by only a constant number of terms $h_i$. Let $I_j$ denote the set of indices $i$ such that $h_i$ acts non-trivially on qubit $j$, and define $H_j = \sum_{i \in I_j} h_i$. Let $\Delta_{\lambda} = \max_{j,k} \left( \lambda_{k+1}(H_j) - \lambda_k(H_j) \right)$ be the maximal nearby eigenvalue difference among all $H_j$. We note that for local commuting Hamiltonians, $\Delta_{\lambda}$ is often a constant independent of the system size.

We choose $A_S$ to be randomly sampled from all local Pauli operators $\{\pm X_i,\pm Y_i,\pm Z_i\}_{i=1}^n$. We have the following result: 

\begin{thm}[Commuting local Hamiltonian at high temperature, informal]\label{thm:thermal_commuting_local} Let $H$ be a commuting local Hamiltonian defined on a $D$-dimensional lattice and $g(\omega)=\frac{1}{\omega_{\max}}\mathbf{1}_{[0,\omega_{\max}]}$ with $\omega_{\max}=2\Delta_{\lambda}$. There exists a constant $\beta_c$ dependent on the Hamiltonian $H$ such that for every $\beta\leq \beta_c$ and any $\epsilon>0$, if $\sigma=\widetilde{\Theta}(\epsilon^{-1}N^2),T=\widetilde{\Theta}(\epsilon^{-1}N^2),\alpha=\widetilde{\Theta}(\epsilon^{3/2}N^{-3})$, we have
\[
t_{{\rm mix},\Phi}(\epsilon)=\mc{O}\left(N^2\log(1/\epsilon)\right)\,.
\]
Here, $\widetilde{\Theta}$ suppresses logarithmic dependencies on $1/\epsilon$, and $N$.
\end{thm}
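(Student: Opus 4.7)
The plan is to reduce the discrete-time mixing of $\Phi_{\alpha}$ to the continuous-time mixing of the effective Lindbladian $\mc{L}$ via~\cref{thm:char_Phi_alpha}, and then invoke existing rapid-mixing results for commuting local Davies generators at high temperature. By~\cref{thm:char_Phi_alpha}, each iteration of $\Phi_\alpha$ equals $\mc{U}_S(T)\circ e^{\mc{L}\alpha^2}\circ\mc{U}_S(T)$ up to the error in~\eqref{eqn:Phi_map_approx}. Since $\mc{U}_S(T)$ preserves trace distance and leaves $\rho_\beta$ invariant, the Trotter/Dyson residual is absorbed into the $\epsilon/t_{\rm mix}$ budget by the stated choices $\alpha=\widetilde{\Theta}(\epsilon^{3/2}N^{-3})$ and $T=\widetilde{\Theta}(\epsilon^{-1}N^{2})$, so it suffices to bound the continuous mixing time of $e^{\mc{L}t}$.

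Next I would exploit the commuting structure to localize $\mc{L}$. For $A_S=P_j$ a single-qubit Pauli, $[h_k,P_j]=0$ for every $k\notin I_j$, so
\[
A_S(t)=e^{\I H t}P_j e^{-\I H t}=e^{\I H_j t}P_j e^{-\I H_j t}
\]
is supported in a bounded-radius neighborhood of site $j$. Hence each jump operator $V_{A_S,f,T}(\omega)$ defined in~\eqref{eqn:V_sf} is quasi-local, with spectral weight given by a Gaussian of width $1/\sigma\ll\Delta_\lambda$ peaked at the Bohr frequencies of $H_j$. Averaging over the $3N$ local Paulis yields $\mc{L}=\frac{1}{3N}\sum_{j}\mc{L}_j$, and the stated polynomial largeness of $\sigma,T$ in $N/\epsilon$ makes $\sum_j \mc{L}_j$ close, in both operator norm and Dirichlet-form sense, to the exact KMS-detailed-balanced Davies generator of $H$ at inverse temperature $\beta$.

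For the idealized commuting Davies generator the modified log-Sobolev inequality of Bardet--Capel--Gao et al.~\cite{BardetCapelGaoEtAl2023} supplies a system-size-independent constant $\alpha_{\rm MLSI}=\Omega(1)$ whenever $\beta\leq\beta_c$, where $\beta_c$ is the strong-clustering threshold of $\rho_\beta$. Combined with Pinsker and the crude bound $D(\rho_0\|\rho_\beta)\leq \log\|\rho_\beta^{-1}\|=\mc{O}(\beta\|H\|)=\mc{O}(N)$, this gives continuous mixing time $\mc{O}(\log(N/\epsilon))$ for $\sum_j \mc{L}_j$. Two further factors of $N$ then enter the discrete bound: one from $\mc{L}=\frac{1}{3N}\sum_j \mc{L}_j$, so each discrete step advances only a random single-site generator; and one from the initial KMS weight $\|\rho_\beta^{-1/4}[\cdot]\rho_\beta^{-1/4}\|_2$ arising when converting the natural $L^2(\rho_\beta)$ mixing bound to trace distance, exactly as in the free-fermion analysis of~\cref{thm:thermal_mixing}. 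Together these yield $t_{{\rm mix},\Phi_\alpha}(\epsilon)=\mc{O}(N^{2}\log(N/\epsilon))$.

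The hardest step will be the perturbative transfer of MLSI from the exact commuting Davies generator to $\mc{L}$. MLSI is not generally stable under perturbations, and $V_{A_S,f,T}(\omega)$ only approximates the exact Bohr-frequency jumps once $\sigma,T$ are polynomially large in $N/\epsilon$. Preserving \emph{entropy} decay (as opposed to merely the spectral gap) will most likely require either adapting the martingale/clustering proof of~\cite{BardetCapelGaoEtAl2023} directly to the Gaussian-filtered jumps, or establishing a quantitative Dirichlet-form comparison between the two generators that respects the modified log-Sobolev constant. The remaining ingredients---Trotter accounting, averaging over $A_S$, and translating continuous to discrete mixing---follow essentially the same template as for~\cref{thm:thermal_mixing}.
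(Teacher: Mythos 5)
Your overall reduction (iterate $\Phi_\alpha \approx \mc{U}_S(T)\circ e^{\mc{L}\alpha^2}\circ\mc{U}_S(T)$, absorb the Dyson/Trotter residual into the $\epsilon/\tau_{\rm mix}$ budget, localize the jumps using commutativity, and compare to an exact detailed-balanced generator) matches the paper's template. But the core mixing estimate is where your proposal diverges and where it has a genuine gap.

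First, the unresolved step you flag at the end is not a technicality — it is the whole difficulty, and the paper explicitly avoids the route you choose. The effective generator $\mc{L}$ contains the Lamb shift term $\widetilde{H}_{\rm LS}$, which does \emph{not} satisfy quantum detailed balance, so relative-entropy (MLSI) contraction arguments such as~\cite{BardetCapelGaoEtAl2023} do not directly apply; the paper says as much in~\cref{sec:mixing_time}. Instead of MLSI, the paper (in~\cref{thm:mixing_general},~\cref{cor:mixing_phi_alpha}, and~\cref{thm:thermal_mixing_rigor}) works with the spectral gap of the Hermitian part of the similarity-transformed generator $\mc{K}(\rho_\beta,\mc{M})=\rho_\beta^{-1/4}\mc{M}[\rho_\beta^{1/4}\cdot\rho_\beta^{1/4}]\rho_\beta^{-1/4}$, invoking the constant gap of the commuting Davies generator at $\beta\le\beta_c$ from~\cite{kastoryano2016quantum}. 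The decisive advantage is perturbative stability: the Lamb shift contributes an anti-Hermitian part whose $2\leftrightarrow2$ norm is $\mc{O}(\beta/\sigma)$, and the Gaussian-filtered jumps differ from the exact Bohr-frequency jumps by $\mc{O}(\beta\sigma^{-1}M+\sigma e^{-\sigma^2\delta^2/8}M^2)$, so the contraction rate degrades only by $\lambda_{\rm gap}-\|\mc{H}_2\|_{2\leftrightarrow2}$. No analogous one-line perturbation bound exists for the MLSI constant, which is exactly the obstruction you name but do not overcome.

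Second, your factor counting is internally inconsistent. If you genuinely had MLSI with constant $\alpha_{\rm MLSI}=\Omega(1)$ for $\frac{1}{3N}\sum_j\mc{L}_j$, then Pinsker plus $D(\rho_0\|\rho_\beta)=\mc{O}(N)$ would give $t_{\rm mix}=\mc{O}(N\log(N/\epsilon))$ — one factor of $N$ from the $\frac{1}{3N}$ normalization and only a $\log N$ from the initial relative entropy. The second factor of $N$ you import ``from the initial KMS weight $\|\rho_\beta^{-1/4}[\cdot]\rho_\beta^{-1/4}\|_2$'' belongs to the spectral-gap/$L^2$ framework (it is $\log\|\rho_\beta^{-1/2}\|=\mc{O}(\beta\|H\|+N)$ in~\cref{thm:mixing_general}), not to the entropy framework; you cannot use Pinsker to convert to trace distance and simultaneously pay the $L^2$-to-trace-norm conversion cost. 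Either you complete the MLSI argument and obtain a \emph{better} bound than stated (which would require new work on MLSI stability under the Lamb shift and the filter approximation), or you switch to the paper's gap-based argument, in which case both factors of $N$ appear for the reasons given in~\cref{rem:local_commuting}. As written, the proposal proves neither.
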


 A more general version of~\Cref{thm:thermal_commuting_local} is given in \cref{sec:thermal_mixing_general} \cref{thm:thermal_mixing_rigor}. Here we use the result of~\cite{kastoryano2016quantum} stating that for commuting local Hamiltonians, there exists a critical inverse temperature $\beta_c$ such that, when $\beta \le \beta_c$, the spectral gap of the Davies generator is bounded below; see~\cref{rem:local_commuting}.

Although the mixing time bounds in \cref{thm:thermal_mixing} and \cref{thm:thermal_commuting_local} appear similar, their proof strategies differ substantially.  For the thermal state case, the main idea is to show that the dissipative part of the Lindbladian approximately satisfies the detailed balance condition, while the Lamb shift term approximately commutes with the thermal state when $\sigma \gg 1$. However, the  Lindbladian with the Lamb shift term does not satisfy the quantum detailed balance condition. Therefore existing techniques using the contraction of $\chi^2$-distance, relative entropy~\cite{KastoryanoTemme2013}, or local oscillator norm~\cite{rouze2024optimal,zhan2025rapidquantumgroundstate} are not directly applicable.

Instead, we follow the approach of~\cite[Appendix E.3.a, Proposition II.2]{ChenKastoryanoBrandaoEtAl2023}, which analyzes the spectral gap of the dissipative part of the generator after a similarity transformation, as introduced in~\cite[Appendix E.2]{ChenKastoryanoBrandaoEtAl2023}. In particular, we prove contraction under the weighted Hilbert-Schmidt norm $\|\rho_{\beta}^{-1/4}[\cdot]\rho_{\beta}^{-1/4}\|_2$. This contraction still holds in the presence of the unitary evolution in~\cref{eqn:Phi_map_approx}, and therefore also holds for the map $\Phi$. Further details are given in \cref{sec:thermal_mixing}, in particular \cref{thm:mixing_general} and \cref{cor:mixing_phi_alpha}.


\subsection{End-to-end efficiency analysis}\label{sec:complete_efficiency}

In the previous section, we have established that the fixed-point approximation error and the upper bound on the mixing time are independent of $\sigma$, when $\sigma$ is sufficiently large. This property is crucial for ensuring the validity of the fixed-point error bound in~\cref{sec:thermal_ground_prep}.

Combining the result in~\cref{sec:thermal_ground_prep}, we obtain the following corollary:

\begin{cor}\label{cor:complete} For the single-qubit, free-fermion, and (high-temperature) local commuting Hamiltonian problems above, for any $\epsilon>0$, it suffices to choose $\sigma,T,\alpha^{-1}=\mathrm{poly}(N,1/\epsilon)$ to ensure that
\begin{align*}
&\|\rho_{\rm fix}(\Phi)-\rho_\beta\|_1<\epsilon,\\
&\tau_{{\rm mix},\Phi}(\epsilon)=\frac{t_{{\rm mix},\Phi}(\epsilon)}{\alpha^2}=\mathrm{poly}(N,1/\epsilon)\,.
\end{align*}
For the single qubit and gapped free fermionic systems above, for any $\epsilon>0$, it suffices to choose $\sigma,T,\alpha^{-1}=\mathrm{poly}(N,1/\epsilon)$ to ensure that
\begin{align*}
&\|\rho_{\rm fix}(\Phi)-\ket{\psi_0}\bra{\psi_0}\|_1 <\epsilon,\\
&\tau_{{\rm mix},\Phi}(\epsilon)=\frac{t_{{\rm mix},\Phi}(\epsilon)}{\alpha^2}=\mathrm{poly}(N,1/\epsilon)\,.
\end{align*}
\end{cor}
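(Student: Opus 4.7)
The corollary is a direct assembly of the fixed-point error bounds (\cref{thm:fix_thermal} for the thermal case, \cref{thm:fix_ground} for the ground case) with the system-specific mixing-time theorems (\cref{thm:toy_model}, \cref{thm:gs_mixing}, \cref{thm:thermal_mixing}, \cref{thm:thermal_commuting_local}). The plan is to choose the parameters $(\sigma,T,\alpha)$ so that the hypotheses of both sets of theorems hold simultaneously. The critical enabling observation, emphasized at the start of \cref{sec:mixing_time}, is that for each of these models the upper bound on $t_{{\rm mix},\Phi_\alpha}(\epsilon)$ becomes independent of $\sigma$ once $\sigma$ exceeds a polynomial threshold. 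This is what breaks the apparent chicken-and-egg dependence, since the fixed-point theorems prescribe $(\sigma,T,\alpha)$ in terms of $t_{\rm mix}$, while the mixing-time theorems demand $\sigma$ large and $\alpha$ small.

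For each thermal case (single qubit, free fermion, high-temperature commuting local), I would first invoke the relevant mixing-time theorem to obtain $t_{{\rm mix},\Phi_\alpha}(\epsilon)=\widetilde{\mathcal{O}}(\mathrm{poly}(N,1/\epsilon))$, treating its hypotheses on $\sigma$ and $\alpha$ as constraints to be discharged later. Substituting this polynomial bound into \cref{thm:fix_thermal} yields explicit prescriptions $\sigma=\widetilde{\Theta}(\mathrm{poly}(N,1/\epsilon))$, $T=\widetilde{\Theta}(\mathrm{poly}(N,1/\epsilon))$, and $\alpha^{-1}=\widetilde{\Theta}(\mathrm{poly}(N,1/\epsilon))$ that guarantee $\|\rho_{\rm fix}(\Phi_\alpha)-\rho_\beta\|_1<\epsilon$. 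Because every threshold appearing in the mixing-time theorems is itself polynomial in $(N,1/\epsilon)$, one may inflate $\sigma$ and shrink $\alpha$ by at most polynomial factors to meet those thresholds without disturbing the polynomial nature of the bounds. The total iteration count then follows immediately from the definition, $\tau_{{\rm mix},\Phi_\alpha}(\epsilon) = t_{{\rm mix},\Phi_\alpha}(\epsilon)/\alpha^2 = \mathrm{poly}(N,1/\epsilon)$. The ground-state claims follow the same template with \cref{thm:fix_ground} replacing \cref{thm:fix_thermal}, and with \cref{thm:toy_model} or \cref{thm:gs_mixing} supplying the mixing bound; this case is in fact slightly cleaner since \cref{thm:fix_ground} fixes $\sigma$ and $T$ through $1/\Delta$ rather than through $t_{\rm mix}$.

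The main ``obstacle'' is not analytical but a self-consistency check: one must verify that the $\sigma$ dictated by \cref{thm:fix_thermal} or \cref{thm:fix_ground}, together with the $\alpha$ dictated by those same theorems, actually lies above the threshold required for the mixing-time theorem used in step one, so that the substitution $t_{\rm mix}\mapsto\widetilde{\mathcal{O}}(\mathrm{poly}(N,1/\epsilon))$ is legitimate and the argument is non-circular. Concretely, for the free-fermion thermal case one checks that $\sigma=\widetilde{\Theta}(\epsilon^{-1}N^2)$ dominates the polynomial threshold of \cref{thm:thermal_mixing}, and that $\alpha=\widetilde{\Theta}(\epsilon^{3/2}N^{-3})$ is small enough; analogous checks must be made for the toy-model and commuting-local cases, and for the ground-state cases via \cref{thm:toy_model} and \cref{thm:gs_mixing}. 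Once these consistency checks are dispatched, the polynomial bounds on $\sigma$, $T$, $\alpha^{-1}$, and hence on $\tau_{{\rm mix},\Phi_\alpha}(\epsilon)$, follow directly by composing the two families of estimates.
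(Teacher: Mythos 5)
Your proposal is correct and follows essentially the same route as the paper: the corollary is obtained by composing the fixed-point theorems with the model-specific mixing-time theorems, and the key point you identify — that each mixing-time bound becomes independent of $\sigma$ above a polynomial threshold, which breaks the circular dependence between $\sigma$ and $t_{\rm mix}$ — is exactly the observation the paper emphasizes at the start of \cref{sec:mixing_time} and relies on here. The only minor refinement is that for the free-fermion thermal case one should invoke the rigorous version \cref{thm:fix_thermal_rigor} (with the quantity $R$ controlled as in \cref{thm:KMS_general}) rather than the informal \cref{thm:fix_thermal}, since the latter assumes the uniform $g(\omega)$ while \cref{thm:thermal_mixing} uses a Gaussian-type $g$.
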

In the above corollary, $\tau_{{\rm mix},\Phi}(\epsilon)$ denotes the number of 
times the map $\Phi$ defined in~\eqref{eqn:Phi_alpha} should be applied to achieve $\epsilon$-mixing.

To establish end-to-end efficiency, it remains to analyze the simulation complexity of $\Phi$, which follows from a standard analysis of Trotter errors (see e.g.~\cite{PhysRevX.11.011020}). 
Recall the quantum channel $\Phi^{\rm approx}_{\alpha}$ in~\eqref{eqn:Phi_alpha_approx}. We have $\|\Phi^{\rm approx}_{\alpha} - \Phi\|_{1\leftrightarrow 1} = \mathcal{O}\left(\alpha T (\|H\|+\omega_{\max})^2\|A_S\| \tau^2/\sigma^{1/2}\right)$, where $\tau$ is the Trotter step size. Since $\|A_S\|\leq 1$, to achieve $\eta$-accuracy in each application of $\Phi$, the number of Trotter steps per iteration is $M = \Theta\left(\alpha^{1/2}T^{3/2} (\|H\|+\omega_{\max})\, \eta^{-1/2}\sigma^{-1/4}\right)$.  Given a mixing time of $\tau_{{\rm mix},\Phi}(\epsilon)$, we set $\eta=\epsilon/\tau_{\text{mix},\Phi}$ to ensure the total quantum channel error is bounded by $\epsilon$ in $1\leftrightarrow 1$ norm. This leads to  the total number of steps is $M_{\rm total} = M \cdot \tau_{{\rm mix},\Phi}(\epsilon) = \mathrm{poly}(N, 1/\epsilon)$. Each step involves a short-time ($\tau$) simulation of the system Hamiltonian, a single $Z$ rotation, and one simulation step for the system-bath interaction term whose gate complexity depends on the choice of $A_S$.~\rev{We note that, in~\cref{cor:complete}, the dependence of $\tau_{\mathrm{mix},\Phi}$ and $M_{\mathrm{total}}$ on $N$, $\beta$, and $1/\epsilon$ could potentially be further improved, not only by establishing a tighter upper bound on the mixing time, but also by allowing more relaxed choices of $\alpha$ and $\sigma$. For instance, although this work focuses on the weak-interaction regime—where $\alpha\sqrt{\sigma}$ is small—there is currently no evidence that this is the only regime that is valid (see~\cite{ramonescandell2025thermalstatepreparationrepeated} for example). Relaxing this assumption represents an interesting direction for future research.
}


\section{Mixing analysis of thermal and ground state preparation for the single qubit example}\label{sec:toy_uniform}

In this section, we consider a toy model $H=-Z$. In~\cref{eqn:Phi_alpha}, we set \rev{$\mc{A}=\{X,-X\}$} and \rev{$g(\omega)=\frac{1}{3}\textbf{1}_{[0,3]}(\omega)$ ($\omega_{\max}=3$)}. To prove Theorem~\ref{thm:toy_model}, it suffices to show that, for both thermal state and ground state preparation, the mixing time of $\Phi$ is independent of $\sigma$ when $\sigma$ is sufficiently large.

Similar to~\cref{sec:app_fix_thermal} and~\cref{sec:app_fix_gs}, we first consider a simplified CPTP map defined as follows:
\[
\widetilde{\Phi}=\mc{U}_S(T)\,\circ\,\exp\left(\widetilde{\mathcal{L}}\alpha^2\right)\,\circ\,\mc{U}_S(T)\,.
\]
Here $\widetilde{\mc{L}}$ omits the error in~\cref{thm:char_Phi_alpha} and take the limit $T \to \infty$.
Specifically,
\[
\begin{aligned}
\widetilde{\mc{L}}(\rho)=-i\left[\widetilde{H}_{\mathrm{LS}},\rho\right]+\int^\infty_{-\infty}\gamma(\omega)\mathcal{D}_{\widetilde{V}_{X,f}(\omega)}(\rho)\mathrm{d}\omega\,,
\end{aligned}
\]
where
\[
\widetilde{H}_{\mathrm{LS}}=\rev{-\rm{Im}}\left(\int^\infty_{-\infty}\gamma(\omega)\widetilde{\mc{G}}_{X,f}(-\omega)\mathrm{d}\omega\right),\quad \widetilde{V}_{X,f}(\omega)=\int^\infty_{-\infty}f(t)X(t)\exp(-i\omega t)\mathrm{d}t\,,
\]
with
\[
\widetilde{\mc{G}}_{\rev{X},f}(\omega)=\int^\infty_{-\infty}\int^{s_1}_{-\infty}f(s_2)f(s_1) \rev{X}(s_2)\rev{X}(s_1)\exp(i\omega(s_2-s_1))\mathrm{d}s_2\mathrm{d}s_1\,.
\]
According to Lemma~\ref{lem:thermal_approx_first} or Lemma~\ref{lem:gd_approx_first}, for thermal and ground state preparation, respectively, we first have
\[
\left\|\Phi-\widetilde{\Phi}\right\|_{1\leftrightarrow 1}=\mathcal{O}\left(\alpha^2\sigma\exp\left(-T^2/(4\sigma^2)\right)+\alpha^4\rev{T^4}\sigma^{-2}\right)\,.
\]
According to~\cref{thm:almost_fixed_point}, when $\alpha$ is sufficiently small and $T$ is sufficiently large, it suffices to consider the mixing time of $\widetilde{\Phi}$. In this case, we can compute $\widetilde{V}$ and $\widetilde{H}_{\rm LS}$ explicitly. 
Noticing, 
\rev{\[
\widetilde{V}_{X,f}(\omega) = \int_{-\infty}^{\infty} f(t) X(t) e^{-i\omega t} \, \mathrm{d}t=2^{3/4} \sigma^{1/2}\pi^{1/4}\left(\exp\left(-\sigma^2(\omega - 2)^2\right)\ket{1}\bra{0}+\exp\left(-\sigma^2(\omega + 2)^2\right)\ket{0}\bra{1}\right)\,,
\]}
and
\[
\begin{aligned}
\widetilde{\mc{G}}_{\rev{X},f}(\omega)&=\sum_{\nu_1,\nu_2\in B(H)}\rev{X}(\nu_2)\rev{X}(\nu_1)\int^{\infty}_{-\infty}\int^{s_1}_{-\infty}f(s_2)f(s_1)\exp(i\nu_2s_2)\exp(i\nu_1s_1)\exp(i\omega(s_2-s_1))\mathrm{d}u\mathrm{d}v\\
&=C_{0,\sigma}(\omega)\ket{0}\bra{0}+C_{1,\sigma}(\omega)\ket{1}\bra{1}
\end{aligned}
\]
where $C_{0,\sigma}(\omega)$ and $C_{1,\sigma}(\omega)$ are functions of $\omega$ that depend on $\sigma$. Because the Lamb shift term does not effect the proof later, we do not specify the form of $C_{0}$ and $C_{1}$.

Now, we consider the thermal state and ground state separately:
\begin{itemize}
\item Thermal state: We notice that
\[
\rev{
\begin{aligned}
&\int_{-\infty}^\infty \gamma(\omega)\, \mathcal{D}_{V_{X,f,T}(\omega)}(\rho)\, \mathrm{d}\omega \\
=&2^{3/2}\pi^{1/2}\int^\infty_{-\infty}\gamma(\omega)\sigma\exp(-2\sigma^2(\omega-2)^2)\mathrm{d}\omega \mathcal{D}_{\ket{1}\bra{0}}(\rho)+2^{3/2}\pi^{1/2}\int^\infty_{-\infty}\gamma(\omega)\sigma\exp(-2\sigma^2(\omega+2)^2)\mathrm{d}\omega \mathcal{D}_{\ket{0}\bra{1}}(\rho)+\mc{O}\left(\exp(-8\sigma^2)\right)\\
=&2\pi \left( \gamma(2)\mathcal{D}_{\ket{1}\bra{0}}(\rho) + \gamma(-2)\mathcal{D}_{\ket{0}\bra{1}}(\rho) \right)+\mc{O}\left(\frac{\beta}{\sigma}\right)\,, 
\end{aligned}}
\]
and
\[
\widetilde{H}_{\mathrm{LS}}=C_{0,\beta,\sigma}\ket{0}\bra{0}+C_{1,\beta,\sigma}\ket{1}\bra{1}\,,
\]
where $C_{0,\beta,\sigma}$ and $C_{1,\beta,\sigma}$ are constants that depend on $\sigma$. Define
\begin{equation}\label{eqn:L_thermal_toy}
\widehat{\mc{L}}_{\beta}=-i\left[\widetilde{H}_{\mathrm{LS}},\rho\right]+\rev{2\pi} \left( \gamma(2)\mathcal{D}_{\ket{1}\bra{0}} + \gamma(-2)\mathcal{D}_{\ket{0}\bra{1}} \right)
\end{equation}
and $\widehat{\Phi}_{\beta}=\mc{U}_S(T)\,\circ\,\exp\left(\widehat{\mathcal{L}}_{\beta}\alpha^2\right)\,\circ\,\mc{U}_S(T)$. Then, we have
\begin{equation}\label{eqn:toy_error_bound_thermal}
\left\|\Phi-\widehat{\Phi}_{\beta}\right\|_{1\leftrightarrow 1}=\mathcal{O}\left(\alpha^2\sigma\exp\left(-T^2/(4\sigma^2)\right)+\alpha^4\rev{T^4}\sigma^{-2}+\frac{\beta}{\sigma}\right)\,.
\end{equation}

In the case when $\sigma$ is sufficiently large,~\rev{according to~\cref{thm:almost_fixed_point}}, we only need to consider the mixing time of $\widehat{\Phi}_{\beta}$. Since this result follows from a more general theorem in~\cref{thm:thermal_mixing_rigor}, it is sufficient to demonstrate that the mixing time is independent of $\sigma$. 

We express $\rho_n$ in the computational basis as $\rho_n = \sum_{a,b=0}^1 c_{a,b,n} \ket{a}\bra{b}$, where $c_{a,b,n}$ are the corresponding coefficients. To show convergence, it suffices to verify that $c_{0,0,n}$ and $c_{1,1,n}$ converge to $\frac{\exp(2\beta)}{1+\exp(2\beta)}$ and $\frac{1}{1+\exp(2\beta)}$, respectively, while $|c_{0,1,n}|^2$ and $|c_{1,0,n}|^2$ converge to zero. It is straightforward to check that the Lamb shift Hamiltonian $\widetilde{H}_{\mathrm{LS}}$ and the unitary dynamics do not affect the evolution of $c_{0,0,n}$, $c_{1,1,n}$, or $|c_{0,1,n}|^2$, $|c_{1,0,n}|^2$. After plugging $\rho_n$ into~\eqref{eqn:L_thermal_toy}, an ordinary differential equation (ODE) is obtained for the evolution of $c_{0,0,n}$, $c_{1,1,n}$, $c_{0,1,n}$, and $c_{1,0,n}$. A direct calculation verifies that the solution converges to the desired fixed point. Since the evolution is independent of $\sigma$, it follows that the mixing time of $\widehat{\Phi}_{\beta}$ is also independent of $\sigma$.
Combing this mixing time and ~\eqref{eqn:toy_error_bound_thermal} with~\cref{thm:almost_fixed_point}, we conclude the proof for the thermal state part.

\item Ground state:
\[
\int_{-\infty}^\infty \gamma(\omega)\, \mathcal{D}_{V_{\rev{X},f,T}(\omega)}(\rho)\, \mathrm{d}\omega = C_{\infty,\sigma}\mathcal{D}_{\ket{0}\bra{1}}+\mc{O}\left(\sigma\exp(-\rev{4}\sigma^2)\right)\,,
\]
where $C_{\infty,\sigma}$ is a constant that depends on $\sigma$. We note that, there exists a uniform constant $C_{\infty}$ such that $C_{\infty,\sigma}\geq C_{\infty}$ for $\sigma\geq 1$.
\[
\widetilde{H}_{\mathrm{LS}}=C_{0,\infty,\sigma}\ket{0}\bra{0}+C_{1,\infty,\sigma}\ket{1}\bra{1}\,,
\]
where $C_{0,\infty,\sigma}$ and $C_{1,\infty,\sigma}$ \rev{that only} depends on $\sigma$.

Define
\begin{equation}\label{eqn:L_ground_toy}
\widehat{\mc{L}}_{\infty}=-i\left[\widetilde{H}_{\mathrm{LS}},\rho\right]+C_{\infty,\sigma}\mathcal{D}_{\ket{0}\bra{1}}
\end{equation}
and $\widehat{\Phi}_{\infty}=\mc{U}_S(T)\,\circ\,\exp\left(\widehat{\mathcal{L}}_{\infty}\alpha^2\right)\,\circ\,\mc{U}_S(T)$. Then, we have
\begin{equation}\label{eqn:toy_error_bound_ground}
\left\|\Phi-\widehat{\Phi}_{\infty}\right\|_{1\leftrightarrow 1}=\mathcal{O}\left(\alpha^2\sigma\exp\left(-T^2/(4\sigma^2)\right)+\alpha^4\rev{T^4}\sigma^{-2}+\sigma\exp(-\rev{4}\sigma^2)\right)\,.
\end{equation}

Finally, we consider the mixing time of $\widehat{\Phi}_{\infty}$. Similar to the thermal state case, we express $\rho_n$ in the computational basis as $\rho_n = \sum_{a,b=0}^1 c_{a,b,n} \ket{a}\bra{b}$, where $c_{a,b,n}$ are the corresponding coefficients. To show convergence, it suffices to verify that $c_{0,0,n}$ and $c_{1,1,n}$ converge to $1$ and $0$, respectively, while $|c_{0,1,n}|^2$ and $|c_{1,0,n}|^2$ converge to zero. Same as before, the Lamb shift Hamiltonian $\widetilde{H}_{\mathrm{LS}}$ and the unitary dynamics do not affect the evolution of $c_{0,0,n}$, $c_{1,1,n}$, or $|c_{0,1,n}|^2$, $|c_{1,0,n}|^2$. Similar to before, a direct calculation verifies that the dissipative part of $\widehat{\mathcal{L}}_{\infty}$ converges and is independent of $\sigma$, which implies that the mixing time of $\widehat{\Phi}_{\infty}$ is independent of $\sigma$. Combining this mixing time and~\eqref{eqn:toy_error_bound_ground} with~\cref{thm:almost_fixed_point}, we conclude the proof for the ground state part.

\end{itemize}

\begin{rem}\label{rem:filter_contrast}
Different from our setting,~\cite[Section III]{hahn2025provably} considers the filter function
\begin{equation}\label{eqn:Hahn_f}
f(t) = \sqrt{\frac{2}{\pi\sigma^2}}\, \exp\left(-\frac{2}{\sigma^2}\left(t - \frac{i\beta}{4}\right)^2\right)
\end{equation}
Under this choice, the corresponding jump operator in the Lindblad dynamics is
\[
L_{A_S} = \sum_{\omega \in B(H)} \exp(-\beta\omega/4)\, \exp\left(-\frac{(\sigma \omega)^2}{8}\right) A_S(\omega)\,.
\]

As $\sigma \to \infty$, the support of $L_{A_S}$ effectively shrinks to a narrow energy window of width $\mathcal{O}(1/\sigma)$:
\[
L_{A_S}=\underbrace{\sum_{|\omega| \leq 1/2} \exp(-\beta\omega/4)\, \exp\left(-\frac{(\sigma \omega)^2}{8}\right) A_S(\omega)}_{=A_S(0)}+\underbrace{\sum_{|\omega| \geq 1/2} \exp(-\beta\omega/4)\, \exp\left(-\frac{(\sigma \omega)^2}{8}\right) A_S(\omega)}_{=\mathcal{O}(\exp(-\sigma^2))}\,,
\]
where we use $B(H)=\{2,0,-2\}$ in the above equality. This implies that when $\sigma \gg 1$, transitions between eigenvectors corresponding to different eigenvalues are strongly suppressed. For instance, when $A_S=X$, we have $A_S(0)=0$ and $\|L_{A_S}\| = \mathcal{O}(\exp(-\sigma^2))$, so the dissipative term becomes exponentially weak. This leads to a mixing time scaling as $t_{\rm mix} = \Omega(\exp(\sigma^2))$. Substituting this into the fixed-point error bound in~\cref{thm:fix_thermal} yields a vacuous upper bound on the error.

\rev{In our algorithm, since the bath is initialized in the thermal state of $H_E$, we do not need to choose an interaction function $f$ that simultaneously depends on both $\beta$ and $\sigma$ to satisfy an approximate detailed balance condition. This avoids the restriction—present in~\cref{eqn:Hahn_f}—that energy transitions must remain near $0$ when $\sigma$ is large. This key distinction prevents the mixing time from degrading in the large-$\sigma$ regime and enables substantially faster mixing.
}
\end{rem}

\section{Rigorous version of\texorpdfstring{~\cref{thm:gs_mixing}}{Lg}}\label{sec:gs_mixing}
Recall $H=\sum^N_{j,k} h_{j,k}c^\dagger_jc_k$, where $c^\dagger_j$, $c_k$ are creation and annihilation operators, respectively. Because $h$ is a Hermitian matrix, there exists an unitary matrix $U$ such that $\Lambda=U^\dagger hU$ is diagonal. Specifically,
\[
H=\sum^N_{k=1} \lambda_k \left(\sum_j (U^\dagger)_{k,j}c_j\right)^\dagger\left(\sum_j (U^\dagger)_{k,j}c_j\right):=\sum^N_{k=1}\lambda_k b^\dagger_kb_k\,,
\]
where $b^\dagger_k=\left(\sum_j (U^\dagger)_{k,j}c_j\right)^\dagger$, $b_k=\sum_j (U^\dagger)_{k,j}c_j$ formulate a new set of creation and annihilation operators after the unitary transformation. Then, the spectral gap $\Delta=\min_{i}|\lambda_i|$.

Now, we are ready to introduce the rigorous version of~\cref{thm:gs_mixing}:
\begin{thm}\label{thm:gs_mixing_rigor}
Let \rev{$g(\omega)=\frac{1}{\omega_{\max}}\mathbf{1}_{[0,\omega_{\max}]}$} with $\omega_{\max}=2\|h\|$. Given any $\epsilon>0$, when
\[
\left(\frac{N\|h\|}{\alpha^2}\log(N/\epsilon)\right)\left(\alpha^2\sigma\exp\left(-T^2/(4\sigma^2)\right)+\alpha^2\sigma\sqrt{N}\exp(-\rev{\Delta^2\sigma^2})+\alpha^4T^4\sigma^{-2}\right)=\mc{O}(\epsilon)\,,
\]
we have
\[
t_{{\rm mix},\Phi}(\epsilon)=\mathcal{O}\left(\|h\|N\log(N/\epsilon)\right)\,.
\]
\end{thm}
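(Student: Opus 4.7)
\textbf{Proof plan for \cref{thm:gs_mixing_rigor}.} The plan is to reduce $\Phi_\alpha$ to an idealized map $\widehat{\Phi}_\alpha$ that exactly fixes the ground state, and then establish exponential decay of an excitation-number observable in the Heisenberg picture, following the strategy of \cite{zhan2025rapidquantumgroundstate}.

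\emph{Step 1: reduction to an idealized generator.} First I would pass from $\Phi_\alpha$ to $\widetilde{\Phi}_\alpha = \mathcal{U}_S(T) \circ \exp(\alpha^2 \widetilde{\mathcal{L}}) \circ \mathcal{U}_S(T)$ via \cref{lem:gd_approx_first}, paying the error $\mathcal{O}\bigl(\alpha^2 \sigma \exp(-T^2/(4\sigma^2)) + \alpha^4 T^4 \sigma^{-2}\bigr)$. Then, imitating the rounding-and-secularization argument in the proof of \cref{thm:fix_ground_rigor} (in particular \eqref{eqn:coherent_term_small_final} and \eqref{eqn:V_close}), I would replace $\widetilde{V}_{A_S,f}(\omega)$ by its energy-lowering truncation $V^+_{A_S,f}(\omega)$ and drop the Lamb-shift contributions that do not commute with $\ket{\psi_0}\bra{\psi_0}$, producing a new map $\widehat{\Phi}_\alpha$ that \emph{exactly} fixes the ground state. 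Using $\|H\|^{1/2} = \mathcal{O}(\sqrt{N\|h\|})$, the additional error is $\mathcal{O}\bigl(\alpha^2\sigma\sqrt{N}\exp(-\Delta^2\sigma^2/4)\bigr)$. Summed together, the three errors are precisely those appearing in the smallness hypothesis of the theorem.

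\emph{Step 2: excitation-number observable.} In the diagonal basis $H = \sum_k \lambda_k b_k^\dagger b_k$, the gap assumption forces $|\lambda_k|\ge\Delta$, and the ground state $\ket{\psi_0}$ is the Fermi sea (negative-$\lambda$ modes filled). Define
\[
\widehat{N} := \sum_{k:\lambda_k>0} b_k^\dagger b_k \;+\; \sum_{k:\lambda_k<0} b_k b_k^\dagger,
\]
so that $\widehat{N}\ket{\psi_0}=0$, $[\widehat{N},H]=0$ (hence $\widehat{N}$ is conserved by $\mathcal{U}_S(T)$), $\|\widehat{N}\|\le N$, and $\widehat{N}\ge I - \ket{\psi_0}\bra{\psi_0}$.

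\emph{Step 3: Heisenberg decay of $\widehat{N}$.} For $A_S=c_i$, expanding $c_i=\sum_k U_{ki}b_k$ gives $\widetilde V_{c_i,f}(\omega) = \sum_k U_{ki}\,\hat f(\omega+\lambda_k)\,b_k$, where $|\hat f(\nu)|^2 \propto \sigma\exp(-\sigma^2\nu^2/2)$ is a Gaussian of unit $L^1$-mass. After truncation, only modes with $\lambda_k>0$ survive in $V^+_{c_i,f}(\omega)$; analogously for $A_S=c_i^\dagger$ with $\lambda_k<0$. A direct computation using $\mathcal{D}_{b_k}^\dagger(b_k^\dagger b_k)=-b_k^\dagger b_k$ and $\mathcal{D}_{b_k^\dagger}^\dagger(b_k b_k^\dagger)=-b_k b_k^\dagger$, averaging $A_S$ uniformly over the $4N$ operators (so that $\sum_i U^*_{ki}U_{k'i}=\delta_{kk'}$ by unitarity), and integrating $|\hat f(\omega\pm\lambda_k)|^2 \gamma(\omega)$ with $\gamma=\Theta(1/\omega_{\max})$ on its support (which contains every peak since $|\lambda_k|\le\|h\|=\omega_{\max}/2$), should yield the operator inequality
\[
\widehat{\mathcal{L}}^{\dagger}(\widehat{N}) \;\le\; -\frac{c}{N\,\|h\|}\,\widehat{N}
\]
for an absolute constant $c>0$. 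Combined with $[H,\widehat{N}]=0$, this gives, for every state $\rho$,
\[
\mathrm{Tr}\bigl(\widehat{N}\,\widehat{\Phi}_\alpha^{n}(\rho)\bigr) \;\le\; \exp\!\Bigl(-\frac{c\alpha^{2}n}{N\|h\|}\Bigr)\mathrm{Tr}(\widehat{N}\rho) \;\le\; N\exp\!\Bigl(-\frac{c\alpha^{2}n}{N\|h\|}\Bigr).
\]

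\emph{Step 4: trace distance and transfer to $\Phi_\alpha$.} Since $\widehat{N}\ge I - \ket{\psi_0}\bra{\psi_0}$, the Fuchs--van de Graaf inequality gives
\[
\tfrac12\bigl\|\widehat{\Phi}_\alpha^{n}(\rho)-\ket{\psi_0}\bra{\psi_0}\bigr\|_{1} \;\le\; \sqrt{\mathrm{Tr}(\widehat{N}\,\widehat{\Phi}_\alpha^{n}(\rho))} \;\le\; \sqrt{N}\exp\!\Bigl(-\frac{c\alpha^{2}n}{2N\|h\|}\Bigr).
\]
Choosing $n = \Theta(N\|h\|\alpha^{-2}\log(N/\epsilon))$ yields $\tau_{\mathrm{mix},\widehat{\Phi}_\alpha}(\epsilon)=\mathcal{O}(N\|h\|\alpha^{-2}\log(N/\epsilon))$, i.e.\ $t_{\mathrm{mix},\widehat{\Phi}_\alpha}(\epsilon)=\mathcal{O}(\|h\|N\log(N/\epsilon))$. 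The stated hypothesis of the theorem is precisely $\tau_{\mathrm{mix},\widehat{\Phi}_\alpha}\cdot\|\Phi_\alpha-\widehat{\Phi}_\alpha\|_{1\leftrightarrow 1}=\mathcal{O}(\epsilon)$, so \eqref{eqn:mixing_time_bound} of \cref{thm:almost_fixed_point} transfers the bound from $\widehat{\Phi}_\alpha$ to $\Phi_\alpha$.

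\emph{Main obstacle.} The technical heart of the argument is the operator inequality in Step~3 with the claimed scaling $c/(N\|h\|)$. One must verify that cross-mode contributions $b_k^\dagger b_{k'}$, $k\ne k'$, cancel after averaging over $A_S$ (via unitarity of $U$) or are suppressed by the Gaussian filter once $\sigma\gtrsim\Delta^{-1}$, and that the weight $\gamma(\omega)=\Omega(1/\|h\|)$ on the peaks of $|\hat f(\omega\pm\lambda_k)|^2$ is uniform in $k$. The $1/N$ factor comes from averaging $A_S$ uniformly over $4N$ choices, the $1/\|h\|$ factor from $\gamma\sim 1/\omega_{\max}=1/(2\|h\|)$; once these are pinned down, the CPTP reduction, Fuchs--van de Graaf, and the transfer via \cref{thm:almost_fixed_point} are essentially mechanical.
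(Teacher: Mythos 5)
Your plan is essentially the paper's proof: reduce $\Phi_\alpha$ to an idealized $\widehat{\Phi}_\alpha$ whose jump operators are the energy-lowering truncations $\widehat{V}_k=\sum_{\lambda_j<0}\overline{U_{k,j}}\hat f(-\omega+\lambda_j)b_j^\dagger$ (cost $\mc{O}(\alpha^2\sigma\sqrt{N}\exp(-\Delta^2\sigma^2/4))$ on top of \cref{lem:gd_approx_first}), establish $\widehat{\mc{L}}^{\dagger}(\hat{\mathsf{N}})\le -\frac{C}{\|h\|N}\hat{\mathsf{N}}$ via exactly the unitarity-of-$U$ cancellation of cross-mode terms and the lower bound $\mathbb{E}_\omega|\hat f(-\omega\mp\lambda_i)|^2=\Omega(\|h\|^{-1})$ that you flag as the main obstacle, then conclude with Fuchs--van de Graaf and \cref{thm:almost_fixed_point}. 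The one refinement worth noting: rather than dropping the part of the Lamb shift that fails to commute with $\ket{\psi_0}\bra{\psi_0}$ --- which would still leave a coherent term with $[\widehat H_{\rm LS},\hat{\mathsf{N}}]\neq 0$ a priori, and a nonzero traceless commutator cannot satisfy the one-sided operator inequality of your Step~3 --- the paper keeps the Lamb shift intact and observes that after summing over the coupling set the factor $\sum_k U_{k,\gamma_2}\overline{U_{k,\gamma_1}}=\delta_{\gamma_1\gamma_2}$ renders it diagonal in the $b$-modes, so it commutes with $\hat{\mathsf{N}}$ exactly and contributes nothing to $\mathrm{Tr}(\rho\,\hat{\mathsf{N}})$.
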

\begin{proof}[Proof of~\cref{thm:gs_mixing_rigor}]
First, according to~\cref{thm:almost_fixed_point}~\cref{eqn:mixing_time_bound} and Lemma~\ref{lem:gd_approx_first}, it suffices to prove the mixing time of $~\widetilde{\Phi}$ defined in~\eqref{eqn:Phi_tilde}. Recall 
\[
\widetilde{\Phi}=\mc{U}_S(T)\,\circ\,\exp\left(\widetilde{\mathcal{L}}\alpha^2\right)\,\circ\,\mc{U}_S(T)\,.
\]
Here
\[
\begin{aligned}
\widetilde{\mc{L}}(\rho)=\mathbb{E}_{A_S}\left(-i\left[\widetilde{H}_{\mathrm{LS},A_S},\rho\right]+\int^0_{-\infty}(g(\omega)+g(-\omega))\mathcal{D}_{\widetilde{V}_{A_S,f}(\omega)}(\rho)\mathrm{d}\omega\right)\,,
\end{aligned}
\]
where
\[
\widetilde{H}_{\mathrm{LS},A_S}=\rev{-}\mathrm{Im}\left(\int^0_{-\infty}g(\omega)\widetilde{\mc{G}}_{A^\dagger_S,f}(\omega)\mathrm{d}\omega+\int^\infty_{0}g(\omega)\widetilde{\mc{G}}_{A_S,f}(-\omega)\mathrm{d}\omega\right),\quad \widetilde{V}_{A_S,f}(\omega)=\int^\infty_{-\infty}f(t)A_S(t)\exp(-i\omega t)\mathrm{d}t\,,
\]
with
\[
\widetilde{\mc{G}}_{A_S,f}(\omega)=\int^\infty_{-\infty}\int^{s_1}_{-\infty}f(s_2)f(s_1) A^\dagger_S(s_2)A_S(s_1)\exp(i\omega(s_2-s_1))\mathrm{d}s_2\mathrm{d}s_1\,.
\]

Define the number operator:
\[
\hat{\mathsf{N}}=\sum_{\lambda_k>0}b^\dagger_kb_k+\sum_{\lambda_k<0}b_kb^\dagger_k\,.
\]
Let $H$ has eigendecomposition $\{(\lambda_i,\ket{\psi_i})\}^{d-1}_{i=0}$ with $\lambda_0\leq \lambda_1\leq \dots,\lambda_{d-1}$. We note that $1-\bra{\psi_0}\rho\ket{\psi_0}\leq \mathrm{Tr}(\rho \hat{\mathsf{N}})$. Using the Fuchs-van de Graaf inequality, we obtain
\[
\|\rho-\ket{\psi_0}\bra{\psi_0}\|_1\leq \rev{2\sqrt{\left(1-\bra{\psi_0}\rho\ket{\psi_0}\right)}}\leq \rev{2\sqrt{\mathrm{Tr}(\rho \hat{\mathsf{N}})}}.
\]
we can show the decaying of $\mathrm{Tr}\left(\widetilde{\Phi}[\rho]\hat{\mathsf{N}}\right)$, compared with $\mathrm{Tr}\left(\rho\hat{\mathsf{N}}\right)$. Furthermore, because $\hat{\mathsf{N}}$ commutes with $H$, the unitary evolution part $\mc{U}_S$ does not effect the expectation. Thus, if we can show
\[
\mathrm{Tr}\left(\exp\left(\widetilde{\mc{L}}\alpha^2\right)[\rho]\hat{\mathsf{N}}\right)\leq (1-\delta)\mathrm{Tr}(\rho\hat{\mathsf{N}})
\]
for some $0<\delta<1$ and any $\rho$, then we have
\[
\mathrm{Tr}\left(\widetilde{\Phi}^K_{\alpha}[\rho]\hat{\mathsf{N}}\right)=\mathrm{Tr}\left(\exp\left(\widetilde{\mc{L}}\alpha^2\right)\circ \widetilde{\Phi}^{K-1}_{\alpha}[\rho]\hat{\mathsf{N}}\right)\leq (1-\delta)\mathrm{Tr}\left(\widetilde{\Phi}^{K-1}_{\alpha}[\rho]\hat{\mathsf{N}}\right)\leq (1-\delta)^K\mathrm{Tr}(\rho\hat{\mathsf{N}})\,.
\]
for any $K>0$ and $\rho$. This implies the fast decaying of $\mathrm{Tr}(\rho_K\hat{\mathsf{N}})$.

However, because $\widetilde{\mc{L}}$ does not exactly preserve the ground state, it is difficult to directly show the exponential decay of $\mathrm{Tr}\left(\exp\left(\widetilde{\mc{L}}\alpha^2\right)[\rho]\hat{\mathsf{N}}\right)$ in the above form. Instead, we will construct a new Lindbladian operator $\widehat{\mc{L}}$ in the proof so that  such that $\left\|\widehat{\mc{L}} - \widetilde{\mc{L}}\right\|_{1\leftrightarrow 1}$ is bounded, $\widehat{\mc{L}}$ exactly fixes the ground state, and $\widehat{\mc{L}}$ satisfies a decay property, namely,
\begin{equation}\label{eqn:number_decay}
\mathrm{Tr}\left(\exp\left(\widehat{\mc{L}}\alpha^2\right)[\rho]\hat{\mathsf{N}}\right)\leq (1-\delta)\,\mathrm{Tr}(\rho\hat{\mathsf{N}}).
\end{equation}

Recall that $A_S$ is uniformly sampled from $\{\pm c^\dagger_k,\pm c_k\}^N_{k=1}$. We first deal with the dissipative part and define $\widehat{\mc{L}}$. We note that
\begin{equation}\label{eqn:b_j_evolution}
\exp(iH t)b_j\exp(-iHt)=\exp(-i\lambda_jt)b_j,\quad \exp(iH t)b^\dagger_j\exp(-iHt)=\exp(i\lambda_jt)b^\dagger_j\,.
\end{equation}
When $A_S=c^\dagger_k=\left(\sum_j U_{k,j}b_j\right)^\dagger$, we have
\[
A_S(t)=\exp(iH t)A_S\exp(-iHt)=\sum_j \overline{U_{k,j}}\exp(i\lambda_j t)b^\dagger_j\,.
\]
\rev{Because the integral in $\widetilde{V}$ is restricted to the regime $\omega\leq 0$, we have}
\[
\begin{aligned}
\widetilde{V}_k(\omega)&:=\widetilde{V}_{A_S,f}(\omega)=\int^\infty_{-\infty}f(t)A_S(t)\exp(-i\omega t)\mathrm{d}t=\sum_j \overline{U_{k,j}}\hat{f}(-\omega+\lambda_j)b^\dagger_j\\
&=\sum_{\lambda_j<0} \overline{U_{k,j}}\hat{f}(-\omega+\lambda_j)b^\dagger_j+\underbrace{\mc{O}\left(\sqrt{\sigma N}\exp(-\rev{\Delta^2\sigma^2})\right)}_{\rev{\text{contains the part with $\lambda_j\geq 0$}}}
\end{aligned}
\]
when $\omega\leq 0$. We note that the Lindbladian with jump operator $\widehat{V}_k=\sum_{\lambda_j<0} \overline{U_{k,j}}\hat{f}(-\omega+\lambda_j)b^\dagger_j$ preserves the ground state, since $\widehat{V}_k\ket{\psi_0}=0$. In addition, for \rev{$\omega\leq 0$}, we note
\begin{equation}\label{eqn:Vhat_tile_error}
\left\|\widetilde{V}_k(\omega)-\widehat{V}_k(\omega)\right\|=\mc{O}\left(\sqrt{\sigma N}\exp(-\rev{\Delta^2\sigma^2})\right)\,.
\end{equation}
Now, defining $\widehat{\mc{L}}$ with $\widehat{V}_k$ and the same Lamb shift term $\widetilde{H}_{\rm LS}$, we obtain
\begin{equation}\label{eqn:hat_L_approx}
\left\|\widehat{\mc{L}} - \widetilde{\mc{L}}\right\|_{1\leftrightarrow 1}=\rev{\mc{O}\left(\sup_{k}\left\|\widetilde{V}_k(\omega)-\widehat{V}_k(\omega)\right\|\left\|\widetilde{V}_k(\omega)\right\|\right)}=\mc{O}\left(\sigma\sqrt{N}\exp(-\rev{\Delta^2\sigma^2})\right)
\end{equation}
Define $\widehat{\Phi}$ with $\widehat{\mc{L}}$ similar to~\cref{eqn:Phi_tilde}. We have
\begin{equation}\label{eqn:tile_hat_diff}
\left\|\widehat{\Phi}-\widetilde{\Phi}\right\|_{1\leftrightarrow 1}=\mc{O}\left(\alpha^2\sigma\sqrt{N}\exp(-\rev{\Delta^2\sigma^2})\right)\,.
\end{equation}
Combining~\eqref{eqn:tile_hat_diff} and Lemma~\ref{lem:gd_approx_first}, we have
\begin{equation}\label{eqn:tile_origi_diff}
\left\|\Phi-\widetilde{\Phi}\right\|_{1\leftrightarrow 1}=\mc{O}\left(\alpha^2\sigma\exp\left(-T^2/(4\sigma^2)\right)+\alpha^2\sigma\sqrt{N}\exp(-\rev{\Delta^2\sigma^2})+\alpha^4T^4\sigma^{-2}\right)\,.
\end{equation}

Now, given an observable $O=b^\dagger_ib_i$ with $\lambda_i>0$, we notice $[b_j,O]=\delta_{ij}b_i,\quad [b^\dagger_j,O]=-\delta_{ij}b^\dagger_j$. Then, we have
\[
\mathcal{L}^\dagger_{\widehat{V}_k}(O)=\frac{1}{2}\left([\widehat{V}^\dagger_k,O]\widehat{V}_k-\widehat{V}^\dagger_k[\widehat{V}_k,O]\right)=0
\]

Given an observable $O=b_ib^\dagger_i$ with $\lambda_i<0$, we notice $[b_j,O]=-\delta_{ij}b_i,\quad [b^\dagger_j,O]=\delta_{ij}b^\dagger_i$. Then, we have
\[
\begin{aligned}
&\mathcal{L}^\dagger_{\widehat{V}_k}(O)=\frac{1}{2}\left([\widehat{V}^\dagger_k,O]\widehat{V}_k-\widehat{V}^\dagger_k[\widehat{V}_k,O]\right)\\
=&\frac{1}{2}\left(\left(-U_{k,i}\hat{f}(\rev{-}\omega+\lambda_i)b_i\right)\widehat{V}_k-\widehat{V}_k^\dagger\left(\overline{U_{k,i}}\hat{f}(\rev{-}\omega+\lambda_i)b^\dagger_i\right)\right)\\
=&-\frac{1}{2}\sum_{\lambda_j<0}U_{k,i}\overline{U_{k,j}}\hat{f}(-\omega+\lambda_i)\hat{f}(-\omega+\lambda_j)b_ib^\dagger_j-\frac{1}{2}\sum_{\lambda_j<0}\overline{U_{k,i}}U_{k,j}\hat{f}(\rev{-}\omega+\lambda_i)\hat{f}(\rev{-}\omega+\lambda_j)b_jb^\dagger_i
\end{aligned}\,.
\]
Because $\sum_k U_{k,i}\overline{U_{k,j}}=\delta_{i,j}$, this implies
\begin{equation}\label{eqn:summation_annihilation}
\sum_k\left(\sum_{\lambda_i>0}\mathcal{L}^\dagger_{\widehat{V}_k}(b^\dagger_ib_i)+\sum_{\lambda_i<0}\mathcal{L}^\dagger_{\widehat{V}_k}(b_ib^\dagger_i)\right)=-\sum_{\lambda_i<0}\left|\hat{f}(-\omega+\lambda_i)\right|^2b_ib^\dagger_i
\end{equation}
Similarly, when $A_S=c_k$, we can also define $\hat{V}_k$ that preserves the ground state and satisfies~\eqref{eqn:Vhat_tile_error} to~\eqref{eqn:tile_origi_diff}. Further more, similar to~\eqref{eqn:summation_annihilation}, we have
\[
\sum_k\left(\sum_{\lambda_i>0}\mathcal{L}^\dagger_{\widehat{V}_k}(b^\dagger_ib_i)+\sum_{\lambda_i<0}\mathcal{L}^\dagger_{\widehat{V}_k}(b_ib^\dagger_i)\right)=-\sum_{\lambda_i>0}\left|\hat{f}(-\omega-\lambda_i)\right|^2b^\dagger_ib_i
\]
Because \rev{$g(\omega)=\frac{1}{2\|h\|}\mathbf{1}_{[0,2\|h\|]}$}, we have $\mathbb{E}_{\omega}|\hat{f}(-\omega-\mathrm{sign}(\lambda_i)\lambda_i)|^2=\Omega(\|h\|^{-1})$. Thus,
\begin{equation}\label{eqn:hat_L_decay}
\widehat{\mathcal{L}}^\dagger(\hat{\mathsf{N}})\leq -\frac{C}{\|h\|N}\hat{\mathsf{N}}\,,
\end{equation}
with a uniform constant $C$.~\rev{Here, $N$ comes from the expectation of $V_k$, which gives an $\frac{1}{N}$ factor before the summation of $k$.}

Next, for the Lamb shift term, when $A_S=c^\dagger_k=\left(\sum_j U_{k,j}b_j\right)^\dagger$, we have
\[
\begin{aligned}
&\widetilde{\mc{G}}_{A_S,f}(-\omega)=\int^\infty_{-\infty}\int^{s_1}_{-\infty}f(s_2)f(s_1) A^\dagger_S(s_2)A_S(s_1)\exp(-i\omega(s_2-s_1))\mathrm{d}s_2\mathrm{d}s_1\\
=&\sum^N_{\nu_1,\nu_2=1}U_{k,\nu_2}\overline{U_{k,\nu_1}}b_{\nu_2}b^\dagger_{\nu_1}\int^{\infty}_{-\infty}\int^{s_1}_{-\infty}f(s_2)f(s_1)\exp(-i\lambda_{\nu_2}s_2)\exp(i\lambda_{\nu_1}s_1)\exp(-i\omega(s_2-s_1))\mathrm{d}u\mathrm{d}v
\end{aligned}\,.
\]
After summing in $k$, the remaining terms commute with $\widehat{\mathsf{N}}$ and thus, does not change $\mathrm{Tr}(\rho(t)\widehat{\mathsf{N}})$. Similarly, when $A_S$ is chosen to be $c_k$, we have the same commuting properties.

In conclusion, using~\cref{eqn:hat_L_decay} and the commuting property of $\widetilde{\mc{G}}_{A_S,f}$, we have
\[
\mathrm{Tr}\left(\exp\left(\widehat{\mc{L}}\alpha^2\right)[\rho]\hat{\mathsf{N}}\right)\leq \left(1-\frac{C\alpha^2}{\|h\|N}\right)\mathrm{Tr}(\rho\hat{\mathsf{N}}).
\]
This implies that
\[
\left\|\widehat{\Phi}^k_{\alpha}[\rho]-\ket{\psi_0}\bra{\psi_0}\right\|_1\leq \mathrm{Tr}\left(\widehat{\Phi}^k_{\alpha}[\rho]\hat{\mathsf{N}}\right)\leq \left(1-\frac{C\alpha^2}{\|h\|N}\right)^k\mathrm{Tr}(\rho\hat{\mathsf{N}})\leq N\left(1-\frac{C\alpha^2}{\|h\|N}\right)^k\,.
\]
Thus, given $\epsilon>0$, we have
\[
\tau_{{\rm mix},\widehat{\Phi}}(\epsilon)=\mc{O}\left(\frac{N\|h\|}{\alpha^2}\log(N/\epsilon)\right)
\]
Combining this,~\eqref{eqn:tile_origi_diff}, and~\cref{thm:almost_fixed_point}, we conclude the proof.
\end{proof}

\section{Mixing time of \texorpdfstring{$\Phi$}{Lg} for  thermal state preparation}\label{sec:thermal_mixing}

Before showing~\cref{thm:thermal_mixing} and~\cref{thm:thermal_commuting_local}, we provide a framework for studying the mixing time of the CPTP maps that take the form of
\[
\Phi=\mc{U}_S(T)\,\circ\,\exp\left(\mathcal{M}\alpha^2\right)\,\circ\,\mc{U}_S(T)\,,
\]
where $\mc{M}$ is an arbitrary Lindbladian that preserves the thermal state $\rho_\beta$. This framework is inspired by~\cite{ChenKastoryanoBrandaoEtAl2023,ChenKastoryanoGilyen2023}.

To start, we first introduce the detailed balance condition that allows coherent term:

\begin{defn}[Detailed balance condition with unitary drift~\cite{ChenKastoryanoGilyen2023,Fag_2007}]\label{def:DBC}
For any Lindbladian $\mathcal{M}$ and full-rank state $\rho_\beta$, take a similarity transformation and decompose into the Hermitian and the anti-Hermitian parts
\[
\begin{aligned}
\mathcal{K}(\rho_\beta, \mathcal{M})=\rho_{\beta}^{-1 / 4} \mathcal{M}\left[\rho_{\beta}^{1 / 4} \cdot \rho_{\beta}^{1 / 4}\right] \rho_{\beta}^{-1 / 4} & =\mathcal{H}(\rho_{\beta}, \mathcal{M})+\mathcal{A}(\rho_{\beta}, \mathcal{M}) \\
\mathcal{K}(\rho_\beta, \mathcal{M})^{\dagger}=\rho_{\beta}^{1 / 4} \mathcal{M}^{\dagger}\left[\rho_{\beta}^{-1 / 4} \cdot \rho_{\beta}^{-1 / 4}\right] \rho_{\beta}^{1 / 4} & =\mathcal{H}(\rho_{\beta}, \mathcal{M})-\mathcal{A}(\rho_{\beta}, \mathcal{M})
\end{aligned}
\]
We say the Lindbladian $\mathcal{M}$ satisfies the detailed balance with unitary drift if there exists a Hermitian operator $H_C$ such that 
\[
\mathcal{A}(\rho_{\beta}, \mathcal{M})=-i\rho_\beta^{1/4}[H_C,\rho_\beta^{-1/4}(\cdot)\rho_\beta^{-1/4}]\rho_\beta^{1/4}\,.
\]
\end{defn}

We note that the above detailed balance condition allows a coherent term that commutes with $\rho_\beta$ in $\mc{M}$. It is straightforward to check that if $\mc{M}$ satisfies the detailed balance with unitary drift, then $\mc{H}(\rho_\beta,\mc{M})(\sqrt{\rho_\beta})=0$ and $\mc{M}(\rho_\beta)=0$. Furthermore, if $\mc{M}$ approximately satisfies the detailed balance with unitary drift, we have the following result to quantify the mixing time of $\Phi$:
\begin{thm}\label{thm:mixing_general} Assume $\rev{\mathcal{H}}(\rho_\beta, \mathcal{M})=\mathcal{H}_1(\rho_{\beta}, \mathcal{M})+\mathcal{H}_2(\rho_{\beta}, \mathcal{M})$. If \rev{$\mathcal{H}_1$ is a self-adjoint operator under Hilbert-Schmidt such that} $\mathcal{H}_1(\rho_{\beta}, \mathcal{M})(\sqrt{\rho_\beta})=0$ and $\mathcal{H}_1(\rho_{\beta}, \mathcal{M})$ has a spectral gap $\lambda_{\rm gap}(\mc{H}_1)>\left\|\mc{H}_2\right\|_{2\leftrightarrow2}$. Given any $\rho_1,\rho_2$, we have
\[
\left\|\Phi^k(\rho_1-\rho_2)\right\|_1\leq 2\exp\left(\left(-\lambda_{\rm gap}(\mc{H}_1)+\left\|\mc{H}_2\right\|_{2\leftrightarrow2}\right)k\alpha^2\right)\left\|\rho^{-1/2}_\beta\right\|\left\|\rho_1-\rho_2\right\|_1\,.
\]
Specifically, for any $\epsilon>0$, we have 
\[
t_{{\rm mix},\Phi}(\epsilon)\leq \frac{1}{\lambda_{\rm gap}(\mc{H}_1)-\left\|\mc{H}_2\right\|_{2\leftrightarrow2}}\log\left(\frac{2\left\|\rho^{-1/2}_\beta\right\|}{\epsilon}\right)+1
\]

\end{thm}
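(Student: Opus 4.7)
\emph{Plan of proof.} The strategy is to conjugate $\Phi$ by the similarity transformation $S[X] := \rho_\beta^{-1/4} X \rho_\beta^{-1/4}$ and carry out the analysis in the weighted Hilbert--Schmidt geometry. The key structural observation is that $H$ commutes with $\rho_\beta$, so $S$ commutes with $\mc{U}_S(T)$, giving
\[
\wt\Phi := S \circ \Phi \circ S^{-1} = \mc{U}_S(T) \circ \exp(\mc{K}\alpha^2) \circ \mc{U}_S(T), \quad \mc{K} := \mc{H}_1 + \mc{H}_2,
\]
and $\Phi^k = S^{-1} \circ \wt\Phi^k \circ S$. I would analyze $\wt\Phi^k$ in the Hilbert--Schmidt norm and translate back to trace norm at the end.

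Before estimating the contraction rate, I would identify the invariant subspace $W^\perp := \{Y : \langle \sqrt{\rho_\beta}, Y\rangle = 0\}$. Because $\mc{M}$ preserves the full-rank state $\rho_\beta$ and is trace preserving, both $\mc{K}$ and $\mc{K}^\dagger$ annihilate $\sqrt{\rho_\beta}$; together with the hypothesis $\mc{H}_1(\sqrt{\rho_\beta}) = 0$ and self-adjointness of $\mc{H}_1$, this forces $\mc{H}_2(\sqrt{\rho_\beta}) = \mc{H}_2^\dagger(\sqrt{\rho_\beta}) = 0$, so $W^\perp$ is invariant under $\mc{H}_1$, $\mc{H}_2$, hence under $\exp(\mc{K}\alpha^2)$. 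The vector $Y_0 := S(\rho_1 - \rho_2)$ lies in $W^\perp$ since $\Tr(\rho_1 - \rho_2) = 0$, and $\mc{U}_S(T)$ preserves both $\|\cdot\|_2$ and $W^\perp$ (again by the commutation with $\rho_\beta$).

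For the contraction rate I would run a Grönwall argument on $y(t) = e^{\mc{K}t} Y_0$ restricted to $W^\perp$. Differentiating,
\[
\frac{d}{dt}\|y\|_2^2 = 2 \langle y, \mc{H}_1 y\rangle + 2\,\Re\langle y, \mc{H}_2 y\rangle \leq 2\bigl(-\lambda_{\rm gap}(\mc{H}_1) + \|\mc{H}_2\|_{2\leftrightarrow 2}\bigr) \|y\|_2^2,
\]
where the spectral gap of $\mc{H}_1$ (acting self-adjointly on $W^\perp$ with top eigenvalue $-\lambda_{\rm gap}(\mc{H}_1)$) handles the first term and Cauchy--Schwarz handles the second. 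Integrating and iterating $k$ times, together with unitarity of $\mc{U}_S(T)$, yields $\|\wt\Phi^k Y_0\|_2 \leq \exp\bigl((-\lambda_{\rm gap}(\mc{H}_1)+\|\mc{H}_2\|_{2\leftrightarrow 2}) k \alpha^2\bigr) \|Y_0\|_2$. To return to trace norm I would use the Hölder estimate $\|\rho_\beta^{1/4} Y \rho_\beta^{1/4}\|_1 \leq \|\rho_\beta^{1/4}\|_4^2 \|Y\|_2 = \|Y\|_2$ on the output side and $\|Y_0\|_2 \leq \|\rho_\beta^{-1/2}\|\,\|\rho_1 - \rho_2\|_1$ on the input side (using $\|\cdot\|_2 \leq \|\cdot\|_1$). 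The mixing-time bound then follows by setting $\rho_2 = \rho_{\rm fix}$ (the unique fixed point), bounding $\|\rho_1 - \rho_{\rm fix}\|_1$ by a universal constant, solving for $k\alpha^2$, and taking the ceiling.

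The main obstacle is the third step: controlling the exponential of a non-self-adjoint operator $\mc{K}$ using only a spectral-gap hypothesis on its Hermitian part $\mc{H}_1$. The Grönwall formulation on $\|y\|_2^2$ sidesteps the need to understand the full spectrum of $\mc{K}$ or to invoke Dyson-series perturbation theory, but it crucially depends on the invariance of $W^\perp$ and on the commutation $S \circ \mc{U}_S(T) = \mc{U}_S(T) \circ S$. Without the latter, the unitary drift would reshape $W^\perp$ in the weighted geometry and the contraction analysis would not decouple cleanly from the coherent evolution; with it, the mixing estimate reduces to a standard symmetric-part perturbation argument.
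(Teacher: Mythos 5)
Your proposal is correct and follows essentially the same route as the paper's proof: the similarity transformation $\rho_\beta^{-1/4}(\cdot)\rho_\beta^{-1/4}$, its commutation with $\mc{U}_S(T)$, orthogonality of the transformed traceless error to $\sqrt{\rho_\beta}$, a Gr\"onwall bound on the squared Hilbert--Schmidt norm using the gap of $\mc{H}_1$ plus Cauchy--Schwarz for $\mc{H}_2$, and the same H\"older estimates to pass between trace and weighted Hilbert--Schmidt norms. Your added justification that $W^\perp$ is invariant under $\mc{K}$ is a slightly more careful rendering of a step the paper states tersely, but the argument is the same.
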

We emphasize that~\cref{thm:mixing_general} does not guarantee the correctness of the fixed point. However, it still provides an upper bound on the mixing time of $\Phi$. In the regime where $\mathcal{H}_2 \ll 1$, it is possible to establish a small fixed-point error.

\begin{proof}[Proof of~\cref{thm:mixing_general}] Given any density operator $\rho_1,\rho_2$, we define $\mathcal{E}=\rho_1-\rho_2$. We consider the change of $\left\|\rho_\beta^{-1/4}\mathcal{E}\rho_\beta^{-1/4}\right\|_2$ after applying $\Phi$, where $\|\cdot\|_2$ is the Schatten-2 norm (Hilbert-Schmidt norm). First, because $\mc{U}_S$ commutes with $\rev{\rho}^{-1/4}_\beta(\cdot)\rev{\rho}^{-1/4}_\beta$, we have
\[
\left\|\rho_\beta^{-1/4}\mc{U}_S(\mathcal{E})\rho_\beta^{-1/4}\right\|_2=\left\|\mc{U}_S\left(\rho_\beta^{-1/4}\mathcal{E}\rho_\beta^{-1/4}\right)\right\|_2=\left\|\rho_\beta^{-1/4}\mathcal{E}\rho_\beta^{-1/4}\right\|_2
\]
Thus,
\[
\left\|\rho_\beta^{-1/4}\Phi(\mathcal{E})\rho_\beta^{-1/4}\right\|_2=\left\|\rho_\beta^{-1/4}\exp(\mc{M}\alpha^2)(\mathcal{E})\rho_\beta^{-1/4}\right\|_2=\left\|\exp(\mc{K}(\rho_\beta,\mc{M})\alpha^2)\left[\rho^{-1/4}_\beta \mathcal{E}\rho^{-1/4}_\beta\right]\right\|_2
\]
Let $\mathcal{E}(t)=\exp(\mc{M}t)\mathcal{E}$. Because $\mathcal{E}(t)$ is traceless, we have $\rho^{-1/4}_\beta \mathcal{E}(t)\rho^{-1/4}_\beta$ is orthogonal to $\sqrt{\rho_\beta}$ under Hilbert Schemitz inner product. This implies that
\[
\begin{aligned}
&\frac{\mathrm{d}}{\mathrm{d}t}\left\|\rho_\beta^{-1/4}\exp(\mc{M}t)(\mathcal{E})\rho_\beta^{-1/4}\right\|^2_2=\frac{\mathrm{d}}{\mathrm{d}t}\left\|\exp(\mc{K}(\rho_\beta,\mc{M})t)\left[\rho^{-1/4}_\beta \mathcal{E}\rho^{-1/4}_\beta\right]\right\|_2^2\\
=&2\left\langle\rho^{-1/4}_\beta \mathcal{E}\rho^{-1/4}_\beta,(\mc{H}_1+\mc{H}_2)\left[\rho^{-1/4}_\beta \mathcal{E}(t)\rho^{-1/4}_\beta\right]\right\rangle_2\leq 2\left(-\lambda_{\rm gap}(\mc{H}_1)+\left\|\mc{H}_2\right\|_{2\leftrightarrow2}\right)\left\|\rho^{-1/4}_\beta \mathcal{E}(t)\rho^{-1/4}_\beta\right\|^2_2
\end{aligned}
\]
This implies that
\[
\left\|\rho_\beta^{-1/4}\Phi(\mathcal{E})\rho_\beta^{-1/4}\right\|_2=\left\|\exp(\mc{K}(\rho_\beta,\mc{M})\alpha^2)\left[\rho^{-1/4}_\beta \mathcal{E}\rho^{-1/4}_\beta\right]\right\|_2\leq \exp\left(\left(-\lambda_{\rm gap}(\mc{H}_1)+\left\|\mc{H}_2\right\|_{2\leftrightarrow2}\right)\alpha^2\right)\left\|\rho_\beta^{-1/4}\mathcal{E}\rho_\beta^{-1/4}\right\|_2\,.
\]
In summary, we have
\[
\left\|\rho_\beta^{-1/4}\Phi^k(\mathcal{E})\rho_\beta^{-1/4}\right\|_2\leq \exp\left(\left(-\lambda_{\rm gap}(\mc{H}_1)+\left\|\mc{H}_2\right\|_{2\leftrightarrow2}\right)k\alpha^2\right)\left\|\rho_\beta^{-1/4}\mathcal{E}\rho_\beta^{-1/4}\right\|_2\,.
\]
Finally, using $\|BAB\|_1\leq \|B\|^2_4\|A\|_2$, we have
\[
\|\mathcal{E}\|_1\leq \left\|\rho^{1/4}_\beta\right\|^2_4\left\|\rho_\beta^{-1/4}\mathcal{E}\rho_\beta^{-1/4}\right\|_2=\left\|\rho_\beta^{-1/4}\mathcal{E}\rho_\beta^{-1/4}\right\|_2\leq \left\|\rho^{-1/4}_\beta\right\|^2\|\mathcal{E}\|_2\leq \left\|\rho^{-1/2}_\beta\right\|\|\mathcal{E}\|_1\,.
\]
This implies
\[
\left\|\Phi^k(\mathcal{E})\right\|_1\leq 2\exp\left(\left(-\lambda_{\rm gap}(\mc{H}_1)+\left\|\mc{H}_2\right\|_{2\leftrightarrow2}\right)k\alpha^2\right)\|\rho^{-1/2}_\beta\|\left\|\mc{E}\right\|_1
\,.
\]
This concludes the proof.
\end{proof}

Let $H$ has eigendecomposition $\{(\lambda_i,\ket{\psi_i})\}^{2^N-1}_{i=0}$ with $\lambda_0\leq \lambda_1\leq \dots,\lambda_{2^N-1}$. Given a coupling operator $A$, for any $\omega>0$, define $A(\omega)=\sum_{\lambda_i-\lambda_j=\omega} \ket{\psi_i}\bra{\psi_j}\bra{\psi_i}A\ket{\psi_j}$. The Davies generator of a set of coupling operator $\mc{A}$ is defined as
\[
\mc{L}_{D,\mc{A}}[\rho]=\sum_{A\in\mc{A}}\sum_{\omega}A(\omega)\rho A(\omega)^\dagger-\frac{1}{2}\left\{A^\dagger(\omega)A(\omega),\rho\right\}\,.
\]

A direct corollary of~\cref{thm:mixing_general} is in the following:

\begin{cor}\label{cor:mixing_phi_alpha}
For $\mathcal{M}=-i[H_C,\cdot]+\mathcal{L}_D(\cdot)$, where $\mathcal{L}_D$ is a generator that satisfies GNS detailed balance condition or KMS detailed balance condition and has a gap $\lambda_{\rm gap}(\mc{L}_D)$. If
\[
\left\|\rho^{-1/4}_{\beta}H_C \rho^{1/4}_{\beta}-\rho^{1/4}_{\beta}H_C \rho^{-1/4}_{\beta}\right\|\leq \delta<\lambda_{\rm gap}(\mc{L}_D)\,.
\]
Then, the mixing time of $\mathcal{M}$ is
\[
t_{{\rm mix},\Phi}(\epsilon)\leq \frac{1}{\lambda_{\rm gap}(\mc{L}_D)-\delta}\log\left(\frac{2\left\|\rho^{-1/2}_\beta\right\|}{\epsilon}\right)+1
\]
\end{cor}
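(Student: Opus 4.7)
The plan is to reduce to \cref{thm:mixing_general} (or rather, a minor adaptation of its argument) via the decomposition
\[
\mc{K}(\rho_\beta, \mathcal{M}) \;=\; \underbrace{\mc{K}(\rho_\beta, \mathcal{L}_D)}_{=:\mc{H}_1} \;+\; \mc{K}(\rho_\beta, -i[H_C,\cdot]),
\]
with the coherent piece playing the role of $\mc{H}_2$. The first step is to verify the hypotheses for $\mc{H}_1$: because $\mathcal{L}_D$ satisfies KMS (or GNS) detailed balance, $\mc{K}(\rho_\beta,\mathcal{L}_D)$ is self-adjoint as a superoperator on the Hilbert--Schmidt space, and $\mathcal{L}_D(\rho_\beta)=0$ yields $\mc{H}_1(\sqrt{\rho_\beta}) = \rho_\beta^{-1/4}\mathcal{L}_D(\rho_\beta)\rho_\beta^{-1/4}=0$. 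The similarity transformation $A \mapsto \rho_\beta^{1/4} A \rho_\beta^{1/4}$ is an isometry between $\langle\cdot,\cdot\rangle_2$ and $\langle\cdot,\cdot\rangle_{1/2,\rho_\beta}$, so the spectral gap of $\mc{H}_1$ on $\{\sqrt{\rho_\beta}\}^\perp$ coincides with $\lambda_{\rm gap}(\mathcal{L}_D)$ as defined in \cref{sec:notation}.

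Next I would unpack the coherent contribution. Setting $B := \rho_\beta^{-1/4} H_C \rho_\beta^{1/4}$, so that $B^\dagger = \rho_\beta^{1/4} H_C \rho_\beta^{-1/4}$, a direct calculation gives
\[
\mc{K}(\rho_\beta, -i[H_C,\cdot])[X] \;=\; -iBX + iXB^\dagger \;=\; -\tfrac{i}{2}[B+B^\dagger,X] \;+\; \tfrac{i}{2}\{B^\dagger-B,\,X\}.
\]
The first piece $\mc{V}[X]:=-\tfrac{i}{2}[B+B^\dagger,X]$ is anti-Hermitian as a superoperator (it is $-i$ times a commutator with the Hermitian matrix $\tfrac{1}{2}(B+B^\dagger)$). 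The second piece $\mc{W}[X]:=\tfrac{i}{2}\{B^\dagger-B,X\}$ is Hermitian, since $\tfrac{i}{2}(B^\dagger-B)$ is itself Hermitian, and it obeys $\|\mc{W}\|_{2\leftrightarrow 2} \leq \|B^\dagger-B\| \leq \delta$ by the hypothesis of the corollary.

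With this in hand the decay analysis mirrors the proof of \cref{thm:mixing_general}. For $\mathcal{E}=\rho_1-\rho_2$ traceless and $Y(t) := \rho_\beta^{-1/4}\exp(\mathcal{M}t)[\mathcal{E}]\rho_\beta^{-1/4}$, one has $Y(t)\perp\sqrt{\rho_\beta}$ and the norm derivative $\tfrac{d}{dt}\|Y\|_2^2 = 2\,\mathrm{Re}\,\langle Y, \mc{K}(\rho_\beta,\mathcal{M}) Y\rangle$ splits into contributions from $\mc{H}_1$, $\mc{W}$, and $\mc{V}$. Anti-Hermiticity forces $\mathrm{Re}\langle Y,\mc{V} Y\rangle = 0$, so only $\mc{H}_1$ and $\mc{W}$ contribute; combining the spectral-gap bound on $\mc{H}_1$ with $|\langle Y,\mc{W} Y\rangle| \le \delta\|Y\|_2^2$ gives $\tfrac{d}{dt}\|Y\|_2^2 \le 2(-\lambda_{\rm gap}(\mathcal{L}_D)+\delta)\|Y\|_2^2$. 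Gr\"onwall, together with the sandwich $\|\mathcal{E}\|_1 \le \|Y\|_2 \le \|\rho_\beta^{-1/2}\|\,\|\mathcal{E}\|_1$ already used in the proof of \cref{thm:mixing_general}, delivers the claimed mixing-time bound.

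The main subtlety is that one cannot quote \cref{thm:mixing_general} verbatim with $\mc{H}_2 := \mc{K}(\rho_\beta,-i[H_C,\cdot])$: the anti-Hermitian piece $\mc{V}$ has $\|\mc{V}\|_{2\leftrightarrow 2}$ controlled by $\|B+B^\dagger\|$, which generally far exceeds $\delta$. Recognizing that $\mc{V}$ drops out of $\mathrm{Re}\langle Y,\cdot\rangle$ is what makes the coherent drift harmless, so only the Hermitian part $\mc{W}$ of the coherent contribution needs to satisfy the $\delta$-smallness hypothesis. Consequently the argument either invokes a small strengthening of \cref{thm:mixing_general}, in which $\|\mc{H}_2\|_{2\leftrightarrow 2}$ is replaced by the operator norm of its Hermitian part, or equivalently reruns the Gr\"onwall step with the explicit splitting above.
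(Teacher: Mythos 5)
Your proposal is correct and follows essentially the same route as the paper: decompose $\mc{K}(\rho_\beta,\mc{M})$ into the self-adjoint gapped part $\mc{K}(\rho_\beta,\mc{L}_D)$ plus the coherent contribution, bound the Hermitian (anticommutator) piece of the latter by $\delta$, and rerun the Gr\"onwall/sandwich argument of \cref{thm:mixing_general}. The subtlety you flag is real --- the paper's one-line invocation of \cref{thm:mixing_general} silently discards the anti-Hermitian commutator $-\tfrac{i}{2}[B+B^\dagger,\cdot]$, whose norm is not controlled by $\delta$ --- and your resolution via $\Re\langle Y,\mc{V}Y\rangle=0$ is exactly the implicit mechanism (cf.\ the ``unitary drift'' in \cref{def:DBC}) that makes the paper's argument go through.
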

\begin{proof}[Proof of Corollary~\ref{cor:mixing_phi_alpha}] Because $\mc{L}$ satisfies GNS/KMS detailed balance condition, we have
\[
\mathcal{K}(\rho_\beta, \mathcal{L}_D)=\mathcal{K}(\rho_\beta, \mathcal{L}_D)^\dagger=\mc{H}(\rho_\beta, \mathcal{L}_D),\quad \rev{\lambda_{\rm gap}(\mc{H}(\rho_\beta, \mathcal{L}_D))=\lambda_{\rm gap}(\mc{L}_D)}\,.
\]
Thus,
\[
\mc{H}(\rho_\beta, \mathcal{M})=\mc{H}(\rho_\beta, \mc{L}_D)-\frac{i}{2}\left\{\rho^{-1/4}_{\beta}H_C \rho^{1/4}_{\beta}-\rho^{1/4}_{\beta}H_C \rho^{-1/4}_{\beta},\rho\right\}\,.
\]
Noticing
\[
\left\|\frac{i}{2}\left\{\rho^{-1/4}_{\beta}H_C \rho^{1/4}_{\beta}-\rho^{1/4}_{\beta}H_C \rho^{-1/4}_{\beta},[\cdot]\right\}\right\|_{2\leftrightarrow 2}\leq \left\|\rho^{-1/4}_{\beta}H_C \rho^{1/4}_{\beta}-\rho^{1/4}_{\beta}H_C \rho^{-1/4}_{\beta}\right\|\leq \delta\,,
\]
we conclude the proof using~\cref{thm:mixing_general}~\rev{with $\mc{H}_1=\mc{H}(\rho_\beta, \mc{L}_D)$.}
\end{proof}

Next, we show that, with a proper choice of $g(\omega)$,  the dissipative part of~\eqref{eqn:lindbladian_operator} approximates a Lindbladian dynamics satisfying the KMS detailed balance condition when $\sigma$ and $T$ are sufficiently large. This can be used to show the mixing time of the free fermions in \cref{sec:therm_mixing_free_fermion}.

\begin{thm}\label{thm:KMS_general} Given  \rev{$x=\Omega(\frac{\beta}{\sigma^2})$} such that \rev{$\frac{\beta^{2}}{\sigma^2}\frac{1+x/\sqrt{2x/\beta-1/(4\sigma^2)}}{x-\beta/(8\sigma^2)}=\mathcal{O}(1)$}, we set
\[
g_x(\omega)=\frac{1}{Z_x}\exp\left(-\frac{(\omega+x)^2}{2\left(\frac{2x}{\beta}-\frac{1}{4\sigma^2}\right)}\right),\quad Z_x=\sqrt{2\pi\left(\frac{2x}{\beta}-\frac{1}{4\sigma^2}\right)}\,.
\]
Then, there exists a Lindbladian $\widehat{\mc{L}}_{\text{KMS},x}$ that satisfies KMS detailed balance condition and a Hermitian operator $H_x$ such that
\[
\left\|\mc{L}-\left(-i[H_x,\rho]+\widehat{\mc{L}}_{\text{KMS},x}\right)\right\|_{1\leftrightarrow1}=\mc{O}\left(\sigma\exp\left(-T^2/(4\sigma^2)\right)+\rev{\frac{1}{Z_x}\left(\frac{\beta^{2}}{\sigma^2}\frac{x+\sqrt{2x/\beta-1/(4\sigma^2)}}{x-\beta/(8\sigma^2)}+\frac{\beta}{\sigma}\right)}\right)\,,
\]
and
\[
\left\|\sigma^{-1/4}_{\beta}H_x \sigma^{1/4}_{\beta}-\sigma^{1/4}_{\beta}H_x \sigma^{-1/4}_{\beta}\right\|=\mc{O}\left(\frac{\beta}{\sigma}\sqrt{\frac{\beta}{x-\frac{\beta}{8\sigma^2}}}+\rev{\frac{\beta^{3}}{\sigma^2}\frac{x+\sqrt{2x/\beta-1/(4\sigma^2)}}{x-\beta/(8\sigma^2)}}\right)\,.
\]
Here, $\widehat{\mc{L}}_{\text{KMS},x}$ takes the form of
\begin{equation}\label{eqn:KMS_Lindbladian}
\widehat{\mc{L}}_{\text{KMS},x}[\rho]=\mathbb{E}_{A_S}\left(-i\left[\frac{B_{A_S}}{Z_x},\rho\right]+\int^\infty_{-\infty}\widehat{\gamma}_x(\omega)\mathcal{D}_{V_{A_S,f,\infty}(\omega)}(\rho)\mathrm{d}\omega\right)\,,
\end{equation}
with $\widehat{\gamma}_x(\omega)=g_x(\omega)$ and
\[
B_{A_S}=-\int^\infty_{-\infty}h_1(t_1)e^{-iHt_1}\left(\int^\infty_{-\infty}h_2(t_2)A_S(t_2)A_S(-t_2)\mathrm{d}t_2\right)e^{iHt_1}\mathrm{d}t_1\,,
\]
where
\[
h_1(t)=\frac{1}{2\sigma \pi \beta}\exp\left(\frac{\beta^2}{32\sigma^2}\right)\left(\frac{1}{\cosh{2\pi t/\beta}}*_t\sin(-\beta t/
(4\sigma^2))\exp\left(-t^2/(2\sigma^2)\right)\right)\,,
\]
and
\[
h_2(t)=\rev{2}\sqrt{\frac{2x}{\beta}-\frac{1}{4\sigma^2}}\exp\left(\left(-\frac{4t^2}{\beta}-2it\right)x\right)\,.
\]
Furthermore, when $x=\frac{\beta}{8\sigma^2}+\omega_{\max}$ with $\omega_{\max}=\Omega(\beta)$, we have error bounds
\[
\left\|\mc{L}-\left(-i[H_x,\rho]+\widehat{\mc{L}}_{\text{KMS},x}\right)\right\|_{1\leftrightarrow1}=\mc{O}\left(\sigma\exp\left(-T^2/(4\sigma^2)\right)+\frac{\beta}{\sigma}\right)\,,
\]
and
\[
\left\|\sigma^{-1/4}_{\beta}H_x \sigma^{1/4}_{\beta}-\sigma^{1/4}_{\beta}H_x \sigma^{-1/4}_{\beta}\right\|=\mc{O}\left(\frac{\beta}{\sigma}\right)\,.
\]
\end{thm}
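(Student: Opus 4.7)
The plan is to decompose $\mc{L}$ into three pieces: (i) a truncation error from cutting off the time integral at $T<\infty$, (ii) a residual commutator $-i[H_x,\rho]$ whose weighted commutator with $\rho_\beta$ is small, and (iii) a remainder that is \emph{by construction} the KMS detailed-balanced generator $\widehat{\mc{L}}_{\text{KMS},x}$. The three error contributions appearing in the statement --- $\sigma\exp(-T^2/(4\sigma^2))$, the Dyson-inherited $\alpha^2 T^4/\sigma^2$, and the $1/Z_x$ terms --- will correspond exactly to (i), an auxiliary bookkeeping carried over from~\cref{lem:thermal_approx_first}, and the mismatch between the natural dissipative weight $\gamma(\omega)=(g_x(\omega)+g_x(-\omega))/(1+e^{\beta\omega})$ and the idealized KMS weight $\widehat{\gamma}_x(\omega)=g_x(\omega)$.

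The first two steps are mostly bookkeeping. First I would replace every finite-$T$ kernel $V_{A_S,f,T}(\omega)$ and $\mc{G}_{A_S,f,T}(\omega)$ in~\eqref{eqn:lindbladian_operator} by its $T\to\infty$ limit using the Gaussian-tail estimates behind~\cref{lem:thermal_approx_first}; this contributes the first two terms of the error. For the dissipative weight, the Gaussian form of $g_x$ yields the explicit ratio
\[
\frac{g_x(-\omega)}{g_x(\omega)}=\exp\!\left(\frac{\beta\omega}{1-\beta/(8x\sigma^2)}\right)=e^{(\beta+\eta)\omega},\qquad \eta=\mc{O}\!\left(\frac{\beta^2}{\sigma^2(x-\beta/(8\sigma^2))}\right),
\]
so $g_x$ satisfies the KMS ratio condition up to the small multiplicative drift $e^{\eta\omega}$. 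A first-order Taylor expansion of this drift, together with Gaussian moment estimates for $g_x$ and the pointwise bound $\|\widetilde V_{A_S,f}(\omega)\|=\mc{O}(\sqrt\sigma)$, yields the quantitative $1/Z_x$-scaled dissipative contribution in the statement. The reason $g_x$ was chosen as a Gaussian with variance $2x/\beta-1/(4\sigma^2)$ is precisely so that after convolution with the Gaussian $|\hat f|^2$ arising from the jump operators, the combined effective frequency weight restores the thermal ratio $e^{\beta\omega}$ up to an error controlled by the shift $-1/(4\sigma^2)$.

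The real work is the Lamb-shift splitting $\widetilde H_{\rm LS}=H_x+B_{A_S}/Z_x$, where $B_{A_S}/Z_x$ is exactly the Lamb shift that the KMS detailed balance condition assigns to the pair $(\widehat\gamma_x,f)$. The explicit representation of $B_{A_S}$ in the statement emerges by starting from the general expression~\eqref{eqn:G_S_thermal} for $\widetilde{\mc{G}}$ and identifying, via Fourier inversion, the kernels that reconstruct the KMS-imposed Lamb shift. The inner kernel $h_2(t_2)$ is the inverse Fourier transform of $g_x$ (hence the $\sqrt{2x/\beta-1/(4\sigma^2)}$ normalization and the complex Gaussian $\exp(-4xt^2/\beta-2ixt)$), while the outer kernel $h_1(t_1)$ carries the Fermi--Dirac-type weight: in the frequency domain this is $1/(1+e^{\beta\omega})$, whose Fourier transform is the canonical KMS kernel $1/\cosh(2\pi t/\beta)$ up to normalization, further convolved with a small imaginary Gaussian that compensates the $-1/(4\sigma^2)$ shift in the denominator of $g_x$ (this is exactly the $\sin(-\beta t/(4\sigma^2))\exp(-t^2/(2\sigma^2))$ factor in $h_1$). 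Verifying KMS DBC for $\widehat{\mc{L}}_{\text{KMS},x}$ then reduces to checking, at each Bohr frequency, the algebraic identity $\widehat\gamma_x(-\omega)V_{A_S^\dagger,f,\infty}(-\omega)^\dagger=e^{\beta\omega}\widehat\gamma_x(\omega)V_{A_S,f,\infty}(\omega)$, which is precisely what the construction of $h_1,h_2$ enforces.

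Finally, for the residual $H_x$ I would bound the weighted commutator by Bohr-frequency decomposition, noting that the $\omega$-component picks up the factor $2\sinh(\beta\omega/4)$ and that $H_x$ itself inherits its size from the Taylor residual $e^{\eta\omega}-1$ of Step~2. Moment bounds on $g_x$ give the $\mc{O}((\beta/\sigma)\sqrt{\beta/(x-\beta/(8\sigma^2))}+(\beta^3/\sigma^2)(x/(x-\beta/(8\sigma^2))))$ bound. The specialization $x=\beta/(8\sigma^2)+\omega_{\max}$ with $\omega_{\max}=\Omega(\beta)$ is a direct substitution: $x/(x-\beta/(8\sigma^2))=1+\mc{O}(\beta/(\omega_{\max}\sigma^2))$ and $1/Z_x=\mc{O}(\sqrt{\beta/\omega_{\max}})$, so both residuals collapse to $\mc{O}(\beta/\sigma)$. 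The main obstacle is Step~3: identifying and verifying the kernels $h_1,h_2$ that make $\widehat{\mc{L}}_{\text{KMS},x}$ \emph{exactly} KMS is a delicate Fourier-analytic calculation that weaves together the Gaussian $g_x$, the Fermi--Dirac kernel, and the Gaussian filter $f$; once this construction is in hand, every quantitative bound in the statement reduces to standard Gaussian moment estimates.
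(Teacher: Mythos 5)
Your decomposition matches the paper's: truncation to $T=\infty$ via the Gaussian tails of \cref{lem:thermal_approx_first}, replacement of the physical weight $\gamma_x(\omega)=(g_x(\omega)+g_x(-\omega))/(1+e^{\beta\omega})$ by the idealized $\widehat{\gamma}_x=g_x$ using exactly the ratio identity $g_x(-\omega)/g_x(\omega)=e^{\beta\omega}e^{\beta^2\omega/(8x\sigma^2-\beta)}$, and isolation of a residual coherent term $H_x$ whose weighted commutator with $\rho_\beta$ is controlled. The one substantive divergence is in the step you correctly flag as the main obstacle: the paper does \emph{not} construct and verify the kernels $h_1,h_2$ from scratch; it observes that $\widehat{\gamma}_x$ is a Gaussian satisfying Eqn.~(1.4) of \cite{ChenKastoryanoGilyen2023} (with $\sigma_E=1/(2\sigma)$, $\sigma_r=\sqrt{2x/\beta-\sigma_E^2}$, $\omega_r=x$) and imports the exact-KMS construction, including $B_{A_S}$, from Corollaries A.1--A.2 there. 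Your plan to re-derive this by Fourier inversion is viable and more self-contained, but your stated reduction of KMS DBC to the single-frequency identity $\widehat{\gamma}_x(-\omega)V_{A_S^\dagger,f,\infty}(-\omega)^\dagger=e^{\beta\omega}\widehat{\gamma}_x(\omega)V_{A_S,f,\infty}(\omega)$ is not the right condition: the Gaussian-smeared jumps $V_{A_S,f,\infty}(\omega)$ mix Bohr frequencies, so the correct verification is the two-frequency condition on the transition matrix $\alpha_{\nu_1,\nu_2}$ (cf.\ the remark following \cref{thm:KMS_general}), and it holds only in conjunction with the specific coherent term $B_{A_S}/Z_x$ --- a pointwise weight identity alone cannot deliver it. Two smaller points: the bound on the weighted commutator of $H_x$ is obtained in the paper via the BCH/nested-commutator machinery of \cref{lem:Lamb_shift_commute} applied to the quantity $R$ of~\eqref{eqn:condition_R_thermal}, and the first term $\frac{\beta}{\sigma}\sqrt{\beta/(x-\beta/(8\sigma^2))}$ comes from the intrinsic near-commutation of the $\widehat{\gamma}_x$-Lamb shift (the $\widehat R$ estimate), not from the Taylor residual $e^{\eta\omega}-1$, which only supplies the second term; your Bohr-decomposition-plus-$2\sinh(\beta\omega/4)$ route works but should account for both contributions. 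Finally, for the specialization $x=\beta/(8\sigma^2)+\omega_{\max}$ your ``direct substitution'' yields $\mc{O}(\beta^2/\sigma^2+\beta/\sigma)$, which collapses to $\mc{O}(\beta/\sigma)$ only under the standing assumption $\sigma>\beta$; the paper instead re-estimates $\|\gamma_x-\widehat{\gamma}_x\|_{L^1}$ by splitting the integral to get the $\beta$-uniform bound $\mc{O}(\sigma^{-2}\log^2\sigma)$, which is slightly stronger but not needed for the stated conclusion.
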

\begin{rem} We notice that $\widehat{\gamma}_x$ satisfies~\cite[Eqn. (1.4)]{ChenKastoryanoGilyen2023} up to a normalization factor. According to~\cite[Lemma II.2]{ChenKastoryanoGilyen2023} with $\sigma_E=1/(2\sigma)$, $\sigma_r=\sqrt{2x/\beta-\sigma^2_E}$, and $\omega_r=x$, and the above choice of $g_x$, the transition part of the Lindbladian $\widehat{\mc{L}}_{\text{KMS},x}$ can be written as
\[
\mathcal{T}(\rho)=\sum_{\nu_1,\nu_2\in B(H)}\gamma_{\nu_1,\nu_2}A_{\nu_1}\rho A_{\nu_2}^\dagger\,,
\]
where
\[
\gamma_{\nu_1,\nu_2}=\frac{1}{2\sqrt{4\pi x/\beta}}\exp\left(-\frac{(\nu_1+\nu_2+2x)^2}{16x/\beta}\right)\exp\left(-\frac{(\nu_1-\nu_2)^2\sigma^2}{2}\right)\,.
\]
This implies that, when $\sigma$ changes, it only affects the $\nu_1 - \nu_2$ term.
\end{rem}
\begin{proof}[Proof of~\cref{thm:KMS_general}] The formula of $g$ gives
\[
\gamma_x(\omega)=\frac{g_x(\omega)+g_x(-\omega)}{1+\exp(\beta\omega)}=\rev{g_x}(\omega)\frac{1+g_x(-\omega)/g_x(\omega)}{1+\exp(\beta\omega)}=\rev{g_x}(\omega)\frac{1+\exp(\beta\omega)\exp\left(\frac{\beta^2\omega}{8x\sigma^2-\beta}\right)}{1+\exp(\beta\omega)}\,.
\]
Let $\widehat{\gamma}_x(\omega)=g_x(\omega)$, we then have
\rev{\begin{equation}\label{eqn:gamma_diff}
\begin{aligned}
 &\left\|\gamma_x-\hat{\gamma}_x\right\|_{L^\infty}\leq  \left\|g_x(\omega)\left|1-\exp\left(\frac{\beta^2\omega}{8x\sigma^2-\beta}\right)\right|\right\|_{L^\infty_\omega}\\
 =&\left\|\frac{1}{Z_x}\exp\left(-\frac{(\omega+x)^2}{2\left(\frac{2x}{\beta}-\frac{1}{4\sigma^2}\right)}\right)\left|1-\exp\left(\frac{\beta^2\omega}{8x\sigma^2-\beta}\right)\right|\right\|_{L^\infty_\omega}\\
 =&\left\|\frac{1}{Z_x}\exp\left(-\frac{u^2}{2}\right)\left|1-\exp\left(\frac{\beta^2\left(u\sqrt{2x/\beta-1/(4\sigma^2)}-x\right)}{8x\sigma^2-\beta}\right)\right|\right\|_{L^\infty_u}\\
 =&\mc{O}\left(\frac{\beta^{2}}{\sigma^2}\frac{1}{\sqrt{2x/\beta-1/(4\sigma^2)}}\frac{\sqrt{2x/\beta-1/(4\sigma^2)}+x}{x-\beta/(8\sigma^2)}\right)\\
 =&\mc{O}\left(\frac{\beta^{2}}{\sigma^2}\frac{1+x/\sqrt{2x/\beta-1/(4\sigma^2)}}{x-\beta/(8\sigma^2)}\right)\,.
\end{aligned}
\end{equation}
when $\frac{\beta^{2}}{\sigma^2}\frac{1+x/\sqrt{2x/\beta-1/(4\sigma^2)}}{x-\beta/(8\sigma^2)}=\mathcal{O}(1)$. In the second equality, we let $u=\omega+x/\sqrt{2x/\beta-1/(4\sigma^2)}$. Similarly,
\begin{equation}\label{eqn:gamma_diff_L_1}
\begin{aligned}
 &\left\|\gamma_x-\hat{\gamma}_x\right\|_{L^1}\leq \int^\infty_{-\infty}g_x(\omega)\frac{\exp(\beta \omega)}{1+\exp(\beta\omega)}\left|1-\exp\left(\frac{\beta^2\omega}{8x\sigma^2-\beta}\right)\right|\mathrm{d}\omega\\
 \leq &\frac{1}{Z_x}\int^\infty_{-\infty}\exp\left(-\frac{(\omega+x)^2}{2\left(\frac{2x}{\beta}-\frac{1}{4\sigma^2}\right)}\right)\left|1-\exp\left(\frac{\beta^2\omega}{8x\sigma^2-\beta}\right)\right|\mathrm{d}\omega\\
 =&\frac{1}{\sqrt{2\pi\left(\frac{2x}{\beta}-\frac{1}{4\sigma^2}\right)}}\int^\infty_{-\infty}\exp\left(-\frac{(\omega+x)^2}{2\left(\frac{2x}{\beta}-\frac{1}{4\sigma^2}\right)}\right)\left|1-\exp\left(\frac{\beta^2\omega}{8x\sigma^2-\beta}\right)\right|\mathrm{d}\omega\\
 =&\frac{1}{\sqrt{2\pi}}\int^\infty_{-\infty}\exp\left(-u^2/2\right)\left|1-\exp\left(\frac{\beta^2\left(u\sqrt{2x/\beta-1/(4\sigma^2)}-x\right)}{8x\sigma^2-\beta}\right)\right|\mathrm{d}\omega\\
 =&\mc{O}\left(\frac{\beta^{2}}{\sigma^2}\frac{x+\sqrt{2x/\beta-1/(4\sigma^2)}}{x-\beta/(8\sigma^2)}\right)\,.
\end{aligned}
\end{equation}
}

Define
\[
\widehat{\mc{D}}_{A_S,x}(\rho)=\int^\infty_{-\infty}\widehat{\gamma}_x(\omega)\mathcal{D}_{V_{A_S,f,\infty}(\omega)}(\rho)\mathrm{d}\omega\,.
\]
Using the above estimation, we have
\[
\left\|\widehat{\mc{D}}_{A_S,x}(\rho)-\int^\infty_{-\infty}\rev{\gamma_x}(\omega)\mathcal{D}_{V_{A_S,f,\infty}(\omega)}(\rho)\mathrm{d}\omega\right\|_{1\leftrightarrow1}=\rev{\mc{O}\left(\|\gamma_x-\hat{\gamma}_x\|_{L^\infty}\|f(t)\|_{L^2}\right)=\mc{O}\left(\frac{\beta^{2}}{\sigma^2}\frac{1+x/\sqrt{2x/\beta-1/(4\sigma^2)}}{x-\beta/(8\sigma^2)}\right)}\,,
\]
where the second term is the original dissipative part with $\gamma_x(\omega)$.~\rev{Here, the first equality is a result of~\cite[Proposition A.1]{ChenKastoryanoBrandaoEtAl2023} and~\cite[Lemma A.1]{ChenKastoryanoBrandaoEtAl2023}.}

Next, it is straightforward to check that, $\widehat{\gamma}_x$ satisfies~\cite[Eqn. (1.4)]{ChenKastoryanoGilyen2023} up to a normalization factor. According to~\cite[Appendix A Corollaries A.1,A.2]{ChenKastoryanoGilyen2023}~\rev{with $\sigma_E=1/(2\sigma)$, $\sigma_r=\sqrt{2x/\beta-\sigma^2_E}$, $g(\omega)=\delta(\omega-x)$},~\eqref{eqn:KMS_Lindbladian} satisfies the KMS detailed balance. Furthermore, we have
\[
\|h_1\|_{L^1}=\mc{O}\left(\left\|\frac{1}{\cosh{2\pi t/\beta}}\right\|_{L^1}\left\|(\sigma\beta)^{-1}\sin(- t\beta/(4\sigma^2))\exp\left(-t^2/(2\sigma^2)\right)\right\|_{L^1}\right)=\mc{O}(\beta/\sigma),\quad \|h_2\|_{L^1}=\mc{O}\left(1\right)\,,
\]
which implies $\|B_{A_S}\|\leq \|h_1\|_{L^1}\|h_2\|_{L^1}=\mc{O}(\rev{\beta}/\sigma)$. In summary, we have
\[
\left\|\underbrace{-i\left[\frac{B_{A_S}}{Z_x},\rho\right]+\widehat{\mc{D}}_{A_S,x}(\rho)}_{\widehat{\mc{L}}_{\text{KMS},x}}-\int^\infty_{-\infty}\gamma_x(\omega)\mathcal{D}_{V_{A_S,f,\infty}(\omega)}(\rho)\mathrm{d}\omega\right\|_{1\leftrightarrow1}=\mc{O}\left(\rev{\frac{\beta^{2}}{\sigma^2}\frac{1+x/\sqrt{2x/\beta-1/(4\sigma^2)}}{x-\beta/(8\sigma^2)}+\frac{\beta}{\sigma Z_x}}\right)\,.
\]
Furthermore, we can verify that
\[
\widehat{R}:=\int^\infty_{0}\left|\int^\infty_{-\infty}\widehat{\gamma}_x(\omega)\exp(i\omega \sigma q)\mathrm{d}\omega\right|\exp(-q^2/8)\mathrm{d}q=\mc{O}\left(\frac{\sqrt{\beta}}{\sigma \sqrt{x-\frac{\beta}{8\sigma^2}}}\right)\,,
\]
which implies
\begin{equation}\label{eqn:R_bound_fermion}
R:=\int^\infty_{0}\left|\int^\infty_{-\infty}\gamma_x(\omega)\exp(i\omega \sigma q)\mathrm{d}\omega\right|\exp(-q^2/8)\mathrm{d}q=\mc{O}\left(\frac{\sqrt{\beta}}{\sigma \sqrt{x-\frac{\beta}{8\sigma^2}}}+\rev{\frac{\beta^{2}}{\sigma^2}\frac{x+\sqrt{2x/\beta-1/(4\sigma^2)}}{x-\beta/(8\sigma^2)}}\right)\,.
\end{equation}
Here, $\rev{\hat{R}},R$ are defined according to Lemma~\ref{lem:Lamb_shift_commute}~\eqref{eqn:condition_R_thermal}. According to the proof of Lemma~\ref{lem:thermal_approx_first} and Lemma~\ref{lem:Lamb_shift_commute}, there exists a Hermitian matrix $H_x$ such that
\[
\left\|\mc{L}-\left(-i[H_x,\rho]+\widehat{\mc{L}}_{\text{KMS},x}\right)\right\|_{1\leftrightarrow1}=\mc{O}\left(\sigma\exp\left(-T^2/(4\sigma^2)\right)+\rev{\frac{1}{Z_x}\left(\frac{\beta^{2}}{\sigma^2}\frac{x+\sqrt{2x/\beta-1/(4\sigma^2)}}{x-\beta/(8\sigma^2)}+\frac{\beta}{\sigma}\right)}+\frac{\beta}{\sigma}\right)\,,
\]
and
\[
\left\|\sigma^{-1/4}_{\beta}H_x \sigma^{1/4}_{\beta}-\sigma^{1/4}_{\beta}H_x \sigma^{-1/4}_{\beta}\right\|=\mc{O}\left(\frac{\beta}{\sigma}\sqrt{\frac{\beta}{x-\frac{\beta}{8\sigma^2}}}+\rev{\frac{\beta^{3}}{\sigma^2}\frac{x+\sqrt{2x/\beta-1/(4\sigma^2)}}{x-\beta/(8\sigma^2)}}\right)\,.
\]
This concludes the proof of the first part of~\cref{thm:KMS_general}.

Now, let $x=\frac{\beta}{8\sigma^2}+\omega_{\max}$ with $\omega_{\max}=\Omega(\beta)$. We can provide a better estimation for~\eqref{eqn:gamma_diff_L_1}. Let $c=2\omega_{\max}/\beta$. Then $c=\Omega(1)$ and
\[
\begin{aligned}
 &\left\|\gamma_x-\hat{\gamma}_x\right\|_{L^1}\leq \int^\infty_{-\infty}g_x(\omega)\frac{\exp(\beta \omega)}{1+\exp(\beta\omega)}\left|1-\exp\left(\frac{\beta^2\omega}{8x\sigma^2-\beta}\right)\right|\mathrm{d}\omega\\
 \leq &\frac{1}{\sqrt{2\pi c}}\int^\infty_{-\infty}\exp\left(-\frac{(\omega+x)^2}{2c}\right)\frac{\exp(\beta \omega)}{1+\exp(\beta\omega)}\left|1-\exp\left(\frac{\beta \omega}{4\sigma^2 c}\right)\right|\mathrm{d}\omega\\
 =&\frac{1}{\sqrt{2\pi c}}\int^0_{-\infty}\exp\left(-\frac{(\omega+x)^2}{2c}\right)\frac{\exp(\beta \omega)}{1+\exp(\beta\omega)}\left|1-\exp\left(\frac{\beta \omega}{4\sigma^2 c}\right)\right|\mathrm{d}\omega\\
 &+\frac{1}{\sqrt{2\pi c}}\int^\infty_{0}\exp\left(-\frac{(\omega+x)^2}{2c}\right)\frac{\exp(\beta \omega)}{1+\exp(\beta\omega)}\left|1-\exp\left(\frac{\beta \omega}{4\sigma^2 c}\right)\right|\mathrm{d}\omega
\end{aligned}
\]
For the first term, we have
\[
\begin{aligned}
&\frac{1}{\sqrt{2\pi c}}\int^0_{-\infty}\exp\left(-\frac{(\omega+x)^2}{2c}\right)\frac{\exp(\beta \omega)}{1+\exp(\beta\omega)}\left|1-\exp\left(\frac{\beta \omega}{4\sigma^2 c}\right)\right|\mathrm{d}\omega\\
\leq &\max_{\omega\in(-\infty,0)}\exp(\beta \omega)\left|1-\exp\left(\frac{\beta \omega}{4\sigma^2 c}\right)\right|= \max_{\omega\in(-\infty,0)}\exp(\omega)\left|1-\exp\left(\frac{\omega}{4\sigma^2 c}\right)\right|=\mc{O}(1/(c\sigma^2))\,.
\end{aligned}
\]
For the second term, we have
\[
\begin{aligned}
&\frac{1}{\sqrt{2\pi c}}\int^\infty_{0}\exp\left(-\frac{(\omega+x)^2}{2c}\right)\frac{\exp(\beta \omega)}{1+\exp(\beta\omega)}\left|1-\exp\left(\frac{\beta \omega}{4\sigma^2 c}\right)\right|\mathrm{d}\omega\\
=&\frac{1}{\sqrt{2\pi c}}\int^\infty_{0}\exp\left(-\frac{(\omega+x)^2}{2c}\right)\left|1-\exp\left(\frac{\beta \omega}{4\sigma^2 c}\right)\right|\mathrm{d}\omega\\
\leq &\frac{1}{\sqrt{2\pi}}\int^\infty_{0}\exp\left(-\frac{(u+x/\sqrt{c})^2}{2}\right)\left|1-\exp\left(\frac{\beta u }{4\sigma^2 \sqrt{c}}\right)\right|\mathrm{d}u\\
= &\frac{1}{\sqrt{2\pi}}\left(\int^{\log(\sigma)}_{0}\exp\left(-\frac{(u+x/\sqrt{c})^2}{2}\right)\left|1-\exp\left(\frac{\beta u }{4\sigma^2 \sqrt{c}}\right)\right|\mathrm{d}u+\int^\infty_{\log(\sigma)}\exp\left(-\frac{(u+x/\sqrt{c})^2}{2}\right)\left|1-\exp\left(\frac{\beta u }{4\sigma^2 \sqrt{c}}\right)\right|\mathrm{d}u\right)
\end{aligned}
\]
For the first part, we have
\[
\begin{aligned}
&\int^{\log(\sigma)}_{0}\exp\left(-\frac{(u+x/\sqrt{c})^2}{2}\right)\left|1-\exp\left(\frac{\beta u }{4\sigma^2 \sqrt{c}}\right)\right|\mathrm{d}u\\
\leq &\log(\sigma) \max_{u\in [0,\log(\sigma)]}\exp\left(-\frac{(u+x/\sqrt{c})^2}{2}\right)\left|1-\exp\left(\frac{\beta u }{4\sigma^2 \sqrt{c}}\right)\right|\\
=&\mathcal{O}\left(\frac{\log(\sigma)}{\sqrt{c}}\max_{u\in [0,\log(\sigma)]}\frac{\beta u}{\sigma^2}\exp(-(u+\beta \sqrt{c}/2)^2/2\right)\\
=&\mathcal{O}\left(\frac{\log^2(\sigma)}{\sqrt{c}}\max_{u\in [0,\log(\sigma)]}\frac{\beta}{\sigma^2}\exp(-(u+\beta \sqrt{c}/2)^2/2\right)=\mathcal{O}\left(\sigma^{-2}\log^2(\sigma)/\sqrt{c}\right)
\end{aligned}
\]
For the second part, we have
\[
\begin{aligned}
&\int^\infty_{\log(\sigma)}\exp\left(-\frac{(u+x/\sqrt{c})^2}{2}\right)\left|1-\exp\left(\frac{\beta u }{4\sigma^2 \sqrt{c}}\right)\right|\mathrm{d}u\\
\leq & 2 \int^\infty_{\log(\sigma)}\exp\left(-\frac{(u+x/\sqrt{c})^2-\beta u/(2\sigma\sqrt{c})}{2}\right)\mathrm{d}u\\
\leq & 2 \int^\infty_{\log(\sigma)}\exp\left(-\frac{(u+x/\sqrt{c})^2-u/(2\sqrt{c})}{2}\right)\mathrm{d}u\leq 2 \int^\infty_{\log(\sigma)}\exp\left(-\frac{u^2/2}{2}\right)\mathrm{d}u=\mathcal{O}\left(\exp(-\log(\sigma)^2/4)\right)
\end{aligned}
\]
where we use $u>1/\sqrt{c}$ when $u>\log(\sigma)$. Plugging this back, we have
\[
\left\|\gamma_x-\hat{\gamma}_x\right\|_{L^1}=\mc{O}\left(\sigma^{-2}\log^2(\sigma)\right)\,.
\]
when $\sigma=\Omega(\beta)$. The remaining part of the calculation is very similar to the first part of the proof. Thus, we omitted.
\end{proof}

In the following section, we will upper bound the mixing time of free fermion in \cref{sec:therm_mixing_free_fermion} using KMS DBC and the mixing time of commuting local Hamiltonian in \cref{sec:thermal_mixing_general} using GNS DBC.

\section{Mixing time for thermal state preparation of free fermions}\label{sec:therm_mixing_free_fermion}

In this section, we give a rigorous version of~\cref{thm:thermal_mixing} and provide the proof.
\begin{thm}\label{thm:thermal_mixing_free_fermion_rigo} Assume $\beta=\Theta(1)$ and $\|h\|=\mc{O}(1)$. We set
\begin{equation}\label{eqn:formula_g}
g(\omega)=\frac{1}{Z_x}\exp\left(-\frac{(\omega+\beta/(8\sigma^2)+\omega_{\max})^2}{4\omega_{\max}/\beta}\right),\quad Z_x=\sqrt{4\pi\omega_{\max}/\beta}\,,
\end{equation}
where $\omega_{\max}=\Theta(1)$. Then, if
\[
\sigma=\widetilde{\Omega}(N^2\epsilon^{-1}),\quad T=\widetilde{\Omega}(\sigma),\quad \alpha=\widetilde{\Or}(\sigma^{-1}N^{-1}\epsilon^{1/2})\,
\]
we have
\[
t_{{\rm mix},\Phi}(\epsilon)=\mc{O}\left(N\log\left(\frac{2\left\|\sigma^{-1/2}_\beta\right\|}{\epsilon}\right)\right)\rev{=\mc{O}(N(N+\log(1/\epsilon)))}\,.
\]
\end{thm}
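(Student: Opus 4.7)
The strategy is to chain together three results already established in the paper: (i) \cref{thm:KMS_general}, which replaces $\mathcal{L}$ by a nearby generator that satisfies KMS detailed balance plus a small unitary drift; (ii) \cref{cor:mixing_phi_alpha}, which converts a gap estimate for this surrogate generator into a mixing time bound for the corresponding channel; and (iii) \cref{thm:almost_fixed_point}~\eqref{eqn:mixing_time_bound}, which transports the mixing time back to $\Phi_\alpha$. The choice of $g(\omega)$ in~\eqref{eqn:formula_g} is precisely the one produced by \cref{thm:KMS_general} with $x = \beta/(8\sigma^2) + \omega_{\max}$, so the second regime of that theorem applies and yields a decomposition
\[
\mathcal{L} = -i[H_x,\,\cdot\,] + \widehat{\mathcal{L}}_{\text{KMS},x} + \mathcal{E},\qquad \|\mathcal{E}\|_{1\leftrightarrow 1} = \mathcal{O}\!\left(\sigma\,e^{-T^2/(4\sigma^2)} + \alpha^2 T^4/\sigma^2 + \beta/\sigma\right),
\]
together with the near-commutation bound $\|\rho_\beta^{-1/4}H_x\rho_\beta^{1/4}-\rho_\beta^{1/4}H_x\rho_\beta^{-1/4}\| = \mathcal{O}(\beta/\sigma)$. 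Defining $\widehat{\Phi}_\alpha := \mathcal{U}_S(T)\circ\exp\!\big((-i[H_x,\cdot]+\widehat{\mathcal{L}}_{\text{KMS},x})\alpha^2\big)\circ\mathcal{U}_S(T)$ then gives $\|\Phi_\alpha-\widehat{\Phi}_\alpha\|_{1\leftrightarrow 1} = \mathcal{O}(\alpha^2(\sigma e^{-T^2/(4\sigma^2)}+\alpha^2 T^4/\sigma^2+\beta/\sigma))$, via the same Dyson-expansion argument used in~\cref{lem:thermal_approx_first}.

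The central step, and the main technical obstacle, is to lower bound the KMS spectral gap of $\widehat{\mathcal{L}}_{\text{KMS},x}$ for free fermions. The plan is to exploit the quasi-free structure: diagonalize $h$ to introduce Bogoliubov modes $\{b_k\}$ with eigenvalues $\{\lambda_k\}$, as in~\cref{sec:gs_mixing}. With $A_S\in\{\pm c_j,\pm c_j^\dagger\}$, the operator $V_{A_S,f,\infty}(\omega)$ expands in the $b_k$ basis with Gaussian weights $\hat{f}(\omega\mp\lambda_k)$ of width $\sigma^{-1}$ around each Bohr frequency. When $\sigma = \widetilde\Omega(N^2/\epsilon)$ and $\omega_{\max}=2\|h\|$, these Gaussians are well separated (since $|\lambda_k|\le\|h\|=\mathcal{O}(1)$) and each $V_{A_S,f,\infty}(\omega)$ effectively acts on a single mode, weighted by $|U_{jk}|^2$. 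Averaging over the $4N$ choices of $A_S$ produces the prefactor $1/(4N)$, so the dissipative part of $\widehat{\mathcal{L}}_{\text{KMS},x}$ decomposes, up to exponentially small errors, into a convex sum of $N$ independent single-mode KMS generators. Each single-mode generator is exactly a qubit thermalizer at inverse temperature $\beta$ with transition rates of constant order (since $\beta,\omega_{\max}=\Theta(1)$ and the Gaussian weight at the correct frequency is $\Theta(1)$ by the choice~\eqref{eqn:formula_g}), and therefore has $\Theta(1)$ KMS spectral gap. By the tensor-product structure of the thermal state in the Bogoliubov basis and the fact that the single-particle sector determines the full Gaussian-state gap, this gives
\[
\lambda_{\text{gap}}\!\left(\widehat{\mathcal{L}}_{\text{KMS},x}\right) = \Omega(1/N).
\]

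Granting this gap bound, I next verify the hypothesis of \cref{cor:mixing_phi_alpha}, namely $\|\rho_\beta^{-1/4}H_x\rho_\beta^{1/4}-\rho_\beta^{1/4}H_x\rho_\beta^{-1/4}\| = \mathcal{O}(\beta/\sigma) \ll \lambda_{\text{gap}}$. This is satisfied as soon as $\sigma = \Omega(N)$, which is implied by the stronger condition $\sigma=\widetilde{\Omega}(N^2/\epsilon)$. The corollary then yields
\[
t_{\text{mix},\widehat{\Phi}_\alpha}(\epsilon) = \mathcal{O}\!\left(N\log\!\Big(2\|\rho_\beta^{-1/2}\|/\epsilon\Big)\right),
\]
so $\tau_{\text{mix},\widehat{\Phi}_\alpha}(\epsilon/2) = \mathcal{O}(\alpha^{-2}N\log(\|\rho_\beta^{-1/2}\|/\epsilon))$. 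Finally, I invoke \cref{thm:almost_fixed_point}~\eqref{eqn:mixing_time_bound}: the required condition $\tau_{\text{mix},\widehat{\Phi}_\alpha}(\epsilon/2)\cdot\|\Phi_\alpha-\widehat{\Phi}_\alpha\|_{1\leftrightarrow 1}\le \epsilon$ reduces, after plugging in the estimates above, to
\[
N\log(\cdot)\cdot\big(\sigma e^{-T^2/(4\sigma^2)}+\alpha^2 T^4/\sigma^2+\beta/\sigma\big) = \widetilde{\mathcal{O}}(\epsilon),
\]
which is precisely the parameter regime $\sigma=\widetilde{\Omega}(N^2\epsilon^{-1})$, $T=\widetilde{\Omega}(\sigma)$, $\alpha=\widetilde{\mathcal{O}}(\sigma^{-1}N^{-1}\epsilon^{1/2})$ stated in the theorem. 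The hardest ingredient is the $\Omega(1/N)$ gap bound: the difficulty lies in making the frequency-resolution argument rigorous while tracking the tail contributions of the Gaussian weights from modes other than the targeted one, and in verifying that the $H_x/Z_x$ term and the expectation over $A_S$ do not disturb the mode-by-mode decoupling. Once these are under control, the remaining manipulations are bookkeeping of the error terms.
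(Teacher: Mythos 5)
The outer scaffolding of your plan --- \cref{thm:KMS_general} with $x=\beta/(8\sigma^2)+\omega_{\max}$, then \cref{cor:mixing_phi_alpha} for the surrogate channel, then \cref{thm:almost_fixed_point} to transport the mixing time back to $\Phi_\alpha$ --- matches the paper's proof exactly, and your bookkeeping of the error terms and the resulting parameter regime is consistent with what the paper does. The problem is the central step, the $\Omega(1/N)$ spectral gap bound for $\widehat{\mathcal{L}}_{\text{KMS},x}$, where your route has a genuine gap.

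Your argument rests on the claim that for $\sigma=\widetilde{\Omega}(N^2/\epsilon)$ the Gaussians $\hat{f}(\omega\mp\lambda_k)$ are well separated ``since $|\lambda_k|\le\|h\|=\mathcal{O}(1)$,'' so that each $V_{A_S,f,\infty}(\omega)$ targets a single Bogoliubov mode and the dissipator splits into a convex sum of $N$ independent single-mode thermalizers. Boundedness of the spectrum does not give separation: a Hermitian $h$ with $N$ eigenvalues in an $\mathcal{O}(1)$ interval has typical level spacing $\mathcal{O}(1/N)$ and can have degenerate or exponentially close eigenvalues, in which case the cross terms $\hat{f}(\omega-\lambda_j)\hat{f}(\omega-\lambda_k)$ with $j\neq k$ are $\Theta(1)$ rather than exponentially small, and the mode-by-mode decoupling fails unless $\sigma$ exceeds the inverse minimal level spacing --- which the theorem's hypotheses do not guarantee is polynomial. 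This is precisely the obstruction the paper itself flags at the start of \cref{sec:thermal_mixing_general} (resolving close Bohr frequencies can force $\sigma$ to be exponentially large). The paper's proof of \cref{thm:thermal_mixing_free_fermion_rigo} avoids the issue entirely: following~\cite{smid2025polynomial}, it passes to Majorana operators, shows the coherent part $B_{A_S}$ of the KMS generator vanishes identically for this coupling set, writes the parent Hamiltonian $\mathcal{H}_0$ in terms of the matrix function $\hat{f}(-4h^m-\omega)e^{-\beta h^m}$ acting on the Majorana vector, and lower-bounds the quadratic form $\left(\int\widehat{\gamma}(\omega)\left|\hat{f}(-4h^m-\omega)\right|^2\mathrm{d}\omega\right)e^{-\beta h^m}\geq C$ as a matrix inequality, which is valid no matter how the $\lambda_k$ cluster. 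Relatedly, you flag but do not dispose of the coherent term $B_{A_S}/Z_x$ inside $\widehat{\mathcal{L}}_{\text{KMS},x}$; the paper shows it is zero, which is what licenses a purely dissipative gap analysis. To repair your plan you would need to replace the frequency-resolution argument with this basis-free matrix estimate, or else add a uniform level-spacing assumption on $h$ that the theorem does not make.
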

\begin{rem} \rev{We note that since $g$ is not a uniform distribution, we cannot directly apply~\cref{thm:fix_thermal} to control the fixed-point error. However, using~\eqref{eqn:R_bound_fermion} and the assumption $\beta=\Theta(1)$, we obtain
\[
R=\int^\infty_{0}\left|\int^\infty_{-\infty}\gamma(\omega)\exp(i\omega \sigma q)\mathrm{d}\omega\right|\exp(-q^2/8)\mathrm{d}q=\mc{O}\left(\frac{1}{\sigma}\right)\,.
\]
Plugging this into~\cref{thm:fix_thermal_rigor}, we obtain a fixed-point error bound that is essentially the same as in~\cref{thm:fix_thermal}. Specifically, for any $\epsilon>0$, if
\[
\sigma=\widetilde{\Omega}\left(\epsilon^{-1}t_{{\rm mix},\Phi}(\epsilon)\right),\quad T=\Omega(\sigma\log(\sigma/\epsilon))\,,
\]
and $\alpha=\mc{O}\left(\sigma T^{-2}\epsilon^{1/2}t^{-1/2}_{{\rm mix},\Phi}(\epsilon)\right)$,
then 
\[
\|\rho_{\rm fix}(\Phi)-\rho_\beta\|_1<\epsilon\,.
\]
The upper bound on $t_{\rm mix,\Phi}(\epsilon)$ established in~\cref{thm:thermal_mixing_free_fermion_rigo} then implies that~\cref{cor:complete} also holds in this setting.}
\end{rem}
\begin{proof}[Proof of~\cref{thm:thermal_mixing_free_fermion_rigo}]
The proof of the theorem is based on Corollary~\ref{cor:mixing_phi_alpha} and~\cref{thm:KMS_general}. Specifically, in~\cref{thm:KMS_general}, we choose $x=\frac{\beta}{8\sigma^2}+\omega_{\max}$ and define $\widehat{\mc{L}}_{\text{KMS},c}=\widehat{\mc{L}}_{\text{KMS},x}$ and $H_{c}=H_x$. According to~\cref{thm:KMS_general}, we have
\[
\left\|\mc{L}-\left(-i[H_c,\rho]+\widehat{\mc{L}}_{\rm KMS,c}\right)\right\|_{1\leftrightarrow1}=\mc{O}\left(\sigma\exp\left(-T^2/(4\sigma^2)\right)+\frac{\beta}{\sigma}\right)\,,
\]
and
\begin{equation}\label{eqn:Lamb_diff}
\left\|\sigma^{-1/4}_{\beta}H_c \sigma^{1/4}_{\beta}-\sigma^{1/4}_{\beta}H_c \sigma^{-1/4}_{\beta}\right\|=\mc{O}(\beta/\sigma)\,.
\end{equation}
Thus, it suffices to study the spectral gap of $\widehat{\mc{L}}_{\text{KMS},c}$ defined in the above lemma. For this part, we mainly follow the approach in~\cite[Section III.A]{smid2025polynomial}.

First, given a set of creation and annihilation operators $\{c_k,c^\dagger_k\}_{k=1}^N$, we define the Majorana operators as
\[
m_{2j-1}=c_j+c^\dagger_j,\quad m_{2j}=i(c_j-c^\dagger_j),\quad j=1,\dots,N\,,
\]
which satisfies $\{m_i,m_j\}=2\delta_{i,j}$. Let $\vec{m}=[m_1,\dots,m_{2N}]^T$. Then, we have
\[
H=\sum_{i,j}h_{i,j}c^\dagger_ic_j=\sum^{2N}_{i,j=1}h^m_{i,j}m_{i}m_j-C I_{2^N\times 2^N}=\vec{m}^T\cdot h^m\cdot \vec{m}-C I_{2^N\times 2^N}\,.
\]
We note that the eigenvalues of $h^m$ is a Hermitian and antisymmetric matrix with eigenvalues $\{\lambda_k(h)/4,-\lambda_k(h)/4\}^N_{k=1}$, and $C$ is a constant.

Next, for each creation and annihilation operator pair $(c_j,c^\dagger_j)$, we have
\[
\begin{bmatrix}
c^\dagger_j\\
c_j
\end{bmatrix}=\begin{bmatrix}
1/2 & i/2\\
1/2 & -i/2
\end{bmatrix}
\begin{bmatrix}
m_{2j-1}\\
m_{2j}
\end{bmatrix}
\]
Define the coupling operator vector $\vec{A}=[c^\dagger_1, c_1, c^\dagger_2, c_2,\dots, c^\dagger_N,c_N]^T$. Then,
\begin{equation}\label{eqn:coupling_vector}
\vec{A}=\frac{M}{\sqrt{2}}\vec{m}\,,
\end{equation}
where $M$ is a unitary matrix.

\rev{In $\hat{L}_{\rm KMS,c}$, we first evaluate the coherent component $B$. By~\eqref{eqn:coupling_vector}, the coupling operator $\sqrt{2},\vec{A}$ can be written as a unitary transformation of $\vec{m}$. As shown in~\cite[Lemma III.1]{smid2025polynomial}, the coherent part under KMS detailed balance satisfies $B=0$, meaning that the coherent contribution vanishes.} 

Next, we follow the proof of~\cite[Lemma III.2]{smid2025polynomial} to calculate the spectral gap of the dissipative term. We first formulate
\[
\mc{H}_0[\rho]=\sigma^{-1/4}_{\beta}\cdot \widehat{\mc{L}}^\dagger_{\rm KMS,c}[\sigma^{1/4}_{\beta}(\rho)\sigma^{1/4}_{\beta}]\sigma^{-1/4}_{\beta}\,.
\]
Because $\widehat{\mc{L}}_{\rm KMS,c}$ satisfies the KMS detailed balance condition, $\mc{H}_0$ is a self-adjoint operator with respect to the HilbertSchmidt inner product. Furthermore, $\mc{H}_0$ is a similarity transformation of the Lindbladian $\widehat{\mc{L}}^\dagger_{\rm KMS,c}$, which means that the spectral gap of $\mc{H}_0$ is the same as the spectral gap of $\widehat{\mc{L}}^\dagger_{\rm KMS,c}$.

To calculate each term in the parent Hamiltonian $\mc{H}_0[\rho]$, we first note that, for each $\omega$,
\[
\begin{bmatrix}
  \sigma^{-1/4}_{\beta}V_{c^\dagger_1}(\omega)\sigma^{1/4}_{\beta} \\
  \sigma^{-1/4}_{\beta}V_{c_1}(\omega)\sigma^{1/4}_{\beta} \\
  \vdots \\
  \sigma^{-1/4}_{\beta}V_{c^\dagger_N}(\omega)\sigma^{1/4}_{\beta} \\
  \sigma^{-1/4}_{\beta}V_{c_N}(\omega)\sigma^{1/4}_{\beta}
\end{bmatrix}=\frac{M}{\sqrt{2}}\cdot \hat{f}(-4h^m-\omega) e^{-\beta h^m}\cdot \vec{m}\,,
\]
and
\[
\sum_j \sigma^{-1/4}_{\beta}V^\dagger_{c^\dagger_j}(\omega)V_{c^\dagger_j}(\omega)\sigma^{1/4}_{\beta}+\sigma^{-1/4}_{\beta}V^\dagger_{c_j}(\omega)V_{c_j}(\omega)\sigma^{1/4}_{\beta}=\frac{1}{2}\vec{m}^\dagger\cdot \left|\hat{f}(-4h^m-\omega)\right|^2\cdot \vec{m}\,.
\]
Plugging this equality into the expression for $\mc{H}_0[\rho]$,
\[
\begin{aligned}
&\mc{H}_0[\rho]\\
=&\frac{1}{2N}\int \frac{\widehat{\gamma}(\omega)}{2}\left(\vec{m}^\dagger\cdot \hat{f}(-4h^m-\omega) e^{-\beta h^m} \cdot M^\dagger\cdot \rho\cdot M\cdot \hat{f}(-4h^m-\omega) e^{-\beta h^m}\cdot \vec{m}\right.\\
&-\left.\frac{1}{2}\vec{m}^\dagger\cdot \left|\hat{f}(-4h^m-\omega)\right|^2\cdot \vec{m}\cdot \rho-\rho\cdot \frac{1}{2}\vec{m}^\dagger\cdot \left|\hat{f}(-4h^m-\omega)\right|^2\cdot \vec{m}\right)\mathrm{d}\omega\,.
\end{aligned}
\]
Because $\beta=\Theta(1)$, $\|h\|=\mc{O}(1)$, and $\omega_{\max}=\Theta(1)$, it is straightforward to check that
\[
\left(\int \widehat{\gamma}(\omega)\left|\hat{f}(-4h^m-\omega)\right|^2\mathrm{d}\omega\right)\exp(-\beta h^m)\geq C\,,
\]
where $C$ is a constant independent of $\sigma$, meaning the coefficients of the above quadratic expansion does not generate when $\sigma$ approaches to zero. Following the proof of~\cite[Proposition III.2]{smid2025polynomial}, the spectral gap of $\mc{H}_0$ is lower bounded by a constant over $N$ independent of $\sigma$.

\rev{Let 
\[
\widehat{\Phi}=\mc{U}_S(T)\,\circ\,\exp\left(\mathcal{M}\alpha^2\right)\,\circ\,\mc{U}_S(T),\quad \mathcal{M}=-i[H_c,\rho]+\widehat{\mc{L}}_{\rm KMS,c}\,.
\]
Combining Lemma~\ref{lem:thermal_approx_first} and~\cref{thm:KMS_general}, we first have 
\[
\left\|\widehat{\Phi}-\Phi_{\alpha}\right\|_{1\leftrightarrow1}=\mathcal{O}\left(\alpha^2\left(\sigma\exp\left(-T^2/(4\sigma^2)\right)+\alpha^2T^4\sigma^{-2}+\frac{\beta}{\sigma}\right)\right)\,.
\]
In addition, according to Corollary~\ref{cor:mixing_phi_alpha} and~\eqref{eqn:Lamb_diff}, when $\beta/\sigma=\mathcal{O}(1)$, we have
\[
t_{{\rm mix},\widehat{\Phi}}(\epsilon)=\mc{O}\left(N\log\left(\frac{2\left\|\sigma^{-1/2}_\beta\right\|}{\epsilon}\right)\right)\,.
\]
We note that $\log\left(\left\|\sigma^{-1/2}_\beta\right\|\right)=\mc{O}(\beta\|H\|+N)=\mc{O}(\beta N)=\mc{O}(N)$ in our case. Applying~\cref{thm:almost_fixed_point} to $\Phi$ and $\widehat{\Phi}$, we obtain that, if
\[
\sigma\exp\left(-T^2/(4\sigma^2)\right)+\alpha^2T^4\sigma^{-2}+\frac{1}{\sigma}=\mc{O}(\epsilon N^{-1}(N+\log(1/\epsilon))^{-1})\,,
\]
then 
\[
t_{{\rm mix},\Phi_{\alpha}}(\epsilon)=\mc{O}\left(N\log\left(\frac{2\left\|\sigma^{-1/2}_\beta\right\|}{\epsilon}\right)\right)\,.
\]
This concludes the proof.
}


\end{proof}

\section{Mixing time for commuting local Hamiltonians}\label{sec:thermal_mixing_general}

In this section, we prove a general version of~\cref{thm:thermal_commuting_local}. Unlike the previous section, here we show that, the dissipative part of $\mc{L}$ can converge to the Davies generator. The Davies generator satisfies the GNS detailed balance condition (defined in~\cref{sec:notation}), and therefore also satisfies the KMS detailed balance condition.

In the general case, establishing rigorous convergence to the Davies generator requires $\sigma$ to scale exponentially with the system size, since one may need to resolve exponentially close Bohr frequencies $\omega_1$ and $\omega_2$ in order to generate distinct jump operators $A_S(\omega_1)$ and $A_S(\omega_2)$ that appear in the Davies generator. However, in cases where the effective Bohr frequencies are not exponentially close, such as for local commuting Hamiltonians, $\sigma$ does not need to be exponentially large, allowing for an efficient approximation. This property is mainly summarized in the following theorem.
\begin{thm}\label{thm:thermal_mixing_rigor} Given a coupling set $\mc{A}$. Assume
\begin{itemize}
    \item The Davies generator $\mc{L}_{D,\mc{A}}[\rho]$ has a spectral gap $\lambda_{\rm gap}>0$.
    \item There exists a constant $\delta>0$ such that: for any $A\in\mc{A}$, $\omega_1\neq \omega_2$, if $A(\omega_1)\neq 0$ and $A(\omega_2)\neq 0$, we must have $|\omega_1-\omega_2|\geq \delta$.
     \item There exists a constant $M$ such that $\sup_{A\in\mc{A}}\left|\{\omega|A(\omega)\neq 0\}\right|\leq M$.
\end{itemize}
Given any $\epsilon>0$, if
\[
\sigma=\Omega\left(\beta|\mc{A}|\lambda^{-1}_{\rm gap}\delta^{-1}\epsilon^{-1}M\log\left(\left\|\sigma^{-1/2}_\beta\right\|/\epsilon\right)\mathbb{E}(\|A_S\|^2)\right),\ T=\widetilde{\Omega}(\sigma)\,,
\]
and
\[
\alpha=\mc{O}\left(\epsilon^{1/2}\sigma^{-1}|\mc{A}|^{-1/2}\lambda^{1/2}_{\rm gap}\log^{-1/2}\left(\left\|\sigma^{-1/2}_\beta\right\|/\epsilon\right)\mathbb{E}^{-1/2}(\|A_S\|^4)\right)\,,
\]
then
\[
t_{{\rm mix},\widehat{\Phi}}(\epsilon)\leq \frac{|\mc{A}|}{\lambda_{\rm gap}}\log\left(\frac{8\left\|\sigma^{-1/2}_\beta\right\|}{\epsilon}\right)+1
\]
\end{thm}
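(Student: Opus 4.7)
The plan is to reduce the analysis to~\cref{cor:mixing_phi_alpha} by showing that the effective generator $\widetilde{\mc{L}}$ in~\eqref{eqn:L_tilde_thermal} is well approximated, in induced trace norm, by a Davies generator plus a Hermitian coherent correction whose effect on the weighted inner product $\|\rho_\beta^{-1/4}[\cdot]\rho_\beta^{-1/4}\|_2$ is small. First, by~\cref{lem:thermal_approx_first} we replace $\Phi_{\alpha}$ with $\widetilde{\Phi}_{\alpha}=\mc{U}_S(T)\circ\exp(\widetilde{\mc{L}}\alpha^2)\circ\mc{U}_S(T)$, incurring the standard $\mc{O}(\alpha^2\sigma e^{-T^2/(4\sigma^2)}+\alpha^4 T^4\sigma^{-2})$ correction that will be absorbed at the very end via~\cref{thm:almost_fixed_point}~\eqref{eqn:mixing_time_bound}. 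Thus, throughout the core argument, it suffices to study $\widetilde{\Phi}_{\alpha}$, and by~\cref{cor:mixing_phi_alpha} it suffices to identify a decomposition $\widetilde{\mc{L}}=-i[H_C,\cdot]+\mc{L}_{D}+\mc{E}$ with $\mc{L}_{D}$ satisfying the GNS/KMS DBC, with gap $\geq \lambda_{\rm gap}/|\mc{A}|$, with $H_C$ nearly commuting with $\rho_\beta$ after the $\rho_\beta^{\pm 1/4}$ conjugation, and with $\|\mc{E}\|_{1\leftrightarrow 1}$ small enough to be folded into~\eqref{eqn:fixed_point_error_1}/\eqref{eqn:mixing_time_bound}.

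The main technical step is to show that the dissipative part of $\widetilde{\mc{L}}$ converges, as $\sigma\to\infty$, to the averaged Davies generator $\frac{1}{|\mc{A}|}\mc{L}_{D,\mc{A}}$. Using~\cref{assumption:condition} and the Gaussian filter, expand
\[
\widetilde{V}_{A_S,f}(\omega)=\sum_{\omega'\in B(H)}\hat f(\omega-\omega')\,A_S(\omega'),
\]
where $\hat f$ is a Gaussian of width $\sigma^{-1}$. Plugging this into $\mathbb{E}_{A_S}\int \gamma(\omega)\mc{D}_{\widetilde V_{A_S,f}(\omega)}(\rho)\mathrm{d}\omega$ produces diagonal terms (in Bohr frequency) and cross terms. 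The diagonal terms yield, after integrating against $\gamma$, coefficients of the form $\int\gamma(\omega)|\hat f(\omega-\omega')|^2 \mathrm{d}\omega=\gamma(\omega')+\mc{O}(\|\gamma'\|_{L^\infty}/\sigma)$, which reproduces the Davies weights up to an $\mc{O}(\beta/\sigma)$ error after a change of variables. By assumption, any two nonzero frequencies for a fixed $A_S$ are separated by at least $\delta$; hence for each fixed $A_S$ the cross-frequency coefficients satisfy $\int\gamma(\omega)|\hat f(\omega-\omega_1)\hat f(\omega-\omega_2)|\mathrm{d}\omega=\mc{O}(\exp(-\sigma^2\delta^2/c))$ for some absolute constant $c$. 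Combined with assumption~3 which caps the number of surviving frequencies per operator by $M$, this yields
\[
\bigl\|\widetilde{\mc{L}}_{\rm diss}-\tfrac{1}{|\mc{A}|}\mc{L}_{D,\mc{A}}\bigr\|_{1\leftrightarrow 1}=\mc{O}\!\left(\tfrac{M^2 \mathbb{E}\|A_S\|^2\beta}{\sigma}+M^2\mathbb{E}\|A_S\|^2\,e^{-\sigma^2\delta^2/c}\right).
\]
The second term is negligible once $\sigma\gtrsim \delta^{-1}\log^{1/2}(\cdot)$, which will be implied by the hypotheses.

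Next, I set $H_C=\widetilde{H}_{\rm LS}$ (the Lamb shift from~\eqref{eqn:L_tilde_thermal}) and $\mc{L}_D=\tfrac{1}{|\mc{A}|}\mc{L}_{D,\mc{A}}$. The Davies generator satisfies GNS DBC, so in the notation of~\cref{def:DBC} we have $\mc{K}(\rho_\beta,\mc{L}_D)=\mc{K}(\rho_\beta,\mc{L}_D)^\dagger$, and its similarity transform has spectral gap $\lambda_{\rm gap}/|\mc{A}|$ on the traceless subspace. For the coherent piece I invoke~\cref{lem:Lamb_shift_commute}, which combined with the computation of $R$ carried out for $g=\omega_{\max}^{-1}\mathbf{1}_{[0,\omega_{\max}]}$ in the proof of~\cref{thm:fix_thermal}, gives
\[
\bigl\|\rho_\beta^{-1/4}\widetilde H_{\rm LS}\rho_\beta^{1/4}-\rho_\beta^{1/4}\widetilde H_{\rm LS}\rho_\beta^{-1/4}\bigr\|\leq \|[\widetilde H_{\rm LS},\rho_\beta]\|_1\cdot\|\rho_\beta^{-1/2}\|\cdot(\cdots)\,,
\]
but more directly the argument of~\cref{lem:Lamb_shift_commute} itself bounds this similarity-transformed commutator by $\mc{O}(\beta\cdot R\cdot\mathbb{E}\|A_S\|^2)$ with $R=\mc{O}(\sigma^{-1}\log(\sigma))$, making it small compared to $\lambda_{\rm gap}/|\mc{A}|$ once $\sigma=\Omega(\beta|\mc{A}|\lambda_{\rm gap}^{-1}\mathbb{E}\|A_S\|^2)$.

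Assembling the pieces, ~\cref{cor:mixing_phi_alpha} applied to $\widehat{\Phi}_{\alpha}:=\mc{U}_S(T)\circ\exp((-i[\widetilde H_{\rm LS},\cdot]+\mc{L}_D)\alpha^2)\circ\mc{U}_S(T)$ gives $t_{{\rm mix},\widehat{\Phi}_{\alpha}}(\epsilon/2)\leq \frac{|\mc{A}|}{\lambda_{\rm gap}-\mc{O}(\beta|\mc{A}|/\sigma)}\log(4\|\rho_\beta^{-1/2}\|/\epsilon)+1$. Finally, using~\cref{thm:almost_fixed_point}~\eqref{eqn:mixing_time_bound} together with the total approximation error $\|\widehat{\Phi}_{\alpha}-\Phi_{\alpha}\|_{1\leftrightarrow 1}=\mc{O}(\alpha^2(\sigma e^{-T^2/(4\sigma^2)}+M^2\mathbb{E}\|A_S\|^2\beta/\sigma+M^2\mathbb{E}\|A_S\|^2 e^{-\sigma^2\delta^2/c})+\alpha^4 T^4\sigma^{-2})$, the hypothesized scaling of $\sigma,T,\alpha$ makes $\tau_{{\rm mix},\widehat{\Phi}_{\alpha}}(\epsilon/2)\cdot\|\Phi_{\alpha}-\widehat{\Phi}_{\alpha}\|_{1\leftrightarrow 1}\leq \epsilon/2$, yielding the claimed bound on $t_{{\rm mix},\widehat{\Phi}_{\alpha}}(\epsilon)$. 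The hard part will be the quantitative approximation of $\widetilde{\mc{L}}_{\rm diss}$ by $\tfrac{1}{|\mc{A}|}\mc{L}_{D,\mc{A}}$: one must carefully exchange the order of the $\omega$-integration and the sum over Bohr frequencies, use assumption~2 to peel off cross-frequency terms against the Gaussian decay $\hat f(\omega-\omega_1)\hat f(\omega-\omega_2)$, and use assumption~3 to prevent the number of such cross-terms from proliferating beyond control.
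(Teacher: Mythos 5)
Your proposal is correct and follows essentially the same route as the paper: approximate $\Phi_\alpha$ by $\widetilde{\Phi}_\alpha$ via \cref{lem:thermal_approx_first}, show the dissipative part converges to the averaged Davies generator $\tfrac{1}{|\mc{A}|}\mc{L}_{D,\mc{A}}$ (with gap $\lambda_{\rm gap}/|\mc{A}|$) using assumption~2 to kill cross-frequency terms against the Gaussian decay and assumption~3 to cap their number, then invoke \cref{cor:mixing_phi_alpha} and transfer back with \cref{thm:almost_fixed_point}~\eqref{eqn:mixing_time_bound}.

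The one place you diverge is the coherent term. You keep $H_C=\widetilde H_{\rm LS}$ and bound the conjugated commutator $\|\rho_\beta^{-1/4}H_C\rho_\beta^{1/4}-\rho_\beta^{1/4}H_C\rho_\beta^{-1/4}\|$ via the BCH argument of \cref{lem:Lamb_shift_commute} together with an explicit estimate $R=\mc{O}(\sigma^{-1}\log\sigma)$, which requires committing to a specific $g$ (you take the uniform one). The paper instead uses the same $\delta$-separation hypothesis a second time: it truncates $\widetilde{\mc{G}}_{A,f}(\omega)$ to its $\gamma_1+\gamma_2=0$ block, obtaining a Lamb shift $\widehat H_{\rm LS}$ that commutes exactly with $H$ and hence with $\rho_\beta$, so the drift condition in \cref{cor:mixing_phi_alpha} holds with zero defect; the truncation error $\mc{O}(\sigma e^{-\sigma^2\delta^2/2}M\,\mathbb{E}\|A_S\|^2)$ is folded into $\|\widehat\Phi_\alpha-\Phi_\alpha\|_{1\leftrightarrow 1}$ alongside the dissipative error. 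The paper's version is cleaner in that it is agnostic to $g$ and avoids the $R$-integral entirely, while yours yields the same $\sigma$-scaling but only after fixing $g$; be aware also that \cref{lem:Lamb_shift_commute} as stated bounds $\|[\widetilde H_{\rm LS},\rho_\beta]\|_1$, and you need the operator-norm bound on the $\rho_\beta^{\pm 1/4}$-conjugated difference, which does follow from the same BCH expansion (with $\beta/4$ in place of $\beta$) but should be stated explicitly rather than inferred from the lemma's conclusion.
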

In~\cref{thm:thermal_mixing_rigor}, we note that the second condition is a technical assumption used to bound the difference between $\widetilde{\mathcal{L}}$ and the Davies generator. When $\delta$ is not exponentially small, it is not necessary to fully resolve all Bohr frequencies to ensure that the Lindbladian dynamics converges to the Davies generator. This condition is easily satisfied in the case of local commuting Hamiltonians: $A(\omega)$ is determined by the interaction between $A$ and a constant number of local Hamiltonians. Hence, it suffices to choose $\delta$ to be a constant to guarantee that different components $A(\omega)$ are well separated. 

\begin{rem}\label{rem:local_commuting} According to~\cite[Section VIII]{kastoryano2016quantum}, when $H$ is a local commuting Hamiltonian as stated in~\cref{thm:thermal_commuting_local}, there exists a constant $\beta_c$ dependent on the Hamiltonian $H$ such that for every $\beta\leq \beta_c$, the spectral gap of the Davies generator can be lower bounded by a constant, meaning $\lambda_{\mathrm{gap}} = \Theta(1)$. Furthermore, we also have $\delta = \Omega(1)$, $M = \mathcal{O}(1)$, $|\mathcal{A}| = \mathcal{O}(N)$, $\|H\|=\mc{O}(N)$. Noticing $\log\left(\left\|\sigma^{-1/2}_\beta\right\|\right)=\mc{O}(\beta\|H\|+N)=\mc{O}((\beta+1)N)$, we can choose
\[
\sigma = \widetilde{\Omega}\left(\epsilon^{-1}(\beta+1)^2 N^2\right), \quad T = \widetilde{\Omega}(\sigma), \quad \text{and} \quad
\alpha = \widetilde{\mathcal{O}}\left(\epsilon^{3/2}(\beta+1)^{-5/2}N^{-3}\right),
\]
to obtain
\[
t_{\mathrm{mix}, \Phi}(\epsilon) = \mathcal{O}\left( N(\|H\|\beta+N)\log(1/\epsilon) \right)=\mathcal{O}\left( N^2(\beta+1)\log(1/\epsilon) \right).
\]
This gives~\cref{thm:thermal_commuting_local}.
\end{rem}
\begin{proof} Recall $\widetilde{\Phi}$ defined in~\cref{eqn:Phi_tilde_thermal}:
\begin{equation}\label{eqn:Phi_tilde_thermal_recall}
\widetilde{\Phi}=\mc{U}_S(T)\,\circ\,\exp\left(\widetilde{\mathcal{L}}\alpha^2\right)\,\circ\,\mc{U}_S(T)\,.
\end{equation}
Here
\begin{equation}\label{eqn:L_tilde_thermal_recall}
\begin{aligned}
\widetilde{\mc{L}}(\rho)=-i\left[\widetilde{H}_{\mathrm{LS}},\rho\right]+\mathbb{E}_{A_S}\left(\int^\infty_{-\infty}\gamma(\omega)\mathcal{D}_{\widetilde{V}_{A_S,f}(\omega)}(\rho)\mathrm{d}\omega\right)\,,
\end{aligned}
\end{equation}
where
\[
\widetilde{H}_{\mathrm{LS}}=-\rev{\mathbb{E}_{A_S}\left(\mathrm{Im}\left(\int^\infty_{-\infty}\gamma(\omega)\widetilde{\mc{G}}_{A_S,f}(-\omega)\mathrm{d}\omega\right)\right)},\quad \widetilde{V}_{A_S,f}(\omega)=\int^\infty_{-\infty}f(t)A_S(t)\exp(-i\omega t)\mathrm{d}t\,,
\]
with
\begin{equation}\label{eqn:G_S_thermal_recall}
\widetilde{\mc{G}}_{A,f}(\omega)=\int^\infty_{-\infty}\int^{s_1}_{-\infty}f(s_2)f(s_1) A^\dagger(s_2)A(s_1)\exp(i\omega(s_2-s_1))\mathrm{d}s_2\mathrm{d}s_1\,.
\end{equation}

Similar to the proof of~\cref{thm:gs_mixing_rigor}, we will construct
\begin{equation}\label{eqn:Phi_hat_thermal}
\widehat{\Phi}=\mc{U}_S(T)\,\circ\,\exp\left(\widehat{\mathcal{L}}\alpha^2\right)\,\circ\,\mc{U}_S(T)
\end{equation}
with
\begin{equation}\label{eqn:L_tilde_thermal_hat}
\begin{aligned}
\widehat{\mc{L}}(\rho)=-i\left[\widehat{H}_{\mathrm{LS}},\rho\right]+\mathbb{E}_{A_S}\left(\sum_{\omega}\mc{L}_{D,A_S}\right)\,,
\end{aligned}
\end{equation}
such that $\left\|\widehat{\Phi}-\widetilde{\Phi}\right\|_{1\leftrightarrow1}$ is small. Here $\mc{L}_{D,A_S}$ is the Davies generator associated with the coupling operator $A_S$.

We first deal with the Lamb shift term. For each $\omega$, we have
\[
\begin{aligned}
\widetilde{\mc{G}}_{\rev{A_S},f}(\omega)=&\int^\infty_{-\infty}\int^{s_1}_{-\infty}f(s_2)f(s_1) A_S(s_2)A_S(s_1)\exp(-i\omega(s_2-s_1))\mathrm{d}s_2\mathrm{d}s_1\\
=&\sum_{\nu_1,\nu_2\in B(H_S)}A^\dagger_S(\nu_2)A_S(\nu_1)\int^{\infty}_{-\infty}\int^{s_1}_{-\infty}f(s_2)f(s_1)\exp(i\nu_2s_2)\exp(i\nu_1s_1)\exp(-i\omega(s_2-s_1))\mathrm{d}u\mathrm{d}v\\
=&\frac{\sigma}{2\sqrt{2\pi}}\sum_{\nu_1,\nu_2\in B(H_S)}A^\dagger_S(\nu_2)A_S(\nu_1)\\
&\cdot\underbrace{\int^{\infty}_{-\infty}\exp\left(i\frac{\sigma p}{2}(\nu_1+\nu_2)\right)\exp\left(-\frac{p^2}{8}\right)\mathrm{d}p}_{=\mc{O}(\exp(-\sigma^2(\nu_1+\nu_2)^2/2))}\underbrace{\int^{\infty}_{0}\exp\left(-\frac{q^2}{8}\right)\exp\left(i\frac{\sigma q}{2}(\nu_1-\nu_2)\right)\exp(i\sigma\omega q)\mathrm{d}q}_{=\mc{O}(1)}
\end{aligned}
\]
Define the commuting part as $\widehat{\mc{G}}_{\rev{A_S},f}(\omega)$:
\[
\begin{aligned}
\widehat{\mc{G}}_{\rev{A_S},f}(\omega)
=&\frac{\sigma}{2\sqrt{2\pi}}\sum_{\nu_1+\nu_2=0}A^\dagger_S(\nu_2)A_S(\nu_1)\\
&\cdot\int^{\infty}_{-\infty}\exp\left(i\frac{\sigma p}{2}(\nu_1+\nu_2)\right)\exp\left(-\frac{p^2}{8}\right)\mathrm{d}p\int^{\infty}_{0}\exp\left(-\frac{q^2}{8}\right)\exp\left(i\frac{\sigma q}{2}(\nu_1-\nu_2)\right)\exp(i\sigma\omega q)\mathrm{d}q
\end{aligned}
\]
According to the assumption of $A_S$, we have $|\nu_1+\nu_2|\geq \delta$ in the summation of $\widetilde{\mc{G}}_{\rev{A_S},f}(\omega)$. Thus,
\[
\left\|\widetilde{\mc{G}}_{\rev{A_S},f}(\omega)-\widehat{\mc{G}}_{\rev{A_S},f}(\omega)\right\|=\mathcal{O}\left(\sigma\exp(-\sigma^2\delta^2/2)\|\rev{A_S}\|^2\rev{M^2}\right)\,,
\]
where $M$ comes from the total number of terms in the summation.

Define \rev{$\widehat{H}_{\mathrm{LS}}=-\mathbb{E}_{A_S}\left(\mathrm{Im}\left(\int^\infty_{-\infty}\gamma(\omega)\widehat{\mc{G}}_{A_S,f}(-\omega)\mathrm{d}\omega\right)\right)$}. We have
\[
\left\|\widetilde{H}_{\mathrm{LS}}-\widehat{H}_{\mathrm{LS}}\right\|=\mathcal{O}\left(\sigma\exp(-\sigma^2\delta^2/2)\rev{M^2}\mathbb{E}(\|A_S\|^2)\right)
\,.
\]

Next, for the dissipative operator, we have
\[
\widetilde{V}_{A,f}(\omega)=\int^\infty_{-\infty}f(t)A(t)\exp(-i\omega t)\mathrm{d}t=2^{3/4}\rev{\sigma^{1/2}\pi^{1/4}\sum_{\xi\in B(H)}\exp(-(\xi-\omega)^2\sigma^2)A(\xi)}
\]
Then
\[
\begin{aligned}
&\int^\infty_{-\infty}\gamma(\omega)\mathcal{D}_{\widetilde{V}_{A_S,f}(\omega)}(\rho)\mathrm{d}\omega-\mc{L}_{D,A_S}\\
=&\sum_{\xi\in B(H)} \left(\rev{\underbrace{\int^\infty_{-\infty}2^{3/2}\pi^{1/2}\sigma \exp(-2(\xi-\omega)^2\sigma^2)\gamma(\omega)\mathrm{d}\omega-2\pi\gamma(\xi)}_{|\cdot|=\mc{O}\left(\beta\sigma^{-1}\right)}}\right)\mc{L}_{A_S(\xi)}+\underbrace{\int^\infty_{-\infty}\gamma(\omega)\left(\sum_{\xi_1\neq \xi_2}\cdots\right)\mathrm{d}\omega}_{\|\cdot\|_{1\leftrightarrow1}=\mathcal{O}\left(\sigma \exp(-\sigma^2\delta^2/\rev{2})M^2\|A_S\|^2\right)}\,.
\end{aligned}
\]
This implies that
\[
\left\|\int^\infty_{-\infty}\gamma(\omega)\mathcal{D}_{\widetilde{V}_{A_S,f}(\omega)}(\rho)\mathrm{d}\omega-\mc{L}_{D,A_S}\right\|_{1\leftrightarrow1}=\mc{O}\left(\left(\beta\sigma^{-1}M+\sigma \exp(-\sigma^2\delta^2/\rev{2})M^2\right)\|A_S\|^2\right)\,.
\]
In conclusion,
\[
\left\|\widehat{\Phi}-\widetilde{\Phi}\right\|_{1\leftrightarrow1}=\mc{O}\left(\alpha^2\left(\beta\sigma^{-1}M+\sigma \exp(-\sigma^2\delta^2/\rev{2})M^2\right)\mathbb{E}(\|A_S\|^2)\right)\,.
\]
Combining this and Lemma~\ref{lem:thermal_approx_first}, we have
\[
\left\|\widehat{\Phi}-\Phi\right\|_{1\leftrightarrow1}=\mc{O}\left(\alpha^2\left(\beta\sigma^{-1}M+\sigma \exp(-\sigma^2\delta^2/\rev{2})M^2+\sigma\exp\left(-T^2/(4\sigma^2)\right)\right)\mathbb{E}(\|A_S\|^2)
+\alpha^4T^4\sigma^{-2}\mathbb{E}\left(\|A_S\|^4\right)\right)\,.
\]
Furthermore, according to Corollary~\ref{cor:mixing_phi_alpha}, we have
    \[
t_{{\rm mix},\widehat{\Phi}}(\epsilon)\leq \frac{|\mc{A}|}{\lambda_{\rm gap}}\log\left(\frac{2\left\|\sigma^{-1/2}_\beta\right\|}{\epsilon}\right)+1\,.
\]
Finally, when
\[
\sigma=\Omega\left(|\mc{A}|\lambda^{-1}_{\rm gap}\delta^{-1}\epsilon^{-1}\beta M\log\left(\left\|\sigma^{-1/2}_\beta\right\|/\epsilon\right)\mathbb{E}\left(\|A_S\|^2\right)\right),\ T=\widetilde{\Omega}(\sigma)\,,
\]
and
\[
\alpha=\mc{O}\left(\epsilon^{1/2}\sigma^{-1}|\mc{A}|^{-1/2}\lambda^{1/2}_{\rm gap}\log^{-1/2}\left(\left\|\sigma^{-1/2}_\beta\right\|/\epsilon\right)\mathbb{E}^{-1/2}\left(\|A_S\|^4\right)\right)\,,
\]
we have
\[
t_{{\rm mix},\widehat{\Phi}}(\epsilon)\left\|\widehat{\Phi}-\Phi\right\|_{1\leftrightarrow1}\leq \epsilon\,.
\]
Applying~\cref{thm:almost_fixed_point}, we conclude the proof.
\end{proof}
\end{document}